\def\binomp#1#2{\left(#1\atop#2\right)}
\tikzstyle{every picture}=[
\def\ostar{\textcircled{$\star$}}
\def\sc{\mathrm{sc}}
\def\mon{\mathrm{Mon}}
\newcommand{\IntEnt}[1]{\llbracket #1\rrbracket}
\newtheorem{definition}{Definition}
\newtheorem{lemma}{Lemma}
\newtheorem{example}{Example}
\newtheorem{proposition}{Proposition}
\newtheorem{theorem}{Theorem}
\newtheorem{remark}{Remark}
\newtheorem{property}{Property}
\newtheorem{corollary}{Corollary}
\newenvironment{proof}{\textbf{Proof:}}{\hfill$\square$\newline}
\def\Id{\mathrm{Id}}
\def\Acc{\mathrm{Acc}}
\def\ie{\emph{i.e.} }
\def\eg{\emph{e.g.} }
\def\Val{\mathrm{Val}}
\def\StBool{\mathfrak{StBool}}
\def\Sat{\mathrm{Sat}}
\def\SV{\mathrm{SV}}
\def\LSV{\mathrm{S}_{\ell}\mathrm V}
\def\Red{\mathrm{Red}}
\def\SA{\mathrm{AS}}
\def\Min{\mathrm{Min}}
\def\LST{\mathrm{LST}}
\def\row{\mathrm{row}}
\def\col{\mathrm{col}}
\def\SuppR{\mathrm{SuppR}}
\def\SuppC{\mathrm{SuppC}}
\def\MergeR{\mathrm{MergeR}}
\def\MergeC{\mathrm{MergeC}}
\def\Ind{\mathrm{Ind}}
\title{State complexity of the star of a Boolean operation}
\author[1]{Pascal Caron\thanks{\texttt{Pascal.Caron@univ-rouen.fr}}}
\author[2]{Edwin Hamel-de-le Court\thanks{\texttt{Edwin.Hamel.De.Le.Court@ulb.be}}}
\author[3]{Jean-Gabriel Luque\thanks{\texttt{Jean-Gabriel.Luque@univ-rouen.fr}}}
\affil[1]{LITIS, Université de Rouen\\ Avenue de l'Université\\ 76801 Saint-\'Etienne du Rouvray Cedex\\ France}
\affil[2]{Faculté des Sciences\\ Campus de la Plaine - CP 212\\ Boulevard du Triomphe, ACC.2\\ 1050 Bruxelles\\Belgium}
\affil[3]{GR$_2$IF, Université de Rouen\\ Avenue de l'Université\\ 76801 Saint-\'Etienne du Rouvray Cedex\\ France}
\begin{document}
\maketitle
\begin{abstract}
Monsters and modifiers are two concepts recently developed in the state complexity theory. 
 A monster is an automaton in which every function from states to states is represented by at least one letter. A modifier is a set of functions allowing one to transform a set of automata into one automaton. The paper describes a general strategy that can be used to compute the state complexity of many operations.  We illustrate it on the problem of the star of a Boolean operation. After applying  modifiers on  monsters, the states of the resulting automata are assimilated to combinatorial objects: the tableaux. We investigate the combinatorics of these tableaux in order to deduce the state complexity. Specifically,  we  recover the state complexity of  star of  intersection  and  star of  union, and we also  give the exact state complexity of star of  symmetrical difference. 
 We thus harmonize the search strategy for the state complexity of  star of any Boolean operations.
\end{abstract}

\section{Introduction}
The (deterministic) state complexity is a measure on regular languages defined as the size of the minimal automaton. This measure extends to regular operations as a function of the state complexity of inputs. The state complexity of an operation measures the number of states needed in the worst case to encode the resulting language in an automaton.
The classical  approach consists in computing an upper bound  and  providing a witness, that is a specific example showing that the bound is tight.

Pioneered in the $70s$, the state complexity has been investigated for
numerous unary and binary operations. See, for example, \cite{Dom02,GMRY17,JJS05,Jir05,JO08,Yu01a} for a survey. More recently, the state complexity of compositions of operations has also been studied. 
In most cases, the state complexity of a composition is strictly lower than the composition of the state complexities of the individual operations. The studies lead to interesting situation, see \eg \cite{CLP18,CLP19,CGKY11,GSY08,JO11,SSY07}.

In some cases, the classical method has to be enhanced by two independent approaches. The first one consists in describing   states  by  combinatorial objects. Thus the upper bound is computed using combinatorial tools. For instance, in \cite{CLMP15}, the states are represented by tableaux representing Boolean matrices and an upper bound for the catenation of symmetrical difference is given. 
The second one is an algebraic method consisting in 
 %
%
 building a witness for a certain class of regular operations by searching in a set of automata with as many  transition functions as possible. This method has the advantage of being applied to a large class of operations.
This approach has been described independently by Caron \textit{et al.} in \cite{CHLP20}  as the monster approach and by Davies in \cite{Dav18} as the OLPA (One Letter Per Action) approach but was implicitly present in older papers like \cite{BJLRS16,DO09}.

In the authors' formalism, the algebraic aspects are divided into two distinct notions. The first one, called modifiers (see Section \ref{sec-modif}), allows to encode an operation as a transformation on automata. The second one allows one to encapsulate the set of possible transitions in a single object, a monster (see Section \ref{sec-monstate}), which is a $k$-tuple of automata on a huge alphabet. 
We apply this strategy to compute the state complexity of the star of a Boolean operation. After applying the modifier of these operations to the monsters, 
states of the resulting automata are  encoded by combinatorial objects. In our case, the objects are tableaux. 
We then investigate the combinatorics of these tableaux in order to understand the properties of accessibility and Nerode equivalence (see Section \ref{sec-sc}).
To be more precise, in Section \ref{sec-valid}, we define the \emph{validity} property on tableaux which is a combinatorial property indicating the presence of a cross at the coordinate (0,0) under certain conditions. This property is related to the accessibility of the states in the resulting automaton.
We also define, in Section \ref{sec-localsat}, a combinatorial operation, called \emph{local saturation}, on tableaux. This operation consists in adding crosses in an array until some patterns are avoided that are defined in Section \ref{2x2case}. By using preliminary results obtained in Section \ref{section-saturation}, the local saturation operation allows us to describe the Nerode classes in a combinatorial ways.
Finally, we exhibit witnesses among  the set of the monsters in Section \ref{sec-monster}.

Through this method, we  harmonize the search
strategy for the state complexity of star of any Boolean operations. Specifically,
we recover the state complexity of star of intersection and star of union, and we also give the exact state complexity of star of symmetrical difference.
\section{Preliminaries}\label{sect-prel}

\subsection{Languages, automata and algebraic tools}\label{sec-LAAT}
The \emph{cardinality} of a finite set $E$ is denoted by $\#E$,  the \emph{set of subsets} of  $E$ is denoted by $2^E$ and the \emph{set of mappings} of $E$ into itself is denoted by $E^E$. 

Any binary Boolean function $\bullet: \mathbb{B}^2\rightarrow \mathbb{B}$ is extended to a binary operation on sets as $F{\bullet} G=\{x\mid ([x\in F]\bullet [x\in G])=1\}$ for any sets $F$ and $G$. 
For instance, the \emph{symmetric difference} of two sets $F$ and $G$  denoted by $\oplus$ and defined by $F\oplus G=(F\cup G)\setminus (F\cap G)$ is the extension of the Boolean function $\mathrm{xor}$.

For any positive integer $n$, let us denote the set $\{0,\ldots, n-1\}$ by $\llbracket n\rrbracket$. 

Let $\Sigma$ be a finite alphabet. A \emph{word} $w$ over  $\Sigma$ is a finite sequence of symbols of $\Sigma$. The \emph{length} of  $w$, denoted by $|w|$, is the number of occurrences of symbols of $\Sigma$ in $w$.
The set of all finite words over $\Sigma$ is denoted by $\Sigma ^*$. A \emph{language} is a subset of $\Sigma^*$.

A \emph{complete and deterministic finite automaton} (DFA) is a $5$-tuple $A=(\Sigma,Q,q_0,F,\delta)$ where $\Sigma$ is a finite  alphabet, $Q$ is a finite set of states, $q_0\in Q$ is the initial state, $F\subseteq Q$ is the set of final states and $\delta$ is the transition function from  $Q\times \Sigma$ to $Q$ extended in a natural way from $Q\times \Sigma^*$ to $Q$. The cardinal of $A$ is the cardinal of its set of states, \ie $\#A=\#Q$. The automaton $B=(\Sigma,Q_B,i_B,F_B,\delta_B)$ is a \emph{sub-automaton} of $A$ if $Q_B\subset Q_A$, $i_B=i_A$, $F_B\subset F_A$ and $\delta_B=\{(p,a,q)\in \delta_A\mid p,q\in Q_B\}$. 

Let $A=(\Sigma,Q,q_0,F,\delta)$ be a DFA. A word $w\in \Sigma ^*$ is \emph{recognized} by the DFA $A$ if $\delta(q_0,w)\in F$. The \emph{language recognized} by a DFA $A$ is the set $\mathrm L(A)$ of words recognized by $A$. Two DFAs are \emph{equivalent} if they recognize the same language. The set $\mathrm{Rec}(\Sigma^*)$ of all languages recognized by DFAs over $\Sigma$ is the set of \emph{recognizable languages}.
For any word $w$, we denote by $\delta^w$ the function $q\rightarrow\delta(q,w)$. 
A state $q$ is \emph{accessible} if there exists a word $w\in \Sigma ^*$ such that $q=\delta(q_0,w)$.
We denote by $\Acc(A)=(\Sigma, \Acc(Q),q_0,\Acc(F),\delta)$ the sub-automaton of $A$ containing exactly accessible states of $A$. By abuse of languages, we denote also by $\Acc(A)$ the set of accessible states.

Two states $q_1,q_2$ of $A$ are \emph{equivalent} if for any word $w$ of $\Sigma^*$, $\delta(q_1, w)\in F$ if and only if $\delta(q_2, w)\in F$. This equivalence relation is called the \emph{Nerode equivalence} and is denoted by $q_1\sim q_2$. If two states are not equivalent, then they are called \emph{distinguishable}.
 A DFA  is  \emph{minimal} if there does not exist any equivalent  DFA  with less states. It is well known that for any DFA, there exists a unique minimal equivalent DFA (\cite{HU79}). Such a minimal DFA  can be  obtained from $A$ by computing 
 \begin{equation*}
 \Min(A)=\Acc(A)_{/\sim}=(\Sigma,\Acc(Q)/\sim,[q_0],\Acc(F)/\sim,\delta_{\sim})
 \end{equation*}
 where, for any $q\in \Acc(Q)$, $[q]$ is the $\sim$-class of the state $q$ and satisfies the property  $\delta_{\sim}([q],a)=[\delta(q,a)]$, for any $a\in \Sigma$. 
 In a minimal DFA, any two distinct states are pairwise distinguishable. 

The set $\mathrm{Rat}(\Sigma^*)$ of \emph{regular languages} defined over an alphabet $\Sigma$ is the smallest set containing $\{a\}$ for each $a\in\Sigma$ and $\emptyset$ and closed by union, catenation and Kleene star. We recall that the \emph{Kleene star} of the language $L$ is defined by  $L^*=\{w=u_1\cdots u_n\mid  u_i\in L \land n\in\mathbb{N}\}$. Kleene theorem \cite{Kle56} asserts that $\mathrm{Rat}(\Sigma^*)=\mathrm{Rec}(\Sigma^*)$.

A \emph{$k$-ary regular operation} is an operation sending for each alphabet $\Sigma$ any $k$-tuples of regular languages over $\Sigma$ to a regular language over $\Sigma$.
The state complexity of a regular language $L$ denoted by $\sc(L)$ is the number of states of its minimal DFA. This notion extends to regular operations: the state complexity of a unary regular operation $\otimes$ is the function $\sc_{\otimes}$ such that, for all $n\in\mathbb{N}\setminus\{0\}$, $\sc_{\otimes}(n)$ is the maximum of all the state complexities of $\otimes(L)$ when $L$ is of state complexity $n$, \ie $\sc_{\otimes}(n)=\max\{\sc(\otimes(L)) | \sc(L) = n\}$.

This can be generalized, and the state complexity of a $k$-ary operation $\otimes$ is the $k$-ary function $\sc_\otimes$ such that, for all $(n_1,\ldots,n_k)\in (\mathbb N\setminus\{0\})^k$,
\begin{equation}\label{eq-statecomp}
  \sc_\otimes(n_1,\ldots,n_k)=\sup\{\sc(\otimes(L_1,\ldots,L_k))\mid\text{ for all }i\in\{1,\ldots,k\}, \sc(L_i)=n_i\}.
  \end{equation}
When the state complexity is finite, a witness for $\otimes$ is a a way to assign to each $(n_1,\ldots,n_k)$, assumed sufficiently big, a k-tuple of languages $(L_1,\ldots,L_k)$ with $\sc(L_i)=n_i$, for all $i\in\{1,\ldots,k\}$, satisfying $\sc_\otimes(n_1,\ldots,n_k)=\sc(\otimes(L_1,\ldots,L_k))$.
Let $\Sigma$ and $\Gamma$ be two alphabets. A morphism is a function $\phi$ from $\Sigma^*$ to $\Gamma^*$ such that, for all $v,w\in\Sigma^*$, $\phi(vw)=\phi(v)\phi(w)$. Notice that $\phi$ is completely defined by its value on letters.

Let $L$ be a regular language recognized by a DFA $A=(\Gamma,Q,q_0,F,\delta)$ and let $\phi$ be a morphism from $\Sigma^*$ to $\Gamma^*$. Then, $\phi^{-1}(L)$ is the regular language recognized by the DFA $B=(\Sigma,Q,q_0,F,\delta')$ where, for each $a\in\Sigma$ and $q\in Q$, $\delta'(q,a)=\delta(q,\phi(a))$. Therefore, notice that we have
\begin{property}\label{prop-scmorph}
  Let $L$ be a regular language and $\phi$ be a morphism. We have $\sc(\phi^{-1}(L))\leq\sc(L)$.
\end{property}
A morphism $\phi$ is \emph{$1$-uniform} if the image by $\phi$ of any letter is a letter. In other words, a $1$-uniform morphism is a (not necessarily injective) renaming of the letters and the only complexity of the mapping stems from mapping $a$ and $b$ to the same image, i.e., $\phi(a) = \phi(b)$.

A \textit{transformation} of a set $Q$ is a map from $Q$ into itself. The set of transformations endowed with the composition is a monoid where $\Id$, the identity map, is its neutral element. Any transformation $t$ of $Q$ induces a transformation of $2^{Q}$ defined by $E\cdot t= \{q\cdot t\mid q\in E\}$ for any $E\in 2^{Q}$. By extension, any transformation $t_1$ of $Q_1$ and any transformation $t_2$ of $Q_2$ induce a transformation of $2^{Q_1\times Q_2}$ defined by $E\cdot (t_1,t_2)=\{(q_1\cdot t_1,q_2\cdot t_2)\mid (q_1,q_2)\in E\}$ for any $E\in 2^{Q_1\times Q_2}$. 

We consider the special case where $Q=\IntEnt n$. If $t$ is a transformation and $i$ an element of $\IntEnt{n}$, we denote by $i\cdot t$ the image of $i$ under $t$. A transformation of $\IntEnt{n}$ can be represented by $t=[i_0, i_1, \ldots, i_{n-1}]$ which means that $i_k=k\cdot t$ for each $0\leq k\leq n-1$. A \textit{permutation} is a bijective transformation on $\IntEnt{n}$.  A \textit{cycle} of length $\ell\leq n$  is a permutation $t$, denoted   by $(i_0,i_1,\ldots, i_{\ell-1})$, on a subset $I=\{i_0,\ldots ,i_{\ell-1}\}$ of $\IntEnt{n}$  where  $i_k\cdot t=i_{k+1}$ for $0\leq k<\ell-1$, $i_{\ell-1}\cdot t=i_0$ and  for every  elements $j\in \IntEnt{n}\setminus I$ $j\cdot t=j$.  
A \textit{transposition} $t=(i,j)$ is a permutation on $\IntEnt{n}$ where $i\cdot t=j$ and $j\cdot t=i$ and for every  elements $k\in \IntEnt{n}\setminus \{i,j\}$, $k\cdot t=k$.  A \textit{contraction}  $t=\left(\begin{array}{r}i\\j\end{array}\right)$ is a transformation where  $i\cdot t=j$ and  for every  elements $k\in \IntEnt{n}\setminus \{i\}$, $k\cdot t=k$.
We use both notations $t(i)$ or $i\cdot t$ depending on the context.  
\subsection{Monsters and state complexity}\label{sec-monstate}
In \cite{Brz13}, Brzozowski gives a series of properties that would make a language $L_n$ of state complexity $n$ sufficiently complex to be a good candidate for constructing witnesses for numerous classical regular operations. One of these properties is that the size of the syntactic semigroup is $n^n$, which means that each transformation of the minimal DFA of $L_n$ can be associated to a transformation by some non-empty word. This upper bound is reached when the set of transition functions of the DFA is exactly the set of transformations from state to state. We thus consider the set of transformations of $\IntEnt{n}$ as an alphabet where each letter is simply named by the transition function it defines. This leads to the following definition :
\begin{definition}\label{def-1monster}
A $1$-monster  is an automaton
$\mon_{n}^{F}=(\Sigma,\IntEnt{n},0, F,\delta)$ defined by
\begin{itemize}
\item   the  alphabet $\Sigma=\IntEnt{n}^{\IntEnt{n}}$, 
\item the set of states  $\IntEnt{n}$,
\item the initial state $0$,
\item the set of final states $F$,
\item the transition function $\delta$  defined for any $a\in \Sigma$ by $\delta(q,a)=a(q)$.
\end{itemize}
The language recognized by a $1$-monster DFA is called a \emph{$1$-monster language}.
\end{definition}
\begin{example}\label{ex-mon}
  The $1$-monster $\mon_2^{\{1\}}$ is
  \begin{figure}[H]
  \centering
  \begin{tikzpicture}[node distance=2cm]
    \node[state,initial](p0){$0$};
    \node[state,accepting](p1) at (4,0) {$1$};
    \path[->]
    (p0)edge[loop ] node [swap]{$[01],[00]$} (p0)
    (p0)edge[bend left] node {$[11],[10]$} (p1)
    (p1)edge[loop ] node [swap]{$[01],[11]$} (p1)
    (p1)edge[bend left] node{$[00],[10]$}(p0);
  \end{tikzpicture}
  \end{figure}
  where, for all $i,j\in\{0,1\}$, the label $[ij]$ denotes the transformation sending $0$ to $i$ and $1$ to $j$, which is also a letter in the DFA above.
\end{example}

Let us notice that some families of $1$-monster languages are witnesses for the Star and Reverse operations (\cite{CHLP20}). The following claim is easy to prove and captures a universality-like property of $1$-monster languages:
\begin{property}\label{prop-res}
  Let $L$ be any regular language recognized by a DFA $A=(\Sigma,\IntEnt{n},0, F,\delta)$. The language $L$ is the preimage of $\mathrm L(\mon_n^F)$ by the $1$-uniform morphism $\phi$ such that, for all $a\in\Sigma$, $\phi(a)=\delta^a$, i.e.
  \begin{equation*}
    L=\phi^{-1}(\mathrm{L}(\mon_n^F)).
  \end{equation*}
\end{property}

This is an important and handy property that we should keep in mind. We call it the \emph{restriction-renaming} property.

We can wonder whether we can extend the notions above to provide witnesses for $k$-ary operators. In the unary case, the alphabet of a monster is the set of all possible transformations we can apply on the states. In the same mindset, a $k$-monster DFA is a $k$-tuple of DFAs, and its construction must involve the set of $k$-tuples of transformations as an alphabet. Indeed, the alphabet of a $k$-ary monster has to encode all the transformations acting on each set of states independently one from the others. This leads to the following definition :
 \begin{definition}\label{def-mon}
A $k$-monster  is a $k$-tuple of automata $\mon_{n_1,\ldots, n_k}^{F_1,\ldots, F_k}=(\mathds{M}_1,\ldots, \mathds{M}_k)$ where, for any $j\in \{1,\ldots,k\}$,  $\mathds{M}_j=(\Sigma,\IntEnt{n_j},0, F_j,\delta_j)$  is defined by
\begin{itemize}
\item the common alphabet  $\Sigma=\IntEnt{n_1}^{\IntEnt{n_1}}\times \ldots \times \IntEnt{n_k}^{\IntEnt{n_k}}$, 
\item the set of states  $\IntEnt{n_j}$,
\item the initial state  $0$,
\item the set of final states  $F_j$,
\item the transition function $\delta_j$  defined for any $(a_1,\ldots, a_k)\in \Sigma$ by $\delta_j(q,(a_1,\ldots, a_k))={a_j}(q)$.
\end{itemize}
  A $k$-tuple of languages $(L_1,\ldots,L_k)$ is called a \emph{monster $k$-language} if there exists a $k$-monster $(\mathds{M}_1,\ldots,\mathds{M}_k)$ such that $(L_1,\ldots,L_k)=(\mathrm L(\mathds{M}_1),\ldots,\mathrm L(\mathds{M}_k))$.
\end{definition}
\begin{remark}\label{remark-min}
  When $F_j$ is different from $\emptyset$ and $\IntEnt{n_j}$, $\mathds{M}_j$ is minimal.
\end{remark}
Definition \ref{def-mon} allows us to extend the restriction-renaming property in a way that is still easy to check.
\begin{property}
  Let $(L_1,\ldots,L_k)$ be a $k$-tuple of regular languages over the same alphabet $\Sigma$. We assume that each $L_j$ is recognized by the DFA $A_j=(\Sigma,\IntEnt{n_j},0,F_j,\delta_j)$. Let $\mon_{n_1,\ldots, n_k}^{F_1,\ldots, F_k}=(\mathds{M}_1,\ldots, \mathds{M}_k)$. For all $j\in\{1,\ldots,k\}$, the language $L_j$ is the preimage of $\mathrm{L}(\mathds{M}_j)$ by the $1$-uniform morphism $\phi$ such that, for all $a\in\Sigma$, $\phi(a)=(\delta_1^a,\ldots,\delta_k^a)$, i.e.
  \begin{equation*}
    (L_1,\ldots,L_k)=(\phi^{-1}(\mathrm{L}(\mathds{M}_1)),\ldots,\phi^{-1}(\mathrm{L}(\mathds{M}_k))).
  \end{equation*}
\end{property}
It has been shown that some families of $2$-monsters are witnesses for binary Boolean operations and for the catenation operation \cite{CHLP20}. Many papers concerning state complexity actually use monsters as witnesses without naming them (e.g. \cite{BJLRS16}).
Therefore, a natural question arises : can we define a simple class of regular operations for which monsters are always witnesses~?
This class should ideally encompass some classical regular operations, in particular the operations studied in the papers cited above. The objects that allow us to answer this question are $1$-uniform operations and are defined in the next section.
\subsection{Modifiers and $1$-uniform operations}\label{sec-modif}
A  regular operation $\otimes$ is  \emph{$1$-uniform} if it commutes with any $1$-uniform morphism, \ie  for any $k$-tuple of regular languages $(L_1,\ldots,L_k)$, for any $1$-uniform morphism $\phi$, $\otimes(\phi^{-1}(L_1),\ldots,\phi^{-1}(L_k))=\phi^{-1}(\otimes(L_1,\ldots,L_k))$.
For example, it is proven in \cite{Dav18} that  Kleene star and  mirror operations are $1$-uniform. 

  \begin{property}
    Let $\phi$ (resp. $\psi$) be a $j$-ary (resp. $k$-ary) $1$-uniform operation. Then, for any integer $1\leq p\leq j$, the  $(j+k-1)$-ary  operator $\phi\circ_p\psi$ such that
        \[\phi\circ_p\psi(L_1,\ldots,L_{j+k-1})=\phi(L_1,\ldots,L_{p-1},\psi(L_{p},\ldots,L_{p+k-1}),L_{p+k},\ldots,L_{j+k-1})\]
    is $1$-uniform.
  \end{property}
Monsters are relevant for the study of state complexity of $1$-uniform operations as shown in the theorem below.
\begin{theorem}\label{th-mon2}
  Any $k$-ary $1$-uniform operation admits a family of monster $k$-languages as a witness.
\end{theorem}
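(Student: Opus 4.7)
The plan is to start from an arbitrary $k$-tuple of languages realizing the supremum defining $\sc_\otimes(n_1,\ldots,n_k)$, and then use the restriction-renaming property to replace it by a monster $k$-language whose image under $\otimes$ is at least as complex. Fix $(n_1,\ldots,n_k)$ with every $n_j\geq 2$, which is what ``sufficiently big'' means here. Because $\sc_\otimes(n_1,\ldots,n_k)$ is assumed finite and integer-valued, the supremum in~\eqref{eq-statecomp} is in fact attained, so one can pick regular languages $L_1',\ldots,L_k'$ over a common alphabet $\Sigma$ (extending to the union of the underlying alphabets if necessary) with $\sc(L_j')=n_j$ and $\sc(\otimes(L_1',\ldots,L_k'))=\sc_\otimes(n_1,\ldots,n_k)$. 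Up to renaming states, each $L_j'$ is recognized by a minimal DFA of the form $A_j'=(\Sigma,\IntEnt{n_j},0,F_j,\delta_j')$, and since $A_j'$ is minimal with $n_j\geq 2$, the set $F_j$ is neither empty nor all of $\IntEnt{n_j}$.

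Next, form the $k$-monster $\mon_{n_1,\ldots,n_k}^{F_1,\ldots,F_k}=(\mathds{M}_1,\ldots,\mathds{M}_k)$ and set $L_j=\mathrm{L}(\mathds{M}_j)$. By Remark~\ref{remark-min}, each $\mathds{M}_j$ is minimal, so $(L_1,\ldots,L_k)$ is a monster $k$-language with $\sc(L_j)=n_j$ for every $j$. The restriction-renaming property for $k$-monsters (the property immediately following Definition~\ref{def-mon}) then provides a single $1$-uniform morphism $\phi$ satisfying $L_j'=\phi^{-1}(L_j)$ for all $j$ simultaneously. Using that $\otimes$ is $1$-uniform, $\otimes(L_1',\ldots,L_k')=\phi^{-1}(\otimes(L_1,\ldots,L_k))$, and Property~\ref{prop-scmorph} then yields
\[
\sc_\otimes(n_1,\ldots,n_k)=\sc(\otimes(L_1',\ldots,L_k'))\leq \sc(\otimes(L_1,\ldots,L_k))\leq \sc_\otimes(n_1,\ldots,n_k),
\]
where the rightmost inequality holds because $(L_1,\ldots,L_k)$ is itself admissible in the supremum. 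The extremes coincide, so the monster $k$-language $(L_1,\ldots,L_k)$ realizes the state complexity and is a witness.

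The main obstacle is really just in the setup of the first paragraph: one has to justify that the supremum is attained (routine, since it is finite and integer-valued) and take all the optimal $L_j'$ over a common alphabet (one enlarges to the union of the individual alphabets without affecting any of the state complexities involved, since a $1$-uniform renaming can be used to match back). After that, the argument is essentially mechanical: the restriction-renaming property reduces every admissible $k$-tuple to a preimage under some $1$-uniform morphism of a monster $k$-language, and $1$-uniformity of $\otimes$ together with Property~\ref{prop-scmorph} guarantees that this preimage cannot have higher state complexity than the corresponding monster language itself.
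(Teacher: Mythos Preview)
Your proof is correct and follows essentially the same approach as the paper: use the restriction-renaming property to write any admissible $k$-tuple as a preimage of a monster $k$-language under a $1$-uniform morphism, apply $1$-uniformity of $\otimes$ to commute with $\phi^{-1}$, and conclude via Property~\ref{prop-scmorph}. You are slightly more careful than the paper in two respects: you explicitly pick an optimal tuple realizing the supremum and you invoke Remark~\ref{remark-min} (together with the restriction $n_j\geq 2$) to justify that $\sc(\mathrm{L}(\mathds{M}_j))=n_j$, whereas the paper simply asserts ``each $\mathrm{L}(\mathds{M}_j)$ has the same state complexity as $L_j$'' without further comment.
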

\begin{proof}
  Suppose now that $\otimes$ is a $k$-ary $1$-uniform operation. Then, if $(L_1,\ldots,L_k)$ is a $k$-tuple of regular languages over $\Sigma$, $(A_1,\ldots,A_k)$ the $k$-tuple of DFAs such that each $A_j=(\Sigma,Q_j,i_j,F_j,\delta_j)$ is the minimal DFA of $L_i$, and $\phi$ the $1$-uniform morphism such that, for all $a\in\Sigma$, $\phi(a)=(\delta_1^a,\ldots,\delta_k^a)$, and if $\mon_{n_1,\ldots,n_k}^{F_1,\ldots,F_k}=(\mathds{M}_1,\ldots,\mathds{M}_k)$, then $\otimes(L)=\otimes(\phi^{-1}(\mathrm{L}(\mathds{M}_1)),\ldots,\phi^{-1}(\mathrm{L}(\mathds{M}_k)))=\phi^{-1}(\otimes(\mathrm{L}(\mathds{M}_1),\ldots,\mathrm{L}(\mathds{M}_k)))$.
It follows that $\sc(\otimes(L))= \sc(\phi^{-1}(\otimes(\mathrm{L}(\mathds{M}_1),\ldots,\mathrm{L}(\mathds{M}_k))))\leq \sc(\otimes(\mathrm{L}(\mathds{M}_1),\ldots,\mathrm{L}(\mathds{M}_k)))$ by Property \ref{prop-scmorph}. In addition, each $\mathrm{L}(\mathds{M}_j)$ has the same state complexity as $L_j$.
\end{proof}

Each $1$-uniform operation corresponds to a construction over DFAs, which is handy when we need to compute the state complexity of its elements. Such a construction on DFAs has some constraints that are described in the following definitions.
\begin{definition}\label{state-conf}
A \emph{state configuration} is a $3$-tuple $(Q,i,F)$ such that $Q$ is a finite set, $i\in Q$ and $F\subseteq Q$. A \emph{transition configuration} is a $4$-tuple $(Q,i,F,\delta)$ such that $(Q,i,F)$ is a state configuration and $\delta\in Q^Q$. 
 If $A=(\Sigma,Q,i,F,\delta)$ is a DFA, the transition configuration of a letter $a\in\Sigma$ in $A$ is the $4$-tuple $(Q,i,F,\delta^a)$.
 The \emph{state configuration} of  $A=(\Sigma,Q,i,F,\delta)$ is the triplet $(Q,i,F)$.
\end{definition}

\begin{definition}\label{def-mod}
A modifier $\mathfrak{m}$ is a $k$-ary operation over transition configurations such that the state configuration of the output depends only on the state configurations of the inputs. 

It means that if $(t_1,\ldots,t_k)$ and $(t'_1,\ldots,t'_k)$ are two $k$-tuples of transition configurations such that, for any $j\in\{1,\ldots,k\}$, the state configurations of $t_j$ and $t'_j$ are equal, then the state configuration of $\mathfrak{m}(t_1,\ldots,t_k)$ is equal to the state configuration of $\mathfrak{m}(t'_1,\ldots,t'_k)$.
\end{definition}

We thus introduce the following abuse of notation. For any $k$ DFAs $A_1,\ldots, A_k$ over the alphabet $\Sigma$, we let $\mathfrak m(A_1,\ldots,A_k)$ denote the DFA $A$ over $\Sigma$ such that for any letter $a$ of $\Sigma$, the transition configuration of $a$ in $A$ is equal to \begin{equation*}\mathfrak{m}((Q_1,i_1,F_1,\delta_1^a),\ldots,(Q_k,i_k,F_k,\delta_k^a)).\end{equation*}

\begin{example}[The unary modifier $\mathfrak{Star}$]\label{ex-star}
  For any DFA $A=(\Sigma,Q,i,F,\delta)$, let us  define \begin{equation*}\mathfrak{Star}(A)=(\Sigma,2^{Q},\emptyset,\{E| E\cap F \neq \emptyset\}\cup\{\emptyset\},\delta_1),\end{equation*} where $\delta_1$ is as follows : 
  \begin{equation*}\text{for all } a\in\Sigma,\   \delta_1^a(\emptyset)=\left\{\begin{array}{ll}\{\delta^a(i)\}\text{ if }\delta^a(i)\notin F \\
  \{\delta^a(i),i\}\mbox{ otherwise }
  \end{array}\right.
  \text{for all }  E\neq\emptyset,\;
  \delta_1^a(E)=\left\{\begin{array}{ll}\delta^a(E)\text{ if }\delta^a(E)\cap F=\emptyset \\
  \delta^a(E)\cup\{i\}\mbox{ otherwise }
  \end{array}\right.\end{equation*}
\end{example}
The modifier $\mathfrak{Star}$ describes the classical construction on DFA associated to the star operation on languages, \ie for all DFA $A$, $\mathrm L(A)^*=\mathrm L(\mathfrak{Star}(A))$.
\begin{example}[The binary modifier $\mathfrak{Bool}_\bullet$]\label{ex-xor}
Consider a Boolean binary operation denoted by $\bullet$.
  For all DFAs $A=(\Sigma,Q_1,i_1,F_1,\delta_1)$ and $B=(\Sigma,Q_2,i_2,F_2,\delta_2)$, define
  \begin{equation*}\mathfrak{Bool}_\bullet(A,B)=(\Sigma,Q_1\times Q_2,(i_1,i_2),(F_1\times Q_2)\bullet (Q_1\times F_2),(\delta_1,\delta_2))\end{equation*}
  The modifier  $\mathfrak{Bool}_\bullet$ describes the classical construction associated to the operation $\bullet$ on couples of languages, \ie for all DFAs $A$ and $B$, $\mathrm L(A)\bullet \mathrm L(B)=\mathrm L( \mathfrak{Bool}_\bullet(A,B))$.
\end{example}

\begin{definition}\label{def-coherent}
  A $k$-ary modifier $\mathfrak m$ is \emph{coherent} if, for every pair of $k$-tuples of DFAs $(A_1,\ldots,A_k)$ and $(B_1,\ldots,B_k)$ such that $\mathrm{L}(A_j)=\mathrm{L}(B_j)$ for all $j\in \{1,\dots,k\}$, we have $\mathrm{L}(\mathfrak m(A_1,\ldots,A_k))=\mathrm{L}(\mathfrak m(B_1,\ldots,B_k))$. 
 \end{definition}
If $\mathfrak m$ is a coherent modifier then we denote $\otimes_{\mathfrak m}$ the operation such that for any $k$-tuple of regular languages  $(L_1,\cdots, L_k)$ we have  $\otimes_{\mathfrak m}(L_1,\dots,L_k)=\mathrm{L}(\mathfrak m(A_1,\ldots,A_k))$ for any $k$-tuple of DFAs $(A_1,\dots,A_k)$ such that each $A_i$ recognizes $L_i$.
\begin{theorem}\label{th-mod}
  A  $k$-ary regular operation $\otimes$ is $1$-uniform if and only if $\otimes=\otimes_{\mathfrak m}$ for some coherent modifier $\mathfrak m$.
\end{theorem}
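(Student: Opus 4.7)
Both directions rest on the following elementary observation: if $\phi\colon\Sigma^*\to\Gamma^*$ is a $1$-uniform morphism and $A=(\Gamma,Q,i,F,\delta)$ is a DFA, the DFA $A^\phi=(\Sigma,Q,i,F,\delta')$ defined by $\delta'^a=\delta^{\phi(a)}$ has the same state configuration as $A$ and satisfies $\mathrm L(A^\phi)=\phi^{-1}(\mathrm L(A))$. In other words, $A^\phi$ is just $A$ with its letters relabelled according to $\phi$.

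For the direction $(\Leftarrow)$, assume $\otimes=\otimes_{\mathfrak m}$ for some coherent modifier $\mathfrak m$, and fix a $1$-uniform $\phi$ together with DFAs $A_j=(\Gamma,Q_j,i_j,F_j,\delta_j)$ recognizing the $L_j$. Then $A_j^\phi$ recognizes $\phi^{-1}(L_j)$. Because $A_j$ and $A_j^\phi$ have identical state configurations, Definition~\ref{def-mod} forces $\mathfrak m(A_1,\ldots,A_k)$ and $\mathfrak m(A_1^\phi,\ldots,A_k^\phi)$ to share a state configuration, and the letter-by-letter prescription of $\mathfrak m$ further guarantees that for every $a\in\Sigma$ the action of $a$ on $\mathfrak m(A_1^\phi,\ldots,A_k^\phi)$ coincides with the action of $\phi(a)$ on $\mathfrak m(A_1,\ldots,A_k)$. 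Therefore $\mathfrak m(A_1^\phi,\ldots,A_k^\phi)=\mathfrak m(A_1,\ldots,A_k)^\phi$, and taking languages yields $\otimes(\phi^{-1}(L_1),\ldots,\phi^{-1}(L_k))=\phi^{-1}(\otimes(L_1,\ldots,L_k))$.

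For the direction $(\Rightarrow)$, assume $\otimes$ is $1$-uniform. For each tuple of state configurations $\vec c=((Q_1,i_1,F_1),\ldots,(Q_k,i_k,F_k))$ form the generalized monster $(\mathds M_1^{\vec c},\ldots,\mathds M_k^{\vec c})$ over the alphabet $\Sigma_{\vec c}=Q_1^{Q_1}\times\cdots\times Q_k^{Q_k}$ (Definition~\ref{def-mon} extends at once to arbitrary finite pointed state sets), and fix once and for all a DFA $A^{\vec c}=(\Sigma_{\vec c},Q^{\vec c},i^{\vec c},F^{\vec c},\delta^{\vec c})$ recognizing $\otimes(\mathrm L(\mathds M_1^{\vec c}),\ldots,\mathrm L(\mathds M_k^{\vec c}))$. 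Define
\[\mathfrak m\bigl((Q_1,i_1,F_1,d_1),\ldots,(Q_k,i_k,F_k,d_k)\bigr)=\bigl(Q^{\vec c},i^{\vec c},F^{\vec c},(\delta^{\vec c})^{(d_1,\ldots,d_k)}\bigr),\]
which is visibly a modifier. Given any DFAs $A_j=(\Sigma,Q_j,i_j,F_j,\delta_j)$ recognizing the $L_j$, the extended restriction-renaming property gives $L_j=\phi^{-1}(\mathrm L(\mathds M_j^{\vec c}))$ for the $1$-uniform morphism $\phi(a)=(\delta_1^a,\ldots,\delta_k^a)$, and by the opening observation $\mathfrak m(A_1,\ldots,A_k)=(A^{\vec c})^\phi$. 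Hence
\[\mathrm L(\mathfrak m(A_1,\ldots,A_k))=\phi^{-1}\bigl(\otimes(\mathrm L(\mathds M_1^{\vec c}),\ldots,\mathrm L(\mathds M_k^{\vec c}))\bigr)=\otimes(L_1,\ldots,L_k),\]
the last equality using $1$-uniformity of $\otimes$. This proves $\otimes=\otimes_{\mathfrak m}$, and coherence of $\mathfrak m$ then follows immediately since the right-hand side depends only on the $L_j$.

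The main obstacle is the construction of $\mathfrak m$ in $(\Rightarrow)$: one must assign a transition function to every conceivable input tuple $(d_1,\ldots,d_k)$ without knowing in advance the alphabet from which it originates. The trick is to work with the generalized monster, whose alphabet is tautologically the set of all such tuples, so every $(d_1,\ldots,d_k)$ is already a letter of $A^{\vec c}$ and receives a canonical image; the $1$-uniformity of $\otimes$ is exactly the hypothesis that allows this definition to be pulled back correctly along arbitrary letter-renaming morphisms $\phi$.
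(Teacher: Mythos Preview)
Your proof is correct and follows essentially the same approach as the paper's. The only cosmetic difference is that the paper normalizes each state set $Q_j$ to $\llbracket\#Q_j\rrbracket$ via a chosen bijection sending $i_j$ to $0$ (so that only the standard monsters of Definition~\ref{def-mon} are needed), whereas you work directly with ``generalized monsters'' over the given pointed sets $(Q_j,i_j)$; both routes lead to the same modifier-via-monster construction and the same use of $1$-uniformity to pull back along the morphism $\phi(a)=(\delta_1^a,\ldots,\delta_k^a)$.
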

\begin{proof}
Let $\otimes$ be a $1$-uniform operation. For any $k$-tuple $C=((n_1,F_1),\cdots,(n_k,F_k))$ with $n_i\in\mathbb N$ and $F_i\subset\IntEnt{n_i}$ for any $i\in\{1,\dots,k\}$, we define
\begin{equation*}
B_C=(\IntEnt{n_1}^{\IntEnt{n_1}}\times\cdots\times\IntEnt{n_k}^{\IntEnt{n_k}},Q_C,i_C,F_C,\delta_C)
\end{equation*}
as the minimal automaton that recognizes $\otimes(L(\mathbb M_1),\dots,L(\mathbb M_k))$ with $\mon^{F_1,\dots,F_k}_{n_1,\dots,n_k}=(\mathbb M_1,\cdots,\mathbb M_k)$.\\
Let $T=((Q_1,i_1,F_1,\delta_1),\dots,(Q_k,i_k,F_k,\delta_k))$ be a $k$-tuple of transition configurations. By the axiom choice, we associated to any $(Q,i)$, where $Q$ is a finite set and $i\in Q$, a bijection $\varphi_{Q,i}:Q\rightarrow\IntEnt{\#Q}$ sending $i$ to $0$. To any $f:Q\rightarrow Q$, we associate the unique map $\hat f^{i}:\IntEnt{\#Q}\rightarrow \IntEnt{\# Q}$ such that $\hat f^i(\varphi_{Q,i}(q))=\varphi_{Q,i}(f(q))$ for any $q\in Q$. 
When there is no ambiguity, we omit the subscript and denote $\hat f=\hat f^i$.\\
We associate to $T$, the automaton 
\[
A^\otimes_T=B_{((\#Q_1,\varphi_{Q_1,i_1}(F_1)),\dots,(\#Q_k,\varphi_{Q_k,i_k}(F_k)))}=
(\IntEnt{\#Q_1}^{\IntEnt{\#Q_1}}\times\cdots\times\IntEnt{\#Q_k}^{\IntEnt{\#Q_k}},
Q_T,i_T,F_T,\delta_T),
\]
and we set
\[
\mathfrak m(T)=(Q_T,i_T,F_T,\delta)
\]
where $\delta(q)=\delta_T^{(\widehat{\delta_1},\dots,\widehat{\delta_k})}(q)$.\\
Obviously $(Q_T,i_T,F_T)$ depends only on the $k$-tuple $((Q_1,i_1,F_1),\dots,(Q_k,i_k,F_k))$ and so $\mathfrak m$ is a modifier.\\
  Furthermore, by construction, for any $k$-tuple of automata $(A_1,\ldots,A_k)$ with $A_j=(\Sigma,Q_j,i_j,F_j,\delta_j)$ we have
  \[\mathrm L(\mathfrak m(A_1,\dots, A_k))=\phi^{-1}(\mathrm L(B_{((\#Q_1,\varphi_{Q_1,i_1}(F_1)),\dots,(\#Q_k,\varphi_{Q_k,i_k}(F_k)))})),\] where $\phi$ is the $1$-uniform morphism such that $\phi(a)=(\widehat{\delta_1^a},\dots,
  \widehat{\delta_k^a})$ for all $a\in\Sigma$ . Therefore, we have $\mathrm L(\mathfrak m (A_1,\dots,A_k))=\phi^{-1}(\otimes(\mathrm L(\mathbb M_1),\dots,L(\mathbb M_k)))$ where $(\mathbb M_1,\dots,\mathbb M_k)=\mon^{\varphi_{Q_1,i_1}(F_1),\dots,\varphi_{Q_k,i_k}(F_k)}_{\#Q_1,\dots,\#Q_k}$. Since $\otimes$ is $1$-uniform, we obtain \[\mathrm L(\mathfrak m (A_1,\cdots,A_k))=\otimes(\phi^{-1}(\mathrm L(\mathbb M_1)),\dots,\phi^{-1}(\mathrm L(\mathbb M_k)))=\otimes(L(A_1),\dots,L(A_k)).\]
  
Conversely, let $\otimes$ be a $k$-ary regular operation and let $\mathfrak m$ be a $k$-ary modifier such that for any $k$-tuple of regular languages $(L_1,\dots,L_k)$ and any $k$-tuple of DFA $(A_1,\dots, A_k)$ such that each $A_i$  recognizes $L_i$, $\otimes(L_1,\dots,L_k)=\mathrm L(\mathfrak m(A_1,\dots,A_k))$. Let us  prove that $\otimes$ is $1$-uniform. Let $\Gamma$ and $\Sigma$ be two alphabets. Consider a 1-uniform morphism $\phi$ from $\Gamma^*$ to $\Sigma^*$ and $(L_1,\dots,L_k)$ a $k$-tuple of languages over $\Sigma$. For any $j$, we consider an automaton $A_j=(\Sigma,Q_j,i_j,F_j,\delta_j)$ that recognizes $L_j$ and let $B_j=(\Gamma,Q_j,i_j,F_j,\tilde \delta_j)$ be the DFA such that $\tilde \delta^a_j=\delta_j^{\phi(a)}$ for any letter $a\in\Gamma$. We have $\mathrm L(B_j)=\phi^{-1}(\mathrm L(A_j))$.

Let $\mathfrak m(A_1,\dots,A_k)=(\Sigma,Q,i,F,\delta)$ and $\mathfrak m(B_1,\dots,B_k)=(\Gamma,Q',i',F',\delta')$. Since for any $j$ the state configuration of $A_j$ is the same as the state configuration of $B_j$, we have $(Q,i,F)=(Q',i',F')$. Furthermore, because the transition function of any letter $a\in\Gamma$ in $B_j$ is also the same as the transition function of $\phi(a)$ in $A_j$, we have $\delta'^a=\delta^{\phi(a)}$. Hence, $\mathrm L(\mathfrak m(B_1,\dots,B_k))=\phi^{-1}(\mathrm L(\mathfrak m(A_1,\dots,A_k)))$, which implies that $\otimes(\mathrm L(B_1),\dots,\mathrm L(B_k))=(\phi^{-1}(\otimes(A_1)),\dots,\phi^{-1}(\otimes(A_k)))$. Therefore, \[ \otimes(\phi^{-1}(L_1,\dots,L_k))=\otimes(\phi^{-1}(\mathrm L(A_1)),\dots,\phi^{-1}(\mathrm L(A_k)))=\phi^{-1}(\otimes(\mathrm L(A_1)),\dots,\mathrm L(A_k))))=\phi^{-1}(\otimes(L_1,\dots,L_k))\] as expected.
\end{proof}

\section{The combinatorics of the star of a binary Boolean operation}\label{sec-sc}

We list in the Table  \ref{op-bool} below the 16 possible Boolean functions and we write what they correspond to in terms of operation on languages.
\begin{equation*}L_1\bullet L_2=\{z\mid [z\in L_1]\bullet [z\in L_2]=1\}.\end{equation*}

 \begin{table}[H]
  \centerline{\begin{tabular}{p{0.70cm}|p{0.45cm}p{0.45cm}p{0.45cm}p{0.45cm}p{0.45cm}p{0.45cm}p{0.45cm}p{0.45cm}p{0.45cm}p{0.45cm}p{0.45cm}p{0.45cm}p{0.45cm}p{0.45cm}p{0.45cm}p{0.45cm}}
   &\rotatebox{45}{$\emptyset$}   & 
   \rotatebox{45}{$L_1\cap L_2$}  &
   \rotatebox{45}{$L_1 \cap \overline{L_2}$}  & 
   \rotatebox{45}{$L_1$}  & 
   \rotatebox{45}{$ \overline{L_1} \cap  L_2$} & 
   \rotatebox{45}{$L_2$}  & 
   \rotatebox{45}{$L_1 \oplus L_2$}  &
   \rotatebox{45}{$L_1 \cup  L_2$}  & 
   \rotatebox{45}{$\overline{L_1} \cap  \overline{L_2}$}  & 
   \rotatebox{45}{$ \overline{L_1} \oplus L_2$}  &
   \rotatebox{45}{$ \overline{L_2}$}  & 
   \rotatebox{45}{$L_1 \cup \overline{L_2}$}  & 
   \rotatebox{45}{$\overline{L_1}$}  & 
   \rotatebox{45}{$\overline{L_1} \cup  L_2$}  & 
   \rotatebox{45}{$\overline{L_1} \cup  \overline{L_2}$}  & 
   \rotatebox{45}{$\Sigma^*$}  \\\hline
  $0\bullet 0$&$0$ & $0$ & $0$ & $0$ & $0$ & $0$ & $0$ & $0$ & $1$ & $1$ & $1$ & $1$ & $1$ & $1$ & $1$ & $1$\\
   $0\bullet 1$&$0$ & $0$ & $0$ & $0$ & $1$ & $1$ & $1$ & $1$ & $0$ & $0$ & $0$ & $0$ & $1$ & $1$ & $1$ & $1$\\
  $1\bullet 0$&$0$ & $0$ & $1$ & $1$ & $0$ & $0$ & $1$ & $1$ & $0$ & $0$ & $1$ & $1$ & $0$ & $0$ & $1$ & $1$\\
 $1\bullet 1$&$0$ & $1$ & $0$ & $1$ & $0$ & $1$ & $0$ & $1$ & $0$ & $1$ & $0$ & $1$ & $0$ & $1$ & $0$ & $1$\\
  \end{tabular}}
  \caption{The 16 binary Boolean functions.}
  \label{op-bool}
  \end{table}
From Table \ref{op-bool}, we observe that there are only $10$ non degenerated operations that depend truly of the two operands $L_1$ and $L_2$. These operations are the only ones that we have to consider. Table \ref{op-bool} also shows that each of the $10$ non degenerated operations can be obtained by acting with union, intersection or xor on two languages or their complementary.

We use the symbol $\ostar$ to denote any operation defined by 
\begin{equation}\label{eq-ostar}
    L_1\ostar L_2=(L_1\bullet L_2)^*.
\end{equation}

Examples \ref{ex-star} and \ref{ex-xor} together with Theorem \ref{th-mod} show that any $\ostar$ is $1$-uniform. We define the modifier \begin{equation}\label{eq-StBoolMod}\StBool_\bullet=\mathfrak{Star}\circ \mathfrak{Bool}_\bullet.\end{equation} 
To be more precise, let us describe how the modifier $\StBool_\bullet$ acts on automata: if $A_1=(\Sigma, Q_1,i_1,F_1,\delta_1)$ and $A_2=(\Sigma,Q_2,i_2,F_2,\delta_2)$, then
\begin{equation*}\StBool_\bullet(A_1,A_2)=(\Sigma,2^{Q_1\times Q_2},\emptyset,\{E\in 2^{Q_1 \times Q_2}\mid E\cap F\neq\emptyset \}\cup\{\emptyset\},\delta)\end{equation*}
where $F=(F_1\times Q_2)\bullet(Q_1\times F_2)$ and, for all $a\in \Sigma$,
\begin{equation*}\delta^a(\emptyset)=\left\{\begin{array}{ll}\{(\delta_1^a(i_1),\delta_2^a(i_2))\}\text{ if }(\delta_1^a(i_1),\delta_2^a(i_2))\notin F \\
  \{(\delta_1^a(i_1),\delta_2^a(i_2)),(i_1,i_2)\}\mbox{ otherwise }
\end{array}\right.\end{equation*}
\begin{center}
and, for all $E\neq\emptyset$, $\delta^a(E)=\left\{\begin{array}{ll}E\cdot (\delta_1^a,\delta_2^a)\text{ if }E\cdot (\delta_1^a,\delta_2^a)\cap F=\emptyset \\
E\cdot (\delta_1^a,\delta_2^a)\cup\{(i_1,i_2)\}\mbox{ otherwise. }
\end{array}\right.$
\end{center}

Theorem \ref{th-mon2} states that any $\ostar$ admits a family of $2$-monsters as witness. For any positive integers $m,n$, let $(\mathds M_1,\mathds M_2)=\mon_{m,n}^{\{m-1\},\{0\}}$. We are going to show that, for all $(m,n)\in (\mathbb N\setminus \{0\})^2$, $(\mathrm L(\mathds M_1)),\mathrm L(\mathds M_2))$ is indeed a witness for any $\ostar$. 
For any positive integers $m,n$, any $F_1\subseteq\IntEnt{m}$, and $F_2\subseteq\IntEnt{n}$,
let us denote by $\mathrm{M}^{F_1,F_2}_\bullet$ the DFA $\StBool_\bullet(\mon_{m,n}^{F_1,F_2})$. 

We identify elements of $2^{\IntEnt{m}\times\IntEnt{n}}$ to Boolean matrices of size $m\times n$. Such a matrix is called a tableau when crosses are put in place of $1$s, and $0$s are erased.
The set of such tableaux is denoted by $\mathcal{T}_{m,n}$. Each tableau of $\mathcal{T}_{m,n}$ labels a state of $\mathrm{M}^{F_1,F_2}_\bullet$.

If $T$ is an element of $\mathcal{T}_{m,n}$, we denote by $T_{x,y}$ the value of the Boolean matrix $T$ at row $x$ and column $y$. Therefore,  the three following assertions mean the same thing : a cross is at the coordinates $(x,y)$, $T_{x,y}=1$, and $(x,y)\in T$.
We denote by $\mathrm{IsFinal}(x,y)$ the Boolean value $[i\in F_1]\bullet [j\in F_2]=[(i,j)\in F]$.
The cell $(x,y)$ of $T$ is \emph{final} if and only if $\mathrm{IsFinal}(x,y)=1$. 
The tableau $T$ is \emph{final} if there exists $(i,j)\in \IntEnt{m}\times\IntEnt{n}$ such that $T_{x,y}=1$ and $\mathrm{IsFinal}(x,y)=1$.
The \emph{final zone} of the tableau $T$ is the set of all his final cells, \textit{i.e.} $\mathrm{FinalZone}(T)=\{(i,j)\in \IntEnt{m}\times\IntEnt{n}\mid \mathrm{IsFinal}(i,j)\}$. When there is no ambiguity, we refer to final zone for the final zone of any tableau labelling a state of the  automaton we consider.

Notice that the final zone of $T$ is exactly the set $F=(F_1\times \IntEnt{n})\bullet(\IntEnt{m}\times F_2)$ of the modifier $\StBool_\bullet$.
Notice also that a tableau $T\neq \emptyset$
 is final if and only if it labels a non empty final state in the automaton $\mathrm{M}_\bullet^{F_1,F_2}$.
 
\subsection{Valid states and accessibiliy\label{sec-valid}}
 We  refer to the final zone for both the final zone of $T$ and of $T'$ because these two tableaux label a state of the same automaton.
\begin{definition}\label{definition validity}
A tableau $T$ labelling a state of $\mathrm{M}_{\bullet}^{F_1,F_2}$ is \emph{valid} if either its final zone is empty or if $T_{(0,0)}=1$.
\end{definition} 
Let $\Val^{F_1,F_2}_\bullet$ be the set of  valid tableaux labeling states of $\mathrm{M}_\bullet^{F_1,F_2}$ and     $\Val(\mathrm{M}^{F_1,F_2}_\bullet)$ be the restriction of the DFA $\mathrm{M}_{\bullet}^{F_1,F_2}$  to the states of $\Val^{F_1,F_2}_\bullet$. 
The aim of this section is to compare the set $\Val^{F_1,F_2}_\bullet$ to the set, denoted by $\Acc_{\bullet}^{F_1,F_2}$,  of accessible states of $\mathrm{M}_{\bullet}^{F_1,F_2}$.
Straightforward from the definition of the modifier $\StBool_\bullet$, the set of accessible states are valid. Nevertheless, we show that the    accessibility of the valid states  depends on the final zone.

For any tableau $T$, we define the finest equivalence relation $\Delta$  on $T$  satisfying
$(i,j)\Delta (i,j')$ and $(i,j)\Delta (i',j)$.
\begin{lemma}\label{Delta}
Let $T\neq \emptyset$ be a tableau representing a state of $\mathrm{M}_\bullet^{F_1,F_2}$.
        Let us suppose that for all $i\neq 0$ and for all $j\neq 0$, $(i,j)$ is not in the final zone of $T$. Then, we have $$T\text{ is accessible implies }\#T/_\Delta=1.$$  
\end{lemma}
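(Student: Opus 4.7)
The plan is to induct on the length of a shortest word $w$ such that $T=\delta^w(\emptyset)$ in $\mathrm{M}_\bullet^{F_1,F_2}$. Since $T\neq\emptyset$ we have $|w|\geq 1$, and one easily checks from the formulas defining $\delta^a$ that once a state is non-empty, every successor is non-empty; so any intermediate state in the run of $w$ from $\emptyset$ to $T$ is either $\emptyset$ (at step $0$ only) or non-empty.

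The core engine is the following monotonicity statement which I would prove first as a one-line observation: for every letter $a$ and every non-empty $T'\subseteq\IntEnt m\times\IntEnt n$, the map $f_a\colon (i,j)\mapsto(\delta_1^a(i),\delta_2^a(j))$ is a surjection from $T'$ onto $T'\cdot(\delta_1^a,\delta_2^a)$ that sends $\Delta$-related cells to $\Delta$-related cells (two cells sharing a row, resp.\ column, have images sharing a row, resp.\ column). Consequently the number of $\Delta$-classes cannot increase under the action of $f_a$, and in particular $\#T'/_\Delta=1$ forces $\#\bigl(T'\cdot(\delta_1^a,\delta_2^a)\bigr)/_\Delta=1$.

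For the base case ($|w|=1$), starting from $\emptyset$ either $T=\{(\delta_1^a(0),\delta_2^a(0))\}$ is a singleton, or $(\delta_1^a(0),\delta_2^a(0))\in F$ and $T=\{(\delta_1^a(0),\delta_2^a(0)),(0,0)\}$; in that second case the hypothesis on the final zone forces $\delta_1^a(0)=0$ or $\delta_2^a(0)=0$, so the two cells share a row or a column and are $\Delta$-related. In both cases $\#T/_\Delta=1$. For the inductive step, write $T=\delta^a(T')$ with $T'$ accessible by a shorter word, so $T'\neq\emptyset$ and $\#T'/_\Delta=1$ by the inductive hypothesis. The monotonicity observation gives $\#\bigl(T'\cdot(\delta_1^a,\delta_2^a)\bigr)/_\Delta=1$. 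If the reached set does not meet $F$, then $T=T'\cdot(\delta_1^a,\delta_2^a)$ and we are done; otherwise $T=T'\cdot(\delta_1^a,\delta_2^a)\cup\{(0,0)\}$, and picking any $(i,j)\in T'\cdot(\delta_1^a,\delta_2^a)\cap F$, the hypothesis on the final zone yields $i=0$ or $j=0$, so $(i,j)\Delta(0,0)$, gluing $(0,0)$ to the unique $\Delta$-class already present.

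I do not anticipate a genuine obstacle here; the only delicate point is the monotonicity observation, which must handle the potential collisions under $f_a$ correctly (two distinct cells of $T'$ may map to the same cell of $T$, but this only merges classes further). The hypothesis on the final zone is used exactly once, in the Case where a final cell is produced, to ensure that the newly inserted $(0,0)$ is connected to the rest via a shared row or column.
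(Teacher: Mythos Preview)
Your proposal is correct and follows essentially the same approach as the paper's proof: induction on the length of a shortest accessing word, the observation that the coordinate-wise map $(i,j)\mapsto(\delta_1^a(i),\delta_2^a(j))$ preserves $\Delta$-relatedness (so the number of $\Delta$-classes cannot grow), and the use of the hypothesis on the final zone to attach the newly inserted $(0,0)$ to the existing class. Your write-up is in fact slightly more careful than the paper's, since you make explicit that non-empty states only have non-empty successors, which is needed to apply the inductive hypothesis to $T'$.
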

\begin{proof}
 Let us prove the result by induction on the minimal words $w$ such that $\delta^w(\emptyset)=T$.\\
 If $w\in \Sigma$ then  either $\#T=1$ or $T=\{(i,0),(0,0)\}$ for some $i$ or $T=\{(0,j),(0,0)\}$ for some $j$. Obviously, in the three cases $\#T/_\Delta=1$.\\
 Suppose now the result true for any strict prefix of $w=w'a$ with $a\in\Sigma$. Let $T'=\delta^{w'}(\emptyset)$. By induction $\#T'/_{\Delta}=1$. We also have $T=\delta^a(T')$.\\
 First consider the tableau $\hat T=T'.(\delta_1^a,\delta_2^a)$. If $(i,j)\Delta (i',j')$ in $T'$ then $(\delta_1^a(i),\delta_1^a(i))\Delta(\delta_1^a(i),\delta_1^a(i))$. Indeed, this is obtained by transitivity considering that $$(i,j), (i,j')\in T'\Rightarrow (\delta^a_1(i),\delta^a_2(j)), (\delta^a_1(i),\delta^a_2(j'))\in T'$$ and 
 $$(i,j), (i',j)\in T'\Rightarrow (\delta^a_1(i),\delta^a_2(j)), (\delta^a_1(i'),\delta^a_2(j))\in T'.$$
 So by induction $\#T'/_\Delta=\#\hat T/_\Delta=1$.\\
 If $T\neq \hat T$ then $T=\hat T\cup\{(0,0)\}$ and there exists $(i,j)\in T$ in the final zone. But this implies that $i=0$ or $j=0$, so $(0,0)\Delta (i,j)$ and so $\#T/_\Delta=1$. 
\end{proof}

\begin{proposition}\label{ValnotAcc}
    When the final zone is included in $(\IntEnt{m}\times\{0\})\cup(\{0\}\times \IntEnt{n}) $, we have \[\Acc_{\bullet}^{F_1,F_2}\subsetneq \Val_{\bullet}^{F_1,F_2}.\]
\end{proposition}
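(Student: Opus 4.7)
The inclusion $\Acc_\bullet^{F_1,F_2} \subseteq \Val_\bullet^{F_1,F_2}$ has already been noted just before the statement: by direct inspection of the modifier $\StBool_\bullet$, whenever the image of a state under a transition meets the final zone, the cell $(0,0)$ is inserted, so every accessible non-empty final tableau contains $(0,0)$ and is therefore valid. Only the strictness of the inclusion requires work, and I plan to prove it by exhibiting one valid tableau that is not accessible.

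Assuming $m,n \geq 2$, the natural candidate is $T = \{(0,0),(1,1)\}$. Since $(0,0) \in T$, the validity condition of Definition \ref{definition validity} is automatically satisfied, regardless of the shape of the final zone, so $T$ belongs to $\Val_\bullet^{F_1,F_2}$. The two cells of $T$ share neither a row nor a column, and $T$ contains no third cell that could bridge them; therefore the equivalence $\Delta$ restricted to $T$ has two classes, i.e.\ $\#T/_\Delta = 2$.

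To conclude non-accessibility, I would simply invoke Lemma \ref{Delta}: the hypothesis of the present proposition on the final zone is verbatim the hypothesis of that lemma, and its contrapositive forbids any non-empty accessible tableau from having more than one $\Delta$-class. Hence $T \in \Val_\bullet^{F_1,F_2} \setminus \Acc_\bullet^{F_1,F_2}$, which gives the strict inclusion. No serious obstacle is expected here; the whole argument boils down to producing the smallest two-cell witness that Lemma \ref{Delta} rules out, and verifying that the hypotheses of the two statements match exactly.
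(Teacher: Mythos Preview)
Your proof is correct and follows exactly the same approach as the paper: both exhibit the tableau $\{(0,0),(1,1)\}$ as a valid state with two $\Delta$-classes and invoke Lemma~\ref{Delta} to conclude it is not accessible. The only minor addition is your explicit mention of the assumption $m,n\geq 2$, which the paper leaves implicit.
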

\begin{proof}
Straightforward from the definition of the modifier $\StBool_\bullet$, we have $\Acc_{\bullet}^{F_1,F_2}\subset \Val_{\bullet}^{F_1,F_2}$.\\
  In the context of the proposition,  the tableau $\{(0,0),(1,1)\}$ is valid but $\Delta$ has two distinct classes. Hence, from Lemma \ref{Delta}, it is not accessible.
\end{proof}

Next definition defines an order $<$ on tableaux of $\mathcal{T}_{m,n}$ labelling the states of $\mathrm{M}_\bullet^{F_1,F_2}$.
\begin{definition}\label{def-order}
 Let $T$ and $T'$ be two  tableaux of $\mathcal{T}_{m,n}$.  The number of cross in the non-final zone of $T$ is denoted by  $\mathrm{nf}(T)$.\\
 
 We write $T<T'$
 \begin{enumerate}
     \item  if $\#T<\#T'$.\\
          \centerline{
            \begin{tikzpicture}[scale=0.4]
              \draw[step=1.0,black, thin] (0,0) grid (4,4);
              \node[scale=1.5] at (2.5,1.5) {$\times$};
              \node[scale=1.5] at (3.5,3.5) {$\times$};
              \node[scale=1.5] at (1.5,3.5) {$\times$};
               \node[scale=1.5] at (7,2)  {\Large $<$};
              \draw[step=1.0,black, thin] (10,0) grid (14,4);
              \node[scale=1.5] at (11.5,2.5) {$\times$};
              \node[scale=1.5] at (13.5,3.5) {$\times$};
              \node[scale=1.5] at (12.5,0.5) {$\times$};
              \node[scale=1.5] at (11.5,3.5) {$\times$};
          \end{tikzpicture}}
     \item  if $\#T=\#T'$ and if $\mathrm{nf}(T)<\mathrm{nf}(T')$.\\
     \centerline{
            \begin{tikzpicture}[scale=0.4]
                          \node[scale=1.5,fill=lightgray] at (3.5,3.5) {$ $};
              \node[scale=1.5,fill=lightgray] at (3.5,2.5) {$ $};
              \node[scale=1.5,fill=lightgray] at (3.5,1.5) {$ $};
              \node[scale=1.5,fill=lightgray] at (3.5,0.5) {$ $};
              \node[scale=1.5,fill=lightgray] at (2.5,0.5) {$ $};
              \node[scale=1.5,fill=lightgray] at (1.5,0.5) {$ $};
              \node[scale=1.5,fill=lightgray] at (0.5,0.5) {$ $};
              \node[scale=1.5,fill=lightgray] at (13.5,3.5) {$ $};
              \node[scale=1.5,fill=lightgray] at (13.5,2.5) {$ $};
              \node[scale=1.5,fill=lightgray] at (13.5,1.5) {$ $};
              \node[scale=1.5,fill=lightgray] at (13.5,0.5) {$ $};
              \node[scale=1.5,fill=lightgray] at (12.5,0.5) {$ $};
              \node[scale=1.5,fill=lightgray] at (11.5,0.5) {$ $};
              \node[scale=1.5,fill=lightgray] at (10.5,0.5) {$ $};
              \draw[step=1.0,black, thin] (0,0) grid (4,4);
              \node[scale=1.5] at (2.5,0.5) {$\times$};
              \node[scale=1.5] at (3.5,3.5) {$\times$};
              \node[scale=1.5] at (1.5,3.5) {$\times$};
               \node[scale=1.5] at (7,2)  {\Large $<$};
              \draw[step=1.0,black, thin] (10,0) grid (14,4);
              \node[scale=1.5] at (13.5,3.5) {$\times$};
              \node[scale=1.5] at (12.5,1.5) {$\times$};
              \node[scale=1.5] at (11.5,3.5) {$\times$};
          \end{tikzpicture}}
     \item if $\#T=\#T'$,  $\mathrm{nf}(T)=\mathrm{nf}(T')$ and $T_{0,0}=1$ while $T'_{0,0}=0$.\\
     \centerline{
            \begin{tikzpicture}[scale=0.4]
                          \node[scale=1.5,fill=lightgray] at (3.5,3.5) {$ $};
              \node[scale=1.5,fill=lightgray] at (3.5,2.5) {$ $};
              \node[scale=1.5,fill=lightgray] at (3.5,1.5) {$ $};
              \node[scale=1.5,fill=lightgray] at (3.5,0.5) {$ $};
              \node[scale=1.5,fill=lightgray] at (2.5,0.5) {$ $};
              \node[scale=1.5,fill=lightgray] at (1.5,0.5) {$ $};
              \node[scale=1.5,fill=lightgray] at (0.5,0.5) {$ $};
              \node[scale=1.5,fill=lightgray] at (13.5,3.5) {$ $};
              \node[scale=1.5,fill=lightgray] at (13.5,2.5) {$ $};
              \node[scale=1.5,fill=lightgray] at (13.5,1.5) {$ $};
              \node[scale=1.5,fill=lightgray] at (13.5,0.5) {$ $};
              \node[scale=1.5,fill=lightgray] at (12.5,0.5) {$ $};
              \node[scale=1.5,fill=lightgray] at (11.5,0.5) {$ $};
              \node[scale=1.5,fill=lightgray] at (10.5,0.5) {$ $};
              \draw[step=1.0,black, thin] (0,0) grid (4,4);
              \node[scale=1.5] at (2.5,0.5) {$\times$};
              \node[scale=1.5] at (3.5,3.5) {$\times$};
              \node[scale=1.5] at (0.5,3.5) {$\times$};
               \node[scale=1.5] at (7,2)  {\Large $<$};
              \draw[step=1.0,black, thin] (10,0) grid (14,4);
              \node[scale=1.5] at (13.5,3.5) {$\times$};
              \node[scale=1.5] at (13.5,1.5) {$\times$};
              \node[scale=1.5] at (11.5,3.5) {$\times$};
          \end{tikzpicture}}
 \end{enumerate}
\end{definition}

 \begin{proposition}\label{prop-acc}
  If
  the final zone of $\mathrm{M}^{F_1,F_2}_\bullet$ is not a subset of  $(\{0\}\times \IntEnt{n})\cup(\IntEnt{m}\times \{0\})$ then  
  any of its non empty state 
  is accessible if and only if it is a valid tableau.
\end{proposition}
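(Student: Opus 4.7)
The forward direction that accessible implies valid is immediate from the definition of $\StBool_\bullet$: a cross is placed at $(0,0)$ during a transition only when the image of the current tableau meets the final zone, so every reachable non-empty state automatically contains $(0,0)$ whenever it intersects the final zone, which is exactly the validity condition. The inclusion $\Acc_\bullet^{F_1,F_2}\subseteq\Val_\bullet^{F_1,F_2}$ was already noted in the proof of Proposition \ref{ValnotAcc} and is re-used verbatim here.

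For the converse, my plan is to argue by well-founded induction on the order $<$ of Definition \ref{def-order}, showing that every valid non-empty tableau $T$ admits a valid $T'<T$ together with a letter $(a_1,a_2)\in\Sigma$ such that $\delta^{(a_1,a_2)}(T')=T$; the inductive hypothesis then gives accessibility of $T'$ and hence of $T$. The induction also requires a handful of explicit base cases, most notably the final singleton $T=\{(0,0)\}$ when $(0,0)\in F$ (which has no strict $<$-predecessor among valid tableaux): for these I will write a short explicit word, typically $\emptyset\to\{(x,y)\}\to T$ where $(x,y)$ is chosen off-axis and non-final, which exists because $m,n\geq 2$ and the final zone cannot exhaust every off-axis position.

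The construction of the predecessor $T'$ splits into two scenarios. If $T$ is final, then $(0,0)\in T$ by validity, and I aim for a predecessor of strictly smaller cardinality by exploiting the automatic insertion of $(0,0)$: take $T'$ with $\#T'=\#T-1$ whose image under a suitably chosen letter equals $T\setminus\{(0,0)\}$, arranging for that image to contain a final cell so that $(0,0)$ is re-inserted. When the only final cell of $T$ is $(0,0)$ itself, I use the off-axis final cell $(i^*,j^*)$ supplied by the hypothesis to route the construction through a $T'$ that still has $(0,0)$ but whose image contains $(i^*,j^*)$. If, on the other hand, $T$ is non-final, then $\mathrm{nf}(T)=\#T$ and no cardinality reduction is feasible; instead I will build $T'$ of the same cardinality with strictly smaller $\mathrm{nf}$ (clause 2 of $<$), by incorporating both $(0,0)$ and $(i^*,j^*)$ into $T'$ and using a coordinate-wise letter that realigns these two cells onto two chosen cells of $T$ while fixing the remaining cells in place.

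The main obstacle is ensuring that, in each case, the chosen $T'$ is simultaneously valid and that a single pair of functions $(a_1,a_2)$ on $\IntEnt{m}$ and $\IntEnt{n}$ realises the required transformation: consistency requires that whenever two cells of $T'$ share a row or column, their images share the corresponding row or column of $T$. The off-axis cell $(i^*,j^*)$ is indispensable here, precisely because its coordinates $i^*\neq 0$ and $j^*\neq 0$ give two fresh rows and columns that can be redirected independently of the cells already present in $T'$. Without this freedom one is trapped in the axis-bound configurations ruled out by Lemma \ref{Delta}, which is exactly why the hypothesis on the final zone is needed. The remaining delicate sub-cases (collisions between the auxiliary cells and cells of $T$) will be resolved by inserting, if necessary, an extra preparatory step through a non-final auxiliary state.
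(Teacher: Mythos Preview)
Your plan coincides with the paper's proof: induction on the order $<$, the final/non-final split, use of the off-axis final cell $(i^*,j^*)$ to decrease $\mathrm{nf}$, and two-letter words for the degenerate sub-cases where all crosses of the target lie in row~$0$ and column~$0$. Two small corrections will save you trouble when you write it out. First, the base case $\{(0,0)\}$ is reached from $\emptyset$ in a single step by the letter $(\Id,\Id)$, irrespective of whether $(0,0)\in F$, so no two-step detour is needed. Second, in the sub-case where $T$ is final and $(0,0)$ is its only final cross, it is the predecessor $T'$ itself that must \emph{contain} $(i^*,j^*)$ (this is what makes $T'<T$ via clause~2, same cardinality but fewer non-final crosses); its image under the letter must equal $T$, which by hypothesis does not contain $(i^*,j^*)$, so your phrase ``whose image contains $(i^*,j^*)$'' is backwards. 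The paper also handles the non-final target with $T_{0,0}=0$ more simply than you sketch: it just transposes one cross of $T$ to $(0,0)$ (invoking clause~3, or clause~2 if the swap happens to land a cross in $F$), without bringing $(i^*,j^*)$ into play at all, which avoids the need to find two distinct target cells when $\#T=1$.
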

\begin{proof}
From the definition of the transition function of $\StBool_\bullet$, every  non empty  tableau $T$  which is not valid, \ie with $T_{0,0}=0$ and a cross in the final zone, is not accessible.
  
  Now we prove the converse by induction on the order $<$ defined above. \\
  The initial case is obvious since the tableau $\{(0,0)\}$ is accessible from $\emptyset$ by reading the letter  $(\Id,\Id)$. 
  Let $T'>\{(0,0)\}$ be a valid tableau. We exhibit a letter $(f,g)$  together with a valid tableau $T<T'$ such that $\delta^{(f,g)}(T)=T'$. Assuming, by induction, that $T$ is accessible, this shows the accessibility of $T$.\\ Suppose first that $T'$ is not final. We have to consider two cases:
  \begin{enumerate}
      \item  If $T'_{0,0}=0$ then we consider a cross $(i,j)\in T'$ and we set $T=T'\cdot ((0\ i),(0\ j))$. We notice that $T$ is a valid tableau that has the same number of crosses as $T'$, at most as many non-final crosses as $T'$ because all the crosses in $T'$ are non-final, and $T_{0,0}=1$. Hence, $T<T'$ and by induction $T$ is accessible. Since $T\cdot ((0\ i),(0\ j))=T'$ is not final, we have  $\delta^{\left((0\ i),(0\ j)\right)}(T)=T'$. We deduce that $T'$ is accessible.

          \centerline{
         \begin{tikzpicture}[scale=0.6]
           \foreach \i in {0,...,3} {
           \node[scale=1.5,fill=red,opacity=0.3] at (\i+0.5,3.5) {$ $};
           \node[scale=1.5,fill=blue,opacity=0.3] at (\i+0.5,1.5) {$ $};
           \node[scale=1.5,fill=red,opacity=0.3] at (0.5,\i+0.5) {$ $};
           \node[scale=1.5,fill=blue,opacity=0.3] at (2.5,\i+0.5) {$ $};
            \node[scale=1.5,fill=blue,opacity=0.3] at (\i+10.5,3.5) {$ $};
           \node[scale=1.5,fill=red,opacity=0.3] at (\i+10.5,1.5) {$ $};
           \node[scale=1.5,fill=blue,opacity=0.3] at (10.5,\i+0.5) {$ $};
           \node[scale=1.5,fill=red,opacity=0.3] at (12.5,\i+0.5) {$ $};
           }
           \foreach \i in {0,...,3} {
           \foreach \j in {0,...,3} {
           \node at (\i+0.5,\j+0.5) {$?$};
           \node at (\i+10.5,\j+0.5) {$?$};}}
           \node[scale=1.5,fill=white] at (0.5,3.5) {$\times$};
           \node[scale=1.5,fill=white] at (12.5,1.5) {$\times$};
           \node[scale=1.5,fill=white] at (2.5,1.5) {$ $};
           \node[scale=1.5,fill=white] at (10.5,3.5) {$ $};
           \draw[step=1,black,dotted] (0,4) grid (4,0);
           \draw[step=1,black,thin] (0,4) grid (1,0);
           \draw[step=1,black,thin] (0,4) grid (4,3);
           \draw[step=1,black,thin] (2,4) grid (3,0);
           \draw[step=1,black,thin] (0,2) grid (4,1);
            \draw[step=1,black,dotted] (10,4) grid (14,0);
           \draw[step=1,black,thin] (10,4) grid (11,0);
           \draw[step=1,black,thin] (10,4) grid (14,3);
           \draw[step=1,black,thin] (12,4) grid (13,0);
           \draw[step=1,black,thin] (10,2) grid (14,1);
           \node at (2.5,4.5) {$j$};
           \node at (4.5,1.5) {$i$};
           \node at (12.5,4.5) {$j$};
           \node at (14.5,1.5) {$i$};
           \node at(-1,2) {$T=$};
           \node at(9,2) {$T'=$};
                         \node at (6.5,3) {$\delta^{((0\ i),(0\ j)}$};
              \node[scale=2.5] at (6.5,2) {$\longrightarrow$};
         \end{tikzpicture}
         } 
         
      \item If $T'_{0,0}=1$ then, since $T'$ is not final, the cell $(0,0)$ is not final. Furthermore  $\{(0,0)\}<T'$ implies there exists $(i,j)\neq (0,0)$ such that  $T'_{i,j}=1$. Let $(i',j')$ be a cell in the final zone with $i'\neq 0$ and $j'\neq 0$. We have to consider the following two sub-cases:
      \begin{enumerate}
          \item Suppose that we can choose $(i,j)$ such that  both $i$ and $j$ are not zero. We set $T=T'\cdot ((i\ i'),(j\ j'))$. The tableau $T$ is valid and  has the same number of crosses as $T'$ but has less non final crosses than $T'$ because $T$ contains at least one final cross while all the crosses of $T'$ are non-final. So we have $T<T'$ and  $\delta^{\left((i\ i'),(j\ j')\right)}(T)=T\cdot ((i\ i'),(j\ j'))=T'$ because $T'$ is not final. By induction, we deduce that $T'$ is accessible. 
          \item  If the tableau $T'$ contains crosses only in its first row and in its first column then we suppose that $i=0$ (the other case is symmetric) which means that there exists $j\neq 0$ such that $T'_{0,j}=1$.  For each of the following cases we check that $T<T'$ and $\delta^{(f,g)}(T)=T\cdot (f,g)=T'$:  
          \begin{enumerate}
          \item If $T'_{i',0}=T'_{0,j'}=0$ then we set $T=T'\cup\{(i',j')\}\setminus{(0,j)}$ and $(f,g)=\left(\binomp {i'} 0,\binomp{j'}j\right)$.
          
            \centerline{
          \begin{tikzpicture}[scale=0.6]
          \foreach \i in {0,...,5} 
          { \node[scale=1.5,fill=red,opacity=0.3] at (\i+0.5,3.5){$ $};
           \node[scale=1.5,fill=blue,opacity=0.3] at (\i+0.5,1.5){$ $};
          \node[scale=1.5,fill=red,opacity=0.3] at (\i+12.5,3.5){$ $};
          \node[scale=1.5,fill=blue,opacity=0.3] at (\i+12.5,3.5){$ $};
          }
          \foreach \j in{0,...,3}
          {
          \node[scale=1.5,fill=red,opacity=0.3] at (2.5,\j+0.5){$ $};
          \node[scale=1.5,fill=blue,opacity=0.3] at (4.5,\j+0.5){$ $};
           \node[scale=1.5,fill=red,opacity=0.3] at (14.5,\j+0.5){$ $};
          \node[scale=1.5,fill=blue,opacity=0.3] at (14.5,\j+0.5){$ $};
          }
          \node[scale=1.5,fill=white] at (2.5,3.5){$ $};
          \node[scale=1.5,fill=lightgray] at (4.5,1.5){$ $};
          \node[scale=1.5,fill=white] at (14.5,3.5){$ $};
          \node[scale=1.5,fill=lightgray] at (16.5,1.5){$ $};
          \draw[step=1,black,dotted] (0,4) grid (6,0);
          \draw[step=1,black,dotted] (12,4) grid (18,0);
          \draw[step=1,black,thin] (0,4) grid (6,3);
          \draw[step=1,black,thin] (0,2) grid (6,1);
          \draw[step=1,black,thin] (2,4) grid (3,0);
          \draw[step=1,black,thin] (4,4) grid (5,0);
          \draw[step=1,black,thin] (0,4) grid (6,3);
          \draw[step=1,black,thin] (0,2) grid (6,1);
          \draw[step=1,black,thin] (2,4) grid (3,0);
          \draw[step=1,black,thin] (4,4) grid (5,0);
          \draw[step=1,black,thin] (12,4) grid (18,3);
          \draw[step=1,black,thin] (12,2) grid (18,1);
          \draw[step=1,black,thin] (14,4) grid (15,0);
          \draw[step=1,black,thin] (16,4) grid (17,0);
          \draw[step=1,black,thin] (12,4) grid (18,3);
          \draw[step=1,black,thin] (12,2) grid (18,1);
          \draw[step=1,black,thin] (14,4) grid (15,0);
          \draw[step=1,black,thin] (16,4) grid (17,0);
          \foreach \i in {1,3,5} { \node at (\i+0.5,3.5) {$?$};\node at (\i+12.5,3.5) {$?$};}
          \node[scale=1.5] at (0.5,3.5) {$\times$};
          \node[scale=1.5] at (4.5,1.5) {$\times$};
          \node[scale=1.5] at (14.5,3.5) {$\times$};
          \node[scale=1.5] at (12.5,3.5) {$\times$};
           \node at (0.5,2.5) {$?$};
           \node at (0.5,0.5) {$?$};
            \node at (12.5,2.5) {$?$};
           \node at (12.5,0.5) {$?$};
           \node at(2.5,4.5) {$j$};\node at(14.5,4.5) {$j$};
           \node at(4.5,4.5) {$j'$};\node at(16.5,4.5) {$j'$};
           \node at(6.5,1.5) {$i'$};\node at(18.5,1.5) {$i'$};
           \node at (-1,2) {$T=$};
           \node at (11,2) {$T'=$};
            \node at (8.5,3) {$\delta^{\left(\binomp{i'}0,\binomp{j'}j\right)}$};
              \node[scale=2] at (8.5,2) {$\longrightarrow$};
          \end{tikzpicture}
          }
          
          \item If $T'_{i',0}=0$ and $T'_{0,j'}=1$ then we set $T=T'\cup\{(i',j')\}\setminus\{(0,j')\}$ and $(f,g)=(\binomp{i'}0,\Id)$.

           \centerline{
          \begin{tikzpicture}[scale=0.6]
          \node[scale=1.5,fill=red,opacity=0.3] at (0.5,3.5) {$ $};
          \node[scale=1.5,fill=red,opacity=0.3] at (1.5,3.5) {$ $};
           \node[scale=1.5,fill=red,opacity=0.3] at (2.5,3.5) {$ $};
           \node[scale=1.5,fill=red,opacity=0.3] at (3.5,3.5) {$ $};
             \node[scale=1.5,fill=blue,opacity=0.3] at (0.5,1.5) {$ $};
           \node[scale=1.5,fill=blue,opacity=0.3] at (1.5,1.5) {$ $};
           \node[scale=1.5,fill=blue,opacity=0.3] at (3.5,1.5) {$ $};
           \node[scale=1.5,fill=lightgray] at (2.5,1.5) {$ $};
           \draw[step=1.0,black, dotted] (0,4) grid (4,0);
            \draw[step=1.0,black, thin] (0,4) grid (4,3);
            \draw[step=1.0,black, thin] (0,2) grid (4,1);
           \draw[step=1.0,black, thin] (0,4) grid (1,0);
            \draw[step=1.0,black, thin] (2,4) grid (3,0);
            \node[scale=1.5] at (0.5,3.5) {$\times$};
            \node at (1.5,3.5) {$?$};
            \node [scale=1.5] at (2.5,1.5) {$\times$};
            \node at (3.5,3.5) {$?$};
            \node at (0.5,2.5) {$?$};
            \node at(0.5,0.5) {$?$};
              \node[scale=1.5,fill=blue,opacity=0.3] at (10.5,3.5) {$ $};
          \node[scale=1.5,fill=blue,opacity=0.3] at (11.5,3.5) {$ $};
           \node[scale=1.5,fill=blue,opacity=0.3] at (12.5,3.5) {$ $};
           \node[scale=1.5,fill=blue,opacity=0.3] at (13.5,3.5) {$ $};
             \node[scale=1.5,fill=red,opacity=0.3] at (10.5,3.5) {$ $};
           \node[scale=1.5,fill=red,opacity=0.3] at (11.5,3.5) {$ $};
           \node[scale=1.5,fill=red,opacity=0.3] at (13.5,3.5) {$ $};
           \node[scale=1.5,fill=lightgray] at (12.5,1.5) {$ $};
           \draw[step=1.0,black, dotted] (10,4) grid (14,0);
            \draw[step=1.0,black, thin] (10,4) grid (14,3);
            \draw[step=1.0,black, thin] (10,2) grid (14,1);
           \draw[step=1.0,black, thin] (10,4) grid (11,0);
            \draw[step=1.0,black, thin] (12,4) grid (13,0);
            \node[scale=1.5] at (10.5,3.5) {$\times$};
            \node at (11.5,3.5) {$?$};
            \node[scale=1.5] at (12.5,3.5) {$\times$};
            \node at (13.5,3.5) {$?$};
            \node at (10.5,2.5) {$?$};
              \node at (10.5,0.5) {$?$};
             \node at (6.5,3) {$\delta^{\left(\binomp{i'}0,\Id)\right)}$};
              \node[scale=2.5] at (6.5,2) {$\longrightarrow$};
              \node at (4.5,1.5) {$i'$};
              \node at (14.5,1.5) {$i'$};
              \node at (2.5,4.5) {$j'$};
              \node at (12.5,4.5) {$j'$};
              \node at(-1,2) {$T=$};
              \node at(9,2) {$T'=$};
          \end{tikzpicture}
          }

          \item If $T'_{i',0}=1$ and $T'_{0,j'}=0$ then, symmetrically to the previous case, we set $T=T'\cup\{(i',j')\}\setminus\{(i',0)\}$ and $(f,g)=(\Id,\binomp{j'}0)$.
          
          \item If $T'_{i',0}=T'_{0,j'}=1$ then we set $T=T'\cdot (\Id,(0\ j'))$ and $(f,g)=\left(\mathrm Id,(0\ j')\right)$

           \centerline{
          \begin{tikzpicture}[scale=0.6]
          \node[scale=1.5,fill=red,opacity=0.3] at (0.5,3.5) {$ $};
          \node[scale=1.5,fill=red,opacity=0.3] at (0.5,2.5) {$ $};
           \node[scale=1.5,fill=red,opacity=0.3] at (0.5,1.5) {$ $};
           \node[scale=1.5,fill=red,opacity=0.3] at (0.5,0.5) {$ $};
             \node[scale=1.5,fill=blue,opacity=0.3] at (2.5,0.5) {$ $};
           \node[scale=1.5,fill=blue,opacity=0.3] at (2.5,2.5) {$ $};
           \node[scale=1.5,fill=blue,opacity=0.3] at (2.5,3.5) {$ $};
           \node[scale=1.5,fill=lightgray] at (2.5,1.5) {$ $};
           \draw[step=1.0,black, dotted] (0,4) grid (4,0);
            \draw[step=1.0,black, thin] (0,4) grid (4,3);
            \draw[step=1.0,black, thin] (0,2) grid (4,1);
           \draw[step=1.0,black, thin] (0,4) grid (1,0);
            \draw[step=1.0,black, thin] (2,4) grid (3,0);
            \node[scale=1.5] at (0.5,3.5) {$\times$};
            \node at (1.5,3.5) {$?$};
            \node [scale=1.5] at (2.5,3.5) {$\times$};
            \node [scale=1.5] at (2.5,1.5) {$\times$};
            \node at (3.5,3.5) {$?$};
            \node at (2.5,2.5) {$?$};
            \node at(2.5,0.5) {$?$};
             \node[scale=1.5,fill=blue,opacity=0.3] at (10.5,3.5) {$ $};
          \node[scale=1.5,fill=blue,opacity=0.3] at (10.5,2.5) {$ $};
           \node[scale=1.5,fill=blue,opacity=0.3] at (10.5,1.5) {$ $};
           \node[scale=1.5,fill=blue,opacity=0.3] at (10.5,0.5) {$ $};
             \node[scale=1.5,fill=red,opacity=0.3] at (12.5,0.5) {$ $};
           \node[scale=1.5,fill=red,opacity=0.3] at (12.5,2.5) {$ $};
           \node[scale=1.5,fill=red,opacity=0.3] at (12.5,3.5) {$ $};
           \node[scale=1.5,fill=lightgray] at (12.5,1.5) {$ $};
           \draw[step=1.0,black, dotted] (10,4) grid (14,0);
            \draw[step=1.0,black, thin] (10,4) grid (14,3);
            \draw[step=1.0,black, thin] (10,2) grid (14,1);
           \draw[step=1.0,black, thin] (10,4) grid (11,0);
            \draw[step=1.0,black, thin] (12,4) grid (13,0);
            \node[scale=1.5] at (10.5,3.5) {$\times$};
            \node[scale=1.5] at (10.5,1.5) {$\times$};
            \node at (11.5,3.5) {$?$};
            \node[scale=1.5] at (12.5,3.5) {$\times$};
            \node at (13.5,3.5) {$?$};
            \node at (10.5,2.5) {$?$};
              \node at (10.5,0.5) {$?$};
             \node at (6.5,3) {$\delta^{(\Id,(0\ j')}$};
              \node[scale=2.5] at (6.5,2) {$\longrightarrow$};
              \node at (4.5,1.5) {$i'$};
              \node at (14.5,1.5) {$i'$};
              \node at (2.5,4.5) {$j'$};
              \node at (12.5,4.5) {$j'$};
              \node at(-1,2) {$T=$};
              \node at(9,2) {$T'=$};
          \end{tikzpicture}
          }
          
          \end{enumerate}
      \end{enumerate}
  \end{enumerate}
  Now suppose that $T'$ is final. This implies that $T'_{0,0}=1$ and there exists a final cell $(i,j)$ such that $T'_{i,j}=1$. We have  two cases to consider.
  \begin{enumerate}
      \item If $(i,j)\neq (0,0)$ then we set $T=T'\cdot ((0\ i),(0\ j)) \setminus\{(i,j)\}$. Since $T$ has one cross less than $T'$, we have $T<T'$. Furthermore, $
      \delta^{((0\ i),(0\ j)}(T)=T\cdot ((0\ i),(0\ j))\cup \{(0,0)\}=T'.
      $
      By induction, the state $T'$ is accessible.
      
          \centerline{
          \begin{tikzpicture}[scale=0.6]
          \node[scale=1.5,fill=red,opacity=0.3] at (0.5,3.5) {$ $};
           \node[scale=1.5,fill=red,opacity=0.3] at (1.5,3.5) {$ $};
           \node[scale=1.5,fill=red,opacity=0.3] at (2.5,3.5) {$ $};
           \node[scale=1.5,fill=red,opacity=0.3] at (3.5,3.5) {$ $};
           \node[scale=1.5,fill=red,opacity=0.3] at (0.5,2.5) {$ $};
           \node[scale=1.5,fill=red,opacity=0.3] at (0.5,1.5) {$ $};
           \node[scale=1.5,fill=red,opacity=0.3] at (0.5,0.5) {$ $};
             \node[scale=1.5,fill=blue,opacity=0.3] at (0.5,1.5) {$ $};
           \node[scale=1.5,fill=blue,opacity=0.3] at (1.5,1.5) {$ $};
%
           \node[scale=1.5,fill=blue,opacity=0.3] at (3.5,1.5) {$ $};
            \node[scale=1.5,fill=blue,opacity=0.3] at (2.5,2.5) {$ $};
            \node[scale=1.5,fill=blue,opacity=0.3] at (2.5,3.5) {$ $};
           \node[scale=1.5,fill=lightgray] at (2.5,1.5) {$ $};
           \node[scale=1.5,fill=blue,opacity=0.3] at (2.5,0.5) {$ $};
           \draw[step=1.0,black, dotted] (0,4) grid (4,0);
            \draw[step=1.0,black, thin] (0,4) grid (4,3);
            \draw[step=1.0,black, thin] (0,2) grid (4,1);
           \draw[step=1.0,black, thin] (0,4) grid (1,0);
            \draw[step=1.0,black, thin] (2,4) grid (3,0);
            \node[scale=1.5] at (0.5,3.5) {$\times$};
            \node at (1.5,3.5) {$?$};
            \node at (2.5,3.5) {$?$};
            \node at (3.5,3.5) {$?$};
            \node at (0.5,2.5) {$?$};
            \node at (1.5,2.5) {$?$};
            \node at (2.5,2.5) {$?$};
            \node at (3.5,2.5) {$?$};
            \node at (0.5,1.5) {$?$};
            \node at (1.5,1.5) {$?$};
            \node at (3.5,1.5) {$?$};
              \node at (0.5,0.5) {$?$};
            \node at (1.5,0.5) {$?$};
            \node at(2.5,0.5) {$?$};
            \node at (3.5,0.5) {$?$};
             \node[scale=1.5,fill=blue,opacity=0.3] at (10.5,3.5) {$ $};
                   \node[scale=1.5,fill=blue,opacity=0.3] at (11.5,3.5) {$ $};
           \node[scale=1.5,fill=blue,opacity=0.3] at (12.5,3.5) {$ $};
           \node[scale=1.5,fill=blue,opacity=0.3] at (13.5,3.5) {$ $};
           \node[scale=1.5,fill=blue,opacity=0.3] at (10.5,2.5) {$ $};
           \node[scale=1.5,fill=blue,opacity=0.3] at (10.5,1.5) {$ $};
           \node[scale=1.5,fill=blue,opacity=0.3] at (10.5,0.5) {$ $};
             \node[scale=1.5,fill=red,opacity=0.3] at (10.5,1.5) {$ $};
           \node[scale=1.5,fill=red,opacity=0.3] at (11.5,1.5) {$ $};
            \node[scale=1.5,fill=red,opacity=0.3] at (13.5,1.5) {$ $};
            \node[scale=1.5,fill=red,opacity=0.3] at (12.5,2.5) {$ $};
            \node[scale=1.5,fill=red,opacity=0.3] at (12.5,3.5) {$ $};
           \node[scale=1.5,fill=lightgray] at (12.5,1.5) {$ $};
           \node[scale=1.5,fill=red,opacity=0.3] at (12.5,0.5) {$ $};
           \draw[step=1.0,black, dotted] (10,4) grid (14,0);
            \draw[step=1.0,black, thin] (10,4) grid (14,3);
            \draw[step=1.0,black, thin] (10,2) grid (14,1);
           \draw[step=1.0,black, thin] (10,4) grid (11,0);
            \draw[step=1.0,black, thin] (12,4) grid (13,0);
            \node[scale=1.5] at (10.5,3.5) {$\times$};
            \node[scale=1.5] at (12.5,1.5) {$\times$};
            \node at (11.5,3.5) {$?$};
            \node at (12.5,3.5) {$?$};
            \node at (13.5,3.5) {$?$};
            \node at (10.5,2.5) {$?$};
            \node at (11.5,2.5) {$?$};
            \node at (12.5,2.5) {$?$};
            \node at (13.5,2.5) {$?$};
            \node at (10.5,1.5) {$?$};
            \node at (11.5,1.5) {$?$};
            \node at (13.5,1.5) {$?$};
             \node at (10.5,0.5) {$?$};
            \node at (11.5,0.5) {$?$};
            \node at(12.5,0.5) {$?$};
            \node at (13.5,0.5) {$?$};
                         \node at (6.5,3) {$\delta^{((0\ i),(0\ j)}$};
              \node[scale=2.5] at (6.5,2) {$\longrightarrow$};
              \node at (4.5,1.5) {$i$};
              \node at (14.5,1.5) {$i$};
              \node at (2.5,4.5) {$j$};
              \node at (12.5,4.5) {$j$};
              \node at(-1,2) {$T=$};
              \node at(9,2) {$T'=$};
          \end{tikzpicture}
          }
          
      \item If $(0,0)$ is the only final cross in $T'$ then we consider $i\neq 0$ and $j\neq 0$ be such that $(i,j)$ is a final cell. 
       Since $\{(0,0)\}<T'$, there exists $(i',j')\in T'\setminus\{(0,0)\}$. We consider the following two cases:
      \begin{enumerate}
      \item Suppose that we can choose $i'\neq 0$ and $j'\neq 0$. We set $T=T'\cdot ((i\ i'),(j\ j'))$. Since  $T_{0,0}=1$, the tableau $T$ is final and valid. Furthermore, $T$ has the same number of crosses as $T'$ but $T$ has more final crosses than $T'$ because $T_{i,j}=1$ and $T'_{i,j}=0$. This implies $T<T'$. We check that
      $\delta^{((i\ i'),(j\ j'))}(T)=T\cdot ((i\ i'),(j\ j'))=T'$. Hence, by induction $T'$ is accessible.
    
            \centerline{
           \begin{tikzpicture}[scale=0.6]
            \node[scale=1.5,fill=lightgray] at (0.5,7.5) {$ $};
            \node[scale=1.5,fill=red,opacity=0.3] at (2.5,7.5) {$ $};
            \node[scale=1.5,fill=blue,opacity=0.3] at (4.5,7.5) {$ $};
            \node[scale=1.5,fill=red,opacity=0.3] at (2.5,6.5) {$ $};
            \node[scale=1.5,fill=blue,opacity=0.3] at (4.5,6.5) {$ $};
            \node[scale=1.5,fill=blue,opacity=0.3] at (4.5,5.5) {$ $};
            \node[scale=1.5,fill=red,opacity=0.3] at (0.5,5.5) {$ $};
            \node[scale=1.5,fill=red,opacity=0.3] at (1.5,5.5) {$ $};
            \node[scale=1.5,fill=red,opacity=0.3] at (2.5,5.5) {$ $};
            \node[scale=1.5,fill=red,opacity=0.3] at (3.5,5.5) {$ $};
            \node[scale=1.5,fill=red,opacity=0.3] at (4.5,5.5) {$ $};
            \node[scale=1.5,fill=red,opacity=0.3] at (5.5,5.5) {$ $};
            \node[scale=1.5,fill=lightgray] at (2.5,5.5) {$ $};
            \node[scale=1.5,fill=blue,opacity=0.3] at (4.5,4.5) {$ $};
            \node[scale=1.5,fill=red,opacity=0.3] at (2.5,4.5) {$ $};
            \node[scale=1.5,fill=red,opacity=0.3] at (2.5,3.5) {$ $};
            \node[scale=1.5,fill=blue,opacity=0.3] at (0.5,3.5) {$ $};
            \node[scale=1.5,fill=blue,opacity=0.3] at (1.5,3.5) {$ $};
            \node[scale=1.5,fill=blue,opacity=0.3] at (2.5,3.5) {$ $};
            \node[scale=1.5,fill=blue,opacity=0.3] at (3.5,3.5) {$ $};
            \node[scale=1.5,fill=blue,opacity=0.3] at (5.5,3.5) {$ $};
            \node[scale=1.5,fill=red,opacity=0.3] at (2.5,2.5) {$ $};
            \node[scale=1.5,fill=blue,opacity=0.3] at (4.5,2.5) {$ $};           
             \draw[step=1.0,black, dotted] (0,8) grid (6,2);
             \draw[step=1.0,black, thin] (0,8) grid (1,7);
              \draw[step=1.0,black, thin] (2,8) grid (3,2);
              \draw[step=1.0,black, thin] (4,8) grid (5,2);
              \draw[step=1.0,black, thin] (0,6) grid (6,5);
              \draw[step=1.0,black, thin] (0,4) grid (6,3);
             \node[scale=1.5] at(0.5,7.5) {$\times$};
             \node[scale=1.5] at(2.5,5.5) {$\times$};
             \node at(1.5,7.5) {$?$};
             \node at(2.5,7.5) {$?$};
             \node at(3.5,7.5) {$?$};
             \node at(4.5,7.5) {$?$};
             \node at(5.5,7.5) {$?$};
             \node at(0.5,6.5) {$?$};
              \node at(1.5,6.5) {$?$};
             \node at(2.5,6.5) {$?$};
             \node at(3.5,6.5) {$?$};
             \node at(4.5,6.5) {$?$};
             \node at(5.5,6.5) {$?$};
              \node at(0.5,5.5) {$?$};
              \node at(1.5,5.5) {$?$};
             \node at(3.5,5.5) {$?$};
             \node at(4.5,5.5) {$?$};
             \node at(5.5,5.5) {$?$};
            \node at(0.5,4.5) {$?$};
              \node at(1.5,4.5) {$?$};
             \node at(2.5,4.5) {$?$};
             \node at(3.5,4.5) {$?$};
             \node at(4.5,4.5) {$?$};
             \node at(5.5,4.5) {$?$};
            \node at(0.5,3.5) {$?$};
              \node at(1.5,3.5) {$?$};
             \node at(2.5,3.5) {$?$};
             \node at(3.5,3.5) {$?$};
             \node at(5.5,3.5) {$?$};
            \node at(0.5,2.5) {$?$};
              \node at(1.5,2.5) {$?$};
             \node at(2.5,2.5) {$?$};
             \node at(3.5,2.5) {$?$};
             \node at(4.5,2.5) {$?$};
             \node at(5.5,2.5) {$?$};
             \node[scale=1.5,fill=lightgray] at (15.5,7.5) {$ $};
                \node[scale=1.5,fill=blue,opacity=0.3] at (17.5,7.5) {$ $};
            \node[scale=1.5,fill=red,opacity=0.3] at (19.5,7.5) {$ $};
            \node[scale=1.5,fill=blue,opacity=0.3] at (17.5,6.5) {$ $};
            \node[scale=1.5,fill=red,opacity=0.3] at (19.5,6.5) {$ $};
            \node[scale=1.5,fill=red,opacity=0.3] at (19.5,5.5) {$ $};
            \node[scale=1.5,fill=blue,opacity=0.3] at (15.5,5.5) {$ $};
            \node[scale=1.5,fill=blue,opacity=0.3] at (16.5,5.5) {$ $};
            \node[scale=1.5,fill=blue,opacity=0.3] at (17.5,5.5) {$ $};
            \node[scale=1.5,fill=blue,opacity=0.3] at (18.5,5.5) {$ $};
            \node[scale=1.5,fill=blue,opacity=0.3] at (19.5,5.5) {$ $};
            \node[scale=1.5,fill=blue,opacity=0.3] at (20.5,5.5) {$ $};
            \node[scale=1.5,fill=lightgray] at (17.5,5.5) {$ $};
            \node[scale=1.5,fill=red,opacity=0.3] at (19.5,4.5) {$ $};
            \node[scale=1.5,fill=blue,opacity=0.3] at (17.5,4.5) {$ $};
            \node[scale=1.5,fill=blue,opacity=0.3] at (17.5,3.5) {$ $};
            \node[scale=1.5,fill=red,opacity=0.3] at (15.5,3.5) {$ $};
            \node[scale=1.5,fill=red,opacity=0.3] at (16.5,3.5) {$ $};
            \node[scale=1.5,fill=red,opacity=0.3] at (17.5,3.5) {$ $};
            \node[scale=1.5,fill=red,opacity=0.3] at (18.5,3.5) {$ $};
            \node[scale=1.5,fill=red,opacity=0.3] at (20.5,3.5) {$ $};
            \node[scale=1.5,fill=blue,opacity=0.3] at (17.5,2.5) {$ $};
            \node[scale=1.5,fill=red,opacity=0.3] at (19.5,2.5) {$ $};            
             \draw[step=1.0,black, dotted] (15,8) grid (21,2);
             \draw[step=1.0,black, thin] (15,8) grid (16,7);
              \draw[step=1.0,black, thin] (17,8) grid (18,2);
              \draw[step=1.0,black, thin] (19,8) grid (20,2);
              \draw[step=1.0,black, thin] (15,6) grid (21,5);
              \draw[step=1.0,black, thin] (15,4) grid (21,3);
             \node[scale=1.5] at(15.5,7.5) {$\times$};
             \node[scale=1.5] at(19.5,3.5) {$\times$};
             \node at(16.5,7.5) {$?$};
             \node at(17.5,7.5) {$?$};
             \node at(18.5,7.5) {$?$};
             \node at(19.5,7.5) {$?$};
             \node at(20.5,7.5) {$?$};
             \node at(15.5,6.5) {$?$};
              \node at(16.5,6.5) {$?$};
             \node at(17.5,6.5) {$?$};
             \node at(18.5,6.5) {$?$};
             \node at(19.5,6.5) {$?$};
             \node at(20.5,6.5) {$?$};
              \node at(15.5,5.5) {$?$};
              \node at(16.5,5.5) {$?$};
             \node at(18.5,5.5) {$?$};
             \node at(19.5,5.5) {$?$};
             \node at(20.5,5.5) {$?$};
            \node at(15.5,4.5) {$?$};
              \node at(16.5,4.5) {$?$};
             \node at(17.5,4.5) {$?$};
             \node at(18.5,4.5) {$?$};
             \node at(19.5,4.5) {$?$};
             \node at(20.5,4.5) {$?$};
            \node at(15.5,3.5) {$?$};
              \node at(16.5,3.5) {$?$};
             \node at(17.5,3.5) {$?$};
             \node at(18.5,3.5) {$?$};
             \node at(20.5,3.5) {$?$};
            \node at(15.5,2.5) {$?$};
              \node at(16.5,2.5) {$?$};
             \node at(17.5,2.5) {$?$};
             \node at(18.5,2.5) {$?$};
             \node at(19.5,2.5) {$?$};
             \node at(20.5,2.5) {$?$};
             \node at(2.5,8.5) {$j$};
             \node at(4.5,8.5) {$j'$};
             \node at(6.5,5.5){$i$};
             \node at(6.5,3.5){$i'$};
             \node at(17.5,8.5) {$j$};
             \node at(19.5,8.5) {$j'$};
             \node at(21.5,5.5){$i$};
             \node at(21.5,3.5){$i'$};
               \node at (10,5.7) {$\delta^{((i\ i'),(j\ j')}$};
              \node[scale=2.5] at (10,4.5) {$\longrightarrow$};
              \node at (-1,5) {$T=$};
                \node at (14,5) {$T'=$};
          \end{tikzpicture}}

       \item Suppose that $T'_{k,\ell}=1$ implies $k=0$ or $\ell=0$. We assume that $i'=0$ and $j'\neq 0$ (the other case is obtained symmetrically).  Let us assume first that there exists $i''\neq 0$ such that $T'_{i'',0}=1$. In that case we define $T^{(2)}= T'\cdot ((i\ i''),(j\ j'))$ and $T= T^{(2)}\cdot ((i\ 0),\Id))$. We observe that $\delta^{(i\ 0),\Id)}(T)=T\cdot ((i\ 0),\Id))=T^{(2)}$.

       \centerline{
       \begin{tikzpicture}[scale=0.6]
           \foreach \i in {0,...,3} {
         \node[scale=1.5,fill=red,opacity=0.3] at (\i+0.5,3.5){$ $};
         \node[scale=1.5,fill=blue,opacity=0.3] at (\i+0.5,1.5){$ $};
                  \node[scale=1.5,fill=blue,opacity=0.3] at (\i+10.5,3.5){$ $};
         \node[scale=1.5,fill=red,opacity=0.3] at (\i+10.5,1.5){$ $};
                  }
        \foreach \i in{1,...,3} {
        \node at(\i+0.5,1.5){$?$};
        \node at(\i+10.5,3.5){$?$};
        }
        \node[scale=1.5,fill=white] at (2.5,3.5) {$ $};
        \node[scale=1.5,fill=white] at (12.5,3.5) {$ $};
        \node[scale=1.5,fill=red,opacity=0.3] at (2.5,3.5) {$ $};
        \node[scale=1.5,fill=blue,opacity=0.3] at (12.5,3.5) {$ $};
        \node[scale=1.5,fill=lightgray] at (0.5,3.5) {$ $};
        \node[scale=1.5,fill=lightgray] at (2.5,1.5) {$ $};
        \node[scale=1.5,fill=lightgray] at (10.5,3.5) {$ $};
        \node[scale=1.5,fill=lightgray] at (12.5,1.5) {$ $};
        \node[scale=1.5] at (0.5,3.5) {$\times$};
        \node[scale=1.5] at (10.5,3.5) {$\times$};
         \draw[step=1.0,black, dotted] (0,4) grid (4,0);
         \draw[step=1.0,black, dotted] (10,4) grid (14,0);
         \node[scale=1.5] at (0.5,1.5) {$\times$};
         \node[scale=1.5] at (2.5,1.5) {$\times$};
         \node[scale=1.5] at (12.5,3.5) {$\times$};
         \node[scale=1.5] at (10.5,1.5) {$\times$}; 
         \draw[step=1.0,black,thin] (0,4) grid (4,3);
         \draw[step=1.0,black,thin] (0,1) grid (4,2);
         \node at(2.5,4.5) {$j$};
         \node at(4.5,1.5) {$i$};
         \node at(-1,2) {$T=$};
         \node at(12.5,4.5) {$j$};
         \node at(14.5,1.5) {$i$};
         \node at(9,2) {$T^{(2)}=$};
        \node at (6.5,3) {$\delta^{((i\ 0),\Id)}$};
        \node[scale=2] at (6.5,2) {$\longrightarrow$};
               \draw[step=1.0,black,thin] (0,4) grid (4,3);
         \draw[step=1.0,black,thin] (0,1) grid (4,2);
       \end{tikzpicture}
       }
       Futhermore  $\delta^{((i\ i''),(j\ j'))}(T^{(2)})= T^{(2)}\cdot ((i\ i''),(j\ j'))=T'$.
       
       \centerline{
       \begin{tikzpicture}[scale=0.6]
         \draw[step=1.0,black, dotted] (0,6) grid (6,0);
         \draw[step=1.0,black, dotted] (15,6) grid (21,0);
         \foreach \i in {0,...,5} {
         \node[scale=1.5,fill=red,opacity=0.3] at (\i+0.5,3.5){$ $};
         \node[scale=1.5,fill=red,opacity=0.3] at (2.5,\i+0.5){$ $};
         \node[scale=1.5,fill=blue,opacity=0.3] at (\i+0.5,1.5){$ $};
         \node[scale=1.5,fill=blue,opacity=0.3] at (4.5,\i+0.5){$ $};
         \node at (0.5,\i+0.5) {$?$};
         \node at (\i+0.5,5.5) {$?$};
         \node[scale=1.5,fill=blue,opacity=0.3] at (\i+15.5,3.5){$ $};
         \node[scale=1.5,fill=blue,opacity=0.3] at (17.5,\i+0.5){$ $};
         \node[scale=1.5,fill=red,opacity=0.3] at (\i+15.5,1.5){$ $};
         \node[scale=1.5,fill=red,opacity=0.3] at (19.5,\i+0.5){$ $};
         \node at (15.5,\i+0.5) {$?$};
         \node at (\i+15.5,5.5) {$?$};
         }
         \node[scale=1.5,fill=lightgray] at (0.5,5.5) {$ $};
         \node[scale=1.5,fill=lightgray] at (15.5,5.5) {$ $};
         \node[scale=1.5,fill=white] at (0.5,3.5) {$ $};
         \node[scale=1.5,fill=white] at (15.5,1.5) {$ $};
         \node[scale=1.5,fill=red,opacity=0.3] at (0.5,3.5) {$ $};
         \node[scale=1.5,fill=red,opacity=0.3] at (15.5,1.5) {$ $};
       \node[scale=1.5] at (0.5,5.5) {$\times$};
       \node[scale=1.5] at (15.5,5.5) {$\times$};
       \node[scale=1.5,fill=white] at (2.5,5.5) {$ $};
       \node[scale=1.5,fill=white] at (19.5,5.5) {$ $};
       \node[scale=1.5,fill=red,opacity=0.3] at (2.5,5.5) {$ $};
       \node[scale=1.5,fill=red,opacity=0.3] at (19.5,5.5) {$ $};
         \node[scale=1.5] at (2.5,5.5) {$\times$};
         \node[scale=1.5] at (19.5,5.5) {$\times$};
         \node[scale=1.5] at (0.5,3.5) {$\times$};
         \node[scale=1.5] at (15.5,1.5) {$\times$};
         \node[scale=1.5,fill=lightgray] at (2.5,3.5) {$ $};
         \node[scale=1.5,fill=lightgray] at (17.5,3.5) {$ $};
         \node[scale=1.5,fill=white] at (4.5,1.5) {$ $};
         \node[scale=1.5,fill=white] at (19.5,1.5) {$ $};
         \node[scale=1.5,fill=blue,opacity=0.3] at (4.5,1.5) {$ $};
         \node[scale=1.5,fill=red,opacity=0.3] at (19.5,1.5) {$ $};
    \draw[step=1.0,black, thin] (0,1) grid (6,2);   
    \draw[step=1.0,black, thin] (0,6) grid (1,5);
    \draw[step=1.0,black, thin] (0,3) grid (6,4);
    \draw[step=1.0,black, thin] (2,0) grid (3,6);
    \draw[step=1.0,black, thin] (4,0) grid (5,6);
    \node at (2.5,6.5) {$j$};
    \node at (4.5,6.5) {$j'$};
    \node at (6.5,1.5) {$i''$};
    \node at (6.5,3.5) {$i$};
    \node at (-1,3) {$T^{(2)}=$};
    \draw[step=1.0,black, thin] (15,1) grid (21,2);   
    \draw[step=1.0,black, thin] (15,6) grid (16,5);
    \draw[step=1.0,black, thin] (15,3) grid (21,4);
    \draw[step=1.0,black, thin] (17,0) grid (18,6);
    \draw[step=1.0,black, thin] (19,0) grid (20,6);
    \node at (17.5,6.5) {$j$};
    \node at (19.5,6.5) {$j'$};
    \node at (21.5,1.5) {$i''$};
    \node at (21.5,3.5) {$i$};
    \node at (14,3) {$T'=$};
          \node at (10,3.7) {$\delta^{((i\ i''),(j\ j')}$};
              \node[scale=2.5] at (10,2.5) {$\longrightarrow$};
         \end{tikzpicture}
       }
       
       We check that $T<T'$. Hence, by induction $T'$ is accessible.

       Now consider the case where, for every $i''\neq 0$, $T'_{i'',0}=0$. The only crosses in $T'$ are in the line $0$, otherwise we apply the previous case. We construct $T^{(2)}= T'\cdot (\Id,(j\ j'))$ and $T=T^{(2)}\cup{(i,j)}\setminus \{(0,j)\}$.
       We have $T^{(2)}=\delta^{(\binomp i{0},\Id)}(T)$.
       
       \centerline{
       \begin{tikzpicture}[scale=0.6]
           \foreach \i in {0,...,3} {
         \node[scale=1.5,fill=red,opacity=0.3] at (\i+0.5,3.5){$ $};
         \node[scale=1.5,fill=blue,opacity=0.3] at (\i+0.5,1.5){$ $};
                  \node[scale=1.5,fill=blue,opacity=0.3] at (\i+10.5,3.5){$ $};
         \node[scale=1.5,fill=red,opacity=0.3] at (\i+10.5,3.5){$ $};
                  }
          \node[scale=1.5,fill=lightgray] at (0.5,3.5) {$ $};
        \node[scale=1.5,fill=lightgray] at (2.5,1.5) {$ $};
        \node[scale=1.5,fill=lightgray] at (10.5,3.5) {$ $};
        \node[scale=1.5,fill=lightgray] at (12.5,1.5) {$ $};
        \node[scale=1.5] at (0.5,3.5) {$\times$};\node[scale=1.5] at (10.5,3.5) {$\times$};
        \node[scale=1.5] at (2.5,1.5) {$\times$};\node[scale=1.5] at (12.5,3.5) {$\times$};
        \node at (1.5,3.5) {$?$};\node at (11.5,3.5) {$?$};
        \node at (3.5,3.5) {$?$};\node at (13.5,3.5) {$?$};
        \node at(2.5,4.5) {$j$};
         \node at(4.5,1.5) {$i$};
         \node at(-1,2) {$T=$};
         \node at(12.5,4.5) {$j$};
         \node at(14.5,1.5) {$i$};
         \node at(9,2) {$T^{(2)}=$};
        \node at (6.5,3) {$\delta^{(\binomp i0,\Id)}$};
        \node[scale=2] at (6.5,2) {$\longrightarrow$};
                \draw[step=1.0,black, dotted] (0,4) grid (4,0);
         \draw[step=1.0,black, dotted] (10,4) grid (14,0);
                \draw[step=1.0,black,thin] (0,4) grid (4,3);
         \draw[step=1.0,black,thin] (0,1) grid (4,2);
                \draw[step=1.0,black,thin] (10,4) grid (14,3);
         \draw[step=1.0,black,thin] (10,1) grid (14,2);
        \end{tikzpicture}
       }
       
        Furthermore $\delta^{=(\Id,(j\ j'))}(T^{(2)})=  T^{(2)}\cdot ((\Id,(j\ j'))=T'$.

        \centerline{
        \begin{tikzpicture}[scale=0.6]
        \foreach \i in{0,...,3} {
        \node[scale=1.5,fill=red,opacity=0.3] at (2.5,\i+0.5) {$ $};
        \node[scale=1.5,fill=blue,opacity=0.3] at (14.5,\i+0.5) {$ $};
        \node[scale=1.5,fill=blue,opacity=0.3] at (4.5,\i+0.5) {$ $};
        \node[scale=1.5,fill=red,opacity=0.3] at (16.5,\i+0.5) {$ $};
        }
        \node[scale=1.5,fill=lightgray] at (0.5,3.5) {$ $};
        \node[scale=1.5,fill=lightgray] at (2.5,1.5) {$ $};
        \node[scale=1.5,fill=lightgray] at (12.5,3.5) {$ $};
        \node[scale=1.5,fill=lightgray] at (14.5,1.5) {$ $};
        \node[scale=1.5] at (0.5,3.5) {$\times$};
        \node[scale=1.5] at (2.5,3.5) {$\times$};
        \node[scale=1.5] at (12.5,3.5) {$\times$};
        \node[scale=1.5] at (16.5,3.5) {$\times$};
        \foreach \i in{1,3,5} {
        \node at (0.5+\i,3.5) {$?$};
        \node at (12.5+\i,3.5) {$?$};
        }
        \node at(2.5,4.5) {$j$};\node at(4.5,4.5) {$j'$};
         \node at(6.5,1.5) {$i$};
        \node at(14.5,4.5) {$j$};\node at(16.5,4.5) {$j'$};
         \node at(18.5,1.5) {$i$};
           \node at(-1,2) {$T^{(2)}=$};
        \node at(11,2) {$T'=$};
          \draw[step=1.0,black, dotted] (0,4) grid (6,0);
         \draw[step=1.0,black, dotted] (12,4) grid (18,0);
                \draw[step=1.0,black,thin] (0,4) grid (6,3);
         \draw[step=1.0,black,thin] (0,1) grid (6,2);
                \draw[step=1.0,black,thin] (12,4) grid (18,3);
         \draw[step=1.0,black,thin] (12,1) grid (18,2);
          \node at (8.5,3) {$\delta^{(\Id,(j\ j'))}$};
        \node[scale=2] at (8.5,2) {$\longrightarrow$};
        \end{tikzpicture}
        }
        
        We check that $T<T'$ and by induction $T'$ is accessible.
      \end{enumerate}
  \end{enumerate}
    
\end{proof}

The results of the section are summarized in the following theorem.

\begin{theorem}\label{remark-acc}
 The automaton $\Acc(\mathrm{M}_{\bullet}^{F_1,F_2})$ is a sub-automaton of $\Val({\mathrm M}_{\bullet}^{F_1,F_2})$. Moreover, the two following assertions are equivalent
 \begin{enumerate}
     \item The final zone is not empty and not included in $(\IntEnt{m}\times \{0\})\cup (\{0\}\times \IntEnt{n})$.
     \item We have 
 $\Acc(\mathrm{M}_{\bullet}^{F_1,F_2})=\Val(\mathrm{M}^{F_1,F_2}_\bullet).$
 \end{enumerate}
\end{theorem}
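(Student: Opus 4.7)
The theorem essentially compiles three ingredients already established: the definition of the modifier $\StBool_\bullet$, Proposition~\ref{prop-acc}, and Proposition~\ref{ValnotAcc}. I would split the argument into the sub-automaton claim and the equivalence.

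First I would establish $\Acc(\mathrm{M}_\bullet^{F_1,F_2})\subseteq\Val(\mathrm{M}_\bullet^{F_1,F_2})$ by induction on the length of a shortest word $w$ reaching an accessible tableau $T$. The base case $w=\varepsilon$ gives $T=\emptyset$, which is valid (it contains no final cell, so there is nothing to check). For the inductive step, I inspect the two clauses in the definition of $\delta^a$ under $\StBool_\bullet$: whenever the image of a letter produces a tableau that meets $F=(F_1\times\IntEnt{n})\bullet(\IntEnt{m}\times F_2)$, the construction explicitly adjoins $(i_1,i_2)=(0,0)$. Hence if $\delta^a(T)$ is final then $(\delta^a(T))_{0,0}=1$ and the resulting tableau is valid; otherwise it is vacuously valid. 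Since the inclusion holds at the level of states and the transition functions coincide on those states, $\Acc(\mathrm{M}_\bullet^{F_1,F_2})$ is a sub-automaton of $\Val(\mathrm{M}_\bullet^{F_1,F_2})$.

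For the equivalence, the implication $(1)\Rightarrow(2)$ is the content of Proposition~\ref{prop-acc}, which asserts that every non-empty valid tableau is accessible under the hypothesis of~(1); together with the fact that $\emptyset$ is both the initial state and valid, this yields $\Acc=\Val$.

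For $(2)\Rightarrow(1)$, I would argue by contraposition: assume the final zone is contained in $(\IntEnt{m}\times\{0\})\cup(\{0\}\times\IntEnt{n})$. This subsumes both the empty-final-zone subcase and the subcase where the final zone is confined to the first row or first column. Proposition~\ref{ValnotAcc} then exhibits a valid tableau (the witness $\{(0,0),(1,1)\}$ used there) that fails to be accessible, so $\Acc\subsetneq\Val$. There is no genuine obstacle here; the only point worth double-checking is that Proposition~\ref{ValnotAcc} really does cover the empty-final-zone case uniformly, which it does, since $\emptyset$ is trivially contained in $(\IntEnt{m}\times\{0\})\cup(\{0\}\times\IntEnt{n})$ and Lemma~\ref{Delta} is invoked under exactly that hypothesis.
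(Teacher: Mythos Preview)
Your proposal is correct and follows exactly the route the paper intends: the theorem is stated as a summary of Section~\ref{sec-valid}, so its proof amounts to invoking the definition of $\StBool_\bullet$ for the sub-automaton inclusion, Proposition~\ref{prop-acc} for $(1)\Rightarrow(2)$, and Proposition~\ref{ValnotAcc} for the contrapositive of $(2)\Rightarrow(1)$. Your observation that the empty-final-zone case is absorbed by the hypothesis of Proposition~\ref{ValnotAcc} is the only point that needed explicit mention, and you handle it correctly.
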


\subsection{Saturation and Nerode equivalence\label{section-saturation}}
   
Let $\widehat{\mathrm{M}}^{F_1,F_2}_\bullet$ be the automaton with the same alphabet, the same states, the same initial state and the same final states as $\mathrm M^{F_1,F_2}_\bullet$ but with the transition function $d$ defined by $d^{(f,g)}(T)=T\cdot (f,g)$ .

The $\widehat{\mathrm M}_{\bullet}$ automata are simpler than the $\mathrm M_{\bullet}$ ones as the transition function is only defined by the composition of functions. Therefore we will show that two states are equivalent in  $\widehat{\mathrm{M}}^{F_1,F_2}_{\bullet}$ if and only if they are equivalent in  $\mathrm{M}^{F_1,F_2}_{\bullet}$. Thus, we compute the Nerode equivalence on  $\widehat{\mathrm{M}}^{F_1,F_2}_{\bullet}$  which is easier than to compute it on $\mathrm{M}^{F_1,F_2}_{\bullet}$. 

Let us denote by $E_{i,j}=\{(i,j)\}$ the tableau with only one cross at position $(i,j)$
\begin{lemma}\label{lem-tw}
Let $T$ be a non empty tableau and $w \in (\IntEnt{m}^{\IntEnt{m}}\times\IntEnt{n}^{\IntEnt{n}})^*$.  The tableau $\delta^w(T)$ is final in $\mathrm M^{F_1,F_2}_\bullet$ if and only if  at least one of the following assertion is true:
\begin{enumerate}
\item There exists $(i,j) \in T$ such that $d^w(E_{i,j})$ is final.\\
   \centerline{
            \begin{tikzpicture}[scale=0.4]
          ;
              \node[scale=1.5,fill=lightgray] at (3.5,2.5) {$ $};
              \draw[step=1.0,black, thin] (0,0) grid (4,4);
 \foreach \j in {0,...,3} { \foreach \i in{0,...,3} {
        \node at (\j+0.5,\i+0.5) {$?$};
        \node at (\j+10.5,\i+0.5) {$?$};
        }
        }
             \node[scale=1.5,fill=white] at (2.5,1.5) {$ $};
              \node[scale=1.5,color=red](A) at (2.5,1.5) {$\times$};
               \node[scale=1.5] at (6,2)  {\Large $\rightarrow$};
               \node[scale=1] at (6,2.8)  {$\delta^w$};
               \node[scale=1] at (2.5,4.5) {$j$};
               \node[scale=1] at (4.5,1.5) {$i$};
              %
%
              \node at(-1,2) {$T=$};
              \node at(8.5,2) {$\delta^w(T)=$};
              \node[scale=1.5,fill=lightgray] at (13.5,2.5) {$ $};
               \node[scale=1.5,color=red](B) at (13.5,2.5) {$\times$};
                             \draw[step=1.0,black, thin] (10,0) grid (14,4);
               \draw[->,color=red] (A) to [bend right,out=270,in=0] (B);
          \end{tikzpicture}}
            \centerline{
            \begin{tikzpicture}[scale=0.4]
          ;
 \
             \node[scale=1.5,fill=white] at (2.5,1.5) {$ $};
              \node[scale=1.5,color=red](A) at (2.5,1.5) {$\times$};
               \node[scale=1.5] at (5.8,2)  {\Large $\rightarrow$};
               \node[scale=1] at (6,2.8)  {$d^w$};
               \node[scale=1] at (2.5,4.5) {$j$};
               \node[scale=1] at (4.5,1.5) {$i$};
              %
              \node at(-1,2) {$E_{i,j}=$};
              \node at(8.5,2) {$d^w(E_{i,j})=$};
               \node[scale=1.5,fill=lightgray] at (3.5,2.5) {$ $};
               \draw[step=1.0,black, thin] (0,0) grid (4,4);
               \node[scale=1.5,fill=lightgray] at (13.5,2.5) {$ $};
               \node[scale=1.5,color=red](B) at (13.5,2.5) {$\times$};
               \draw[step=1.0,black, thin] (10,0) grid (14,4);
               \draw[->,color=red] (A) to [bend right,out=270,in=0] (B);
          \end{tikzpicture}}
\item There exists $w=ps\in (\IntEnt{m}^{\IntEnt{m}}\times\IntEnt{n}^{\IntEnt{n}})^*$, with $s\neq \varepsilon$ such that $\delta^p(T)$ is final and $d^s(E_{0,0})$ is final.
 \centerline{
            \begin{tikzpicture}[scale=0.4]
          ;
              \node[scale=1.5,fill=lightgray] at (3.5,2.5) {$ $};
              \node[scale=1.5,fill=lightgray] at (1.5,2.5) {$ $};
              \node[scale=1.5,fill=lightgray] at (11.5,2.5) {$ $};
               \node[scale=1.5,fill=lightgray] at (23.5,2.5) {$ $};
              \draw[step=1.0,black, thin] (0,0) grid (4,4);
 \foreach \j in {0,...,3} { \foreach \i in{0,...,3} {
        \node at (\j+0.5,\i+0.5) {$?$};
        \node at (\j+10.5,\i+0.5) {$?$};
        \node at (\j+20.5,\i+0.5) {$?$};
        }
        }
             \node[scale=1.5,fill=white] at (10.5,3.5) {$ $};
              \node[scale=1.5,color=blue](A) at (10.5,3.5) {$\times$};
               \node[scale=1.5] at (6,2)  {\Large $\rightarrow$};
               \node[scale=1] at (6,2.8)  {$\delta^p$};
               \node[scale=1.5] at (16,2)  {\Large $\rightarrow$};
               \node[scale=1] at (16,2.8)  {$\delta^s$};
              \draw[step=1.0,black, thin] (10,0) grid (14,4);
\draw[step=1.0,black, thin] (20,0) grid (24,4);
              \node at(-1,2) {$T=$};
              \node at(8.5,2) {$\delta^p(T)=$};
              \node at(18.5,2) {$\delta^w(T)=$};
              \node[scale=1.5,fill=lightgray] at (13.5,2.5) {$ $};
               \node[scale=1.5,color=red] at (13.5,2.5) {$\times$};
               \node[scale=1.5,fill=lightgray] at (21.5,2.5) {$ $};
               \node[scale=1.5,color=blue](B) at (21.5,2.5) {$\times$};  
                \draw[->,color=blue] (A) to [bend right,out=90,in=90] (B);
          \end{tikzpicture}}
          \centerline{
            \begin{tikzpicture}[scale=0.4]
          ;
 \
             \node[scale=1.5,fill=white] at (2.5,1.5) {$ $};
              \node[scale=1.5,color=blue](A) at (0.5,3.5) {$\times$};
             \node[scale=1.5,fill=lightgray] at (3.5,2.5) {$ $};
            \node[scale=1.5,fill=lightgray] at (1.5,2.5) {$ $};
            \draw[step=1.0,black, thin] (0,0) grid (4,4);             %
               \node[scale=1.5] at (5.4,2)  {\Large $\rightarrow$};
               \node[scale=1] at (5.6,2.8)  {$d^s$};
              %
%
              \node at(-1.3,2) {$E_{0,0}=$};
              \node at(8.1,2) {$d^s(E_{0,0})=$};
                  \node[scale=1.5,fill=lightgray] at (13.5,2.5) {$ $};
              \node[scale=1.5,fill=lightgray] at (11.5,2.5) {$ $};
               \node[scale=1.5,color=blue](B) at (11.5,2.5) {$\times$};
              \draw[step=1.0,black, thin] (10,0) grid (14,4);
                \draw[->,color=blue] (A) to [bend right,out=90,in=90] (B);
          \end{tikzpicture}}
\end{enumerate}
\end{lemma}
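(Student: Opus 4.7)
The plan is to unroll $\delta^w$ step by step and precisely track the provenance of each cross in $\delta^w(T)$. Writing $w = a_1\cdots a_k$ and $T_j = \delta^{a_1\cdots a_j}(T)$ with $T_0 = T$, the definition of the modifier $\StBool_\bullet$ applied to non-empty arguments gives
\[
T_j = T_{j-1}\cdot a_j\;\cup\;\{(0,0)\mid T_{j-1}\cdot a_j\text{ is final}\}.
\]
A preliminary observation is that $T_j$ is final if and only if $T_{j-1}\cdot a_j$ is final, since the inserted $(0,0)$ is added only when $T_{j-1}\cdot a_j$ is already final and therefore cannot create finality on its own. A straightforward induction on $j$ then yields the decomposition
\[
T_j \;=\; d^{a_1\cdots a_j}(T)\;\cup\;\bigcup_{\substack{1\le r\le j\\ T_r\text{ is final}}} d^{a_{r+1}\cdots a_j}(E_{0,0}),
\]
with the convention that $d^{\varepsilon}$ is the identity.

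Applying this formula at $j=k-1$ and then one more letter, we obtain
\[
T_{k-1}\cdot a_k\;=\;d^w(T)\;\cup\;\bigcup_{\substack{1\le r\le k-1\\ T_r\text{ is final}}} d^{a_{r+1}\cdots a_k}(E_{0,0}),
\]
and $\delta^w(T)=T_k$ is final precisely when this set contains a final cross. A final cross in $d^w(T)$ is some $d^w((i,j))$ with $(i,j)\in T$, i.e.\ $d^w(E_{i,j})$ is final, which is exactly condition (1). A final cross in $d^{a_{r+1}\cdots a_k}(E_{0,0})$ for some $1\le r\le k-1$ with $T_r$ final corresponds to a factorization $w=ps$ with $p=a_1\cdots a_r$ and $s=a_{r+1}\cdots a_k\neq\varepsilon$ such that $\delta^p(T)=T_r$ is final and $d^s(E_{0,0})$ is final, i.e.\ condition (2).

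The converse uses the same decomposition. Since $d^w(T)\subseteq \delta^w(T)$, condition (1) immediately provides a final cross in $\delta^w(T)$. Under condition (2), $\delta^p(T)=T_r$ being final forces $(0,0)$ to be inserted at step $|p|$, after which the subsequent letters transport it into $T_k$ as the cross $d^s((0,0))$, which is final by hypothesis.

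The main obstacle is the precise bookkeeping in the inductive verification of the decomposition of $T_j$: one must check that no cross is ever removed by a subsequent letter, and that every $(0,0)$ inserted along the computation is then carried faithfully by the remaining $d$-actions. Once this formula and the equivalence ``$T_j$ final $\Leftrightarrow$ $T_{j-1}\cdot a_j$ final'' are established, the lemma reduces to a routine case analysis on the origin of a final cross in $\delta^w(T)$.
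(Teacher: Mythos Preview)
Your proof is correct and takes essentially the same approach as the paper: both track the provenance of crosses in $\delta^w(T)$, the paper via a step-by-step induction on $|w|$ that peels off the first letter and does a case analysis on the inductive hypothesis, while you first package that same induction into the explicit decomposition $T_j = d^{a_1\cdots a_j}(T)\cup\bigcup_{r:\,T_r\text{ final}} d^{a_{r+1}\cdots a_j}(E_{0,0})$ and then read off both directions at once. The global formula is a tidier presentation, but the inductive content is identical.
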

\begin{proof}
Notice that for any $w=ps$, if $(i,j)\in \delta^p(T)$ and $d^s(E_{i,j})$ is final then the tableau $\delta^w(T)$ is final 
The two assertions are  special cases of this property. This shows the if part of the lemma.\\
Now we prove the converse. Let $w\in \left(\IntEnt{m}^{\IntEnt{m}}\times \IntEnt{n}^{\IntEnt{n}}\right)^*$ such that $\delta^w(T)$ is final. We prove the result by induction on $w$. If $w=\varepsilon$ then the result is obvious. If $|w|=1$ then by construction $d^w(T)$ is final and this means that there exists $(i,j)\in T$ such that $d^w(E_{i,j})$ is final. Now we assume that the length of $w$ is at least $2$. We have 
$w=a\cdot w'$ for some $a\in \IntEnt{m}^{\IntEnt{m}}\times \IntEnt{n}^{\IntEnt{n}}$ and we set $T'=\delta^a(T)$. 
By construction,
we have 
$T'=d^a(T)$ or $T'=d^a(T) \cup\{(0,0)\}$ 
depending on the finality of $d^a(T)$.
By induction one of the following property is true:
\begin{enumerate}
\item There exists $(i',j') \in T'$ such that $d^{w'}(E_{i',j'})$ is final. If $(i',j')=(0,0)$ and $T'$ is final then the second assertion is immediately true for $T$. If $(i',j')\neq(0,0)$ or if $T'$ is non final then there exists $(i,j)\in T$ such that 
$d^a(E_{i,j})=E_{i',j'}$ and $d^w(E_{i,j})=d^{w'}(E_{i',j'})$ 
is final; so, the first assertion is true for $T$.
\item  There exists $w'=p's$ with $s\neq\varepsilon$ such that $\delta^s(T')$ is final and $d^s(E_{0,0})$ is  final. Hence, $\delta^{a\cdot p'}(T)$ is final, and the second assertion is true for $T$.
\end{enumerate}
\end{proof}
\begin{lemma}\label{lem-dw2dz}
        For any $w\in (\IntEnt{m}^{m}\times \IntEnt{n}^{n})^*$, there exists $z\in \IntEnt{m}^{m}\times \IntEnt{n}^{n}$ such that for any non empty tableau  $T$ we have $d^w(T)=d^z(T)$. 
\end{lemma}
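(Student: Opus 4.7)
The plan is a straightforward induction on the length $|w|$, exploiting the fact that $d$, unlike $\delta$, performs no conditional insertion of $(0,0)$: it is purely a coordinate-wise action. Consequently the action of $d$ on tableaux factors through the composition of transformations, and the monoid $\IntEnt{m}^{\IntEnt{m}}\times \IntEnt{n}^{\IntEnt{n}}$ (under componentwise composition) is already a single letter of our alphabet.

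First I would unfold the definition: for any letter $(f,g)\in \IntEnt{m}^{\IntEnt{m}}\times \IntEnt{n}^{\IntEnt{n}}$ and any tableau $T$, we have
\[
d^{(f,g)}(T)=T\cdot(f,g)=\{(f(i),g(j))\mid (i,j)\in T\}.
\]
Iterating, for a word $w=(f_1,g_1)\cdots(f_k,g_k)$ one gets
\[
d^w(T)=\{(f_k\circ\cdots\circ f_1(i),\ g_k\circ\cdots\circ g_1(j))\mid (i,j)\in T\}.
\]
The induction itself is trivial. For $w=\varepsilon$, take $z=(\Id,\Id)$. For the step, write $w=w'\cdot(f,g)$ and apply the induction hypothesis to obtain $z'=(F',G')\in \IntEnt{m}^{\IntEnt{m}}\times \IntEnt{n}^{\IntEnt{n}}$ with $d^{w'}(T)=d^{z'}(T)$ for every non empty $T$. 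Then
\[
d^w(T)=d^{(f,g)}(d^{z'}(T))=\{(f(F'(i)),\,g(G'(j)))\mid (i,j)\in T\},
\]
so $z=(f\circ F',\,g\circ G')$ does the job, and it lies in $\IntEnt{m}^{\IntEnt{m}}\times \IntEnt{n}^{\IntEnt{n}}$ because this set is closed under componentwise composition.

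There is essentially no obstacle here: the only point to be careful about is to distinguish the function $d$ (used in $\widehat{\mathrm M}_\bullet^{F_1,F_2}$) from $\delta$, since the latter may add the initial state $(0,0)$ when the image is final and therefore does \emph{not} satisfy such a collapsing property. The assumption that $T$ be non empty is harmless and is kept only for coherence with the surrounding statements; the argument above works verbatim for $T=\emptyset$ as well, with $d^w(\emptyset)=\emptyset=d^z(\emptyset)$.
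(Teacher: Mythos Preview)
Your argument is correct and is essentially the paper's own proof: both reduce the claim to the closure of $\IntEnt{m}^{\IntEnt{m}}\times \IntEnt{n}^{\IntEnt{n}}$ under componentwise composition, yielding $d^{(f_1,g_1)\cdots(f_k,g_k)}=d^{(f_k\circ\cdots\circ f_1,\ g_k\circ\cdots\circ g_1)}$. Your additional remarks distinguishing $d$ from $\delta$ and on the irrelevance of the non-emptiness hypothesis are fine clarifications but add nothing substantive.
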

\begin{proof}
 The result comes   directly from the fact that any composition of functions of $\IntEnt{m}^{m}\times \IntEnt{n}^{n}$ is a function of $\IntEnt{m}^{m}\times \IntEnt{n}^{n}$ \textit{i.e.}
 $d^{(f_1,g_1)\cdots (f_k,g_k)}=d^{(f_k\circ \cdots \circ f_1,g_k\circ \cdots \circ g_1)}.$
\end{proof}

The following result characterizes Nerode classes of $\mathrm M^{F_1,F_2}_\bullet$.
\begin{proposition}
  Any two states represented by non-empty tableaux are Nerode equivalent in $\mathrm M^{F_1,F_2}_\bullet$ if and only if they are Nerode equivalent in $\widehat{\mathrm{M}}^{F_1,F_2}_\bullet$. Furthermore,   in $\mathrm M^{F_1,F_2}_\bullet$, the empty tableau is Nerode equivalent  to the state $\{(0,0)\}$ if and only if $(0,0)$ is in the final zone.
\end{proposition}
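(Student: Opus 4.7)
The plan splits along the statement. For the main equivalence, I first note that both automata share the same set of final states, so ``being final'' is a property of the tableau itself, common to $\mathrm{M}^{F_1,F_2}_\bullet$ and $\widehat{\mathrm{M}}^{F_1,F_2}_\bullet$. Moreover, $\delta^a$ applied to a non-empty tableau yields a non-empty tableau, so I can apply Lemma \ref{lem-tw} to every intermediate state encountered.

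For the direction Nerode equivalence in $\mathrm{M}^{F_1,F_2}_\bullet$ implies equivalence in $\widehat{\mathrm M}^{F_1,F_2}_\bullet$, I would use Lemma \ref{lem-dw2dz}: it lets me replace any word $w$ acting on non-empty tableaux via $d$ by a single letter $z\in \IntEnt{m}^{\IntEnt m}\times \IntEnt n^{\IntEnt n}$ with $d^w(T)=d^z(T)$. For a single letter $z$, direct inspection of the modifier $\StBool_\bullet$ shows that $\delta^z(T)$ equals either $d^z(T)$ or $d^z(T)\cup\{(0,0)\}$, the latter exactly when $d^z(T)$ is final; hence $\delta^z(T)$ is final if and only if $d^z(T)$ is final. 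So if $\delta^z(T_1)$ and $\delta^z(T_2)$ always agree on finality, then so do $d^z(T_1)$ and $d^z(T_2)$, yielding the implication.

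For the converse, assume $T_1\sim_d T_2$ and prove by induction on $|w|$ that $\delta^w(T_1)$ is final if and only if $\delta^w(T_2)$ is final. The base case $|w|=0$ follows from $T_1\sim_d T_2$ evaluated at $\varepsilon$ together with the coincidence of final states noted above. For the inductive step, Lemma \ref{lem-tw} provides the decomposition: $\delta^w(T_i)$ is final iff either $d^w(T_i)$ is final, or there exists a factorization $w=ps$ with $s\neq\varepsilon$ such that $\delta^p(T_i)$ is final and $d^s(E_{0,0})$ is final. The first clause is equivalent for $T_1$ and $T_2$ by hypothesis; in the second, the condition on $d^s(E_{0,0})$ does not depend on $T_i$, and the induction hypothesis (applicable since $|p|<|w|$) gives the equivalence of $\delta^p(T_1)$ and $\delta^p(T_2)$ being final. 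This combinatorial use of Lemma \ref{lem-tw} is the technical heart of the proof and the step I expect to be the main obstacle: one has to verify carefully that no clause of the lemma creates an asymmetry between $T_1$ and $T_2$.

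For the ``furthermore'' clause, I would rely on two direct observations about $\StBool_\bullet$. First, the empty tableau is final in $\mathrm M^{F_1,F_2}_\bullet$ by the very definition of $\mathfrak{Star}$. Second, a short computation from the formula defining $\delta^a$ on $\emptyset$ and on $\{(0,0)\}$ shows $\delta^a(\emptyset)=\delta^a(\{(0,0)\})$ for every letter $a$ (both reduce to $\{(f(0),g(0))\}$, possibly augmented with $\{(0,0)\}$ when this singleton is final). Hence $\delta^w(\emptyset)=\delta^w(\{(0,0)\})$ for every $w\neq\varepsilon$, so any Nerode discrimination between these two states can only occur at $\varepsilon$. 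Since $\emptyset$ is always final while $\{(0,0)\}$ is final exactly when $(0,0)$ lies in the final zone, the two states are Nerode equivalent if and only if $(0,0)$ is in the final zone.
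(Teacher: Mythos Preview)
Your proposal is correct and follows essentially the same approach as the paper: both use Lemma~\ref{lem-dw2dz} to reduce the direction $\sim_M\Rightarrow\sim_{\widehat M}$ to single letters, and both use Lemma~\ref{lem-tw} for the converse direction. The only cosmetic difference is that the paper argues the latter by contrapositive with a minimal distinguishing word (ruling out clause~2 of Lemma~\ref{lem-tw} via minimality), whereas you do the equivalent direct induction on $|w|$; your treatment of the ``furthermore'' clause spells out what the paper dismisses as ``straightforward from the construction.''
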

\begin{proof}
Let $T$ and $T'$ be two non-empty tableaux.\\
Suppose first that $T$ and $T'$ are not Nerode equivalent in $\mathrm M_\bullet^{F_1,F_2}$. Without loss of generalities, we assume there exists $w\in (\IntEnt{m}^{\IntEnt{m}}\times \IntEnt{n}^{\IntEnt{n}})^*$ such that $\delta^w(T)$ is final while $\delta^w(T')$ is not final. We assume that $w$ is minimal in the sense that for any prefixe $p\neq w$ of $w$, $\delta^p(T)$ and $\delta^p(T')$ have the same finality. From Lemma \ref{lem-tw}, one has to consider two cases:
\begin{enumerate}
    \item There exists $(i,j) \in T$ such that $d^w(E_{i,j})=E_{i',j'}$ is  final. We have $(i,j)\not\in T'$ because, in that case, the tableau $T'$ would be final. Hence, $d^{w}(T)$ is final because $(i',j')\in d^{w}(T)$ while  $d^{w}(T')$ is not final because $d^{w}(T')\subset \delta^w(T')$. So the tableaux $T$ and $T'$ are not Nerode equivalent in  $\widehat M_\bullet^{F_1,F_2}$.
    \item There exists $w=ps$ with $s\neq\varepsilon$ such that $\delta^p(T)$ is final and $d^s(E_{0,0})$ is final. In that case, since $\delta^w(T')$ is not final, we have $(0,0)\not\in\delta^p(T')$ which is also non final. This contradicts the  minimality of $w$.
\end{enumerate}
Conversely, from Lemma \ref{lem-dw2dz} it suffices to prove that if $a\in \IntEnt{m}^{m}\times\IntEnt{n}^{n}$ is such that $d^a(T)$ is final while $d^a(T')$ is not final then $\delta^a(T)$ is final while $\delta^a(T')$ is not final. This property comes directly from the definition of $\delta$ since the addition of $(0,0)$ does not change the finality of a final state. 

The case of the empty tableau is straightforward from the construction.
\end{proof}

This proposition means that to describe  Nerode equivalence on non-empty tableaux in $\mathrm M^{F_1,F_2}_\bullet$ it suffices to describe it in the simpler automaton $\widehat{\mathrm{M}}^{F_1,F_2}_\bullet$.

From now we only consider non-empty tableaux. Recall that we denote Nerode equivalence by $\sim$.

\begin{proposition}\label{prop-lattice}
        For any two states $T_1$ and $T_2$ of  $\widehat{\mathrm{M}}^{F_1,F_2}_\bullet$, we have $T_1\sim T_2$ implies $T_1\sim T_1\cup T_2$.\\ Equivalently each  Nerode class  in  $\widehat{\mathrm{M}}^{F_1,F_2}_\bullet$ is  a join lattice for the inclusion order.
\end{proposition}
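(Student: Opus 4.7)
The plan is to exploit the drastic simplification of the transition structure in $\widehat{\mathrm{M}}^{F_1,F_2}_\bullet$: because the transition is defined by $d^{(f,g)}(T)=T\cdot (f,g)$, it commutes with set-theoretic union. Concretely, for any letter $(f,g)$ and any tableaux $T_1,T_2$, a direct check gives
\[
d^{(f,g)}(T_1\cup T_2)=(T_1\cup T_2)\cdot (f,g)=T_1\cdot (f,g)\cup T_2\cdot (f,g)=d^{(f,g)}(T_1)\cup d^{(f,g)}(T_2),
\]
and an easy induction on the length of words extends this identity from letters to arbitrary $w\in (\IntEnt{m}^{\IntEnt{m}}\times\IntEnt{n}^{\IntEnt{n}})^*$.

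Next, I would recall how finality works in $\widehat{\mathrm{M}}^{F_1,F_2}_\bullet$: a non-empty tableau $T$ is final exactly when it meets the final zone $F$. Hence for non-empty $T_1,T_2$ and any word $w$, the tableau $d^w(T_1)\cup d^w(T_2)=d^w(T_1\cup T_2)$ meets $F$ if and only if at least one of $d^w(T_1)$, $d^w(T_2)$ meets $F$. In other words, $d^w(T_1\cup T_2)$ is final if and only if $d^w(T_1)$ is final or $d^w(T_2)$ is final.

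Now assume $T_1\sim T_2$ in $\widehat{\mathrm{M}}^{F_1,F_2}_\bullet$. Then for every word $w$, $d^w(T_1)$ is final iff $d^w(T_2)$ is final. Combined with the disjunction above, $d^w(T_1\cup T_2)$ is final iff $d^w(T_1)$ is final, which shows $T_1\sim T_1\cup T_2$, proving the main implication. For the equivalent formulation, I would then remark that if $T_1,\dots,T_r$ all belong to the same Nerode class, a straightforward induction using the implication just proved (and the transitivity of $\sim$) yields $T_1\sim T_1\cup\cdots\cup T_r$, so the class is closed under finite unions and therefore forms a join sub-semilattice of $(2^{\IntEnt{m}\times\IntEnt{n}},\subseteq)$. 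No real obstacle is expected here; the only care required is to restrict to non-empty tableaux, which is the running assumption in this section.
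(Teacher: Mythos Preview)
Your proof is correct and follows essentially the same approach as the paper: both arguments reduce finality of $d^w(T_1\cup T_2)$ to the disjunction of finalities of $d^w(T_1)$ and $d^w(T_2)$, then use $T_1\sim T_2$ to collapse the disjunction. The only cosmetic differences are that the paper invokes Lemma~\ref{lem-dw2dz} to reduce to a single letter instead of inducting on word length, and phrases the union-commutation in element-wise form (``there exists $(i,j)\in T$ such that $d^a(E_{i,j})$ is final'') rather than set-theoretically.
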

\begin{proof}
Since for any $w\in \left(\IntEnt m^{\IntEnt m}\times \IntEnt n^{\IntEnt n}\right)^*$ there exists $a\in \IntEnt m^{\IntEnt m}\times \IntEnt n^{\IntEnt n}$ such that $d^w=d^a$,
we have just prove  that $d^a(T_1)$ is final if and only if $d^a(T_1\cup T_2)$ is final for any  $a\in \IntEnt m^{\IntEnt m}\times \IntEnt n^{\IntEnt n}$.

If $d^a(T_1)$ is final then there exists $(i,j)\in T_1$ such that $d^a(E_{i,j})$ is  final and so, since $(i,j)\in T_1\cup T_2$ the tableau $d^a(T_1\cup T_2)$ is also final.

 Conversely, if $d^a(T_1\cup T_2)$ is final then there exists $(i,j)\in (T_1\cup T_2)$ such that $d^a(E_{i,j})$ is  final. If $(i,j)\in T_1$ then $d^a(T_1)$ is obviously final. If $(i,j)\in T_2$ then $d^a(T_2)$ is final and, since $T_1$ and $T_2$ are Nerode equivalent, the tableau $d^a(T_1)$  is  also final.

 \end{proof}
From this proposition, in each Nerode class there exists a unique state represented by a tableau with a maximal number of crosses. This allows us to give the following definition. 
\begin{definition}\label{def-sat}  A tableau is   \emph{saturated} if it is the unique tableau having the maximal number of crosses in its Nerode class, \ie $T$ is saturated if and only if $T\sim T'$ implies $T'\subset T$.
\end{definition}
The set of  saturated tableaux is  representative of Nerode classes. We denote by $\Sat(T)$ the unique saturated tableau in the Nerode class of $T$ and by $\Sat(A)$ the automaton isomorphic to $A/_{\sim}$, the states of which are labelled by saturated tableaux, for any sub-automaton $A$ of $\mathrm{M}_\bullet^{F_1,F_2}$.

\begin{corollary}
  For any tableau $T$, the two following assertions are equivalent
  \begin{enumerate}
      \item There exists a valid tableau in the class of $T$.
      \item The tableau $\Sat(T)$ is valid. 
  \end{enumerate}
\end{corollary}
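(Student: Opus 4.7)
The corollary is essentially a direct consequence of the lattice structure of Nerode classes established in Proposition \ref{prop-lattice}, together with the definition of validity.

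The plan is to prove each implication separately, handling the trivial direction first. For $(2) \Rightarrow (1)$: by Definition \ref{def-sat}, we have $\Sat(T) \sim T$, so $\Sat(T)$ itself is a valid tableau in the Nerode class of $T$. No further argument is needed.

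For $(1) \Rightarrow (2)$: assume there exists a valid tableau $T'$ with $T' \sim T$, and show that $\Sat(T)$ is valid. I split on the condition appearing in Definition \ref{definition validity}. If the final zone is empty, then $\Sat(T)$ is valid by definition, without any further work. Otherwise, the validity of $T'$ means $(0,0) \in T'$. Since $T' \sim \Sat(T)$, Proposition \ref{prop-lattice} gives $\Sat(T) \sim T' \cup \Sat(T)$, and because $\Sat(T)$ is the unique representative of its Nerode class having the maximal number of crosses (Definition \ref{def-sat}), we must have $T' \cup \Sat(T) = \Sat(T)$, i.e., $T' \subseteq \Sat(T)$. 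Thus $(0,0) \in T' \subseteq \Sat(T)$, so $\Sat(T)$ is valid.

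There is no real obstacle here: the work has already been done in establishing that the Nerode classes form join semi-lattices with a unique maximum. The only minor point to be careful about is to invoke both clauses of the definition of validity (empty final zone vs.\ $(0,0)$ being a cross), since we should not force the final zone to be non-empty in the argument.
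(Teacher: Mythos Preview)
Your argument is correct and uses the same key step as the paper: from $T' \sim \Sat(T)$ and Proposition~\ref{prop-lattice} one gets $T' \subseteq \Sat(T)$, hence $(0,0) \in \Sat(T)$ whenever $(0,0) \in T'$.

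One point deserves care. You read Definition~\ref{definition validity} literally (``the final zone is empty, or $T_{0,0}=1$''), and your case split reflects that. The paper, however, actually uses validity in the sense ``$T$ has no cross in the final zone (i.e.\ $T$ is non-final), or $T_{0,0}=1$''---this is visible in the paper's own proof of the corollary and in the discussion around Proposition~\ref{prop-acc}. Under that reading, when the final zone is non-empty you cannot immediately conclude $(0,0)\in T'$ from the validity of $T'$: it might instead be that $T'$ is non-final. The paper handles this by splitting on the finality of $T'$ rather than on the emptiness of the final zone: if $T'$ is non-final then so is $\Sat(T)$ (Nerode-equivalent states have the same finality), hence $\Sat(T)$ has no cross in the final zone and is valid; if $T'$ is final then $T'_{0,0}=1$ and your inclusion argument applies. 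Adjusting your case split along these lines makes the proof robust to either reading of the definition.
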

\begin{proof}
 We have only to prove that $T$ is valid implies $\Sat(T)$ is valid. The tableaux $T$ and $\Sat(T)$ have the same finality. If $T$ and $\Sat(T)$ are both final then $T_{0,0}=1$ because $T$ is valid. From Proposition \ref{prop-lattice}, this implies that  $\Sat(T)_{0,0}=1$ and so $\Sat(T)$ is valid.
 If $T$ and $\Sat(T)$ are both non final then $\Sat(T)$ have no cross in the final zone, so it is valid.
\end{proof}
\begin{corollary}\label{cor-perm}
  If $T$ and $T'$ are Nerode equivalent in $\mathrm M^{F_1,F_2}_\bullet$  then for any pair $(f,g)\in\IntEnt{m}^{\IntEnt{m}}\times\IntEnt{n}^{\IntEnt{n}}$ the tableaux $d^{(f,g)}(T)$ and $d^{(f,g)}(T')$ are Nerode equivalent.\\
  Furthermore, when $f$ and $g$ are two permutations, the converse is true and $d^{(f,g)}(\Sat(T))=\Sat(d^{(f,g)}(T))$.
\end{corollary}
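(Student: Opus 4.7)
The plan is to prove the three assertions of the statement in order. Since the remark ``From now we only consider non-empty tableaux'' applies, I can test Nerode equivalence inside the simpler automaton $\widehat{\mathrm M}_\bullet^{F_1,F_2}$ via the preceding proposition, and the image $d^{(f,g)}(T)$ of a non-empty tableau $T$ is itself non-empty (it is the image of a non-empty set by a map).

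For the forward direction, suppose $T\sim T'$. For every word $w\in(\IntEnt{m}^{\IntEnt{m}}\times\IntEnt{n}^{\IntEnt{n}})^*$, associativity of the $d$-action gives $d^w(d^{(f,g)}(T))=d^{(f,g)w}(T)$, and similarly for $T'$. Since $T\sim T'$ in $\widehat{\mathrm M}_\bullet^{F_1,F_2}$ these two tableaux share the same finality, hence $d^{(f,g)}(T)\sim d^{(f,g)}(T')$ in $\widehat{\mathrm M}_\bullet^{F_1,F_2}$ and thus also in $\mathrm M_\bullet^{F_1,F_2}$.

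For the converse when $f$ and $g$ are permutations, I simply apply the forward direction to $d^{(f,g)}(T)\sim d^{(f,g)}(T')$ with the pair of inverses $(f^{-1},g^{-1})$: this yields
\begin{equation*}
T = d^{(f^{-1},g^{-1})}(d^{(f,g)}(T))\sim d^{(f^{-1},g^{-1})}(d^{(f,g)}(T'))=T'.
\end{equation*}

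Finally, for the identity $d^{(f,g)}(\Sat(T))=\Sat(d^{(f,g)}(T))$, the key observation is that when $f$ and $g$ are permutations, $d^{(f,g)}$ is a cardinality-preserving bijection on tableaux. Combined with the first two parts, $d^{(f,g)}$ restricts to a bijection between the Nerode class of $T$ and the Nerode class of $d^{(f,g)}(T)$ that preserves the number of crosses. By Proposition \ref{prop-lattice} and Definition \ref{def-sat}, the saturated tableau is the unique element of maximum cardinality in its Nerode class; its image under such a class-bijection is therefore the unique element of maximum cardinality in the image class, i.e.\ the saturated representative of that class. No deep obstacle appears: the argument reduces to associativity of the action together with invertibility and cardinality-preservation of permutation actions. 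The only step requiring care is the non-emptiness bookkeeping that allows the identification of Nerode equivalence in $\mathrm M_\bullet^{F_1,F_2}$ with that of $\widehat{\mathrm M}_\bullet^{F_1,F_2}$ throughout.
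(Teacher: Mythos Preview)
Your proof is correct. The paper states Corollary \ref{cor-perm} without proof, treating it as an immediate consequence of the preceding material; your argument spells out exactly the expected details --- associativity of the $d$-action for the forward direction, inversion by $(f^{-1},g^{-1})$ for the converse, and the cardinality-preserving bijection between Nerode classes to identify the saturated representatives --- so there is nothing to compare against and no gap to flag.
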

We denote by \begin{equation}\label{eq-SV}\SV^{F_1,F_2}_\bullet=\{\Sat(T)\mid T\in\Val_\bullet^{F_1,F_2}\}\end{equation}  the set of  valid saturated tableaux. 
\begin{theorem}\label{th-SV}
  For any $F_1, F_2$ and any operation $\bullet$, we have $\#\mathrm{\Min}(\mathrm M_\bullet^{F_1,F_2})\leq \#\SV^{F_1,F_2}_\bullet$.\\
  Futhermore, the equality holds when there exist $i\neq 0$ and $j\neq 0$ such that $(i,j)$ is in the final zone.
\end{theorem}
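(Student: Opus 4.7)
The plan is to construct an injection $\Phi$ from the states of $\Min(\mathrm{M}_\bullet^{F_1,F_2})$ into $\SV_\bullet^{F_1,F_2}$ that sends each Nerode class to its saturated representative, and to show that this injection becomes a bijection under the hypothesis of the second assertion. Since $\Min(\mathrm{M}_\bullet^{F_1,F_2})$ has exactly one state per Nerode class of accessible tableaux, it suffices to check that $\Sat(T)\in\SV_\bullet^{F_1,F_2}$ whenever $T$ is an accessible tableau, and that $\Sat$ separates distinct Nerode classes.

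The first point combines two ingredients already established. By Theorem \ref{remark-acc}, every non-empty accessible tableau is valid, and the corollary immediately preceding Corollary \ref{cor-perm} propagates validity through saturation: when a class contains a valid tableau, $\Sat$ returns a valid tableau. Hence for every non-empty accessible $T$ we have $\Sat(T)\in\SV_\bullet^{F_1,F_2}$, so $\Phi$ is well defined on these classes. Injectivity of $\Phi$ follows directly from Definition \ref{def-sat}, since each Nerode class has a unique saturated representative. The empty tableau, always accessible but valid only when the final zone is empty, requires a short case analysis based on the proposition that characterises when $\emptyset\sim\{(0,0)\}$: either $(0,0)$ lies in the final zone and $[\emptyset]=[\{(0,0)\}]$ is already captured via the non-empty valid tableau $\{(0,0)\}$, or the discussion reduces to a corner case (typically $F=\emptyset$) in which $\Min(\mathrm{M}_\bullet^{F_1,F_2})$ and $\SV_\bullet^{F_1,F_2}$ are handled by inspection. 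Altogether this yields $\#\Min(\mathrm{M}_\bullet^{F_1,F_2})\leq\#\SV_\bullet^{F_1,F_2}$.

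For the equality, assume there exist $i\neq 0$ and $j\neq 0$ with $(i,j)$ in the final zone. Theorem \ref{remark-acc} then gives $\Acc_\bullet^{F_1,F_2}=\Val_\bullet^{F_1,F_2}$, so every Nerode class containing a valid non-empty tableau also contains an accessible one. Consequently every element of $\SV_\bullet^{F_1,F_2}$ is hit by $\Phi$, turning the injection into a bijection and yielding equality. The main obstacle is the consistent bookkeeping around the empty tableau, which is the initial state but is not always valid; the dichotomy provided by the earlier proposition on $\emptyset$ versus $\{(0,0)\}$ resolves this. The remainder is a direct combination of Theorem \ref{remark-acc} with the saturation corollary and requires no further combinatorial input.
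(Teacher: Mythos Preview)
Your proof is correct and follows essentially the same line as the paper's: both arguments rest on the inclusion $\Acc_\bullet^{F_1,F_2}\subseteq\Val_\bullet^{F_1,F_2}$ (Theorem~\ref{remark-acc}) together with the fact that saturation furnishes a unique valid representative per Nerode class, yielding an injection from the states of $\Min(\mathrm{M}_\bullet^{F_1,F_2})$ into $\SV_\bullet^{F_1,F_2}$, which becomes a bijection once Proposition~\ref{prop-acc} gives $\Acc=\Val$. One small slip: the empty tableau is \emph{always} valid (it has no cross in the final zone, so Definition~\ref{definition validity} is satisfied trivially), so your case analysis around it is unnecessary---but it does no harm, and your dichotomy on whether $(0,0)\in F$ still reaches the right conclusion about $\Sat(\emptyset)$.
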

\begin{proof}
 The automaton $\Sat(\Val(\mathrm{M}_\bullet^{F_1,F_2}))$ is the Nerode quotient of a sub-automaton of $\mathrm{M}_\bullet^{F_1,F_2}$ containing at least all its accessible states. This implies  $\#\mathrm{\Min}(\mathrm M_\bullet^{F_1,F_2})\leq \#\SV^{F_1,F_2}_\bullet.$\\
 Furthermore Proposition \ref{prop-acc} means  $\Val(\mathrm{M}_\bullet^{F_1,F_2})=\Acc(\mathrm{M}_\bullet^{F_1,F_2})$ when there exist $i\neq 0$ and $j\neq 0$ such that $(i,j)$ is in the final zone. So by construction, the equality holds.
\end{proof}
As a consequence, the state complexity of $\ostar$ satisfies the inequality  below:
\begin{equation*}\sc_{\ostar}(m,n)\leq\max\left\{\#\SV^{F_1,F_2}_\bullet\mid F_1\subset\IntEnt{m}, F_2\subset\IntEnt{n} \right\}.
\end{equation*}
The main implication of  Theorem \ref{th-SV} is that in order to compute the state complexity, we need to study the combinatorics of saturated tableaux. This is what we do in the following sections. First, we study the (simplest) $2\times 2$ case and then we generalize it for bigger tableaux, via the notion of local saturation.
\subsection{Computing the Nerode equivalence: the $2\times 2$ case \label{2x2case}}
Let us set  $m=n=2$ and investigate in more details  Nerode equivalence of the automaton $\widehat{M_\bullet}^{F_1,F_2}$ in that case. We assume that $\#F_1=\#F_2=1$ and that the operation $\bullet$ is not degenerated, otherwise at most one of the automata of $\mon^{F_1,F_2}_{m,n}$ is not minimal.   Nerode classes depend only on the set $(F_1\times\IntEnt 2)\bullet (\IntEnt 2\times F_2)$ which can be either a singleton $\{(\alpha,\beta)\}$, or the complementary of a singleton $\left(\IntEnt 2\times \IntEnt 2\right)\setminus \{(\alpha,\beta)\}$, or a diagonal set  $\{(\alpha,\beta),(1-\alpha,1-\beta)\}$. 
In each case, acting on languages or complements does not change the global form of $(F_1\times\IntEnt 2)\bullet (\IntEnt 2\times F_2)$.
According, from Table \ref{op-bool}, we have to consider only three cases
\begin{enumerate}
    \item[A]\label{case-A} If $\bullet=\cap$ then the final zone is $\{(\alpha,\beta)\}$ for some $\alpha,\beta\in\{0,1\}$. We check that all  Nerode classes are singletons.
    \item[X]\label{case-X} If $\bullet=\oplus$  then the final zone is 
    $\{(\alpha,\beta),(1-\alpha,1-\beta)\}$ for some $\alpha,\beta\in\{0,1\}$. We check that the only Nerode class containing at least two tableaux is the class \[\left\{\begin{array}{|c|c|}\hline \times&\times\\\hline \times&\times\\\hline  \end{array},
    \begin{array}{|c|c|}\hline \times&\times\\\hline \times&\\\hline  \end{array},\begin{array}{|c|c|}\hline \times&\times\\\hline &\times\\\hline  \end{array},
    \begin{array}{|c|c|}\hline &\times\\\hline \times&\times\\\hline  \end{array},\begin{array}{|c|c|}\hline \times&\\\hline \times&\times\\\hline  \end{array}
    \right\}.\]
     \item[O]\label{case-O} If $\bullet=\cup$  then the final zone is 
     $\left(\IntEnt 2\times \IntEnt 2\right)\setminus \{(\alpha,\beta)\}$ for some $\alpha,\beta\in\{0,1\}$. We check that the only Nerode class containing at least two tableaux is the class \[\left\{\begin{array}{|c|c|}\hline \times&\times\\\hline \times&\times\\\hline  \end{array},
    \begin{array}{|c|c|}\hline \times&\times\\\hline \times&\\\hline  \end{array},\begin{array}{|c|c|}\hline \times&\times\\\hline &\times\\\hline  \end{array},
    \begin{array}{|c|c|}\hline &\times\\\hline \times&\times\\\hline  \end{array},\begin{array}{|c|c|}\hline \times&\\\hline \times&\times\\\hline  \end{array}, \begin{array}{|c|c|}\hline \times&\\\hline &\times\\\hline\end{array},
    \begin{array}{|c|c|}\hline &\times\\\hline \times&\\\hline  \end{array}
    \right\}.\]
\end{enumerate}
As a consequence of the classification above, for any $2\times 2$-tableau $T$, we have $\Sat(T)\in
\left\{\ \begin{array}{|c|c|}\hline \times&\times\\\hline \times&\times\\\hline  \end{array} ,T\right\}$. 
In Case A, we have $\Sat(T)=T$ for any $T$.\\
In Case X, we have $\Sat(T)=\begin{array}{|c|c|}\hline \times&\times\\\hline \times&\times\\\hline  \end{array}$ if and only if $\#T>2$.\\
In Case O,  we have $\Sat(T)=\begin{array}{|c|c|}\hline \times&\times\\\hline \times&\times\\\hline  \end{array}$ if and only if $T$ contains a diagonal, \ie there exist $\alpha,\beta\in\{0,1\}$ such that $\{(\alpha,\beta),(1-\alpha,1-\beta)\}\subseteq T$.
\subsection{Computing the Nerode equivalence: from the $2\times 2$ case to the general case\label{sec-localsat}}
We assume that $m,n\geq 2$, $\#F_1\not\in\{0,m\}$, $\#F_2\not\in\{0,n\}$ and that the law $\bullet$ is not degenerated. According to Table \ref{op-bool}, without loss of generality, we assume that $\bullet\in\{\cup,\cap,\oplus\}$. 
\begin{remark}\label{rem-nondeg}
The above condition implies  that there exists at least one row (resp. column) containing a final cell and a non final cell.
\end{remark}
\begin{definition}\label{def-restriction}
          Let  $T\in 2^{\IntEnt{m}\times\IntEnt{n}}$. Let $i_0, i_1\in\IntEnt{m}$,  
          $j_0, j_1\in\IntEnt{n}$ and $I=\{i_0,i_1\}\times\{j_0,j_1\}$. We denote by $T|_I=T\cap I$ 
the \emph{restriction} of $T$ to the cells $\{i_0,i_1\}\times \{j_0,j_1\}$. A tableau $T$  is called \emph{simple} if $T=T|_I$ for some $I=\{i_0,i_1\}\times\{j_0,j_1\}$.\\
Let $T$ be a simple tableau and $I=\{i_0,i_1\}\times\{j_0,j_1\}$ with $i_0\leq i_1$ and $j_0\leq j_1$ such that $T=T|_I$. We define the $I$-\emph{reduced} (or simply \emph{reduced} when the context does not induce ambiguity) of $T$ as the $2\times 2$ tableau $\Red_I(T)=\{(\alpha,\beta)\mid \alpha,\beta\in\IntEnt {2}\text{ and }(i_\alpha,j_\beta)\in T\}$. When there is no ambiguity we omit the subscript $I$.\\
Symmetrically, for any $2\times2$-tableau $X$, $m,n\geq 2$, and $I=\{i_0,i_1\}\times\{j_0,j_1\}$ with $0\leq i_0<i_1\leq m-1$ and $0\leq j_0<j_1\leq n-1$, we define its \emph{$I-$induced} as the tableau $\Ind_{I}(X)$ which is the unique $m\times n$-simple tableau the $I$-reduced of which is $X$, \ie $\Red_I(\Ind_I(X))=X$. 
\end{definition}
  Let symbol $F$ refer to the final zone of any tableau $T$. More precisely, we have
$F=(F_1\times\IntEnt n)\bullet (\IntEnt m\times F_2)$. We consider that the final zone of any reduced tableau is $F_R=(\{1\}\times\IntEnt 2)\bullet (\IntEnt 2\times\{1\})$. By this way, the notion of saturated reduced tableau makes sense. 
If $I=\{i_0,i_1\}\times \{j_0,j_1\}\subset \IntEnt m\times\IntEnt n$ then the set $F_I=I\cap F$ refers to the \emph{restricted final zone} associated to $I$.
\begin{remark}\label{CasesFI}
The restricted final zone satisfies one of the following identities
\begin{enumerate}
    \item[(a)] $F_I=\emptyset$,
    \item[(b)] $j_0\neq j_1$ and $F_I=\{(i_a,j_0),(i_a,j_1)\}$ for some $a\in\IntEnt2$,
    \item[(c)] $i_0\neq i_1$ and $F_I=\{(i_0,j_b),(i_1,j_b)\}$ for some $b\in\IntEnt2$,
    \item[(d)] $i_0\neq i_1$, $j_0\neq j_1$, and $F_I=(\{i_a\}\times \{j_0,j_1\})\bullet (\{i_0,i_1\}\times \{j_b\}$ for some $a,b\in\IntEnt2$,
    \item[(e)] $F_I=I$.
\end{enumerate}
\end{remark}

The following lemma shows that every simple tableau  can be projected to its reduced tableau  preserving the property of finality of the cells.
\begin{lemma}\label{T2Red}
Let $I=\{i_0,i_1\}\times \{j_0,j_1\}\subset \IntEnt m\times\IntEnt n$. There exist two maps $f_I,g_I\in\IntEnt2^{\IntEnt2}$ such that for any $\alpha,\beta\in\IntEnt2$, $(i_\alpha,j_\beta)\in F_I$ if and only if $(f_I(\alpha),g_I(\beta))\in F_R$.
\end{lemma}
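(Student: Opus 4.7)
The plan is to define $f_I$ and $g_I$ explicitly as characteristic-function style maps, and then observe that the required equivalence is a direct unfolding of the product structure of $F$ and $F_R$. No case analysis on the five shapes of Remark \ref{CasesFI} is really needed; those shapes are only illustrative of what $F_I$ can look like.

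Concretely, I would set
\begin{equation*}
 f_I(\alpha)=\begin{cases}1 & \text{if }i_\alpha\in F_1\\ 0 & \text{otherwise}\end{cases},
 \qquad
 g_I(\beta)=\begin{cases}1 & \text{if }j_\beta\in F_2\\ 0 & \text{otherwise}\end{cases},
\end{equation*}
which are obviously elements of $\IntEnt{2}^{\IntEnt{2}}$.

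Next, I would unfold both sides. On one hand, since $F=(F_1\times\IntEnt n)\bullet(\IntEnt m\times F_2)$ and $F_I=I\cap F$, the Boolean extension of $\bullet$ to sets gives
\begin{equation*}
 (i_\alpha,j_\beta)\in F_I \iff [i_\alpha\in F_1]\bullet [j_\beta\in F_2]=1.
\end{equation*}
On the other hand, since $F_R=(\{1\}\times\IntEnt 2)\bullet(\IntEnt 2\times\{1\})$,
\begin{equation*}
 (f_I(\alpha),g_I(\beta))\in F_R \iff [f_I(\alpha)=1]\bullet[g_I(\beta)=1]=1.
\end{equation*}
By construction of $f_I$ and $g_I$, the two right-hand sides are the very same Boolean expression, so the equivalence is immediate.

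The statement is essentially a definition chase; there is no serious obstacle. The only thing to be a little careful about is confirming that membership in $F_I$ depends only on the pair $([i_\alpha\in F_1],[j_\beta\in F_2])$, which is what the product form of $F$ guarantees, and that $F_R$ has the same product form with $F_1,F_2$ replaced by $\{1\},\{1\}$; this is exactly what allows $f_I$ and $g_I$ to be defined coordinate-wise.
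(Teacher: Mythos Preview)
Your proof is correct. The key observation, that membership of $(i_\alpha,j_\beta)$ in $F$ is precisely $[i_\alpha\in F_1]\bullet[j_\beta\in F_2]=1$ and that $F_R$ is the analogous object with $F_1,F_2$ replaced by $\{1\},\{1\}$, makes the characteristic-function maps $f_I(\alpha)=[i_\alpha\in F_1]$, $g_I(\beta)=[j_\beta\in F_2]$ work uniformly.

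The paper takes a different, more laborious route: it distinguishes the three types A, X, O of $\bullet$ and the five possible shapes (a)--(e) of $F_I$ from Remark~\ref{CasesFI}, and for each of the fifteen combinations exhibits explicit $f_I,g_I$ in a table. The maps obtained this way are not always the same as yours (for instance, in case~(a) with $\bullet=\cap$ the paper takes both maps constant~$0$, whereas your $f_I$ might be nonconstant if exactly one of $i_0,i_1$ lies in $F_1$ while neither $j_\beta$ lies in $F_2$), but of course the lemma only asserts existence. Your argument is strictly simpler: it avoids the case analysis entirely by exploiting the product structure of $F$ directly, and it is also uniform in $\bullet$. The paper's table does have the minor advantage of making the correspondence with the picture in the subsequent example very explicit, but for the proof itself your approach is cleaner.
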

\begin{proof}
The construction is summarized in Table \ref{figi} according to the type A, X or O of $\bullet$ (see Section \ref{2x2case}) and the cases listed in Remark \ref{CasesFI}.
\begin{table}[H]\[
\begin{array}{|cc||c|c|c|c|c|}
\hline
\ _{Types (\bullet)}\backslash^{Cases (I)}&&(a)&(b)&(c)&(d)&(e)\\\hline
A&\begin{array}{l} f_I\\g_I\end{array}&\begin{array}{c}0\\0\end{array}&\begin{array}{l}\alpha\rightarrow [\alpha=a]\\
1\end{array}& \begin{array}{l}1\\\beta\rightarrow [\beta=b] \end{array}&
\begin{array}{l}\alpha \rightarrow [\alpha=a]\\
\beta\rightarrow [\beta=b]\end{array}&\begin{array}{c}1\\1\end{array}\\\hline
X&\begin{array}{l} f_I\\g_I\end{array}&\begin{array}{l}0\\0\end{array}&
\begin{array}{l}\alpha\rightarrow [\alpha=a]\\
0\end{array}
& \begin{array}{l}0\\ \beta\rightarrow [\beta=b] \end{array}&
\begin{array}{l}\alpha\rightarrow [\alpha= a]\\
\beta\rightarrow [\beta=b]\end{array}&\begin{array}{c}1\\0\end{array}\\\hline

O&\begin{array}{l} f_I\\g_I\end{array}&\begin{array}{l}0\\0 \end{array}&\begin{array}{l}\alpha\rightarrow [\alpha=a]\\
0\end{array}& \begin{array}{l}0\\ \beta\rightarrow [\beta=b] \end{array}&
\begin{array}{l}\alpha\rightarrow [\alpha=a]\\
\beta\rightarrow[\beta=b]\end{array}&\begin{array}{l}1\\1 \end{array}\\\hline
\end{array}
\]
\caption{The maps $f_I$ and $g_I$. \label{figi}}
\end{table}
\end{proof}

\begin{example} The following picture illustrates the various items of Remark \ref{CasesFI} for $\bullet=\oplus$. The red cells correspond to the final zone $F$ of the tableau.
\begin{center}
    \begin{tikzpicture}[scale=0.4]
      \foreach \j in {0,...,6} { \foreach \i in{0,...,6} {
        \node[scale=1.5,fill=red,,opacity=0.5] at (\j+0.5,\i+0.5) {$ $};
        \node[scale=1.5,fill=red,,opacity=0.5] at (\j+7.5,\i+7.5) {$ $};
        }
        }
    \draw[step=1.0,black, dotted] (0,0) grid (14,14);
    \draw[black] (1,1) rectangle (3,3); \node at (2,2) {(e)};
    \draw[black] (6,6) rectangle (8,8); \node at (7,7) {(d)};
    \draw[black] (1,6) rectangle (3,8); \node at (2,7) {(b)};
    \draw[black] (1,11) rectangle (3,13); \node at (2,12) {(a)};
    \draw[black] (6,11) rectangle (8,13); \node at (7,12) {(c)};
\end{tikzpicture}    
\end{center}
The pair of maps $(f_I,g_I)$ of Table \ref{figi} is illustrated below with respect to each final zone $F_I$, $(a)$ to $(e)$ drawn in the previous figure.
\begin{center}
    \begin{tikzpicture}[scale=0.6]
    \node[scale=2,fill=red,,opacity=0.5] at (3.5,0.5) {$ $};
     \node[scale=2,fill=red,,opacity=0.5] at (4.5,1.5) {$ $};
    \draw[step=1.0,black, dotted] (0,0) grid (2,2);
    \draw[step=1.0,black, dotted] (3,0) grid (5,2);
    \draw[black] (0,0) rectangle (2,2);
    \draw[black] (3,0) rectangle (5,2);
     \draw[->,color=blue] (0.5,0.5) to [bend left,out=-90,in=-90] (3.5,1.5);
     \draw[->,color=blue] (1.5,0.5) to  (3.5,1.5);
     \draw[->,color=blue] (1.5,1.5) to  (3.5,1.5);
     \draw[->,color=blue] (0.5,1.5) to [bend right,out=90,in=90] (3.5,1.5);
     
     \node at (2.5,-1) {$(a)$};
     \node[scale=2,fill=red,,opacity=0.5] at (10.5,0.5) {$ $};
     \node[scale=2,fill=red,,opacity=0.5] at (11.5,1.5) {$ $};
     \node[scale=2,fill=red,,opacity=0.5] at (7.5,0.5) {$ $};
     \node[scale=2,fill=red,,opacity=0.5] at (8.5,0.5) {$ $};
     
    \draw[step=1.0,black, dotted] (7,0) grid (9,2);
    \draw[step=1.0,black, dotted] (10,0) grid (12,2);
    \draw[black] (7,0) rectangle (9,2);
    \draw[black] (10,0) rectangle (12,2);
     \draw[->,color=blue] (8.5,1.5) to [bend right,out=0,in=180] (10.5,1.5);
     \draw[->,color=red] (8.5,0.5) to [bend right,out=0,in=180] (10.5,0.5);
     \draw[->,color=red] (7.5,0.5) to [bend left,out=-90,in=-90] (10.5,0.5);
     \draw[->,color=blue] (7.5,1.5) to [bend right,out=90,in=90] (10.5,1.5);
        \node at (9.5,-1) {$(b)$};

 \node[scale=2,fill=red,,opacity=0.5] at (17.5,0.5) {$ $};
     \node[scale=2,fill=red,,opacity=0.5] at (18.5,1.5) {$ $};
     \node[scale=2,fill=red,,opacity=0.5] at (15.5,1.5) {$ $};
     \node[scale=2,fill=red,,opacity=0.5] at (15.5,0.5) {$ $};
     
    \draw[step=1.0,black, dotted] (14,0) grid (16,2);
    \draw[step=1.0,black, dotted] (17,0) grid (19,2);
    \draw[black] (14,0) rectangle (16,2);
    \draw[black] (17,0) rectangle (19,2);
     \draw[->,color=blue] (14.5,0.5) to [bend right,out=0,in=180] (17.5,1.5);
     \draw[->,color=red] (15.5,0.5) to [bend right,out=0,in=180] (18.5,1.5);
     \draw[->,color=red] (15.5,1.5) to [bend right,out=90,in=90] (18.5,1.5);
     \draw[->,color=blue] (14.5,1.5) to [bend right,out=90,in=90] (17.5,1.5);
        \node at (16.5,-1) {$(c)$};

     \node[scale=2,fill=red,opacity=0.5] at (4.5,5.5) {$ $};
     \node[scale=2,fill=red,,opacity=0.5] at (5.5,6.5) {$ $};
     \node[scale=2,fill=red,,opacity=0.5] at (7.5,5.5) {$ $};
     \node[scale=2,fill=red,,opacity=0.5] at (8.5,6.5) {$ $};
     
    \draw[step=1.0,black, dotted] (4,5) grid (6,7);
    \draw[step=1.0,black, dotted] (7,5) grid (9,7);
    \draw[black] (4,5) rectangle (6,7);
    \draw[black] (7,5) rectangle (9,7);
    \draw[->,color=red] (4.5,5.5) to [bend right,out=-90,in=-90] (7.5,5.5);
    \draw[->,color=blue] (5.5,5.5) to [bend right,out=-90,in=-90] (8.5,5.5);
    \draw[->,color=blue] (4.5,6.5) to [bend left,out=90,in=90] (7.5,6.5);
    \draw[->,color=red] (5.5,6.5) to [bend left,out=90,in=90] (8.5,6.5);

    \node at (6.5,4) {$(d)$};  
    
       \node[scale=2,fill=red,opacity=0.5] at (10.5,5.5) {$ $};
     \node[scale=2,fill=red,,opacity=0.5] at (11.5,6.5) {$ $};
     \node[scale=2,fill=red,opacity=0.5] at (10.5,6.5) {$ $};
     \node[scale=2,fill=red,,opacity=0.5] at (11.5,5.5) {$ $};
     \node[scale=2,fill=red,,opacity=0.5] at (13.5,5.5) {$ $};
     \node[scale=2,fill=red,,opacity=0.5] at (14.5,6.5) {$ $};
     \draw[step=1.0,black, dotted] (13,5) grid (15,7);
    \draw[step=1.0,black, dotted] (10,5) grid (12,7);
    \draw[black] (13,5) rectangle (15,7);
    \draw[black] (10,5) rectangle (12,7);   
    \draw[->,color=red] (10.5,5.5) to [bend right,out=-90,in=-90] (13.5,5.5);
    \draw[->,color=red] (10.5,6.5) to [bend left,out=90,in=90] (13.5,5.5);
    \draw[->,color=red] (11.5,6.5) to (13.5,5.5);
    \draw[->,color=red] (11.5,5.5) to (13.5,5.5);
    \node at (12.5,4) {$(e)$};  
\end{tikzpicture}    
\end{center}
\end{example}

Next result states that the operations of saturation and reduction commute for simple  tableaux. This is the first step to compute the saturation of any tableau.
\begin{proposition}\label{Prop:simple}
Let $I=\{i_0,i_1\}\times \{j_0,j_1\}$ with $i_0<i_1$ and $j_0<j_1$. Let $T,T'$ be two simple tableaux such that $T|_I=T$ and $T'|_I=T'$ and let $X$ and $X'$ be two $2\times 2$-tableaux.  We have the following properties
\begin{enumerate}
    \item\label{enum-simple1} The tableau $\Sat(T)$ is simple and $\Sat(T)=\Sat(T)|_I$.
    \item\label{enum-simple2} If $T$ and $T'$ are Nerode equivalent then $\Red_I(T)$ and $\Red_I(T')$ are Nerode equivalent.
    \item\label{enum-simple3} If $X$ and $X'$ are Nerode equivalent then $\Ind_I(X)$ and $\Ind_I(X')$ are Nerode equivalent.
    \item\label{enum-simple4} We have $\Sat(\Red_I(T))=\Red_I(\Sat(T))$.
\end{enumerate}
\end{proposition}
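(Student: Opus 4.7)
The plan is to prove items 1--4 in order. The key insight is that since $\#F_1\in\{1,\ldots,m-1\}$ and $\#F_2\in\{1,\ldots,n-1\}$, one can pick a \emph{calibration rectangle} $I^\dagger=\{a_0,a_1\}\times\{b_0,b_1\}$ (with $a_0<a_1$, $b_0<b_1$) in which exactly one of the two row-indices lies in $F_1$ and exactly one column-index lies in $F_2$. By Lemma \ref{T2Red} (case (d)), the maps $f_{I^\dagger},g_{I^\dagger}\in\IntEnt 2^{\IntEnt 2}$ are then permutations, hence invertible. This invertibility lets us realise any $2\times 2$ transformation on a reduced tableau by a well-chosen transformation in $\IntEnt m^{\IntEnt m}\times\IntEnt n^{\IntEnt n}$.

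For item \ref{enum-simple1}, I argue by contradiction: assume $(i^*,j^*)\in\Sat(T)\setminus I$. By Proposition \ref{prop-lattice} the Nerode class of $T$ is closed under union, so $\Sat(T)$ is the union of all $T'\sim T$ and some such $T'$ contains $(i^*,j^*)$. I then construct a distinguishing pair $(\phi,\psi)$: map $\{i_0,i_1\}$ into $F_1^c$ and $\{j_0,j_1\}$ into $F_2^c$ (possible since both complements are nonempty), and use the fact that at least one of $i^*\notin\{i_0,i_1\}$ or $j^*\notin\{j_0,j_1\}$ holds to place $(\phi(i^*),\psi(j^*))$ in $F$. A short case analysis on $\bullet\in\{\cap,\oplus,\cup\}$ confirms this; for $\bullet=\cap$ in the subcase $i^*=i_\alpha$ one flips the strategy, mapping $\{i_0,i_1\}$ into $F_1$ and $\{j_0,j_1\}$ into $F_2^c$. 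The resulting $(\phi,\psi)$ yields $d^{(\phi,\psi)}(T)\cap F=\emptyset$ yet $(\phi(i^*),\psi(j^*))\in d^{(\phi,\psi)}(T')\cap F$, contradicting $T\sim T'$. The case analysis is the main technical obstacle, but each subcase is routine.

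For items \ref{enum-simple2} and \ref{enum-simple3}, I use the calibration rectangle. For item \ref{enum-simple2}, given any $(h_1,h_2)\in\IntEnt 2^{\IntEnt 2}\times\IntEnt 2^{\IntEnt 2}$, define $\phi(i_\alpha)=a_{f_{I^\dagger}^{-1}(h_1(\alpha))}$ and $\psi(j_\beta)=b_{g_{I^\dagger}^{-1}(h_2(\beta))}$ (arbitrary elsewhere). A direct computation shows that, for any simple $T$ on $I$, the tableau $d^{(\phi,\psi)}(T)$ is simple on $I^\dagger$ with reduction $d^{(f_{I^\dagger}^{-1}h_1,\,g_{I^\dagger}^{-1}h_2)}(\Red_I(T))$; invoking Lemma \ref{T2Red} on $I^\dagger$, this is final in the big automaton iff $d^{(h_1,h_2)}(\Red_I(T))$ is final in the small automaton. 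Hence $T\sim T'$ forces $d^{(h_1,h_2)}(\Red_I(T))$ and $d^{(h_1,h_2)}(\Red_I(T'))$ to have equal finalities for every $(h_1,h_2)$, and Lemma \ref{lem-dw2dz} converts this into $\Red_I(T)\sim\Red_I(T')$. Item \ref{enum-simple3} is symmetric: for arbitrary $(\phi,\psi)$, the tableau $d^{(\phi,\psi)}(\Ind_I(X))$ is simple on $I'=\{\phi(i_0),\phi(i_1)\}\times\{\psi(j_0),\psi(j_1)\}$ with reduction $d^{(f,g)}(X)$ for suitable $(f,g)$, and its finality is read off the Nerode class of $X$ via Lemma \ref{T2Red} applied to $I'$; degenerate $I'$ is handled by noting that the image then depends only on the column/row supports $\pi_1(X),\pi_2(X)$, which are constant within a $2\times 2$ Nerode class by the explicit classification of Section \ref{2x2case}.

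Item \ref{enum-simple4} is a corollary. By item \ref{enum-simple1}, $\Sat(T)$ is simple on $I$, so $\Red_I(\Sat(T))$ is defined. Item \ref{enum-simple2} applied to $\Sat(T)\sim T$ yields $\Red_I(\Sat(T))\sim\Red_I(T)$, whence $\Red_I(\Sat(T))\subseteq\Sat(\Red_I(T))$ by maximality. Conversely, item \ref{enum-simple3} applied to $\Sat(\Red_I(T))\sim\Red_I(T)$ gives $\Ind_I(\Sat(\Red_I(T)))\sim\Ind_I(\Red_I(T))=T$, so $\Ind_I(\Sat(\Red_I(T)))\subseteq\Sat(T)$ by maximality in the class of $T$; taking $\Red_I$ of both sides (legal because both are simple on $I$) gives the reverse inclusion $\Sat(\Red_I(T))\subseteq\Red_I(\Sat(T))$, so the two sets coincide.
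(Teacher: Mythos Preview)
Your proof is correct and follows essentially the same route as the paper: item~1 by constructing a distinguishing letter $(\phi,\psi)$ that sends $I$ into the non-final zone while sending the extra cross into $F$; items~2 and~3 via the ``calibration rectangle'' (the paper's $J$) together with Lemma~\ref{T2Red}; and item~4 as a formal consequence of the first three. Two minor remarks. First, in item~1 your case split is by the type of~$\bullet$ whereas the paper splits by the position of $(i^*,j^*)$ relative to $I$ (the paper's three cases (a), (b), (c)); both work, though you should state the symmetric flip for $\bullet=\cap$ when $j^*\in\{j_0,j_1\}$ as well. Second, in item~3 your separate treatment of degenerate $I'$ is unnecessary: Lemma~\ref{T2Red} already covers the cases $\#I'<4$ (Remark~\ref{CasesFI}, cases (a)--(c),(e)), so the paper handles all $(\phi,\psi)$ uniformly without appealing to the explicit $2\times 2$ classification. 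On the other hand, your argument for item~4 via the two inclusions $\Red_I(\Sat(T))\subseteq\Sat(\Red_I(T))$ and $\Sat(\Red_I(T))\subseteq\Red_I(\Sat(T))$ is cleaner than the paper's, which routes through a cardinality comparison before invoking Proposition~\ref{prop-lattice}.
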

\begin{proof}
\begin{enumerate}
    \item  We have to prove that $\Sat(T)\subset I$.\\
 Let $(i,j)\in\Sat(T)$. If $(i,j)\not\in I$ then $(i,j)$ matches with one of the following three cases
 \begin{enumerate}
     \item If $i\not\in \{i_0,i_1\}$ and $j\not\in\{j_0,j_1\}$ then we consider two crosses: a first one, $(\alpha,\beta)$, belonging to the final zone, and a second one, $(\alpha',\beta')$  not belonging to the final zone. Let $f$ and $g$ be such that $f(i_0)=f(i_1)=\alpha'$, $g(j_0)=g(j_1)=\beta'$, $f(i)=\alpha$, and $g(j)=\beta$. We have $d^{(f,g)}(T)=\{(\alpha',\beta')\}$ which is not final, while $ d^{(f,g)}(\Sat(T))$ is final because $(\alpha,\beta)\in d^{(f,g)}(\Sat(T))$. This contradicts the fact that $T$ and  $\Sat(T)$ are in the same Nerode class.
     \item If $i\in\{i_0,i_1\}$ and $j\not\in \{j_0,j_1\}$ then, according to Remark \ref{rem-nondeg}, there exist $\alpha,\beta,\beta'$ such that $(\alpha,\beta)$ is in the final zone while $(\alpha,\beta')$ is not. Let $f$ and $g$ be such that $f(i_0)=f(i_1)=\alpha$, $g(j_0)=g(j_1)=\beta$ and $g(j)=\beta'$. The tableau  $d^{(f,g)}(T)=\{(\alpha,\beta)\}$ is  not final but, since  $(\alpha',\beta')\in d^{(f,g)}(\Sat(T))$, the tableau $ d^{(f,g)}(\Sat(T))$ is final. This contradicts the Nerode equivalence of $T$ and $\Sat(T)$.
     \item The remaining case ($i\not \in\{i_0,i_1\}$ and $j\in \{j_0,j_1\}$) is obtained symmetrically to the prevrious one.
 \end{enumerate}
 Accordingly to the previous enumeration, we have necessarily $(i,j)\in I$. So $\Sat(T)\subset I$, \ie $\Sat(T)$ is simple. 
 \item  Since we are not in a degenerated case, there exist $J=\{k_0,k_1\}\times\{\ell_0,\ell_1\}$ with $k_0\neq k_1$ and $\ell_0\neq \ell_1$ such that for any $(\alpha,\beta)\in\IntEnt2\times\IntEnt2$ we have $(k_\alpha,\ell_\beta)\in F_J$ if and only if $(\alpha,\beta)\in F_R$. Suppose that $\Red(T)\not\sim\Red(T')$ then, without loss of generality, there exists $(f,g)\in\IntEnt2^{\IntEnt 2}\times 2^{\IntEnt 2}$ such that $d^{(f,g)}(\Red(T))$ is final while $d^{(f,g)}(\Red(T'))$ is not final. This implies thet there exists $(a,b)\in\Red(T)$ such that $(f(a),g(b))\in F_R$ while for any $(\alpha,\beta)\in \Red(T')$ we have $(f(\alpha),g(\beta))\not \in F_R$. Consider the two following map
 \[
 f^I(i)=\left\{\begin{array}{ll} k_{f(\alpha)}&\mbox{ if }i=i_\alpha\mbox{ for some }\alpha\in\IntEnt2\\
 i&\mbox{ otherwise}\end{array}\right.\mbox{ and }
 g^I(j)=\left\{\begin{array}{ll} \ell_{g(\beta)}&\mbox{ if }j=j_\beta\mbox{ for some }\beta\in\IntEnt2\\
 j&\mbox{ otherwise}\end{array}\right.
 \]
 We check that $d^{(f^I,g^I)}(T)$ is final while $d^{(f^I,g^I)}(T')$ is non final. This shows that $T\not\sim T'$ and proves the result.
 \item Suppose that $\Ind_I(X)\not\sim \Ind_I(X')$. This means that there exists $(f,g)\in\IntEnt m^{\IntEnt m}\times \IntEnt n^{\IntEnt n}$ such that $d^{(f,g)}(\Ind_I(X))$ and $d^{(f,g)}(\Ind_I(X'))$ have not the same finality. Let $J=\{f(i_0),f(i_1)\}\times\{g(j_0),g(j_1)\}=\{k_0,k_1\}\times\{\ell_0,\ell_1\}$ with $k_0\leq k_1$ and $\ell_0\leq \ell_1$. Lemma \ref{T2Red} implies that there exists a pair of maps $(f_J,g_J)$ such that for any $\alpha,\beta\in\IntEnt2$, $(f_J(k_\alpha),g_J(\ell_\beta))\in F_J$ if and only if $(\alpha,\beta)\in F_R$. We set
\[
\widehat f=\left\{\begin{array}{ll}f_J&\mbox{if }k_0=f(i_0)\\
\alpha\rightarrow 1-f_J(\alpha)&\mbox{if }k_0\neq f(i_0)\end{array}\right. \mbox{ and }
\widehat g=\left\{\begin{array}{ll}g_J&\mbox{if }\ell_0=g(j_0)\\
\beta\rightarrow 1-g_J(\beta)&\mbox{if }\ell_0\neq f(j_0)\end{array}\right.
\]
We check that $(\widehat f(\alpha),\widehat g(\beta))\in F_R$  if and only if $(f(i_\alpha),g(i_\beta))\in F_J$.  Hence, $d^{(\widehat f,\widehat g)}(X)$ and $d^{(\widehat f,\widehat g)}(X')$ have not the same finality and so $X\not\sim X'$. This proves the result.
 \item
Let $R=\Ind_I(\Sat(\Red(T)))$. Recall that we have
\begin{equation}
    \Red(R)=\Sat(\Red(T)),\label{RedT'2SatT}
\end{equation}
see Figure \ref{fig-constT'} for an illustration. 
\begin{figure}[H]
  \centerline{
            \begin{tikzpicture}[scale=0.4]
           \draw[step=1.0,black, thin] (0,0) grid (5,5);
           \node at (1.5,1.5) {$3$};
           \node at (3.5,1.5) {$4$};
           \node at (1.5,3.5) {$1$};
           \node at (3.5,3.5) {$2$};
            \node at(2.5,-1) {$T$};
            \draw[->] (5.5,2.5) to(7.5,2.5);
            \node at (6.5,3.5) {$\Red$};
           \draw[step=1.0,black, thin] (8,2) grid (10,4);
          \node at (8.5,2.5) {$3$};
           \node at (9.5,2.5) {$4$};
           \node at (8.5,3.5) {$1$};
           \node at (9.5,3.5) {$2$};
            \node at(9,1) {$\Red(T)$};
            \draw[->] (10.5,2.5) to(12.5,2.5);
             \node at (11.5,3.5) {$\Sat$};
        \draw[step=1.0,black, thin] (13,2) grid (15,4);
          \node at (13.5,2.5) {$3'$};
           \node at (14.5,2.5) {$4'$};
           \node at (13.5,3.5) {$1'$};
           \node at (14.5,3.5) {$2'$};
           \node at(14,0) {$\begin{array}{c}\Sat(\Red(T))\\=\\\Red(R)\end{array}$};
            \draw[<-] (15.5,2.5) to(17.5,2.5);
            \node at(16.5,3.5) {$\Red$};
        \draw[step=1.0,black, thin] (18,0) grid (23,5);  
                 \node at (19.5,1.5) {$3'$};
           \node at (21.5,1.5) {$4'$};
           \node at (19.5,3.5) {$1'$};
           \node at (21.5,3.5) {$2'$};
           \node at (20.5,-1) {$R$};
           \draw[step=1.0,black, thin] (9,12) grid (14,17);
             \node at (10.5,13.5) {$3''$};
           \node at (12.5,13.5) {$4''$};
           \node at (10.5,15.5) {$1''$};
           \node at (12.5,15.5) {$2''$};
           \node at (11.5,18.5) {$R'$};
           \draw[step=1.0,black, thin] (11,7) grid (13,9);
          \node at (11.5,7.5) {$3''$};
          \node at (11.5,8.5) {$1''$};
          \node at (12.5,7.5) {$4''$};
          \node at (12.5,8.5) {$2''$};
           \node at (12,6) {$\Red(\Sat(T))$};
           \node at (4,10.5) {$\Sat$};
           \node at (18.5,10.5) {$\Sat$};
           \node at (13,10.5) {$\Red$};
           \draw[->] (2.5,5.5) to (8.5,14.5);
           \draw[->,dotted] (20,5.5) to (14.5,14.5);
           \draw[->] (11.5,11.5) to (12,9.5);
          \end{tikzpicture}}
          
          \caption{Illustration of the construction of $R$ and $R'$ for $I=\{1,3\}\times\{1,3\}$. \label{fig-constT'}}
\end{figure}
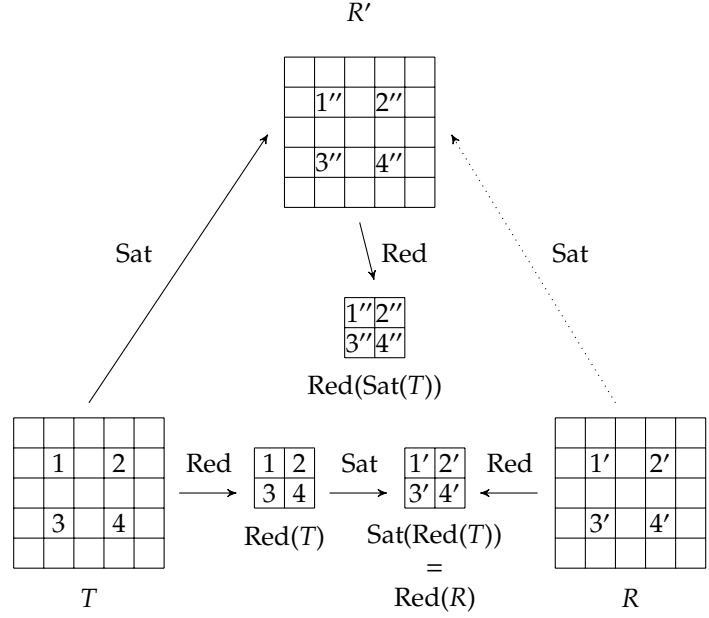
From Property \ref{enum-simple3} of the statement, we have $T\sim R$ because \[T=\Ind_I(\Red(T)),\ R=\Ind_I(\Sat(\Red(T)))\mbox{ and }\Red(T)\sim\Sat(\Red(T)).\]
Let $R'=\Sat(T)$. From Property \ref{enum-simple1}, the tableau $R'$ is also simple for $I$. Since $T\sim R$ we have also $\Sat(R)=R'$ and then $\#R\leq\#R'$ which implies $\#\Red(T')\leq\#\Red(R')$. But from Property \ref{enum-simple2}, the tableaux $\Red(R)=\Sat(\Red(T))$ and $\Red(R')$ are Nerode equivalent. Hence $\Sat(\Red(R'))=\Red(R)$ and so $\#\Red(R')\leq\#\Red(R)$. It follows that $\#\Red(R')=\#\Red(R)$ and, as a consequence of  Proposition \ref{prop-lattice}, the we have $\Sat(\Red(T))=\Red(R)=\Red(R')=\Red(\Sat(T))$.
\end{enumerate}
\end{proof}
The meaning of this result is that, in the case of simple tableaux, the saturation reduces to the $2\times 2$ case. 
We now investigate the general case and prove that the saturated of any tableau can be obtained by iterated saturation on  $2\times 2$ sub-tableaux.
\begin{definition}\label{def-arrow}
          Let $T\in 2^{\IntEnt{m}\times\IntEnt{n}}$. We write $T\ \rightarrow\  T'$ if $T'=T\cup \Sat(T|_I)$ for any $I=\{i_0,i_1\}\times \{j_0,j_1\}\in\IntEnt{m}\times \IntEnt{n}$. 
          We denote by ${\displaystyle \mathop{\rightarrow}^*}$ the transitive closure of $\rightarrow$.
           \end{definition}
We say that a tableau $T$ is \emph{locally saturated} when $T\displaystyle\mathop\rightarrow^* T'$ implies $T=T'$. We denote by $\LSV_\bullet^{F_1,F_2}$ the set of the locally saturated valid tableaux.

\begin{lemma}\label{lem-arrow2sim}
      If $T\rightarrow T'$  then $T\sim T'$.
\end{lemma}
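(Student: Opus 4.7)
The plan is to reduce the desired equivalence to a single-letter finality check and exploit the fact that $T|_I\sim\Sat(T|_I)$ by the very definition of $\Sat$. Concretely, let $I=\{i_0,i_1\}\times\{j_0,j_1\}$ be such that $T'=T\cup\Sat(T|_I)$. By Lemma~\ref{lem-dw2dz}, for every word $w$ there exists a single letter $z$ such that $d^w(T)=d^z(T)$ and $d^w(T')=d^z(T')$, so it is enough to show that for every pair $(f,g)\in\IntEnt m^{\IntEnt m}\times\IntEnt n^{\IntEnt n}$ the tableaux $d^{(f,g)}(T)$ and $d^{(f,g)}(T')$ have the same finality; this then yields Nerode equivalence in $\widehat{\mathrm M}^{F_1,F_2}_\bullet$, which by the proposition preceding Proposition~\ref{prop-lattice} coincides with Nerode equivalence in $\mathrm M^{F_1,F_2}_\bullet$ for non-empty tableaux.

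The first (easy) direction: since $T\subseteq T'$, for every $(f,g)$ we have $d^{(f,g)}(T)\subseteq d^{(f,g)}(T')$, and a superset of a final tableau is final, so $d^{(f,g)}(T)$ final implies $d^{(f,g)}(T')$ final.

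The second (key) direction: assume $d^{(f,g)}(T')$ is final. Since
\[
d^{(f,g)}(T')=d^{(f,g)}(T)\cup d^{(f,g)}(\Sat(T|_I)),
\]
either $d^{(f,g)}(T)$ is already final (and we are done) or $d^{(f,g)}(\Sat(T|_I))$ is final. In the latter case, by definition of $\Sat$, $T|_I\sim\Sat(T|_I)$, and Corollary~\ref{cor-perm} transfers this equivalence through $d^{(f,g)}$, giving $d^{(f,g)}(T|_I)\sim d^{(f,g)}(\Sat(T|_I))$; in particular $d^{(f,g)}(T|_I)$ is final. Since $T|_I\subseteq T$, we have $d^{(f,g)}(T|_I)\subseteq d^{(f,g)}(T)$, so $d^{(f,g)}(T)$ is final as well.

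Combining both directions establishes the same finality under every single-letter action, hence (via Lemma~\ref{lem-dw2dz}) under every word action, so $T\sim T'$. The only mild subtlety worth watching during write-up is the degenerate situation where $T|_I$ is empty, in which case $T'=T$ and the statement is vacuous; otherwise the standard invocation of Corollary~\ref{cor-perm} is the one genuine ingredient and the rest is bookkeeping around the inclusion $T\subseteq T'$.
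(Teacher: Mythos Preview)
Your argument is correct and follows essentially the same route as the paper: both directions hinge on $T\subseteq T'$ for the easy implication and on $T|_I\sim\Sat(T|_I)$ for the converse, deducing that $d^{(f,g)}(\Sat(T|_I))$ final forces $d^{(f,g)}(T|_I)$ (hence $d^{(f,g)}(T)$) final. The only cosmetic difference is that you route the key step through Corollary~\ref{cor-perm}, whereas the paper (and you could too) uses directly that $T|_I\sim\Sat(T|_I)$ already means $d^{(f,g)}(T|_I)$ and $d^{(f,g)}(\Sat(T|_I))$ share the same finality for every $(f,g)$.
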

\begin{proof} Let $I=\{i_0,i_1\}\times \{j_0,j_1\}\in\IntEnt{m}\times \IntEnt{n}$ such that 
        $\Sat(T|_{I})= T'|_{I}$ and $T\setminus I=T'\setminus I$.
Let $(f,g)\in\IntEnt{m}^{\IntEnt{m}}\times\IntEnt{n}^{\IntEnt{n}}$. Suppose first that $d^{(f,g)}(T)$ is final. Since $T\subset T'$, the tableau $d^{(f,g)}(T')$ is also final.
Conversely suppose that $d^{(f,g)}(T)$ is not final and assume that $d^{(f,g)}(T')$ is final. This implies that there exists $(i,j)\in \Sat(T|_I)\setminus T$ such that $(f(i),g(j))$ is in the final zone. This implies that  $d^{(f,g)}(\Sat(T|_I))$ is final. But since $T|_I\subset T$, the tableau $d^{(f,g)}(T|_I)$ is necessarily not final. The two previous results contradict the fact that $T|_I\sim \Sat(T|_I)$. Hence, the tableau  $d^{(f,g)}T'$ is not final and $T\sim T'$.
\end{proof}

As a consequence of Lemma \ref{lem-arrow2sim}, if a tableau is saturated then it is also locally saturated. Now, let us prove the converse.
\begin{lemma}\label{lem-interval}
        Let $T\sim T''$ with $T\subset T''$. For any $T\subset T'\subset T''$, we have $T\sim T'$. In other words, each Nerode class is a union of intervals for the inclusion order.
\end{lemma}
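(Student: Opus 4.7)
The plan is to exploit the monotonicity of the transition function $d$ of $\widehat{\mathrm M}^{F_1,F_2}_\bullet$ with respect to set inclusion together with Lemma~\ref{lem-dw2dz}, which reduces testing Nerode equivalence of non-empty tableaux to checking the finality of $d^{a}(\cdot)$ for single letters $a\in \IntEnt m^{\IntEnt m}\times\IntEnt n^{\IntEnt n}$. By the proposition characterizing Nerode classes of $\mathrm M^{F_1,F_2}_\bullet$ via $\widehat{\mathrm M}^{F_1,F_2}_\bullet$ on non-empty tableaux, it is enough to prove the claim inside $\widehat{\mathrm M}^{F_1,F_2}_\bullet$.

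I would first fix an arbitrary $a=(f,g)\in\IntEnt m^{\IntEnt m}\times\IntEnt n^{\IntEnt n}$ and show that $d^a(T)$ is final iff $d^a(T')$ is final. For the forward direction, the inclusion $T\subset T'$ gives $d^a(T)\subset d^a(T')$, so any final cell in $d^a(T)$ is also a final cell in $d^a(T')$, and finality is preserved. For the backward direction, from $T'\subset T''$ one similarly gets $d^a(T')\subset d^a(T'')$, so finality of $d^a(T')$ forces finality of $d^a(T'')$; now the hypothesis $T\sim T''$ (together with Lemma~\ref{lem-dw2dz}, which lets us restrict to single-letter tests) yields that $d^a(T)$ is final. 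Applying Lemma~\ref{lem-dw2dz} once more to lift single-letter finality agreement to agreement for every word $w$ concludes $T\sim T'$.

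The only subtlety is the empty tableau case, but since $T\subset T'\subset T''$ and the claim is vacuous when $T=T'$, we may assume $T'\neq\emptyset$; when $T=\emptyset$ we work directly with $\delta$ rather than $d$, noting that reading a letter $a$ from $\emptyset$ produces a tableau contained in $d^a(T')\cup\{(i_1,i_2)\}$, so the same inclusion-based argument still applies.

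The "union of intervals" reformulation is then immediate: given any Nerode class $\mathcal C$ and any $T,T''\in\mathcal C$ with $T\subset T''$, the statement just proved shows that every $T'$ in the inclusion interval $[T,T'']$ belongs to $\mathcal C$, so $\mathcal C$ is a union of such intervals. I do not expect a genuine obstacle here; the argument is essentially a monotonicity calculation, and its main content is the observation that inclusions in the argument of $d^a$ are preserved, which makes the sandwich between $T$ and $T''$ automatically Nerode-equivalent to both.
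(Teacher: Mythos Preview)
Your proposal is correct and follows essentially the same argument as the paper: fix a letter $a=(f,g)$, use monotonicity $T\subset T'\Rightarrow d^a(T)\subset d^a(T')$ for the forward direction, and the chain $T'\subset T''$ together with $T\sim T''$ for the backward direction. The only difference is that you discuss the empty-tableau case, which is unnecessary here since the paper explicitly restricts to non-empty tableaux just before Proposition~\ref{prop-lattice}; your remark about $\delta^a(\emptyset)$ is a harmless digression.
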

\begin{proof}
 First, suppose that $d^{(f,g)}(T)$ is final. Since $T\subset T'$, we have  $d^{(f,g)}(T)\subset d^{(f,g)}(T')$ and the tableau $d^{(f,g)}(T')$ is also final.\\
 Conversely, suppose that $d^{(f,g)}(T')$ is final then, since $T'\subset T''$, the tableau $d^{(f,g)}(T'')$ is also final. But $T\sim T''$ implies that $d^{(f,g)}(T)$ is final.
 We deduce that $T\sim T'$.
\end{proof}

According to the Proposition \ref{Prop:simple}, 
each step of local saturation works as in the $2\times 2$ case. So we have to investigate the different possibilities with respect to the type of the operation $\bullet$ as described in Section \ref{2x2case}.
We first examine the case where $\bullet$ has the type (A).
\begin{lemma}\label{lem-T2T+c}
        For any type (A) operation $\bullet$,  any tableau $T$ and any $(i_c,j_c)\not\in T$, the tableaux $T$ and $T\cup\{(i_c,j_c)\}$ are not Nerode equivalent. 
\end{lemma}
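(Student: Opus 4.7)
The plan is to exhibit a single letter $(f,g) \in \IntEnt{m}^{\IntEnt{m}} \times \IntEnt{n}^{\IntEnt{n}}$ that distinguishes $T$ from $T \cup \{(i_c,j_c)\}$ in the simplified automaton $\widehat{\mathrm{M}}^{F_1,F_2}_\bullet$. By the earlier proposition equating Nerode equivalence in $\mathrm{M}^{F_1,F_2}_\bullet$ and in $\widehat{\mathrm{M}}^{F_1,F_2}_\bullet$ (for non-empty tableaux), this immediately yields the distinction in $\mathrm{M}^{F_1,F_2}_\bullet$. The key structural fact to leverage is that, for a type (A) operation (taking $\bullet = \cap$ as a representative, the other cases reduce to it by complementing $F_1$ or $F_2$), the final zone factorizes: $F = (F_1 \times \IntEnt{n}) \cap (\IntEnt{m} \times F_2) = F_1 \times F_2$.

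Because we are in the non-degenerate setting, $F_1$ and $F_2$ are both proper non-empty subsets, so I can pick elements
\[
  (a,b) \in F_1 \times F_2 = F, \qquad a' \in \IntEnt{m} \setminus F_1, \qquad b' \in \IntEnt{n} \setminus F_2.
\]
I then define
\[
  f(i) = \begin{cases} a & \text{if } i = i_c,\\ a' & \text{otherwise},\end{cases}
  \qquad
  g(j) = \begin{cases} b & \text{if } j = j_c,\\ b' & \text{otherwise}.\end{cases}
\]
The role of $(f,g)$ is to be maximally discriminating: it sends $(i_c,j_c)$ into the final zone while sending every cell of $T$ outside of it.

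The verification is immediate. On the one hand, $d^{(f,g)}(T \cup \{(i_c,j_c)\})$ contains $(f(i_c),g(j_c)) = (a,b) \in F$, so it is final. On the other hand, for every $(i,j) \in T$, the hypothesis $(i_c,j_c) \notin T$ forces $i \neq i_c$ or $j \neq j_c$: in the first case $f(i) = a' \notin F_1$, and in the second $g(j) = b' \notin F_2$; either way $(f(i),g(j)) \notin F_1 \times F_2 = F$. Hence $d^{(f,g)}(T)$ contains no final cell and is non-final, so the two tableaux are distinguished.

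There is no substantive obstacle here beyond recognizing the product structure of $F$ in type (A); once this is in hand the construction of $(f,g)$ is explicit and the verification is a one-line case split. The main thing to be careful about is to appeal correctly to the preceding results: using $\widehat{\mathrm{M}}$ (where transitions are just $d^{(f,g)}(T) = T \cdot (f,g)$) rather than $\mathrm{M}$ avoids having to worry about the optional adjunction of $(0,0)$ in the transition function, and the equivalence of the two Nerode relations then transports the conclusion back to $\mathrm{M}^{F_1,F_2}_\bullet$.
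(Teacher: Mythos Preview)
Your proof is correct and follows essentially the same approach as the paper: both pick a cell $(a,b)$ in the final zone and cells $(a',b),(a,b'),(a',b')$ outside it (using the product structure $F=F_1\times F_2$ for type~A), then define $(f,g)$ to send $(i_c,j_c)$ to $(a,b)$ and everything else into the three non-final cells. The paper's $(\alpha,\beta,\gamma,\delta)$ are exactly your $(a,a',b,b')$, and the verification is identical.
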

\begin{proof}
 First we remark that, since $\bullet$ is a non degenerated type A operation there exist $\alpha,\beta\in\IntEnt m$ and $\gamma,\delta\in\IntEnt n$ such that $(\alpha,\gamma)\in F$ and $\{(\alpha,\delta), (\beta,\gamma),(\beta,\delta)\}\cap F=\emptyset$.
 We consider the two maps
 \[
 f(i)=\left\{\begin{array}{ll} \alpha&\mbox{if }i=i_c,\\ \beta &\mbox{otherwise}\end{array}\right.\mbox{ and }
 g(j)=\left\{\begin{array}{ll} \gamma&\mbox{if }j=j_c,\\ \delta &\mbox{otherwise}\end{array}\right..
 \]
 Obviously $d^{(f,g)}(T)\subset\{(\alpha,\delta),(\beta,\gamma),(\beta,\delta)\}$ is non final while $d^{(f,g)}(T\cup\{(i_c,j_c)\})$ is final since it contains $(\alpha,\gamma)$. It follows that the states $T$ and $T\cup\{(i_c,j_c)\}$ are not Nerode equivalent.
\end{proof}
\begin{example}\rm
Let us illustrate the proof of Lemma \ref{lem-T2T+c}. Consider the two following tableaux\\
\centerline{
  \begin{tikzpicture}[scale=0.4]
          ;
               \foreach \j in {0,...,2} { \foreach \i in{3,...,5} {
        \node[scale=1.5,fill=red,,opacity=0.5] at (\j+0.5,\i+0.5) {$ $};
        \node[scale=1.5,fill=red,,opacity=0.5] at (\j+12.5,\i+0.5) {$ $};
        }
        }
              \draw[step=1.0,black, thin] (0,0) grid (6,6);
              \draw[step=1.0,black, thin] (12,0) grid (18,6);
 \foreach \j in {0,...,5} { \foreach \i in{0,...,5} {
        \node at (\j+0.5,\i+0.5) {$?$};
        \node at (\j+12.5,\i+0.5) {$?$};
        }
        }
      \node[scale=1.5,fill=white] at (2.5,1.5) {$ $};
      \node[scale=1.5,fill=white] at (14.5,1.5) {$ $};
      \node[scale=1.5] at (14.5,1.5) {$\times$};
   \node at (3,-1){$T$} ;      
   \node at (15,-1){$T'=T\cup\{(i_c,j_c)\}$} ;
          \end{tikzpicture}}
          In these pictures, the red zones illustrate the common final zone of $T$ and $T'$, and the question marks denote the same values at the same cells in both tableaux. Notice that, as specified in the statement of Lemma \ref{lem-T2T+c}, the tableaux $T$ and $T'$ only differ at the cell $(i_c,j_c)$.\\
We choose a $2\times 2$ zone that behaves as a non degenerate $2\times 2$ tableau. This zone is drawn in green in the following picture:\\
\centerline{
  \begin{tikzpicture}[scale=0.4]
          ;
               \foreach \j in {0,...,2} { \foreach \i in{3,...,5} {
        \node[scale=1.5,fill=red,,opacity=0.3] at (\j+0.5,\i+0.5) {$ $};
        \node at (-1,3.5) {$\alpha$};
        \node at (-1,2.5) {$\beta$};
        \node at (2.5,7) {$\gamma$};
        \node at (3.5,7) {$\delta$};
        }
        }
          \foreach \j in {2,...,3} { \foreach \i in{2,...,3} {
        \node[scale=1.5,fill=green,,opacity=0.3] at (\j+0.5,\i+0.5) {$ $};
        }
        }
              \draw[step=1.0,black, thin] (0,0) grid (6,6);
  %
          \end{tikzpicture}}
Below, we illustrate the image by $(f,g)$ of each cell. The cells of the green zone are denoted by $A$, $B$, $C$ and $D$ in the left tableau. The cells in the right tableau are labelled by $A$, $B$, $C$ or $D$ according to their images by $(f,g)$.  \\
\centerline{
  \begin{tikzpicture}[scale=0.4]
               \foreach \j in {0,...,2} { \foreach \i in{3,...,5} {
        \node[scale=1.5,fill=red,,opacity=0.3] at (\j+0.5,\i+0.5) {$ $};
        \node[scale=1.5,fill=red,,opacity=0.3] at (\j+12.5,\i+0.5) {$ $};
        }
        } \foreach \j in {2,...,3} { \foreach \i in{2,...,3} {
        \node[scale=1.5,fill=green,,opacity=0.3] at (\j+0.5,\i+0.5) {$ $};
                \node[scale=1.5,fill=green,,opacity=0.3] at (\j+12.5,\i+0.5) {$ $};
        }}
              \draw[step=1.0,black, thin] (0,0) grid (6,6);
              \draw[step=1.0,black, thin] (12,0) grid (18,6);
\node at (14.5,3.5) {$A$};
\node at (15.5,3.5) {$B$};
\node at (14.5,2.5) {$C$};
\node at (15.5,2.5) {$D$};
\foreach \j in {0,1,3,4,5} {
\node at (\j+0.5,1.5) {$B$};
\foreach \i in {0,2,3,4,5} {
\node at (\j+0.5,\i+0.5) {$D$};
}
}
\foreach \i in {0,2,3,4,5} {
\node at (2.5,\i+0.5) {$C$};
}
\draw[->] (6.5,3) to (11.5,3);
\node at(9.5,4) {$(f,g)$};
\node[color=red] at(2.5,1.5) {$\mathbf A$};
          \end{tikzpicture}}
From the previous computation, the images of $T$ and $T'$ are respectively

\centerline{
  \begin{tikzpicture}[scale=0.4]
          ;
               \foreach \j in {0,...,2} { \foreach \i in{3,...,5} {
        \node[scale=1.5,fill=red,,opacity=0.3] at (\j+0.5,\i+0.5) {$ $};
        \node[scale=1.5,fill=red,,opacity=0.3] at (\j+12.5,\i+0.5) {$ $};
        }
        } \foreach \j in {2,...,3} { \foreach \i in{2,...,3} {
        \node[scale=1.5,fill=green,,opacity=0.3] at (\j+0.5,\i+0.5) {$ $};
                \node[scale=1.5,fill=green,,opacity=0.3] at (\j+12.5,\i+0.5) {$ $};
        }}
              \draw[step=1.0,black, thin] (0,0) grid (6,6);
              \draw[step=1.0,black, thin] (12,0) grid (18,6);
\node[scale=1.5] at (14.5,3.5) {$\times$};
\node at (14.5,2.5) {$?$};
\node at (15.5,2.5) {$?$};
\node at (15.5,3.5) {$?$};
\node at (2.5,2.5) {$?$};
\node at (3.5,2.5) {$?$};
\node at (3.5,3.5) {$?$};
  \node at (3,-1){$d^{(f,g)}(T)$} ;      
  \node at (15,-1){$d^{(f,g)}(T')$} ;
  \node at (9,3) {and};
         \end{tikzpicture}}
\end{example}

\begin{proposition}\label{prop-derodA}
  If $\bullet$ is a non degenerated type A operation then any Nerode class of ${\widehat M}_\bullet^{F_1,F_2}$ is a singleton.
\end{proposition}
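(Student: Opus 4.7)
The plan is to show that any two Nerode-equivalent non-empty tableaux $T$ and $T'$ in $\widehat M_\bullet^{F_1,F_2}$ must be equal, by combining the lattice structure of Proposition \ref{prop-lattice}, the interval structure of Lemma \ref{lem-interval}, and the sensitivity to single cells given by Lemma \ref{lem-T2T+c}. The singleton property of Nerode classes then follows immediately.

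Concretely, assume $T \sim T'$. By Proposition \ref{prop-lattice} we have $T \sim T \cup T'$, and obviously $T \subseteq T \cup T'$. Suppose for contradiction that this inclusion is strict, and pick any cell $(i_c,j_c) \in (T \cup T') \setminus T$. Then $T \subseteq T \cup \{(i_c,j_c)\} \subseteq T \cup T'$, so Lemma \ref{lem-interval} forces $T \sim T \cup \{(i_c,j_c)\}$. This contradicts Lemma \ref{lem-T2T+c}, which asserts that in the type (A) setting adding a single cell to any tableau always changes its Nerode class. Hence $T = T \cup T'$, i.e. $T' \subseteq T$. A symmetric argument, exchanging the roles of $T$ and $T'$, gives $T \subseteq T'$, so $T = T'$.

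I do not anticipate any real obstacle: the three preceding results plug together mechanically, and no further case analysis on the precise shape of the final zone is required beyond what is already packaged in Lemma \ref{lem-T2T+c}. The only small subtlety is to invoke Proposition \ref{prop-lattice} to get the join $T\cup T'$ inside the same class, which is exactly what lets us reduce the general equality $T=T'$ to the cell-by-cell comparison handled by Lemma \ref{lem-T2T+c}.
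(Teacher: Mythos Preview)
Your proof is correct and follows essentially the same approach as the paper: invoke Proposition~\ref{prop-lattice} to get $T\sim T\cup T'$, then Lemma~\ref{lem-interval} to reduce to a single added cell, and contradict Lemma~\ref{lem-T2T+c}. The only cosmetic difference is that the paper phrases this as a direct contradiction from $T\neq T'$, whereas you prove both inclusions $T'\subseteq T$ and $T\subseteq T'$ separately; the substance is identical.
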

\begin{proof}
 Suppose that $T\sim T'$ and $T\neq T'$. From Proposition \ref{prop-lattice}, we have $T\sim T\cup T'$. Without loss of generality, we assume that there exists $(i,j)\in T'$ such that $(i,j)\not\in T$. From Lemma \ref{lem-interval}, $T$ is equivalent to any tableau $T''$ satisfying $T\subset T''\subset T'$. In particular we have $T\equiv T\cup\{(i,j)\}$. But this contradicts Lemma \ref{lem-T2T+c}. So we deduce that $T\not\sim T'$ or $T=T'$.
\end{proof}

Lemma \ref{lem-T2T+c} generalizes to other types as follows:
\begin{lemma}\label{lem-T2Tgen}
        For any non degenerated operation $\bullet$ we have $T\not\sim T\cup \{(i_c,j_c)\}$ for any locally saturated tableau $T$ and any $(i_c,j_c)\not\in T$.
\end{lemma}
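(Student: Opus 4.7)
The plan is to split on the three operation types A, X, O identified in Section \ref{2x2case}. The statement for type A is exactly Lemma \ref{lem-T2T+c} (where local saturation is not even required). For types X and O, the target is to exhibit a pair $(f,g)\in\IntEnt{m}^{\IntEnt{m}}\times\IntEnt{n}^{\IntEnt{n}}$ such that $(f(i_c),g(j_c))\in F$ while $(f(i),g(j))\notin F$ for every $(i,j)\in T$. Such a pair certifies $T\not\sim T\cup\{(i_c,j_c)\}$, because $d^{(f,g)}(T)$ is non-final whereas $d^{(f,g)}(T\cup\{(i_c,j_c)\})$ acquires the final cell $(f(i_c),g(j_c))$. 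Throughout, local saturation is used in the form $T|_I=\Sat(T|_I)$ for every $I=\{i_0,i_1\}\times\{j_0,j_1\}$.

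For type O ($\bullet=\cup$), the non-final zone is $\overline{F_1}\times\overline{F_2}$. Let $R$ (resp. $C$) be the set of rows (resp. columns) of $T$ carrying at least one cross. I would first show $i_c\notin R$ or $j_c\notin C$: otherwise pick $j'\neq j_c$ with $(i_c,j')\in T$ and $i'\neq i_c$ with $(i',j_c)\in T$ (the inequalities are forced by $(i_c,j_c)\notin T$); then $T|_I$ with $I=\{i_c,i'\}\times\{j',j_c\}$ contains the diagonal $\{(i_c,j'),(i',j_c)\}$, whose Case O saturation from Section \ref{2x2case} is the full $I$, so local saturation yields $(i_c,j_c)\in T$, contradiction. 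Assuming $i_c\notin R$ (the other case is symmetric), I pick $a\in F_1$, $b\in\overline{F_1}$, $c\in\overline{F_2}$ (all nonempty thanks to the non-degeneracy of $F_1$, $F_2$, $\bullet$) and set $f(i_c)=a$, $f(i)=b$ for $i\neq i_c$, and $g\equiv c$: then $(f(i_c),g(j_c))=(a,c)\in F_1\times\IntEnt{n}\subseteq F$, while for $(i,j)\in T$, $f(i)=b$ and $g(j)=c$, so $(f(i),g(j))\in\overline{F_1}\times\overline{F_2}$ lies outside $F$.

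For type X ($\bullet=\oplus$), writing $\chi_\ell$ for the characteristic function of $F_\ell$, we have $(i,j)\notin F$ iff $\chi_1(i)=\chi_2(j)$. Building $(f,g)$ amounts to finding maps $\phi:\IntEnt{m}\to\IntEnt{2}$ and $\psi:\IntEnt{n}\to\IntEnt{2}$ with $\phi(i_c)\neq\psi(j_c)$ and $\phi(i)=\psi(j)$ whenever $(i,j)\in T$; one then lifts $\phi,\psi$ to $f,g$ by sending value $1$ into $F_\ell$ and value $0$ into $\overline{F_\ell}$ (possible since $\#F_1\notin\{0,m\}$ and $\#F_2\notin\{0,n\}$). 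Viewing $T$ as the edge set of a bipartite graph on $\IntEnt{m}\cup\IntEnt{n}$, such $\phi,\psi$ exist iff $i_c$ and $j_c$ lie in different connected components, so the core claim in type X is this disconnection.

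The step I expect to be the main obstacle is establishing the disconnection, and my plan is a shortest-path induction. Suppose for contradiction there is a path $i_c - j_1 - i_1 - j_2 - \cdots - j_{k+1}=j_c$ of minimal length $2k+1$ in the bipartite graph. The case $k=0$ is excluded by $(i_c,j_c)\notin T$. For $k\geq 1$, the restriction $T|_I$ with $I=\{i_c,i_1\}\times\{j_1,j_2\}$ contains the three crosses $(i_c,j_1)$, $(i_1,j_1)$, $(i_1,j_2)$; by the Case X analysis of Section \ref{2x2case}, any $2\times 2$ tableau with at least three crosses saturates to the full $2\times 2$, so local saturation forces $(i_c,j_2)\in T$. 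If $k=1$ then $j_2=j_c$, yielding $(i_c,j_c)\in T$, a contradiction; if $k\geq 2$ then $i_c - j_2 - i_2 - \cdots - j_c$ is a path of length $2k-1$, contradicting minimality. This closes type X and completes the plan.
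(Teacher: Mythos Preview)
Your argument is correct. For type A you invoke Lemma~\ref{lem-T2T+c}, and for type O your row/column emptiness dichotomy and the resulting map $(f,g)$ match the paper's proof essentially verbatim (the paper phrases it via the sets $R=(\IntEnt m\times\{j_c\})\cap T$ and $C=(\{i_c\}\times\IntEnt n)\cap T$, but the content is identical).

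For type X you take a genuinely different route. The paper proceeds constructively: it first applies a pair $(f_1,g_1)$ that collapses all rows identical to row $i'$ and all columns identical to column $j'$ (reducing to $\#R=\#C=1$), then applies $(\Id,g_2)$ and $(f_2,\Id)$ to push everything into a single $2\times2$ block $\{i',i_c\}\times\{j',j_c\}$ where the discrepancy becomes visible. Your approach instead recasts the problem as bipartite connectivity: the constraints $\phi(i)=\psi(j)$ for $(i,j)\in T$ force constant values on connected components, so separating $i_c$ from $j_c$ is both necessary and sufficient. Your shortest-path induction, using that in type~X any $2\times2$ restriction with three crosses locally saturates to the full block, is clean and uniform---it even absorbs the ``$R=\emptyset$ or $C=\emptyset$'' sub-case that the paper treats separately (an isolated $i_c$ is trivially disconnected from $j_c$). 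The paper's argument has the virtue of exhibiting the distinguishing word explicitly as a composition of at most three letters, whereas yours gives a single letter $(f,g)$ but obtained non-constructively from a $2$-colouring; conversely, your argument makes the combinatorial structure (and the role of local saturation) more transparent.
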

\begin{proof}
 Proposition \ref{prop-derodA} implies the assertion for the type A.
Let us now consider two sets $R=(\IntEnt{m}\times \{j_c\})\cap T$ and $C=(\{i_c\}\times \IntEnt{n})\cap T$. 
 \begin{enumerate}
     \item If $\bullet$ has the type O then either $R=\emptyset$ or $C=\emptyset$. Indeed, if there exist $j$ such that $(i_c,j)\in C$ and $i$ such that $(i,j_c)\in R$ then the restriction $T|_{\{i,i_c\}\times\{j,j_c\}}$ is not saturated because, from Proposition \ref{Prop:simple}, $\Red(T|_{\{i,i_c\}\times\{j,j_c\}})$ is not saturated, and so $T{\displaystyle \mathop\rightarrow^*} T\cup \{(i_c,j_c),(i,j)\}\neq T$ and this contradicts the fact that $T$ is locally saturated. Without loss of generality, we assume that $R=\emptyset$, the other case being obtained symmetrically. We recall that there exist $I,J$ such that  $F=(I\times\IntEnt n)\cup (\IntEnt m\times J)$. Since $F\neq \emptyset$ and $F\neq \IntEnt{m}\times \IntEnt{n}$, there exist $\alpha,\gamma,\delta$ such that $(\alpha,\gamma)\in F$ and $(\alpha,\delta)\not \in F$. Indeed, it suffices to choose $(\alpha,\gamma)\in (\IntEnt m\setminus I)\times J$ and $\delta\in \IntEnt n\setminus J$. Let $f:\IntEnt m\rightarrow \IntEnt m$ and $g:\IntEnt n\rightarrow \IntEnt n$ such that
     \[
     f(i)=\alpha\mbox{ and }
     g(j)=\left\{\begin{array}{ll}\gamma&\mbox{if }j=j_c\\\delta&\mbox{otherwise}\end{array}\right.
     \]
 Since $R=\emptyset$, for any $(i,j)\in T$ we have $(f(i),g(j))=(\alpha,\delta)\not\in F$. This implies that $d^{(f,g)}(T)$ is not final. But since $(f(i_c),g(j_c))=(\alpha,\gamma)\in F$, we obtain also that $d^{(f,g)}\left(T\cup\{(i_c,j_c)\}\right)$ is final. So $T$ and $T\cup\{(i_c,j_c)\}$ are not Nerode equivalent.
 
 \item Suppose that $\bullet$ has the type X.  If $R=\emptyset$ or $C=\emptyset$ then we prove our result by applying the same strategy as is case 1. So we suppose that $R\neq\emptyset$ and $C\neq\emptyset$. Let us first prove that the result reduces to the case where $\#R=\#C=1$. We  remark that for any $i,j$ such that $(i,j_c)\in C$ and $(i_c,j)\in R$  we have $(i,j)\not\in T$, otherwise, as $T$ is saturated, $(i_c,j_c)$ is in $T$. Let $i',j'$ such that $(i',j_c)\in C$ and $(i_c,j')\in R$.
 Let $f_1:\IntEnt m\rightarrow\IntEnt m$ and $g_1:\IntEnt n\rightarrow\IntEnt n$ such that 
 \[
 f_1(i)=\left\{\begin{array}{ll}i'&\mbox{ if }(i,j_c)\in C\\
 i&\mbox{otherwise} \end{array}\right.\mbox{ and }
 g_1(j)=\left\{\begin{array}{ll}j'&\mbox{ if }(i_c,j)\in R\\
 j&\mbox{otherwise} \end{array}\right.
 \]
 We set $T_2=d^{(f_1,g_1)}(T)$, $T'_2=d^{(f_1,g_1)}(T\cup\{(i_c,j_c)\})$, $C_2=(\IntEnt m\times j_c)\cap T_2$, and $R_2=(i_c\times \IntEnt n)\cap T_2$.  The tableau $T_2$ is obtained by removing each row identical to the $j_c$ one and each column identical to the $i_c$ one. As a consequence, $T_2$ is locally saturated and $(i_c,j_c)\not\in T_2$. We check also that  $T'_2=T_2\cup\{(i_c,j_c)\}$, and $\#R_2=\#C_2=1$.

Consider now $g_2:\IntEnt n \rightarrow \IntEnt n$ such that 
$$g_2(j)=\left\{\begin{array}{ll}
j_c&\text{if }(i',j)\in T_2\\
j'&\text{otherwise}
\end{array}
\right.$$
and set $T_3=d^{(\Id,g_2)}(T_2)$ and $T'_3=d^{(\Id,g_2)}(T'_2)$.
The tableau $T_3$ contains crosses  only in the two columns $j'$ and $j_c$.

More precisely we have
\begin{equation}\label{eq-T3c}
T_3\cap\left(\IntEnt m\times\{j_c\}\right)=\{
(i,j_c)\mid\exists j\in\IntEnt n \mbox{ such that }(i,j),(i',j)\in T_2
\}
\end{equation}
and 
\begin{equation}\label{eq-T3cp}
T_3\cap\left(\IntEnt m\times\{j'\}\right)=\{
(i,j')\mid\exists j\in\IntEnt n \mbox{ such that }(i,j)\in T_2\mbox{ and }(i',j)\not\in T_2
\}
\end{equation}
Suppose that $(i_c,j_c)\in T_3$ then, by Formula (\ref{eq-T3c}), there exists $j\in\IntEnt n$ such that $(i_c,j)\in T_2$ and $(i',j)\in T_2$. Since $(i',j_c)\in T_2$ and $T_2$ is locally saturated, this implies $(i_c,j_c)\in T_2$. This contradicts the definition of $T_2$. Hence, $(i_c,j_c)\not\in T_3$.
Furthermore, straightforwardly from (\ref{eq-T3cp}), $(i',j')\not\in T_3$.\\
We have also
\[
T'_3\cap(\IntEnt m\times\{j_c\})=\{(i,j_c)\mid \exists j\in\IntEnt n\mbox{ such that } (i,j)\in T'_2\mbox{ and }(i',j)\in T_2\},
\]
and
\[
T'_3\cap(\IntEnt m\times\{j'\})=\{(i,j')\mid \exists j\in\IntEnt n\mbox{ such that } (i,j)\in T'_2\mbox{ and }(i',j)\not\in T_2\}.
\]
Hence we check that $\{(i_c,j'), (i',j_c),(i_c,j_c)\}\subset T'_3$.
%
Consider now the map $f_2:\IntEnt m\rightarrow \IntEnt m$ such that
$$f_2(i)=\left\{\begin{array}{ll}
i_c&\text{if }(i,j')\in T_3\\
i'&\text{otherwise}
\end{array}
\right.$$
and set $T_4=d^{(f_2,\Id)}(T_3)$ and $T'_4=d^{(f_2,\Id)}(T'_3)$.
By construction, $T_4, T'_4\subset\{i',i_c\}\times\{j',j_c\}$. From the definition of $f_2$, $(i',j')\not\in T_4$. If we suppose that $(i_c,j_c)\in T_4$ then there exists $i\in\IntEnt m$ such that $(i,j_c)\in T_3$ and $(i,j')\in T_3$. Since $(i,j_c)\in T_3$, Formula (\ref{eq-T3c}) implies that there exists $j_1\in\IntEnt n$ such that $(i,j_1)\in T_2$ and $(i',j_1)\in T_2$. Since $(i,j')\in T_3$, Formula (\ref{eq-T3cp}) implies that there exists $j_2\in \IntEnt m$ such that $(i,j_2)\in T_2$ and $(i',j_2)\not\in T_2$. To summarize we have $\{(i,j_1),(i',j_1),(i,j_2)\}\subset T_2$ but $(i',j_2)\not\in T_2$. This contradicts the fact that $T_2$ is saturated. Hence, $(i_c,j_c)\not\in T_4$. On the other hand, we have $\{(i_c,j'),(i',j_c),(i_c,j_c)\}\subset T'_4$ because $\{(i_c,j'),(i',j_c),(i_c,j_c)\}\subset T'_3$, $f_2(i_c)=i_c$, and $f_2(i')=i'$.

From Proposition \ref{Prop:simple} and Section \ref{2x2case}, since $T_4\subset\{(i_c,j'),(i',j_c)\}$, the two tableaux $T_4$ and $T'_4$ are clearly non-equivalent.
 \end{enumerate}
\end{proof}
\begin{example}\rm
We illustrate first the case of the type (O). Let us consider the two following tableaux
\centerline{
  \begin{tikzpicture}[scale=0.4]
          ;
              \foreach \j in {0,...,5} { \foreach \i in{0,...,5} {
        \node at (\j+0.5,\i+0.5) {$?$};
        \node at (\j+12.5,\i+0.5) {$?$};
        }
        }
       \foreach\i in{0,...,5}{
       \node[scale=1.5,fill=white] at (3.5,\i+0.5) {$ $};
       \node[scale=1.5,fill=white] at (15.5,\i+0.5) {$ $};
      \node[scale=1.5,fill=white] at (\i+12.5,1.5) {$ $};
      \node[scale=1.5,fill=white] at (\i+0.5,1.5) {$ $};
      }
      \node[scale=1.5,color=red] at (15.5,1.5) {$\mathbf\times$};
      \node[scale=1.5,color=blue] at (17.5,1.5) {$\mathbf\times$};
      \node[scale=1.5,color=blue] at (13.5,1.5) {$\mathbf\times$};
      \node[scale=1.5,color=blue] at (1.5,1.5) {$\mathbf\times$};
      \node[scale=1.5,color=blue] at (5.5,1.5) {$\mathbf\times$};
               \foreach \j in {0,...,2} { \foreach \i in{0,...,5} {
        \node[scale=1.5,fill=red,,opacity=0.5] at (\j+0.5,\i+0.5) {$ $};
        \node[scale=1.5,fill=red,,opacity=0.5] at (\j+12.5,\i+0.5) {$ $};
        }
        }
           \foreach \j in {3,...,5} { \foreach \i in{3,...,5} {
        \node[scale=1.5,fill=red,,opacity=0.5] at (\j+0.5,\i+0.5) {$ $};
        \node[scale=1.5,fill=red,,opacity=0.5] at (\j+12.5,\i+0.5) {$ $};
        }
        }
              \draw[step=1.0,black, thin] (0,0) grid (6,6);
              \draw[step=1.0,black, thin] (12,0) grid (18,6);
   \node at (3,-1){$T$} ;      
   \node at (15,-1){$T'=T\cup\{(i_c,j_c)\}$} ;
          \end{tikzpicture}}
          In this figure, the red zone corresponds to the final zone of the tableaux, the red cross indicates the cell $(i_c,j_c)$ and the set of the blue crosses is the set $R$. Since $R\neq \emptyset$ we have $C=\emptyset$ as pictured above.\\
          From the definition of $(f,g)$, the value of any cell which is on the same column as the red cross is sent to the cell $A$ while the other values are sent to the cell $B$. Notice that the image of $(i_c,j_c)$ is the letter colored in red.\\
         \centerline{
  \begin{tikzpicture}[scale=0.4]
            \foreach \j in {0,...,2} { \foreach \i in{0,...,5} {
        \node[scale=1.5,fill=red,,opacity=0.3] at (\j+12.5,\i+0.5) {$ $};
        \node[scale=1.5,fill=red,,opacity=0.3] at (\j+0.5,\i+0.5) {$ $};
        }
        }
           \foreach \j in {3,...,5} { \foreach \i in{3,...,5} {
        \node[scale=1.5,fill=red,,opacity=0.3] at (\j+12.5,\i+0.5) {$ $};
       \node[scale=1.5,fill=red,,opacity=0.3] at (\j+0.5,\i+0.5) {$ $};
        }
        }
              \draw[step=1.0,black, thin] (12,0) grid (18,6);
              \draw[step=1.0,black, thin] (0,0) grid (6,6);
\node[fill=green,scale=1.5,opacity=0.3] at (14.5,2.5){$ $};
\node[fill=green,scale=1.5,opacity=0.3] at (16.5,2.5){$ $};
\node at (14.5,2.5) {$A$};
\node at (16.5,2.5) {$B$};
\foreach \j in {0,1,2,4,5} { \foreach\i in {0,...,5}{
\node at (\j+0.5,\i+0.5) {$B$};
}}
\foreach \i in {0,...,5} {
\node at (3.5,\i+0.5) {$A$};
}
\node at (19,2.5) {$\alpha$};
\node at (16.5,6.5) {$\delta$};
\node at (14.5,6.5) {$\gamma$};
\node at (3.5,6.5) {$j_c$};
\node at (-1,1.5) {$i_c$};
\node[fill=white,scale=1.5] at (3.5,1.5) {$ $};
\node[color=red] at (3.5,1.5) {$\mathbf A$};
\draw[->] (6.5,3) to (11.5,3);
\node at(9.5,4) {$(f,g)$};
          \end{tikzpicture}}
Hence the image of the tableaux $T$ and $T'$ are respectively\\
\centerline{
  \begin{tikzpicture}[scale=0.4]
            \foreach \j in {0,...,2} { \foreach \i in{0,...,5} {
        \node[scale=1.5,fill=red,,opacity=0.3] at (\j+0.5,\i+0.5) {$ $};
        \node[scale=1.5,fill=red,,opacity=0.3] at (\j+12.5,\i+0.5) {$ $};
        }
        }
           \foreach \j in {3,...,5} { \foreach \i in{3,...,5} {
        \node[scale=1.5,fill=red,,opacity=0.3] at (\j+0.5,\i+0.5) {$ $};
       \node[scale=1.5,fill=red,,opacity=0.3] at (\j+12.5,\i+0.5) {$ $};
        }
        }
              \draw[step=1.0,black, thin] (0,0) grid (6,6);
              \draw[step=1.0,black, thin] (12,0) grid (18,6);
\node[fill=green,scale=1.5,opacity=0.3] at (2.5,2.5){$ $};
\node[fill=green,scale=1.5,opacity=0.3] at (4.5,2.5){$ $};
\node[fill=green,scale=1.5,opacity=0.3] at (14.5,2.5){$ $};
\node[fill=green,scale=1.5,opacity=0.3] at (16.5,2.5){$ $};
\node[scale=1.5] at(4.5,2.5){$\times$};
\node[scale=1.5] at(14.5,2.5){$\times$};
\node[scale=1.5] at(16.5,2.5){$\times$};
\node at (9.5,3.5) {and};
\node at (3.5,-1) {$d^{(f,g)}(T)$};
\node at (15.5,-1) {$d^{(f,g)}(T')$};
          \end{tikzpicture}}
The left tableau being non final while the right one is final shows that $T$ and $T'$ are not Nerode equivalent.\\
Notice that the same strategy works for the same pair of tableaux even in the case of the type (X), as shown in the figure below\\
\centerline{
  \begin{tikzpicture}[scale=0.4]
            \foreach \j in {0,...,2} { \foreach \i in{0,...,2} {
        \node[scale=1.5,fill=red,,opacity=0.3] at (\j+0.5,\i+0.5) {$ $};
        \node[scale=1.5,fill=red,,opacity=0.3] at (\j+12.5,\i+0.5) {$ $};
        }
        }
           \foreach \j in {3,...,5} { \foreach \i in{3,...,5} {
        \node[scale=1.5,fill=red,,opacity=0.3] at (\j+0.5,\i+0.5) {$ $};
       \node[scale=1.5,fill=red,,opacity=0.3] at (\j+12.5,\i+0.5) {$ $};
        }
        }
              \draw[step=1.0,black, thin] (0,0) grid (6,6);
              \draw[step=1.0,black, thin] (12,0) grid (18,6);
\node[fill=green,scale=1.5,opacity=0.3] at (2.5,2.5){$ $};
\node[fill=green,scale=1.5,opacity=0.3] at (4.5,2.5){$ $};
\node[fill=green,scale=1.5,opacity=0.3] at (14.5,2.5){$ $};
\node[fill=green,scale=1.5,opacity=0.3] at (16.5,2.5){$ $};
\node[scale=1.5] at(4.5,2.5){$\times$};
\node[scale=1.5] at(14.5,2.5){$\times$};
\node[scale=1.5] at(16.5,2.5){$\times$};
\node at (9.5,3.5) {and};
\node at (3.5,-1) {$d^{(f,g)}(T)$};
\node at (15.5,-1) {$d^{(f,g)}(T')$};
          \end{tikzpicture}}
\end{example}

\begin{example}\rm
Now let us illustrate  the case of the type (X) and consider the two following tableaux with $(i_c,j_c)$ denoted by the red cross on the right tableau\\
\centerline{
  \begin{tikzpicture}[scale=0.4]
          ;
              \foreach \j in {0,...,5} { \foreach \i in{0,...,5} {
        \node at (\j+0.5,\i+0.5) {$?$};
        \node at (\j+12.5,\i+0.5) {$?$};
        }
        }
       \foreach\i in{0,...,5}{
       \node[scale=1.5,fill=white] at (3.5,\i+0.5) {$ $};
       \node[scale=1.5,fill=white] at (15.5,\i+0.5) {$ $};
      \node[scale=1.5,fill=white] at (\i+12.5,1.5) {$ $};
      \node[scale=1.5,fill=white] at (\i+0.5,1.5) {$ $};
      }
      \node[scale=1.5,color=red] at (15.5,1.5) {$\mathbf\times$};
      \node[scale=1.5,color=blue] at (17.5,1.5) {$\mathbf\times$};
      \node[scale=1.5,color=blue] at (15.5,2.5) {$\mathbf\times$};
      \node[scale=1.5,color=blue] at (15.5,4.5) {$\mathbf\times$};
     \node[scale=1.5,color=blue] at (3.5,2.5) {$\mathbf\times$};
      \node[scale=1.5,color=blue] at (3.5,4.5) {$\mathbf\times$};
      \node[scale=1.5,color=blue] at (13.5,1.5) {$\mathbf\times$};
      \node[scale=1.5,color=blue] at (1.5,1.5) {$\mathbf\times$};
      \node[scale=1.5,color=blue] at (5.5,1.5) {$\mathbf\times$};
       \node[scale=1.5,fill=white] at (1.5,2.5) {$ $};
       \node[scale=1.5,fill=white] at (5.5,2.5) {$ $};
       \node[scale=1.5,fill=white] at (1.5,4.5) {$ $};
       \node[scale=1.5,fill=white] at (5.5,4.5) {$ $};
              \node[scale=1.5,fill=white] at (13.5,2.5) {$ $};
       \node[scale=1.5,fill=white] at (17.5,2.5) {$ $};
       \node[scale=1.5,fill=white] at (13.5,4.5) {$ $};
       \node[scale=1.5,fill=white] at (17.5,4.5) {$ $};
               \foreach \j in {0,...,2} { \foreach \i in{0,...,2} {
        \node[scale=1.5,fill=red,,opacity=0.5] at (\j+0.5,\i+0.5) {$ $};
        \node[scale=1.5,fill=red,,opacity=0.5] at (\j+12.5,\i+0.5) {$ $};
        }
        }
           \foreach \j in {3,...,5} { \foreach \i in{3,...,5} {
        \node[scale=1.5,fill=red,,opacity=0.5] at (\j+0.5,\i+0.5) {$ $};
        \node[scale=1.5,fill=red,,opacity=0.5] at (\j+12.5,\i+0.5) {$ $};
        }
        }
              \draw[step=1.0,black, thin] (0,0) grid (6,6);
              \draw[step=1.0,black, thin] (12,0) grid (18,6);
   \node at (3,-1){$T$} ;      
   \node at (15,-1){$T'=T\cup\{(i_c,j_c)\}$} ;
          \end{tikzpicture}}

 We first choose the column $i'$ and the row $j'$  (in green in the following figure)\\
 \centerline{
  \begin{tikzpicture}[scale=0.4]
          ;
              \foreach \j in {0,...,5} { \foreach \i in{0,...,5} {
        \node at (\j+0.5,\i+0.5) {$?$};
        \node at (\j+12.5,\i+0.5) {$?$};
        }
        }
       \foreach\i in{0,...,5}{
       \node[scale=1.5,fill=white] at (3.5,\i+0.5) {$ $};
       \node[scale=1.5,fill=white] at (15.5,\i+0.5) {$ $};
      \node[scale=1.5,fill=white] at (\i+12.5,1.5) {$ $};
      \node[scale=1.5,fill=white] at (\i+0.5,1.5) {$ $};
      }
      \node[scale=1.5,color=red] at (15.5,1.5) {$\mathbf\times$};
      \node[scale=1.5,color=blue] at (17.5,1.5) {$\mathbf\times$};
      \node[scale=1.5,color=blue] at (15.5,2.5) {$\mathbf\times$};
      \node[scale=1.5,color=blue] at (15.5,4.5) {$\mathbf\times$};
     \node[scale=1.5,color=blue] at (3.5,2.5) {$\mathbf\times$};
      \node[scale=1.5,color=blue] at (3.5,4.5) {$\mathbf\times$};
      \node[scale=1.5,color=blue] at (13.5,1.5) {$\mathbf\times$};
      \node[scale=1.5,color=blue] at (1.5,1.5) {$\mathbf\times$};
      \node[scale=1.5,color=blue] at (5.5,1.5) {$\mathbf\times$};
       \node[scale=1.5,fill=white] at (1.5,2.5) {$ $};
       \node[scale=1.5,fill=white] at (5.5,2.5) {$ $};
       \node[scale=1.5,fill=white] at (1.5,4.5) {$ $};
       \node[scale=1.5,fill=white] at (5.5,4.5) {$ $};
              \node[scale=1.5,fill=white] at (13.5,2.5) {$ $};
       \node[scale=1.5,fill=white] at (17.5,2.5) {$ $};
       \node[scale=1.5,fill=white] at (13.5,4.5) {$ $};
       \node[scale=1.5,fill=white] at (17.5,4.5) {$ $};
               \foreach \j in {0,...,2} { \foreach \i in{0,...,2} {
        \node[scale=1.5,fill=red,,opacity=0.3] at (\j+0.5,\i+0.5) {$ $};
        \node[scale=1.5,fill=red,,opacity=0.3] at (\j+12.5,\i+0.5) {$ $};
        }
        }
           \foreach \j in {3,...,5} { \foreach \i in{3,...,5} {
        \node[scale=1.5,fill=red,,opacity=0.3] at (\j+0.5,\i+0.5) {$ $};
        \node[scale=1.5,fill=red,,opacity=0.3] at (\j+12.5,\i+0.5) {$ $};
        }
        }
        \foreach \j in {0,...,5} {
             \node[scale=1.5,fill=green,,opacity=0.3] at (1.5,\j+0.5) {$ $};
             \node[scale=1.5,fill=green,,opacity=0.3] at (13.5,\j+0.5) {$ $};
        }
        \foreach \i in {0,...,5} {
             \node[scale=1.5,fill=green,,opacity=0.3] at (\i+0.5,4.5) {$ $};
             \node[scale=1.5,fill=green,,opacity=0.3] at (\i+12.5,4.5) {$ $};
        }
              \draw[step=1.0,black, thin] (0,0) grid (6,6);
              \draw[step=1.0,black, thin] (12,0) grid (18,6);
   \node at (3,-1){$T$} ;      
   \node at (15,-1){$T'=T\cup\{(i_c,j_c)\}$} ;
          \end{tikzpicture}}
 We apply the transformation $(f_1,g_1)$ in order to obtain the tableaux $T_2$ and $T'_2$ 
 
 \centerline{
  \begin{tikzpicture}[scale=0.4]
          ;
              \foreach \j in {0,...,5} { \foreach \i in{0,...,5} {
        \node at (\j+0.5,\i+0.5) {$?$};
        \node at (\j+12.5,\i+0.5) {$?$};
        }
        }
        \foreach \j in {0,...,5} {
        \node[scale=1.5,fill=white] at (5.5,\j+0.5) {$ $};
        \node[scale=1.5,fill=white] at (17.5,\j+0.5) {$ $};
        }
        \foreach \i in {0,...,5} {
        \node[scale=1.5,fill=white] at (\i+0.5,2.5) {$ $};
        \node[scale=1.5,fill=white] at (\i+12.5,2.5) {$ $};
        }
       \foreach\i in{0,...,5}{
       \node[scale=1.5,fill=white] at (3.5,\i+0.5) {$ $};
       \node[scale=1.5,fill=white] at (15.5,\i+0.5) {$ $};
      \node[scale=1.5,fill=white] at (\i+12.5,1.5) {$ $};
      \node[scale=1.5,fill=white] at (\i+0.5,1.5) {$ $};
      }
      \node[scale=1.5,color=red] at (15.5,1.5) {$\mathbf\times$};
      \node[scale=1.5,color=blue] at (15.5,4.5) {$\mathbf\times$};
      \node[scale=1.5,color=blue] at (3.5,4.5) {$\mathbf\times$};
      \node[scale=1.5,color=blue] at (13.5,1.5) {$\mathbf\times$};
      \node[scale=1.5,color=blue] at (1.5,1.5) {$\mathbf\times$};
       \node[scale=1.5,fill=white] at (1.5,2.5) {$ $};
       \node[scale=1.5,fill=white] at (5.5,2.5) {$ $};
       \node[scale=1.5,fill=white] at (1.5,4.5) {$ $};
       \node[scale=1.5,fill=white] at (5.5,4.5) {$ $};
              \node[scale=1.5,fill=white] at (13.5,2.5) {$ $};
       \node[scale=1.5,fill=white] at (17.5,2.5) {$ $};
       \node[scale=1.5,fill=white] at (13.5,4.5) {$ $};
       \node[scale=1.5,fill=white] at (17.5,4.5) {$ $};
               \foreach \j in {0,...,2} { \foreach \i in{0,...,2} {
        \node[scale=1.5,fill=red,,opacity=0.3] at (\j+0.5,\i+0.5) {$ $};
        \node[scale=1.5,fill=red,,opacity=0.3] at (\j+12.5,\i+0.5) {$ $};
        }
        }
           \foreach \j in {3,...,5} { \foreach \i in{3,...,5} {
        \node[scale=1.5,fill=red,,opacity=0.3] at (\j+0.5,\i+0.5) {$ $};
        \node[scale=1.5,fill=red,,opacity=0.3] at (\j+12.5,\i+0.5) {$ $};
        }
        }
        \foreach \j in {0,...,5} {
             \node[scale=1.5,fill=green,,opacity=0.3] at (1.5,\j+0.5) {$ $};
             \node[scale=1.5,fill=green,,opacity=0.3] at (13.5,\j+0.5) {$ $};
        }
        \foreach \i in {0,...,5} {
             \node[scale=1.5,fill=green,,opacity=0.3] at (\i+0.5,4.5) {$ $};
             \node[scale=1.5,fill=green,,opacity=0.3] at (\i+12.5,4.5) {$ $};
        }
              \draw[step=1.0,black, thin] (0,0) grid (6,6);
              \draw[step=1.0,black, thin] (12,0) grid (18,6);
   \node at (3,-1){$T_2$} ;      
   \node at (15,-1){$T'_2=T\cup\{(i_c,j_c)\}$} ;
          \end{tikzpicture}}
 
 Applying $(\Id,g_2)$, we obtain
 
 \centerline{
  \begin{tikzpicture}[scale=0.4]
          ;
      \node[scale=1.5,color=red] at (15.5,1.5) {$\mathbf\times$};
      \node[scale=1.5,color=blue] at (15.5,4.5) {$\mathbf\times$};
      \node[scale=1.5,color=blue] at (3.5,4.5) {$\mathbf\times$};
      \node[scale=1.5,color=blue] at (13.5,1.5) {$\mathbf\times$};
      \node[scale=1.5,color=blue] at (1.5,1.5) {$\mathbf\times$};
       \node[scale=1.5,fill=white] at (1.5,2.5) {$ $};
       \node[scale=1.5,fill=white] at (5.5,2.5) {$ $};
       \node[scale=1.5,fill=white] at (1.5,4.5) {$ $};
       \node[scale=1.5,fill=white] at (5.5,4.5) {$ $};
              \node[scale=1.5,fill=white] at (13.5,2.5) {$ $};
       \node[scale=1.5,fill=white] at (17.5,2.5) {$ $};
       \node[scale=1.5,fill=white] at (13.5,4.5) {$ $};
       \node[scale=1.5,fill=white] at (17.5,4.5) {$ $};
               \foreach \j in {0,...,2} { \foreach \i in{0,...,2} {
        \node[scale=1.5,fill=red,,opacity=0.3] at (\j+0.5,\i+0.5) {$ $};
        \node[scale=1.5,fill=red,,opacity=0.3] at (\j+12.5,\i+0.5) {$ $};
        }
        }
           \foreach \j in {3,...,5} { \foreach \i in{3,...,5} {
        \node[scale=1.5,fill=red,,opacity=0.3] at (\j+0.5,\i+0.5) {$ $};
        \node[scale=1.5,fill=red,,opacity=0.3] at (\j+12.5,\i+0.5) {$ $};
        }
        }
              \draw[step=1.0,black, thin] (0,0) grid (6,6);
              \draw[step=1.0,black, thin] (12,0) grid (18,6);
   \node at (3,-1){$T_3$} ;      
   \node at (15,-1){$T'_3=T_3\cup\{(i_c,j_c)\}$} ;
   \node at (1.5,0.5) {$?$};\node at (1.5,2.5) {$?$};\node at (1.5,3.5) {$?$};\node at (1.5,5.5) {$?$};
   \node at (13.5,0.5) {$?$};\node at (13.5,2.5) {$?$};\node at (13.5,3.5) {$?$};\node at (13.5,5.5) {$?$};
   \node at (3.5,0.5) {$?$};\node at (3.5,2.5) {$?$}; \node at (3.5,3.5) {$?$};\node at (3.5,5.5) {$?$};
   \node at (15.5,0.5) {$?$};\node at (15.5,2.5) {$?$}; \node at (15.5,3.5) {$?$};\node at (15.5,5.5) {$?$};
          \end{tikzpicture}}
 
 Finally, applying $(f_2,\Id)$, one obtains the tableaux
 
  \centerline{
  \begin{tikzpicture}[scale=0.4]
          ;
      \node[scale=1.5,color=red] at (15.5,1.5) {$\mathbf\times$};
      \node[scale=1.5,color=blue] at (15.5,4.5) {$\mathbf\times$};
      \node[scale=1.5,color=blue] at (3.5,4.5) {$\mathbf\times$};
      \node[scale=1.5,color=blue] at (13.5,1.5) {$\mathbf\times$};
      \node[scale=1.5,color=blue] at (1.5,1.5) {$\mathbf\times$};
       \node[scale=1.5,fill=white] at (1.5,2.5) {$ $};
       \node[scale=1.5,fill=white] at (5.5,2.5) {$ $};
       \node[scale=1.5,fill=white] at (1.5,4.5) {$ $};
       \node[scale=1.5,fill=white] at (5.5,4.5) {$ $};
              \node[scale=1.5,fill=white] at (13.5,2.5) {$ $};
       \node[scale=1.5,fill=white] at (17.5,2.5) {$ $};
       \node[scale=1.5,fill=white] at (13.5,4.5) {$ $};
       \node[scale=1.5,fill=white] at (17.5,4.5) {$ $};
               \foreach \j in {0,...,2} { \foreach \i in{0,...,2} {
        \node[scale=1.5,fill=red,,opacity=0.3] at (\j+0.5,\i+0.5) {$ $};
        \node[scale=1.5,fill=red,,opacity=0.3] at (\j+12.5,\i+0.5) {$ $};
        }
        }
           \foreach \j in {3,...,5} { \foreach \i in{3,...,5} {
        \node[scale=1.5,fill=red,,opacity=0.3] at (\j+0.5,\i+0.5) {$ $};
        \node[scale=1.5,fill=red,,opacity=0.3] at (\j+12.5,\i+0.5) {$ $};
        }
        }
              \draw[step=1.0,black, thin] (0,0) grid (6,6);
              \draw[step=1.0,black, thin] (12,0) grid (18,6);
   \node at (3,-1){$T_4$} ;      
   \node at (15,-1){$T'_4=T_4\cup\{(i_c,j_c)\}$} ;
          \end{tikzpicture}}
          
which are not Nerode equivalent. Indeed, a permutation of the columns sends these tableaux respectively to

  \centerline{
  \begin{tikzpicture}[scale=0.4]
          ;
      \node[scale=1.5,color=red] at (13.5,1.5) {$\mathbf\times$};
      \node[scale=1.5,color=blue] at (13.5,4.5) {$\mathbf\times$};
      \node[scale=1.5,color=blue] at (1.5,4.5) {$\mathbf\times$};
      \node[scale=1.5,color=blue] at (15.5,1.5) {$\mathbf\times$};
      \node[scale=1.5,color=blue] at (3.5,1.5) {$\mathbf\times$};
      \node[scale=1.5,color=blue] at (1.5,4.5) {$\mathbf\times$};
      \node[scale=1.5,color=blue] at (13.5,4.5) {$\mathbf\times$};
               \foreach \j in {0,...,2} { \foreach \i in{0,...,2} {
        \node[scale=1.5,fill=red,,opacity=0.3] at (\j+0.5,\i+0.5) {$ $};
        \node[scale=1.5,fill=red,,opacity=0.3] at (\j+12.5,\i+0.5) {$ $};
        }
        }
           \foreach \j in {3,...,5} { \foreach \i in{3,...,5} {
        \node[scale=1.5,fill=red,,opacity=0.3] at (\j+0.5,\i+0.5) {$ $};
        \node[scale=1.5,fill=red,,opacity=0.3] at (\j+12.5,\i+0.5) {$ $};
        }
        }
              \draw[step=1.0,black, thin] (0,0) grid (6,6);
              \draw[step=1.0,black, thin] (12,0) grid (18,6);
   \node at (3,-1){$T_5$} ;      
   \node at (15,-1){$T'_5$} ;
          \end{tikzpicture}}
The tableau $T_5$ is non final while the tableau $T'_5$ is final.
\end{example}
\begin{proposition}\label{prop-sateqlocsat}
  A non-empty tableau  $T$ is saturated if and only if it is locally saturated. Furthermore the locally saturated empty tableau  is saturated if and only if $(0,0)$ is not in the final zone.
\end{proposition}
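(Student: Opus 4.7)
My plan is to split the proof into the non-empty case, where I prove both directions using the technical lemmas already developed, and the empty case, which I handle by a direct analysis of the transitions from $\emptyset$ in $\mathrm{M}^{F_1,F_2}_\bullet$.

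For the forward direction of the non-empty case (saturated implies locally saturated), suppose $T$ is saturated and $T \to T'$ for some $T'$. By the definition of the arrow, $T \subseteq T'$, and by Lemma \ref{lem-arrow2sim}, $T \sim T'$. Since $T$ is the unique maximum of its Nerode class, we conclude $T' \subseteq T$, hence $T = T'$. So $T$ is locally saturated.

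For the reverse direction (locally saturated implies saturated), I argue by contradiction. Assume $T$ is non-empty and locally saturated but $T \neq \Sat(T)$. Pick any $(i_c, j_c) \in \Sat(T) \setminus T$. Then $T \subseteq T \cup \{(i_c, j_c)\} \subseteq \Sat(T)$ with $T \sim \Sat(T)$, so Lemma \ref{lem-interval} yields $T \sim T \cup \{(i_c, j_c)\}$. But Lemma \ref{lem-T2Tgen} states precisely that for a locally saturated tableau $T$ and any $(i_c, j_c) \notin T$, the tableaux $T$ and $T \cup \{(i_c, j_c)\}$ are not Nerode equivalent---a contradiction. Hence $T = \Sat(T)$.

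For the empty case, first observe that $\emptyset$ is trivially locally saturated: for every $I$, $\emptyset|_I = \emptyset$, and in each of the types A, X, O the $2 \times 2$ saturation of the empty tableau is itself empty (as can be read off the classification in Section \ref{2x2case}), so the only arrow out of $\emptyset$ is $\emptyset \to \emptyset$. Now if $(0,0) \in F$, the earlier proposition characterising the class of $\emptyset$ gives $\emptyset \sim \{(0,0)\}$, so $\{(0,0)\}$ is strictly larger in the class and $\emptyset$ is not saturated. Conversely, if $(0,0) \notin F$, I show $\emptyset$ is alone in its class: for any non-empty $T$, either $T$ is not final and is separated from $\emptyset$ by the empty word (recall $\emptyset$ is always final in $\mathrm M^{F_1,F_2}_\bullet$ by the modifier's construction), or $T$ is final, in which case reading the letter $(\mathrm{Id}, \mathrm{Id})$ keeps $\delta^{(\mathrm{Id},\mathrm{Id})}(T) \supseteq T$ final, while $\emptyset$ is sent to $\{(0,0)\}$, which is not final since $(0,0) \notin F$. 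The main obstacle is the reverse direction of the non-empty case; however all the hard combinatorial work already lives in Lemma \ref{lem-T2Tgen}, and the present statement merely packages that lemma with Lemma \ref{lem-interval}.
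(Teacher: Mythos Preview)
Your proof is correct and follows essentially the same approach as the paper's own proof: the forward implication via Lemma~\ref{lem-arrow2sim}, the reverse implication by contradiction using Lemma~\ref{lem-interval} together with Lemma~\ref{lem-T2Tgen}, and the empty case via the earlier characterisation of the Nerode class of $\emptyset$. Your treatment of the empty case when $(0,0)\notin F$ is slightly more explicit than the paper's (which just observes that $\emptyset$ is the unique final state with no cross in the final zone), but the underlying idea is the same.
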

\begin{proof}
 Let $T$ be a non-empty locally saturated tableau and suppose that $T\neq \Sat(T)$. From Lemma \ref{lem-interval}, $T$ is Nerode equivalent to any $T'$ such that $T\subset T'\subset \Sat(T)$. In particular, if $(i,j)\in \Sat(T)\setminus T$ then $T$ is Nerode equivalent to $T\cup\{(i,j)\}$. But this contradicts Lemma \ref{lem-T2Tgen}. Hence, $T=\Sat(T)$.\\
 By construction, if $(0,0)\in F$ the tableau $\emptyset$ is Nerode equivalent to $\{(0,0)\}$ so it is not saturated. Conversely, if $(0,0)\not \in F$ then the tableau $\emptyset$ is the only final state with no cross in the final zone. So the empty tableau is the only one in its Nerode class and it is saturated.
\end{proof}
\section{Monster witnesses\label{sec-monster}}
Let us denote by $\SA_\bullet^{F_1,F_2}$ the set of the accessible saturated tableaux. From Section \ref{section-saturation}, the set of saturated tableaux is always a representative set of the Nerode classes of $\mathrm{M}_\bullet^{F_1,F_2}$.
Hence,  the tableaux of $\SA_\bullet^{F_1,F_2}$ are in one to one correspondence with the states of $\Min(\mathrm M_\bullet^{F_1,F_2})$ and  this implies 
\[
\sc_{\ostar}(m,n)=\max\{\#\SA_\bullet^{F_1,F_2}\mid F_1\subset \IntEnt m,\ F_2\subset\IntEnt n\}.
\]
To accurately calculate the state complexity, we have to compute the value of  $\#\SA_\bullet^{F_1,F_2}$. This value depends both on the number of elements in $F_1$
 and $F_2$ and of some properties of the final zone as shown in Table \ref{tab-SA}.
 \begin{table}[H]
 \[
     \begin{array}{|c|c|c|}\hline
          & F\subset (\{0\}\times\IntEnt n)\cup (\IntEnt m\times \{0\}) & F\not\subset (\{0\}\times\IntEnt n)\cup (\IntEnt m\times \{0\})\\\hline
       (0,0)\in F   & \#\SA_{\bullet}^{F_1,F_2}\leq\#\SV_{\bullet}^{F_1,F_2}=\#\LSV_{\bullet}^{F_1,F_2}-1& \#\SA_{\bullet}^{F_1,F_2}=\#\SV_{\bullet}^{F_1,F_2}=\#\LSV_{\bullet}^{F_1,F_2}-1\\\hline
       (0,0)\not\in F& \#\SA_{\bullet}^{F_1,F_2}\leq\#\SV_{\bullet}^{F_1,F_2}=\#\LSV_{\bullet}^{F_1,F_2}& \#\SA_{\bullet}^{F_1,F_2}=\#\SV_{\bullet}^{F_1,F_2}=\#\LSV_{\bullet}^{F_1,F_2}\\\hline
     \end{array}\]
     \caption{Relations between the number of accessible saturated states $\#\SA_\bullet^{F_1,F_2}$, saturated valid states  $\#\SV_\bullet^{F_1,F_2}$, and local-saturated valid states $\#\LSV_{\bullet}^{F_1,F_2}$.\label{tab-SA}}
     \end{table}
    The condition on the columns of Table \ref{tab-SA} means that the final zone is either contained in the union of row $0$ and column $0$ or not. 
    By construction, this condition does not affect the number of  elements of $\LSV_\bullet^{F_1,F_2}$,
thus nor the number of elements of  $\#\SV^{F_1,F_2}_\bullet$.
From Table \ref{tab-SA}, we deduce that to maximize $\#\SA_\bullet^{F_1,F_2}$, it is sufficient to maximize $\#\SV_\bullet^{F_1,F_2}$ when $ F\not\subset (\{0\}\times\IntEnt n)\cup (\IntEnt m\times \{0\})$. In other words,

\begin{theorem}\label{th-SV2sc}
We have
 \[
\begin{array}{ll}\sc_{\ostar}(m,n)=\max\left\{\#\SV_\bullet^{F_1,F_2}\right|&F_1\subsetneq \IntEnt m,\ F_2\subsetneq\IntEnt n,\ F_1, F_2\neq\emptyset,\\ &\hspace{-1cm}\left.{\color{white}\mathrm{V}^{F_1,F_2}}(F_1\times\IntEnt n)\bullet (\IntEnt m\times F_2)\not\subset(\{0\}\times\IntEnt n)\cup (\IntEnt m\times \{0\})\right\}.
\end{array}
\]
\end{theorem}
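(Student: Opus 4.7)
The plan is to combine three earlier ingredients: Theorem~\ref{th-mon2}, which reduces $\sc_{\ostar}(m,n)$ to a maximum over monster pairs; the identification of $\#\SA_\bullet^{F_1,F_2}$ with $\#\SV_\bullet^{F_1,F_2}$ in the ``good case'' recorded in Table~\ref{tab-SA}; and the fact that $\#\LSV_\bullet^{F_1,F_2}$ is invariant under joint row/column permutations of $\IntEnt m$ and $\IntEnt n$.

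First I would invoke Theorem~\ref{th-mon2} together with Remark~\ref{remark-min} to reduce $\sc_{\ostar}(m,n)$ to the maximum of $\#\SA_\bullet^{F_1,F_2}$ over pairs $(F_1, F_2)$ with $F_1 \subsetneq \IntEnt m$, $F_2 \subsetneq \IntEnt n$ and $F_1, F_2 \neq \emptyset$; these constraints ensure that the component monsters of $\mon_{m,n}^{F_1, F_2}$ are minimal with the prescribed state counts, and the identification $\#\SA_\bullet^{F_1,F_2} = \#\Min(\mathrm M_\bullet^{F_1,F_2})$ follows from the correspondence between saturated accessible tableaux and Nerode classes established in Section~\ref{section-saturation}. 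For $(F_1, F_2)$ inside the restricted set of the theorem, Theorem~\ref{remark-acc} gives $\Acc = \Val$, which passes to Nerode classes as $\#\SA = \#\SV$, matching the right-hand column of Table~\ref{tab-SA}. This already yields $\sc_{\ostar}(m,n) \geq \max \#\SV_\bullet^{F_1,F_2}$ over the restricted set, since each such pair provides a concrete monster witness.

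For the reverse inequality I have to show that a pair $(F_1, F_2)$ in the ``bad case'' $F \subset (\{0\}\times\IntEnt n)\cup(\IntEnt m\times\{0\})$ cannot exceed this maximum. The key observation I would rely on is that $\#\LSV_\bullet^{F_1,F_2}$ depends only on $(\#F_1, \#F_2)$, because joint row/column permutations of $\IntEnt m$ and $\IntEnt n$ induce a bijection on locally saturated tableaux. Given $(F_1, F_2)$ in the bad case, I then choose $(F'_1, F'_2)$ of the same cardinalities lying in the good case---always possible since each $\#F_i$ is strictly between $0$ and the ambient size---so that $\#\LSV^{F'_1, F'_2} = \#\LSV^{F_1, F_2}$. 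Combined with the identity $\#\SV_\bullet^{F_1,F_2} = \#\LSV_\bullet^{F_1,F_2} - [(0,0) \in F]$ readable off Table~\ref{tab-SA} and the strict inclusion $\Acc \subsetneq \Val$ in the bad case (Proposition~\ref{ValnotAcc}), this bounds $\#\SA^{F_1, F_2}$ by $\#\SV^{F'_1, F'_2}$ and completes the proof.

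The main obstacle I expect is the last comparison for $\bullet = \oplus$ at extremal cardinalities such as $\#F_1 = m-1$, $\#F_2 = n-1$, where one cannot simultaneously force the good case \emph{and} $(0,0) \notin F'$; there the $-1$ penalty from $[(0,0) \in F']$ must be absorbed by the strict gap $\#\SA < \#\SV$ guaranteed by Proposition~\ref{ValnotAcc}, which is a genuinely finer argument than for $\bullet \in \{\cap,\cup\}$, where one can directly arrange both conditions at once.
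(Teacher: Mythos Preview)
Your permutation-invariance claim is the gap. Joint row/column permutations do preserve \emph{local saturation}, but they do not preserve \emph{validity}: validity singles out the cell $(0,0)$, so only permutations fixing $0$ in each coordinate send valid tableaux to valid tableaux. Consequently $\#\LSV_\bullet^{F_1,F_2}$ depends on $(\#F_1,[0\in F_1],\#F_2,[0\in F_2])$, not just on $(\#F_1,\#F_2)$. For a concrete failure in type~A, compare $(F_1,F_2)=(\{0\},\{0\})$ with any good pair: from~(\ref{SVcap(00)}) one gets $\#\SV_\cap^{\{0\},\{0\}}=2^{mn}-1$, while the maximum of $\#\SV$ over good pairs is $\tfrac34\,2^{mn}$. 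So you cannot bound $\#\SA^{\{0\},\{0\}}$ by $\max_{\text{good}}\#\SV$ via $\#\SA\le\#\SV$ alone, and the strict gap $\#\SA<\#\SV$ you invoke from Proposition~\ref{ValnotAcc} is worth only one unit, far less than the discrepancy $2^{mn}-1-\tfrac34\,2^{mn}$.

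The paper's own justification for Theorem~\ref{th-SV2sc} is a single sentence (``by construction, this condition does not affect the number of elements of $\LSV$'') which is no more precise than your proposal and suffers from the same issue. What actually closes the argument in the paper is the type-by-type analysis in Section~\ref{sec-monster}: for each $\bullet$ the finitely many ``entirely bad'' isomorphism classes---$F_1=\{0\}$ or $F_2=\{0\}$ for $\cap$, $F_1=F_2=\{0\}$ for $\cup$, and $(\{0\},\{0\})$ or $(\IntEnt m\setminus\{0\},\IntEnt n\setminus\{0\})$ for $\oplus$---are bounded directly (see e.g.\ the explicit computation $\#\SA_\cap^{\{0\},\{0\}}=mn$ in~(\ref{cap00FF=1})), and all remaining classes contain a good representative by a permutation fixing $0$. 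If you want a self-contained proof of the theorem, that case split is what you need to add.
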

 
As a preliminary result, we give an expression of each  $\#\SV^{F_1,F_2}_\bullet$ depending on the fact that the cell $(0,0)$ belongs or not to the final zone. According to Table \ref{op-bool}, without loss of generality, we consider  $\bullet\in\{\cap,\cup,\oplus\}$, the other cases being recovered by replacing $F_1$ or $F_2$ by its complementary.
\subsection{Counting saturated valid tableaux}
\subsubsection{Type A $(\bullet=\cap$)} From Proposition \ref{prop-sateqlocsat} and Section \ref{2x2case}, any valid tableau is saturated. So  valid tableaux are the only ones we have to count. We consider two cases.
\begin{enumerate}
    \item Suppose $(0,0)\in F$. There are $2^{mn-1}$ local saturated valid tableaux containing $(0,0)$ because any tableau containing $(0,0)$ is valid (see Fig. \ref{fig-valcap00} for an illustration).
     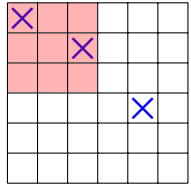
\begin{figure}[H]
      \centerline{
  \begin{tikzpicture}[scale=0.4]
          ;
  =
      \node[scale=1.5,color=blue] at (0.5,5.5) {$\mathbf\times$};
      \node[scale=1.5,color=blue] at (2.5,4.5) {$\mathbf\times$};
      \node[scale=1.5,color=blue] at (4.5,2.5) {$\mathbf\times$};
               \foreach \j in {0,...,2} { \foreach \i in{0,...,2} {
        \node[scale=1.5,fill=red,,opacity=0.3] at (\j+0.5,\i+3.5) {$ $};
        }
        }
              \draw[step=1.0,black, thin] (0,0) grid (6,6);
          \end{tikzpicture}}
          \caption{Example of valid tableau containing $(0,0)$ for the type A\label{fig-valcap00}.}
          \end{figure}
    
    Furthermore, the set of the valid tableaux that do not contain $(0,0)$ is the set of  tableaux having no cross in $F_1\times F_2$ (see an example in Fig. \ref{fig-valcapnot00}). There are $2^{mn-\#F_1\#F_2}$ such tableaux.
     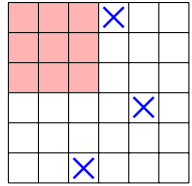
\begin{figure}[H]
      \centerline{
  \begin{tikzpicture}[scale=0.4]
          ;
  =
      \node[scale=1.5,color=blue] at (3.5,5.5) {$\mathbf\times$};
      \node[scale=1.5,color=blue] at (2.5,0.5) {$\mathbf\times$};
      \node[scale=1.5,color=blue] at (4.5,2.5) {$\mathbf\times$};
               \foreach \j in {0,...,2} { \foreach \i in{0,...,2} {
        \node[scale=1.5,fill=red,,opacity=0.3] at (\j+0.5,\i+3.5) {$ $};
        }
        }
              \draw[step=1.0,black, thin] (0,0) grid (6,6);
          \end{tikzpicture}}
          \caption{Example of valid tableau that does not contain $(0,0)$ for the type A when $(0,0)\in F$\label{fig-valcapnot00}}
          \end{figure}

    We deduce
    \begin{equation}\label{SVcap(00)}
        \#\LSV^{F_1,F_2}_\cap=2^{mn-1}+2^{mn-\#F_1\#F_2}\mbox{ and }\#\SV^{F_1,F_2}_\cap=2^{mn-1}+2^{mn-\#F_1\#F_2}-1.
    \end{equation}
   \item Suppose $(0,0)\not\in F$. For the same reason as in the previous case, there  are still $2^{mn-1}$ valid tableaux having a cross at $(0,0)$. But there are only  $2^{nm-\#F_1\#F_2-1}$ valid tableaux with no cross at $(0,0)$ (see an example in Fig. \ref{fig-valcapnot002})
     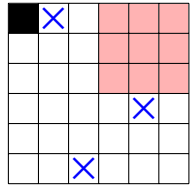
\begin{figure}[H]
      \centerline{
  \begin{tikzpicture}[scale=0.4]
          ;
  =
      \node[scale=1.5,color=blue] at (1.5,5.5) {$\mathbf\times$};
      \node[scale=1.5,color=blue] at (2.5,0.5) {$\mathbf\times$};
      \node[scale=1.5,color=blue] at (4.5,2.5) {$\mathbf\times$};
               \foreach \j in {0,...,2} { \foreach \i in{0,...,2} {
        \node[scale=1.5,fill=red,,opacity=0.3] at (\j+3.5,\i+3.5) {$ $};
        }
        }
       \node[scale=1.5,fill=black] at (0.5,5.5) {$ $};
              \draw[step=1.0,black, thin] (0,0) grid (6,6);
          \end{tikzpicture}}
          \caption{Example of valid tableau that does not contain $(0,0)$ for the type A when $(0,0)\not\in F$\label{fig-valcapnot002}.}
          \end{figure}
          We deduce that in this case we have
           \begin{equation}\label{SVcapnot(00)}
        \#\SV^{F_1,F_2}_\cap=\#\LSV^{F_1,F_2}_\cap=2^{mn-1}+2^{mn-\#F_1\#F_2-1}.
    \end{equation}
\end{enumerate}
\subsubsection{Type O ($\bullet=\cup$)}
 According to  Proposition \ref{prop-sateqlocsat} and Section \ref{2x2case},  for any non-empty saturated tableau  $T$,  if $(i,j)$ and $(i',j')$ belong to$T$ then $(i',j)$ and $(i,j')$ belong to $T$. In another words, for any saturated tableau $T$, there exist $A\subset\IntEnt m$ and $B\subset\IntEnt n$ such that $T=A\times B$. So we have to enumerate the pairs $(A,B)$ such that if $(A\times B)\cap F\neq\emptyset$ then $0\in A$ and $0\in B$.
We consider two cases
\begin{enumerate}
    \item Suppose $(0,0)\in F$. The configuration of the final zone is illustrated in Fig. \ref{fig-valcup00}.  Remark first that any tableau $T=A\times B$ containing  $(0,0)$ is valid. There are $2^{m+n-2}$ such tableaux. If $(0,0)\not\in A\times B$ then the condition of validity implies $A\subset \IntEnt m\setminus F_1$ and $B\subset \IntEnt n\setminus F_2$. Hence, there are $(2^{m-\#F_1}-1)(2^{n-\#F_2}-1)+1$ such tableaux.
    \begin{figure}[H]
      \centerline{
  \begin{tikzpicture}[scale=0.4]
          ;
  =
               \foreach \j in {0,...,5} { \foreach \i in{0,...,2} {
        \node[scale=1.5,fill=red,,opacity=0.3] at (\j+0.5,\i+3.5) {$ $};
        }
        }
\foreach \j in {0,...,2} { \foreach \i in{0,...,2} {
        \node[scale=1.5,fill=red,,opacity=0.3] at (\j+0.5,\i+0.5) {$ $};
        }
        }
              \draw[step=1.0,black, thin] (0,0) grid (6,6);     \end{tikzpicture}}
          \caption{Final zone for the type O when $(0,0)\in F$.\label{fig-valcup00}}
          \end{figure}
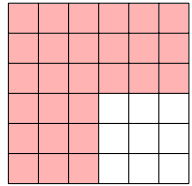
          Hence, we obtain
\[     \begin{array}{rcl}   \#\LSV^{F_1,F_2}_\cup&=&2^{m+n-2}+(2^{m-\#F_1}-1)(2^{n-\#F_2}-1)+1\\&=&2^{m+n-2}+2^{m+n-\#F_1-\#F_2}-2^{m-\#F_1}-2^{n-\#F_2}+2,\end{array}
 \]
    and then\begin{equation}\label{SVcup(00)}
        \#\SV^{F_1,F_2}_\cup=2^{m+n-2}+2^{m+n-\#F_1-\#F_2}-2^{m-\#F_1}-2^{n-\#F_2}+1.
    \end{equation}
    \item Suppose $(0,0)\not\in F$. The configuration of the final zone is illustrated in Fig. \ref{fig-valcupnot00}.
        \begin{figure}[h]
      \centerline{
  \begin{tikzpicture}[scale=0.4]
          ;
  =
               \foreach \j in {0,...,5} { \foreach \i in{0,...,2} {
        \node[scale=1.5,fill=red,,opacity=0.3] at (\j+0.5,\i+0.5) {$ $};
        }
        }
\foreach \j in {0,...,2} { \foreach \i in{0,...,2} {
        \node[scale=1.5,fill=red,,opacity=0.3] at (\j+3.5,\i+3.5) {$ $};
        }
        }
              \draw[step=1.0,black, thin] (0,0) grid (6,6);     \end{tikzpicture}}
          \caption{Final zone for the type O when $(0,0)\not\in F$\label{fig-valcupnot00}}
          \end{figure}
          The same reasoning as in the previous case shows that there are $2^{m+n-2}$ saturated valid tableaux having a cross at $(0,0)$.
          In order to count the tableaux that do not contain $(0,0)$, we apply the inclusion-exclusion principle. 
          Indeed, we sum the number of locally saturated tableaux which are subsets of  $(\IntEnt m\setminus \{0\}) \times \IntEnt n$, the number of locally saturated tableaux which are  subsets of  $\IntEnt m\times(\IntEnt n\setminus\{0\})$, and we substract to the number of locally saturated tableaux which are subset of $(\IntEnt m\setminus 0)\times(\IntEnt n\setminus\{0\})$ because they were counted twice.
          So we obtain $\left((2^{m-\#F_1-1}-1)((2^{n-\#F_2}-1)+1\right) + \left( (2^{m-\#F_1}-1)((2^{n-\#F_2-1}-1)+1\right) -((2^{m-\#F_1-1}-1)((2^{n-\#F_2-1}-1)+1)=\frac342^{m+n-\#F_1-\#F_2}-2^{m-\#F_1}-2^{m-\#F_2}+2
         $ 
          such tableaux. Hence,
            \begin{equation}\label{SVcupnot(00)}
        \#\SV^{F_1,F_2}_\cup=\#\LSV^{F_1,F_2}_\cup=2^{m+n-2}+\frac342^{m+n-\#F_1-\#F_2}-2^{m-\#F_1}-2^{n-\#F_2}+2.
    \end{equation}
\end{enumerate}
\subsubsection{Type X ($\bullet=\oplus$)}
 In this case the valid saturated tableaux are more complicated to enumerate. Indeed, from Proposition \ref{prop-sateqlocsat} and Section \ref{2x2case}, the locally saturated tableaux are the tableaux that avoid one of the following $2\times 2$ motives:
\[
\begin{array}{|c|c|}
\hline \times&\\\hline\times&\times\\\hline
\end{array},\ \begin{array}{|c|c|}
\hline &\times\\\hline\times&\times\\\hline
\end{array},\ \begin{array}{|c|c|}
\hline \times&\times\\\hline&\times\\\hline
\end{array},\mbox{ and }
\begin{array}{|c|c|}
\hline \times&\times\\\hline\times&\\\hline
\end{array}.
\]
Let us denote by $\alpha_{m,n}$ the number of $m\times n$ locally saturated tableaux and by $\alpha'_{m,n}$ the number of $m\times n$ locally saturated tableaux having a cross at $(0,0)$. In a previous paper \cite{CLMP15}, we give formulas allowing to compute these values. Nevertheless, the precise knowledge of these formulas is not useful for our purpose, we do not recall it here but we invite the reader interested by these enumeration results to read this paper.\\
Again, we have to consider two cases :
\begin{enumerate}
    \item Suppose $(0,0)\in F$. The configuration of the final zone is illustrated in Fig. \ref{fig-valxor00}.
     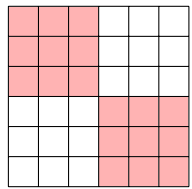
\begin{figure}[H]
      \centerline{
  \begin{tikzpicture}[scale=0.4]
          ;
  =
               \foreach \j in {3,...,5} { \foreach \i in{0,...,2} {
        \node[scale=1.5,fill=red,,opacity=0.3] at (\j+0.5,\i+0.5) {$ $};
        }
        }
\foreach \j in {0,...,2} { \foreach \i in{3,...,5} {
        \node[scale=1.5,fill=red,,opacity=0.3] at (\j+0.5,\i+0.5) {$ $};
        }
        }
              \draw[step=1.0,black, thin] (0,0) grid (6,6);     \end{tikzpicture}}
          \caption{Final zone for the type X when $(0,0)\in F$\label{fig-valxor00}.}
          \end{figure}
    As any locally saturated tableaux having a cross at $(0,0)$ is valid,  the number of such tableaux equals to $\alpha'_{m,n}$. Notice also that a locally saturated tableau $T$ having no cross at $(0,0)$ is valid if and only if T has crosses only in the non-final zone, i.e. in the two rectangles drawn in white in Figure \ref{fig-valxor00}.
    This gives $\alpha_{\#F_1,\#F_2}\alpha_{m-\#F_1,n-\#F_2}$ locally saturated valid tableau with no cross at $(0,0)$ and then
     \begin{equation}\label{SVxor(00)}
        \#\LSV^{F_1,F_2}_\oplus=\alpha'_{m,n}+\alpha_{\#F_1,\#F_2}\alpha_{m-\#F_1,n-\#F_2}\mbox{ and } \#\SV^{F_1,F_2}_\oplus=\alpha'_{m,n}+\alpha_{\#F_1,\#F_2}\alpha_{m-\#F_1,n-\#F_2}-1.
    \end{equation}
    \item Suppose $(0,0)\not\in F$. The configuration of the final zone is illustrated in Fig. \ref{fig-valxornot00}.
    \begin{figure}[H]
      \centerline{
  \begin{tikzpicture}[scale=0.4]
          ;
  =
               \foreach \j in {0,...,2} { \foreach \i in{0,...,2} {
        \node[scale=1.5,fill=red,,opacity=0.3] at (\j+0.5,\i+0.5) {$ $};
        }
        }
\foreach \j in {3,...,5} { \foreach \i in{3,...,5} {
        \node[scale=1.5,fill=red,,opacity=0.3] at (\j+0.5,\i+0.5) {$ $};
        }
        }
              \draw[step=1.0,black, thin] (0,0) grid (6,6);     \end{tikzpicture}}
          \caption{Final zone for the type X when $(0,0)\not\in F$.\label{fig-valxornot00}}
          \end{figure}
As in the previous case, the number of tableaux with a cross in $(0,0)$ is $\alpha'_{m,n}$. For the tableaux not containing a cross in $(0,0)$, we have to consider tableaux in the non-final zone containing $(0,0)$ ($\alpha_{\#F_1,\#F_2}  -\alpha'_{\#F_1,\#F_2}$ or $\alpha_{m-\#F_1,n-\#F_2}  -\alpha'_{m-\#F_1,n-\#F_2}$  depending on the fact that $(0,0)$ is in $F_1\times F_2$ or in $\overline{F_1}\times \overline{F_2}$) and  tableaux in the non-final zone not containing $(0,0)$, which gives us           
          
    \begin{equation}
        \label{SVxornot(00)}
        \#\SV^{F_1,F_2}_\oplus=\#\LSV^{F_1,F_2}_\oplus=\left\{\begin{array}{ll}\alpha'_{m,n}+(\alpha_{\#F_1,\#F_2}-\alpha'_{\#F_1,\#F_2})\alpha_{m-\#F_1, n-\#F_2}&\mbox{ if }(0,0)\in F_1\times F_2\\
        \alpha'_{m,n}+\alpha_{m, n}(\alpha_{m-\#F_1,n-\#F_2}-\alpha'_{m-\#F_1,n-\#F_2})&\mbox{ otherwise. }
        \end{array}\right.
    \end{equation}
\end{enumerate}
\subsection{Computing witnesses}
In this section, we compute  a final zone allowing us to obtain the tight bound for the complexity of each studied operation. This computation allows us to recover the tight bound for the star of intersection due to Jir\'askov\'a and Okhotin \cite{JO11} and that of the  star of union  due to Salomaa \textit{et al.} \cite{SSY07}. This also allows us to give an  expression of  the tight bound for the star of xor.
\subsubsection{Type A}
 Obviously, from (\ref{SVcap(00)}) and (\ref{SVcapnot(00)}) the maximal value of $\#\LSV_\cap^{F_1,F_2}$ is reached when $\#F_1=\#F_2=1$ in both cases. 
Notice that if $(0,0)\in F$ then $\#F_1=\#F_2=1$ implies $F=\{(0,0)\}$. But, in this case a tableau is accessible if and only if it contains at most one cross. The states $\emptyset$ and $\{(0,0)\}$ being Nerode equivalent, we obtain
\begin{equation}\label{cap00FF=1}
\#\SA_\cap^{\{0\},\{0\}}=mn. \end{equation}
If $(0,0)\in F$ and ($\#F_1>1$ or $\#F_2>1$) then, by Table \ref{tab-SA} and (\ref{SVcap(00)}), we have
\begin{equation}\label{capg00FF>1}
\#\SA^{F_1,F_2}_\cap\leq \#\LSV_\cap^{F_1,F_2}-1\leq \#\LSV_\cap^{\{0\},\{0\}}-1=2^{mn-1}+2^{mn-2}-1=\frac342^{mn}-1.
\end{equation}
Suppose now $(0,0)\not\in F$, from Table \ref{tab-SA} and the discussion above, the maximal values for $\#\SA^{F_1,F_2}_{\cap}$ is reached when $F_1=\{f_1\}$ and $F_2=\{f_2\}$ with $f_1\in\IntEnt m\setminus\{0\}$ and $f_2\in\IntEnt n\setminus\{0\}$. In this case we have
\begin{equation}\label{capgnot00FF>1}
\#\SA^{\{f_1\},\{f_2\}}_\cap= \#\LSV_\cap^{\{f_1\},\{f_2\}}=2^{mn-1}+2^{mn-2}=\frac342^{mn}.
\end{equation}
We summarize the results contained in (\ref{cap00FF=1}), (\ref{capg00FF>1}) and (\ref{capgnot00FF>1}) in the following theorem:
\begin{theorem}[Jir\'askov\'a and Okhotin \cite{JO11}]\label{th-scA}
  When $m,n>1$ we have
  \[
  \sc_{\ostar}(m,n)=\frac342^{mn}
  \]
 and $2$-monster $\mon_{m,n}^{\{f_1\},\{f_2\}}$  with $f_1\in\IntEnt m\setminus\{0\}$ and $f_2\in\IntEnt n\setminus\{0\}$ is a witness.
\end{theorem}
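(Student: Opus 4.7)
The plan is to specialize Theorem \ref{th-SV2sc} to $\bullet=\cap$ and then carry out a two-case maximization driven by the already-derived closed forms (\ref{SVcap(00)}) and (\ref{SVcapnot(00)}). For type A the final zone is simply $F=F_1\times F_2$, so the admissibility condition $F\not\subset(\{0\}\times\IntEnt{n})\cup(\IntEnt{m}\times\{0\})$ translates into the requirement that both $F_1$ and $F_2$ contain a nonzero element; Theorem \ref{th-SV2sc} then identifies $\sc_{\ostar}(m,n)$ with the maximum of $\#\SV^{F_1,F_2}_{\cap}$ over such admissible pairs.

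I would then split on whether $(0,0)\in F$. Both formulas (\ref{SVcap(00)}) and (\ref{SVcapnot(00)}) are strictly decreasing in the product $\#F_1\cdot\#F_2$, so in each case I would push this product as low as admissibility allows. When $(0,0)\notin F$, the minimum value $\#F_1\cdot\#F_2=1$ is achievable by picking singletons $F_1=\{f_1\}$ and $F_2=\{f_2\}$ with $f_1\in\IntEnt{m}\setminus\{0\}$ and $f_2\in\IntEnt{n}\setminus\{0\}$; this is legitimate because $m,n>1$, and (\ref{SVcapnot(00)}) then evaluates to $2^{mn-1}+2^{mn-2}=\tfrac{3}{4}\,2^{mn}$. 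When $(0,0)\in F$, i.e.\ $0\in F_1\cap F_2$, admissibility forces a nonzero element in each of $F_1,F_2$, so $\#F_1,\#F_2\geq 2$ and hence $\#F_1\cdot\#F_2\geq 4$; formula (\ref{SVcap(00)}) then bounds $\#\SV^{F_1,F_2}_\cap$ by $2^{mn-1}+2^{mn-4}-1$, which is strictly less than $\tfrac{3}{4}\,2^{mn}$.

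To close the argument I would exhibit $\mon_{m,n}^{\{f_1\},\{f_2\}}$ as a witness: since its final zone $\{(f_1,f_2)\}$ satisfies the hypotheses of Theorem \ref{th-SV2sc}, Table \ref{tab-SA} gives $\#\SA^{\{f_1\},\{f_2\}}_\cap=\#\SV^{\{f_1\},\{f_2\}}_\cap=\tfrac{3}{4}\,2^{mn}$, and Remark \ref{remark-min} ensures that each component of the monster is already minimal of the prescribed size. I do not anticipate any substantial obstacle: all the combinatorial difficulty has been absorbed by the enumeration of saturated valid tableaux, so the remaining work reduces to a straightforward comparison of two closed-form expressions together with an admissibility check for the extremal configuration.
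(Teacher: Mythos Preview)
Your argument is correct and follows essentially the same route as the paper: both proofs split on whether $(0,0)\in F$, use the monotonicity of the closed forms (\ref{SVcap(00)}) and (\ref{SVcapnot(00)}) in $\#F_1\cdot\#F_2$, and identify the singleton configuration $F_1=\{f_1\},F_2=\{f_2\}$ with $f_1,f_2\neq 0$ as the maximizer. The only cosmetic difference is that by invoking Theorem~\ref{th-SV2sc} at the outset you bypass the paper's separate treatment of the inadmissible case $F_1=F_2=\{0\}$ (equation~(\ref{cap00FF=1})), and your admissibility constraint in the $(0,0)\in F$ branch yields the sharper bound $\#F_1\#F_2\geq 4$ rather than the paper's $\geq 2$; both bounds suffice.
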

\begin{example}
We list in Table \ref{table-scAvalues} the first values of the state complexity. The valid saturated tableaux illustrating the case $m=n=2$ are pictured in Figure \ref{fig-exA}.
  \begin{table}[H]
     \[\begin{array}{|c|cccccc|}
     \hline
     m\setminus n& 2&3&4&5&6&7\\\hline
     2&12& 48& 192& 768& 3072& 12288\\
     3&48& 384& 3072& 24576& 196608& 1572864\\
     4&192& 3072& 49152& 786432& 12582912& 201326592\\
     5&768& 24576& 786432& 25165824& 805306368& 25769803776\\
     6& 3072& 196608& 12582912& 805306368& 51539607552& 3298534883328\\
     7&12288& 1572864& 201326592& 25769803776& 3298534883328&
    422212465065984\\
     8&49152& 12582912& 3221225472&
    824633720832& 211106232532992& 54043195528445952\\\hline
     \end{array}
     \]
     \caption{First values of $\sc_{\ostar}(m,n)$ for the type A.\label{table-scAvalues}}
     \end{table}
     \begin{figure}[h] \centerline{
     \begin{tikzpicture}[scale=0.4]
\node[fill=red,opacity=0.3,scale=1.5] at (1.5,0.5) {$ $}; 
    \draw[step=1.0,black, thin] (0,0) grid (2,2); 
     \end{tikzpicture}
      \begin{tikzpicture}[scale=0.4]
\node[fill=red,opacity=0.3,scale=1.5] at (1.5,0.5) {$ $}; 
\node at (0.5,1.5) {$\times$};
    \draw[step=1.0,black, thin] (0,0) grid (2,2); 
     \end{tikzpicture}
     \begin{tikzpicture}[scale=0.4]
\node[fill=red,opacity=0.3,scale=1.5] at (1.5,0.5) {$ $}; 
\node at (1.5,1.5) {$\times$};
    \draw[step=1.0,black, thin] (0,0) grid (2,2); 
     \end{tikzpicture}
     \begin{tikzpicture}[scale=0.4]
\node[fill=red,opacity=0.3,scale=1.5] at (1.5,0.5) {$ $}; 
\node at (0.5,0.5) {$\times$};
    \draw[step=1.0,black, thin] (0,0) grid (2,2); 
     \end{tikzpicture}
     \begin{tikzpicture}[scale=0.4]
\node[fill=red,opacity=0.3,scale=1.5] at (1.5,0.5) {$ $}; 
\node at (0.5,0.5) {$\times$};
\node at (0.5,1.5) {$\times$};
    \draw[step=1.0,black, thin] (0,0) grid (2,2); 
     \end{tikzpicture}
      \begin{tikzpicture}[scale=0.4]
\node[fill=red,opacity=0.3,scale=1.5] at (1.5,0.5) {$ $}; 
\node at (0.5,0.5) {$\times$};
\node at (1.5,1.5) {$\times$};
    \draw[step=1.0,black, thin] (0,0) grid (2,2); 
     \end{tikzpicture}
      \begin{tikzpicture}[scale=0.4]
\node[fill=red,opacity=0.3,scale=1.5] at (1.5,0.5) {$ $}; 
\node at (1.5,0.5) {$\times$};
\node at (0.5,1.5) {$\times$};
    \draw[step=1.0,black, thin] (0,0) grid (2,2); 
     \end{tikzpicture}
 \begin{tikzpicture}[scale=0.4]
\node[fill=red,opacity=0.3,scale=1.5] at (1.5,0.5) {$ $}; 
\node at (1.5,1.5) {$\times$};
\node at (0.5,0.5) {$\times$};
    \draw[step=1.0,black, thin] (0,0) grid (2,2); 
     \end{tikzpicture}
      \begin{tikzpicture}[scale=0.4]
\node[fill=red,opacity=0.3,scale=1.5] at (1.5,0.5) {$ $}; 
\node at (1.5,1.5) {$\times$};
\node at (0.5,0.5) {$\times$};
\node at (0.5,1.5) {$\times$};
    \draw[step=1.0,black, thin] (0,0) grid (2,2); 
     \end{tikzpicture}
      \begin{tikzpicture}[scale=0.4]
\node[fill=red,opacity=0.3,scale=1.5] at (1.5,0.5) {$ $}; 
\node at (1.5,0.5) {$\times$};
\node at (0.5,0.5) {$\times$};
\node at (0.5,1.5) {$\times$};
    \draw[step=1.0,black, thin] (0,0) grid (2,2); 
     \end{tikzpicture}
      \begin{tikzpicture}[scale=0.4]
\node[fill=red,opacity=0.3,scale=1.5] at (1.5,0.5) {$ $}; 
\node at (1.5,1.5) {$\times$};
\node at (1.5,0.5) {$\times$};
\node at (0.5,1.5) {$\times$};
    \draw[step=1.0,black, thin] (0,0) grid (2,2); 
     \end{tikzpicture}
        \begin{tikzpicture}[scale=0.4]
\node[fill=red,opacity=0.3,scale=1.5] at (1.5,0.5) {$ $}; 
\node at (1.5,1.5) {$\times$};
\node at (1.5,0.5) {$\times$};
\node at (0.5,1.5) {$\times$};
\node at (0.5,0.5) {$\times$};
    \draw[step=1.0,black, thin] (0,0) grid (2,2); 
     \end{tikzpicture}}
     \caption{The $12$ saturated valid $2\times 2$-tableaux for the type A, $F_1=F_2=\{1\}$\label{fig-exA}}
     \end{figure}
     \end{example}
%

\subsubsection{Type O}
We assume that $m,n\geq 2$.
 From (\ref{SVcup(00)}) and (\ref{SVcupnot(00)}) the maximal value of $\#\LSV_\cup^{F_1,F_2}$ is reached when $\#F_1=\#F_2=1$ in both cases. Let $F_1=\{f_1\}$ and $F_2=\{f_2\}$ be such that $(0,0)$ is in the final zone of $\mathrm{M}_{\cup}^{F_1,F_2}$. From (\ref{SVcup(00)}) and (\ref{SVcupnot(00)}), if $F'_1$ and $F'_2$ are such that $\#F'_1=\#F_1$ and $\#F'_2=\#F_2$ and $(0,0)$ is not in the final zone of $\mathrm{M}_{\cup}^{F'_1,F'_2}$ then $\#\LSV_\cup^{F_1,F_2}>\#\LSV_\cup^{F'_1,F'_2}$. 
 So, as the final zone must not be completely included in the $0$-row and the $0$-column and as $(0,0)$ is in the final zone of $\mathrm{M}_{\cup}^{F_1,F_2}$  one of the two state $f_1$ or $f_2$ is $0$ while the other is not.  
 
Hence, from Table \ref{tab-SA}, (\ref{SVcup(00)}) and (\ref{SVcupnot(00)})  we have
 \[
 \#\SA_\cup^{\{0\},\{f_2\}}=\#\SA_\cup^{\{f_1\},\{0\}}=2^{m+n-1}-2^{m-1}-2^{n-1}+1.
 \]
 We summarize the results above in the following theorem:
\begin{theorem}[ Salomaa \textit{et al.} \cite{SSY07}]\label{th-sctypeO}
  When $m,n>1$ we have
  \[
  \sc_{\ostar}(m,n)=2^{m+n-1}-2^{m-1}-2^{n-1}+1,
  \]
 and $2$-monster $\mon_{m,n}^{\{f_1\},\{0\}}$  or $\mon_{m,n}^{\{0\},\{f_2\}}$,  with $f_1\in\IntEnt m\setminus\{0\}$ and $f_2\in\IntEnt n\setminus\{0\}$, is a witness.
\end{theorem}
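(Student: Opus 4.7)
The plan is to combine Theorem \ref{th-SV2sc} with the explicit counting formulas (\ref{SVcup(00)}) and (\ref{SVcupnot(00)}) already established for the Type O case. Theorem \ref{th-SV2sc} reduces the computation of $\sc_{\ostar}(m,n)$ to maximizing $\#\SV_\cup^{F_1,F_2}$ over pairs with $\emptyset\neq F_1\subsetneq \IntEnt m$, $\emptyset\neq F_2\subsetneq \IntEnt n$, and final zone $F=(F_1\times\IntEnt n)\cup(\IntEnt m\times F_2)$ not contained in $(\{0\}\times\IntEnt n)\cup(\IntEnt m\times\{0\})$. Since both formulas depend on $(F_1,F_2)$ only through $k=\#F_1$, $\ell=\#F_2$ and the predicate $(0,0)\in F$, the task splits into an arithmetic maximization and a feasibility check.

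First I would establish monotonicity. Both formulas feature the term $h(k,\ell):=2^{m+n-k-\ell}-2^{m-k}-2^{n-\ell}$, which is strictly decreasing in each of its two arguments on the range $1\leq k\leq m-1$, $1\leq\ell\leq n-1$; indeed $h(k,\ell)-h(k+1,\ell)=2^{m-k-1}(2^{n-\ell}-1)>0$, with a symmetric inequality in $\ell$. Since the remaining summands in (\ref{SVcup(00)}) and (\ref{SVcupnot(00)}) are constants (in $k,\ell$), each formula attains its maximum at $k=\ell=1$.

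Next I would compare the two regimes at their common optimum $k=\ell=1$. Evaluating (\ref{SVcup(00)}) yields $2^{m+n-1}-2^{m-1}-2^{n-1}+1$, while (\ref{SVcupnot(00)}) yields $\tfrac{7}{4}\cdot 2^{m+n-2}-2^{m-1}-2^{n-1}+2$; the difference equals $2^{m+n-4}-1$, which is positive for $m,n\geq 2$. Hence the maximum occurs in the regime $(0,0)\in F$ with $\#F_1=\#F_2=1$, forcing $F_1=\{0\}$ or $F_2=\{0\}$. The choice $F_1=F_2=\{0\}$ is infeasible because then $F\subset(\{0\}\times\IntEnt n)\cup(\IntEnt m\times\{0\})$ violates the side constraint of Theorem \ref{th-SV2sc}; so exactly one of $F_1,F_2$ is $\{0\}$ and the other is a singleton $\{f\}$ with $f\neq 0$. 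Either choice is feasible, since $F$ then contains cells $(f,j)$ with $f,j\neq 0$, and both choices attain the claimed value $2^{m+n-1}-2^{m-1}-2^{n-1}+1$, yielding the two symmetric witness families.

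The main (and only) subtle point is the interplay between the arithmetic optimum and the feasibility constraint: a naive maximization would suggest $F_1=F_2=\{0\}$, but this is precisely the excluded case of Theorem \ref{th-SV2sc}; one must check that breaking this symmetry by placing the unique final state off the $0$-row (or $0$-column) on one side does not lose any saturated valid tableaux, which is immediate because the counting formulas depend only on cardinalities. Everything else is a direct computation from the formulas of the preceding subsection.
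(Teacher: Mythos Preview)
Your approach is essentially the paper's: maximize formulas (\ref{SVcup(00)}) and (\ref{SVcupnot(00)}) over $(\#F_1,\#F_2)$, compare the two regimes at the optimum, and handle the feasibility constraint from Theorem~\ref{th-SV2sc}. The paper is terser (it asserts without proof that the maximum is at $\#F_1=\#F_2=1$), and your monotonicity argument fills that in.

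Two minor slips to fix. First, formula (\ref{SVcupnot(00)}) carries $\tfrac34\cdot 2^{m+n-k-\ell}$, not $2^{m+n-k-\ell}$, so it does not literally contain your $h(k,\ell)$; the monotonicity you need still holds, since for $g(k,\ell)=\tfrac34\cdot 2^{m+n-k-\ell}-2^{m-k}-2^{n-\ell}$ one has $g(k,\ell)-g(k+1,\ell)=2^{m-k-1}\bigl(\tfrac34\cdot 2^{n-\ell}-1\bigr)>0$ for $\ell\le n-1$, but you should treat the two formulas separately. Second, the difference $2^{m+n-4}-1$ is zero at $m=n=2$, not positive; nonnegativity is all you need for the conclusion, so just adjust the wording.
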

\begin{example}
We list in Table \ref{table-scOvalues} the first values of the state complexity. The valid saturated tableaux illustrating the case $m=n=2$ are pictured in Figure \ref{fig-exO}.
  \begin{table}[H]
     \[\begin{array}{|c|ccccccc|}
     \hline
     m\setminus n& 2&3&4&5&6&7&8\\\hline
     2&5& 11& 23& 47& 95& 191& 383\\
     3&11& 25& 53& 109& 221& 445& 893\\
     4&23& 53& 113& 233& 473& 953& 1913\\
     5&47& 109& 233& 481& 977& 1969& 3953\\
     6& 95& 221& 473& 977& 1985& 4001& 8033\\
     7&191&445& 953& 1969& 4001& 8065& 16193\\
     8&383& 893& 1913& 3953& 8033& 16193& 32513\\\hline
     \end{array}
     \]
     \caption{First values of $\sc_{\ostar}(m,n)$ for the type O.\label{table-scOvalues}}
     \end{table}
     \begin{figure}[h] \centerline{
     \begin{tikzpicture}[scale=0.4]
\node[fill=red,opacity=0.3,scale=1.5] at (1.5,0.5) {$ $}; 
\node[fill=red,opacity=0.3,scale=1.5] at (0.5,1.5) {$ $};
\node[fill=red,opacity=0.3,scale=1.5] at (0.5,0.5) {$ $}; 
\node at (0.5,1.5) {$\times$};
    \draw[step=1.0,black, thin] (0,0) grid (2,2); 
     \end{tikzpicture}
     \begin{tikzpicture}[scale=0.4]
\node[fill=red,opacity=0.3,scale=1.5] at (1.5,0.5) {$ $}; 
\node[fill=red,opacity=0.3,scale=1.5] at (0.5,1.5) {$ $};
\node[fill=red,opacity=0.3,scale=1.5] at (0.5,0.5) {$ $}; 
\node at (1.5,1.5) {$\times$};
    \draw[step=1.0,black, thin] (0,0) grid (2,2); 
     \end{tikzpicture}
     \begin{tikzpicture}[scale=0.4]
\node[fill=red,opacity=0.3,scale=1.5] at (1.5,0.5) {$ $}; 
\node[fill=red,opacity=0.3,scale=1.5] at (0.5,1.5) {$ $};
\node[fill=red,opacity=0.3,scale=1.5] at (0.5,0.5) {$ $}; 
\node at (0.5,1.5) {$\times$};
\node at (0.5,0.5) {$\times$};
    \draw[step=1.0,black, thin] (0,0) grid (2,2); 
     \end{tikzpicture}
     \begin{tikzpicture}[scale=0.4]
\node[fill=red,opacity=0.3,scale=1.5] at (1.5,0.5) {$ $}; 
\node[fill=red,opacity=0.3,scale=1.5] at (0.5,1.5) {$ $};
\node[fill=red,opacity=0.3,scale=1.5] at (0.5,0.5) {$ $}; 
\node at (0.5,1.5) {$\times$};
\node at (1.5,1.5) {$\times$};
    \draw[step=1.0,black, thin] (0,0) grid (2,2); 
     \end{tikzpicture}
     \begin{tikzpicture}[scale=0.4]
\node[fill=red,opacity=0.3,scale=1.5] at (1.5,0.5) {$ $}; 
\node[fill=red,opacity=0.3,scale=1.5] at (0.5,1.5) {$ $};
\node[fill=red,opacity=0.3,scale=1.5] at (0.5,0.5) {$ $}; 
\node at (0.5,1.5) {$\times$};
\node at (0.5,0.5) {$\times$};
\node at (1.5,0.5) {$\times$};
\node at (1.5,1.5) {$\times$};
    \draw[step=1.0,black, thin] (0,0) grid (2,2); 
     \end{tikzpicture}
     }
     \caption{The $5$ saturated valid $2\times 2$-tableaux for the type O, $F_1=\{1\}$, $F_2=\{0\}$.\label{fig-exO}}
     \end{figure}
     \end{example}
\subsubsection{Type X}\label{sec-witnTypeX}
We assume that $m,n\geq 2$.
From (\ref{SVxor(00)}) and (\ref{SVxornot(00)}) the maximal value of $\#\LSV_\oplus^{F_1,F_2}$ is reached when $(0,0)$ is in the final zone. We now show that the state complexity of star of xor is reached when the size of the final zone is minimal, \textit{i.e.}  final sets are both minimal ($\#F_1=1$ and $\#F_2=1$) or both maximal ($\#F_1=m-1$ and $\#F_2=n-1$). Equality (\ref{SVxor(00)}) implies that we have to show that $\alpha_{p,q}\alpha_{m-p,n-q}\leq \alpha_{1,1}\alpha_{m-1,n-1}$ for any $1\leq p\leq m-1$ and $1\leq q\leq n-1$. 
Rather than proving this inequality directly, which is quite difficult, we will instead examine the combinatorics of the objects that are counted.
 The number   $\alpha_{p,q}\alpha_{m-p,n-q}$ counts the $m\times n$ non final locally saturated tableaux having a final zone  $F=(F_1\times\IntEnt n)\oplus(\IntEnt m\times F_2)$ with $\#F_1=p$ and $\#F_2=q$. Without loss of generality, we assume that $F_1=\IntEnt p$ and $F_2=\IntEnt n\setminus \IntEnt{n-q}$ and so $F=(\IntEnt{m-p}\times\IntEnt q)\cup \left(\{m-p,m-p+1,\dots,m-1\}\times\{q,q+1,\dots,n-1\}\right)$. An illustration is give in Figure \ref{fig-zoneF}.
\begin{figure}[H]
      \centerline{
  \begin{tikzpicture}[scale=0.4]
               \foreach \j in {4,...,9}  { \foreach \i in{0,...,5} {
        \node[scale=1.5,fill=red,,opacity=0.3] at (\j+0.5,\i+0.5) {$ $};
        }
        }
\foreach \j in {0,...,3} { \foreach \i in{6,...,11} {
        \node[scale=1.5,fill=red,,opacity=0.3] at (\j+0.5,\i+0.5) {$ $};
        }
        }       
       \node at (-1,11.5) {$0$};
       \node at (-1,10.5) {$1$};
       \node at (-1,9.5) {$\vdots$};
        \node at (-1,0.5) {$m-1$};
        \node at (-2,6.5) {$m-p-1$};
        \node at (-1.5,5.5) {$m-p$};
        \node at (-2,4.5) {$m-p+1$};
        \node at (-1,3.5) {$\vdots$};
        \node at (0.5,12.5){$0$};
        \node at (1.5,12.5){$1$};
        \node at (4.5,12.5){$q$};
        \node at (5.7,12.5){$q+1$};
        \node at (7.3,12.5) {$\cdots$};
        \node at (2.2,12.5){$\cdots$};
        \node at (9.5,12.5){$n-1$};
                \draw[black, thin] (0,0) rectangle  (10,12); 
              \draw[black, thin] (0,0) rectangle  (4,12);
              \draw[black, thin] (0,0) rectangle  (10,6);
              \end{tikzpicture}
              }
     \caption{Illustration of the zone $F$ (in red in the picture).\label{fig-zoneF}}
     \end{figure}
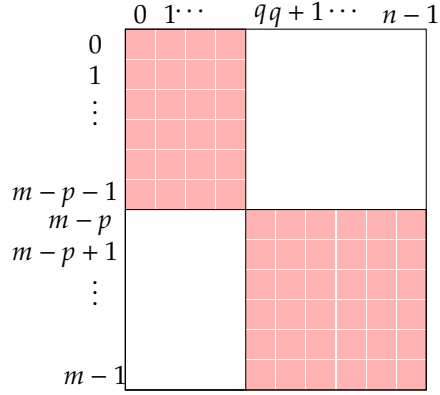
 Let $\LST_{m,n}(p,q)$  be the set of these tableaux. In other words, the set $\LST_{m,n}(p,q)$ is the set of locally saturated $m\times n$-tableaux  $T$ satisfying $T\cap F=\emptyset$. 

Our proof is constructive and consists in exhibiting a map $\phi:\LST_{m,n}(p,q)\longrightarrow \LST_{m,n}(1,1)$ and proving it is an injection (see an illustration in Fig.\ref{fig-mapphi}). 

     \begin{figure}[H]
      \centerline{
  \begin{tikzpicture}[scale=0.4]
          ;
  =
               \foreach \j in {3,...,5} { \foreach \i in{0,...,2} {
        \node[scale=1.5,fill=red,,opacity=0.3] at (\j+0.5,\i+0.5) {$ $};
        }
        }
\foreach \j in {0,...,2} { \foreach \i in{3,...,5} {
        \node[scale=1.5,fill=red,,opacity=0.3] at (\j+0.5,\i+0.5) {$ $};
        }
        }
       \node at (1.5,4.5) {\Large$\emptyset$};
       \node at (4.5,1.5) {\Large$\emptyset$};
       \node at (1.5,1.5) {$T_1$};
       \node at (4.5,4.5) {$T_2$};
        \node at (7.5,3) {\Large$\longrightarrow$};
        \node at (7.5,4) {\Large$\phi$};
              \draw[step=3.0,black, thin] (0,0) grid (6,6);     \end{tikzpicture}
              \begin{tikzpicture}[scale=0.4]
          ;
  =
                \foreach \i in{1,...,5} {
        \node[scale=1.5,fill=red,,opacity=0.3] at (0.5,\i+0.5) {$ $};
        \node[scale=1.5,fill=red,,opacity=0.3] at (\i+0.5,0.5) {$ $};
        }        
       \node at (0.5,3.5) {\Large$\emptyset$};
       \node at (3.5,0.5) {\Large$\emptyset$};
       \node at (0.5,0.5) {$T_1'$};
       \node at (3.5,3.5) {$T_2'$};
              \draw[black, thin] (0,0) rectangle  (6,6); 
              \draw[black, thin] (0,1) rectangle  (1,6);
              \draw[black, thin] (1,0) rectangle  (6,1); \end{tikzpicture}}
          \caption{Illustration of the map $\phi$ \label{fig-mapphi}}
          \end{figure}
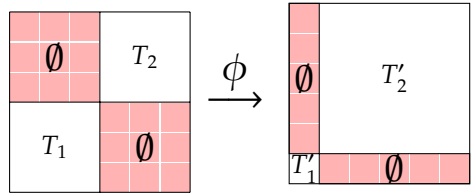
          Before describing $\phi$, we need to introduce some tools on locally saturated tableaux.\\
          \begin{definition}\label{def-rowcol}
          For any tableau $T$,  the set of  indices of the crosses belonging to the $i^{th}$ row (resp. $j^{th}$ column) is denoted by
          \[
          \row_i(T)=\{j\mid (i,j)\in T\}\mbox{ (resp. } \col_j(T)=\{i\mid (i,j)\in T\}\mbox{ ).}
          \]
          Thus defined, the $i^{th}$ row (resp. $j^{th}$ column) of $T$ is the set $\{i\}\times\row_i(T)$ (resp. $\col_j(T)\times\{j\}$).
          \end{definition}
          The following proposition is a reformulation of a result proved in \cite{CLMP15} and initialy stated in terms of words over an alphabet whose letters are indexed by subsets of $\IntEnt n$.
          \begin{proposition}\label{prop-caractsat}
            The three following assertion are equivalent
            \begin{enumerate}
                \item The tableau $T$ is locally saturated.
                \item For any $i_1,i_2\in\IntEnt m$ either $\row_{i_1}(T)=\row_{i_2}(T)$ or $\row_{i_1}(T)\cap\row_{i_2}(T)=\emptyset$.
                \item For any $j_1,j_2\in\IntEnt n$ either $\col_{j_1}(T)=\col_{j_2}(T)$ or $\col_{j_1}(T)\cap\col_{j_2}(T)=\emptyset$.
            \end{enumerate}
          \end{proposition}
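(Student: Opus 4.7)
The plan is to prove $(1)\Leftrightarrow(2)$ by translating the combinatorial condition ``$T$ avoids the four L-shaped $2{\times}2$ patterns'' into a statement about pairs of rows, then to deduce $(1)\Leftrightarrow(3)$ from the row/column symmetry of the list of forbidden patterns. Recall that, by the discussion at the beginning of Section \ref{sec-witnTypeX} (together with Section \ref{2x2case} and Proposition \ref{prop-sateqlocsat}), $T$ is locally saturated if and only if no restriction $T|_I$ with $I=\{i_1,i_2\}\times\{j_1,j_2\}$ is one of the four L-shaped $2{\times}2$ tableaux. Each L-shape is characterized by having exactly three of its four cells in $T$.

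First I would fix two rows $i_1\neq i_2$ and observe the following equivalence: there exists $\{j_1,j_2\}$ such that $T|_{\{i_1,i_2\}\times\{j_1,j_2\}}$ is an L-shape if and only if $\row_{i_1}(T)\cap\row_{i_2}(T)\neq\emptyset$ and $\row_{i_1}(T)\neq\row_{i_2}(T)$. Indeed, if such $\{j_1,j_2\}$ exists then three of the four cells belong to $T$ and one, say $(i_a,j_b)$, does not; then $j_b\in\row_{i_{3-a}}(T)\setminus\row_{i_a}(T)$, while the column of the two crosses in the same row provides a common index in $\row_{i_1}(T)\cap\row_{i_2}(T)$. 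Conversely, if $j_1\in\row_{i_1}(T)\cap\row_{i_2}(T)$ and $j_2$ lies in the symmetric difference $\row_{i_1}(T)\vartriangle\row_{i_2}(T)$, then $T|_{\{i_1,i_2\}\times\{j_1,j_2\}}$ contains exactly three crosses and is thus an L-shape.

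Quantifying this equivalence over all pairs of rows immediately yields $(1)\Leftrightarrow(2)$: $T$ is locally saturated iff no pair of rows produces an L-shape iff for every $i_1,i_2$, the sets $\row_{i_1}(T)$ and $\row_{i_2}(T)$ are either equal or disjoint. The equivalence $(1)\Leftrightarrow(3)$ follows by the same argument applied to columns, since the set of four forbidden L-patterns is stable under the transposition exchanging rows and columns, so the column-version of the key equivalence is obtained by copying the argument verbatim with $\row$ replaced by $\col$.

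The main obstacle is purely bookkeeping: one must carefully check, in each of the four L-configurations, that the missing cell is exactly the one witnessing the asymmetry between $\row_{i_1}(T)$ and $\row_{i_2}(T)$, and conversely that any ``intersect-but-not-equal'' pair of row-restrictions selects three cells of some $2{\times}2$ block. Since no new combinatorial tool is required beyond the characterization of locally saturated $2{\times}2$ tableaux in the type X case already established in Section \ref{2x2case}, no technical difficulty is anticipated.
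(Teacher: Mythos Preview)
The paper does not supply its own proof of this proposition: it simply says that the result is ``a reformulation of a result proved in \cite{CLMP15}''. Your proposal, by contrast, gives a direct self-contained argument from material already established in the paper, and it is correct. The key equivalence you isolate --- that a pair of rows $i_1\neq i_2$ contributes an L-shaped $2{\times}2$ restriction if and only if $\row_{i_1}(T)$ and $\row_{i_2}(T)$ intersect without being equal --- is exactly right, and quantifying over all row pairs (the case $i_1=i_2$ being vacuous) gives $(1)\Leftrightarrow(2)$. The symmetry argument for $(1)\Leftrightarrow(3)$ is equally valid.

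One small pointer fix: the characterisation of locally saturated tableaux (in the Type~X setting) as those avoiding the four L-shaped $2{\times}2$ patterns is stated at the beginning of the \emph{first} Type~X subsubsection (on counting saturated valid tableaux), not at the beginning of Section~\ref{sec-witnTypeX}. Your citation of Section~\ref{2x2case} and Proposition~\ref{prop-sateqlocsat} is what actually justifies it, so the argument stands; just adjust the cross-reference.
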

          \begin{definition}\label{def-Supp}
                    Let $T$ be a $m\times n$-tableau and $(i,j)\in\IntEnt m\times\IntEnt n$. We define $$\displaystyle\SuppR_i(T)=T\setminus \left(\{i\}\times\row_{i}(T)\right)\text{ and } \SuppC_j(T)=T\setminus \left(\col_j(T)\times \{j\}\right).$$
                    as the tableau $T$ in which  row $i$ (resp. column $j$) has been emptied of its crosses.
          \end{definition}
          \begin{definition}\label{def-Merge}
                    Let $T$ be a locally saturated $m\times n$-tableau and $i_1,i_2\in\IntEnt m$. We define the tableau $\MergeR_{i_1,i_2}(T)$ as the unique  $m\times n$-tableau $T'$ such that for any $i\in\IntEnt m$, $$\row_i(T')=\left\{\begin{array}{ll}\row_{i_1}(T)\cup \row_{i_2}(T)&\text{ if }\mathrm{row}_{i}(T)\in\{\mathrm{row}_{i_1}(T),\row_{i_2}(T)\}\\
                    \row_{i}(T)&\text{ otherwise}.\end{array}\right.$$
                    Symmetrically,  we define the tableau $\MergeC_{j_1,j_2}(T)$ as the unique $m\times n$-tableau $T'$ such that for any $j\in\IntEnt n$,
                    $$\col_j(T')=\left\{\begin{array}{ll}\col_{j_1}(T)\cup \col_{j_2}(T)&\text{ if }\col_{j}(T)\in\{\col_{j_1}(T),\col_{j_2}(T)\}\\
                    \col_{j}(T)&\text{ otherwise}.\end{array}\right.$$
                    
          \end{definition}
           \begin{example}    
          The following picture illustrates the action of $\MergeR_{0,1}$ on a tableau:\\
            \centerline{
  \begin{tikzpicture}[scale=0.4]
   \node at (-1,2.5) {$T=$};
      \node[scale=1.5] at (2.5,0.5) {$\times$};
      \node[scale=1.5,color=red] at (2.5,3.5) {$\times$};
      \node[scale=1.5] at (3.5,1.5) {$\times$};
      \node[scale=1.5] at (5.5,1.5) {$\times$};
      \node[scale=1.5] at (0.5,2.5) {$\times$};
      \node[scale=1.5] at (4.5,2.5) {$\times$};
      \node[scale=1.5,color=blue] at (0.5,4.5) {$\times$};
      \node[scale=1.5,color=blue] at (4.5,4.5) {$\times$};
              \draw[step=1.0,black, thin] (0,0) grid (6,5);
       \node at (8.5,2.5){\Huge$\longrightarrow$};
   \node at (8.7,3.5){$\MergeR_{0,1}$};
          \end{tikzpicture} 
          \begin{tikzpicture}[scale=0.4]
          \node at (-1,2.5) {$T'=$};
      \node[scale=1.5] at (2.5,0.5) {$\times$};
      \node[scale=1.5,color=red] at (2.5,3.5) {$\times$};
      \node[scale=1.5] at (3.5,1.5) {$\times$};
      \node[scale=1.5] at (5.5,1.5) {$\times$};
      \node[scale=1.5] at (0.5,2.5) {$\times$};
      \node[scale=1.5] at (4.5,2.5) {$\times$};
      \node[scale=1.5,color=green] at (2.5,2.5) {$\times$};
      \node[scale=1.5,color=green] at (2.5,4.5) {$\times$};
      \node[scale=1.5,color=green] at (0.5,3.5) {$\times$};
      \node[scale=1.5,color=green] at (4.5,3.5) {$\times$};
      \node[scale=1.5,color=green] at (0.5,0.5) {$\times$};
      \node[scale=1.5,color=green] at (4.5,0.5) {$\times$};
      \node[scale=1.5,color=blue] at (0.5,4.5) {$\times$};
      \node[scale=1.5,color=blue] at (4.5,4.5) {$\times$};
              \draw[step=1.0,black, thin] (0,0) grid (6,5);  
          \end{tikzpicture}}
          In this example $\row_0(T)=\{0,4\}=\row_2(T)$, 
          $\row_1(T)=\{2\}=\row_4(T)$, and
          \[\row_0(T')=\row_1(T')=\row_2(T')=\row_4(T')=\{0,2,4\}.
          \]
          \end{example}
          \begin{lemma}\label{invarsat}
            Let $T$ be a locally saturated $m\times n$-tableau. The following assertions hold:
            \begin{enumerate}
            \item For any $i_1,i_2\in\IntEnt m$,  the tableau $\MergeR_{i_1,i_2}(T)$ is locally saturated.
            \item For any $j_1, j_2\in\IntEnt n$,  the tableau $\MergeC_{j_1,j_2}(T)$ is locally saturated.
            \item For any $i\in \IntEnt m$, the tableau $\SuppR_i(T)$ is locally saturated.
            \item For any $j\in \IntEnt n$, the tableau $\SuppC_j(T)$ is locally saturated.
            \end{enumerate}
          \end{lemma}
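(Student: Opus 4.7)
The plan rests on Proposition~\ref{prop-caractsat}: local saturation is equivalent to the row-characterisation (any two rows are either equal or pairwise disjoint) \emph{and} to the symmetric column-characterisation. Because these three assertions are equivalent, it suffices, for each operation, to verify one of the two conditions on the resulting tableau. My strategy is therefore to check the row condition after each row-operation ($\MergeR$ and $\SuppR$) and then invoke the obvious row/column duality to deduce items 2 and 4.

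Item 3 is almost immediate: if $T'=\SuppR_i(T)$, then $\row_i(T')=\emptyset$ and $\row_{i'}(T')=\row_{i'}(T)$ for $i'\neq i$; hence any pair of rows of $T'$ either is a pair of unchanged rows (inheriting the equal-or-disjoint property from $T$) or contains the empty row, which is trivially disjoint from anything. For item 1, let $T'=\MergeR_{i_1,i_2}(T)$. If $\row_{i_1}(T)=\row_{i_2}(T)$ then $T'=T$ and we are done; otherwise $\row_{i_1}(T)\cap\row_{i_2}(T)=\emptyset$ by local saturation of $T$. For any two indices $i,i'$, I would distinguish cases according to whether each of $\row_i(T)$ and $\row_{i'}(T)$ belongs to $\{\row_{i_1}(T),\row_{i_2}(T)\}$. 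When both do, the corresponding rows of $T'$ both equal $\row_{i_1}(T)\cup\row_{i_2}(T)$; when exactly one does, the other is a row of $T$ distinct from both $\row_{i_1}(T)$ and $\row_{i_2}(T)$, hence disjoint from each by Proposition~\ref{prop-caractsat} applied to $T$, and therefore disjoint from their union; when neither does, the equal-or-disjoint property is directly inherited from $T$.

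Items 2 and 4 are then obtained by the evident row/column duality: $\MergeC$ and $\SuppC$ act on columns exactly as $\MergeR$ and $\SuppR$ act on rows, and we verify the column-characterisation instead. The would-be main obstacle is that a row-operation non-trivially modifies which indices lie in each column, so one might fear a delicate bookkeeping to recheck the column condition after applying $\MergeR$. Proposition~\ref{prop-caractsat} dissolves this difficulty by letting us certify local saturation purely on the row side; without that equivalence the argument would require computing, for each column $j$ of $T'$, the set $\col_j(T')$ (which turns out to be $A_1\cup A_2$ when $j$ lies in $\row_{i_1}(T)\cup\row_{i_2}(T)$ and $\col_j(T)$ otherwise), and then checking pairwise equality-or-disjointness by cases. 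Here, all of that is bypassed.
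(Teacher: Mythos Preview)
Your proof is correct and follows exactly the approach the paper takes: the paper's proof is a two-line reference to Proposition~\ref{prop-caractsat}, stating that assertions 1 and 3 follow from its row characterisation and assertions 2 and 4 from its column characterisation. You have simply spelled out in full the case analysis that the paper leaves implicit, so there is no substantive difference.
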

          \begin{proof}
           Assertions 1 and 3 are obtained by using assertion 2 of Proposition \ref{prop-caractsat}. Assertions 2 and 4 are obtained by using assertion 3 of Proposition \ref{prop-caractsat}.
          \end{proof}
         
    Let $\phi:\LST_{m,n}(p,q)\rightarrow\LST_{m,n}(1,1)$ defined by
     \begin{enumerate}
         \item\label{phim1} If $\left(\{m-1\}\times \row_{m-1}(T)\right)\cup
         \left(\col_0(T)\times\{0\}\right)\subset \{(m-1,0)\}$ then we set $\phi(T)=T$.
         
         The set of  images by $\phi$ of the tableaux $T$ that satisfy the condition of this case are those that  have no  crosses in either  the final zone of $T$ or the final zone of $\phi(T)$.
         \item If $\left(\{m-1\}\times\row_{m-1}(T)\right)\cup
         \left(\col_0(T)\times\{0\}\right)\not\subset \{(m-1,0)\}$ then we have to consider the two following cases:
     \begin{enumerate}
         \item\label{phim2a}  If $(m-1,0)\not\in T$ then we have to consider two cases
         \begin{enumerate}
             \item \label{phim2ai} If there exists $(i,j)\in \IntEnt{m-p-1}\times\{q,\dots,n-1\}$ such that $(i,j)\not \in T$ then
             we choose $(i,j)$ minimal for the lexicographic order. We  set \begin{equation}\label{eq-phi2ai}\phi(T)=\SuppR_{m-1}\left(\MergeR_{i,m-1}\left(\SuppC_0\left(\MergeC_{0,j}(T)\right)\right)\right).\end{equation}
             From Lemma \ref{invarsat}, the tableau $\phi(T)$ is locally saturated and the use of the functions $\SuppR_{m-1}$ and $\SuppC_0$  implies that $\phi(T)$ belongs to $\LST_{m,n}(1,1)$ (see an example in Fig. \ref{fig-mapphiex1}).
             
             The set of  images by $\phi$ of the tableaux $T$ that satisfy the condition of this case are those that do not have a cross  at $(m-1,0)$, have at least a cross in the final zone of $T$ and  that do not contain $\IntEnt{m-p-1}\times\{q,\dots,n-1\}$  as a subset.
               \begin{figure}[H]
      \centerline{
  \begin{tikzpicture}[scale=0.4]
          ;
  =
               \foreach \j in {4,...,9} { \foreach \i in{0,...,5} {
        \node[scale=1.5,fill=red,,opacity=0.3] at (\j+0.5,\i+0.5) {$ $};
        }
        }
\foreach \j in {0,...,3} { \foreach \i in{6,...,11} {
        \node[scale=1.5,fill=red,,opacity=0.3] at (\j+0.5,\i+0.5) {$ $};
        }
        }        
       \node at (-1,11.5) {$0$};
       \node at (-1,10.5) {$1$};
       \node at (-1,9.5) {$\vdots$};
       \node at (-1.5,5.5) {$m-p$};
       \node at (-2,4.5) {$m-p+1$};
       \node at (-1,3.5) {$\vdots$};
        \node at (-1,0.5) {$m-1$};
        \node at (0.5,12.5){$0$};
        \node at (1.5,12.5){$1$};
        \node at (4.5,12.5) {$q$};
        \node at (5.8,12.5) {$q+1$};
        \node at (7.5,12.5) {$\cdots$};
        \node at (2.5,12.5){$\cdots$};
        \node at (9.5,12.5){$n-1$};
       \node at (2,9.5) {\Large$\emptyset$};
       \node at (7,3) {\Large$\emptyset$};
       \node at (2.5,0.5) {$\color{blue}\times$};
       \node at (0.5,2.5) {$\color{red}\times$};
       \node at (3.5,2.5) {$\times$};
       \node at (1.5,3.5) {$\times$};
       \node at (2.5,4.5) {$\times$};
       \node at (0.5,5.5) {$\color{red}\times$};
       \node at (3.5,5.5) {$\times$};
       \node at (8.5,6.5) {$\times$};
        \node at (8.5,9.5) {$\times$};
       \node at (9.5,7.5) {$\times$};
       \node at (6.5,7.5) {$\times$};
       \node at (9.5,11.5) {$\times$};
       \node at (6.5,11.5) {$\times$};
       \node at (5.5,8.5) {$\times$};
       \node at (7.5,8.5) {$\times$};
       \node at (5.5,10.5) {$\times$};
       \node at (7.5,10.5) {$\times$};
             \node at (4.5,7.5) {$\times$};
       \node at (4.5,11.5) {$\times$};
        \node at (12.5,6) {\Large$\longrightarrow$};
        \node at (12.5,7) {\Large$\phi$};
                \draw[black, thin] (0,0) rectangle  (10,12); 
              \draw[black, thin] (0,0) rectangle  (4,12);
              \draw[black, thin] (0,0) rectangle  (10,6);
                \draw[black,dotted] (1,0) -- (1,12);
              \draw[black,dotted] (0,1) -- (10,1);
               \node[color=green] at (5.5,11.5) {\Large$\bullet$};
              \end{tikzpicture}
              \begin{tikzpicture}[scale=0.4]
          ;
  =
    \node[color=green] at (5.5,11.5) {\Large$\bullet$};
               \foreach \j in {1,...,9} {
        \node[scale=1.5,fill=red,,opacity=0.3] at (\j+0.5,0.5) {$ $};
        }
 \foreach \i in{1,...,11} {
        \node[scale=1.5,fill=red,,opacity=0.3] at (0.5,\i+0.5) {$ $};
        }
         \draw[color=blue] (0,11.5)--(10,11.5);
         \draw[color=blue,opacity=0.3] (0,7.5)--(10,7.5);
         \draw[color=blue,opacity=0.3] (0,4.5)--(10,4.5);
         \draw[color=red] (5.5,0)--(5.5,12);
         \draw[color=red,opacity=0.3] (7.5,0)--(7.5,12);
           \draw[color=red,opacity=0.3] (3.5,0)--(3.5,12);
         \node[scale=1] at (5.5,12.5) {$j$};
         \node[scale=1] at (-0.5,11.5) {$i$};
       \node at (0.5,6.5) {\Large$\emptyset$};
       \node at (5.5,0.5) {\Large$\emptyset$};
       \node at (2.5,11.5) {$\color{blue}\times$};
       \node at (2.5,7.5) {$\color{blue}\times$};
       \node at (5.5,2.5) {$\color{red}\times$};
       \node at (7.5,2.5) {$\color{red}\times$};
       \node at (3.5,2.5) {$\times$};
       \node at (1.5,3.5) {$\times$};
       \node at (2.5,4.5) {$\times$};
       \node at (5.5,5.5) {$\color{red}\times$};
       \node at (7.5,5.5) {$\color{red}\times$};
       \node at (3.5,5.5) {$\times$};
       \node at (8.5,6.5) {$\times$};
        \node at (8.5,9.5) {$\times$};
       \node at (9.5,7.5) {$\times$};
       \node at (6.5,7.5) {$\times$};
       \node at (9.5,11.5) {$\times$};
       \node at (6.5,11.5) {$\times$};
       \node at (5.5,8.5) {$\times$};
       \node at (7.5,8.5) {$\times$};
       \node at (5.5,10.5) {$\times$};
       \node at (7.5,10.5) {$\times$};
        \node[color=blue] at (9.5,4.5) {$\times$}; 
        \node[color=blue] at (6.5,4.5) {$\times$}; 
       \node[color=blue] at (4.5,4.5) {$\times$}; 
       \node[color=red] at (3.5,10.5) {$\times$};
       \node[color=red] at (3.5,8.5) {$\times$};
       \node at (4.5,7.5) {$\times$};
       \node at (4.5,11.5) {$\times$};
                \draw[black, thin] (0,0) rectangle  (10,12); 
              \draw[black, thin] (0,0) rectangle  (1,12);
              \draw[black, thin] (0,0) rectangle  (10,1);
              \draw[black,dotted] (4,0) -- (4,12);
              \draw[black,dotted] (0,6) -- (10,6);
              \end{tikzpicture}
             }
             The green dot corresponds to the point $(i,j)$. The blue lines symbolize the operation $\MergeR_{i,m-1}$. The red lines symbolize the operation $\MergeC_{0,j}$.
          \caption{Computation of $\phi$ in the case \ref{phim2a}i \label{fig-mapphiex1}.}
          \end{figure}
             \item\label{phim2aii} If $\IntEnt{m-p-1}\times\{q,\dots,n-1\}\subset T$ then the tableau $\widetilde T=T\setminus \IntEnt{m-p-1}\times\{q,\cdots,n-1\}$ is locally saturated. From Lemma  \ref{invarsat}, the tableau \begin{equation}\label{eq-phi2aii}\widetilde T'=\SuppR_{m-1}\left(\MergeR_{0,m-1}\left(\SuppC_0\left(\MergeC_{0,q}\left(\widetilde T \right)\right)\right)\right)\end{equation} is also locally saturated and belongs to $\LST_{m,n}(1,1)$. We set $\phi(T)=\widetilde T'\cup \{(m-1,0)\}$.
             Since $\widetilde T'\in \LST_{m,n}(1,1)$, we have also  $\phi(T)\in \LST_{m,n}(1,1)$ (see Figure \ref{fig-mapphiex2} for an example).\\
             
              The set of  images by $\phi$ of the tableaux $T$ that satisfy the condition of this case are those that  have a cross  in $(m-1,0)$, have at least a cross in the final zone of $T$ and  that  have no cross in the $\IntEnt{m-p-1}\times\{q,\dots,n-1\}$ zone.
               \begin{figure}[H]
      \centerline{
  \begin{tikzpicture}[scale=0.4]
               \foreach \j in {4,...,9} { \foreach \i in{0,...,5} {
        \node[scale=1.5,fill=red,,opacity=0.3] at (\j+0.5,\i+0.5) {$ $};
        }
        }
\foreach \j in {0,...,3} { \foreach \i in{6,...,11} {
        \node[scale=1.5,fill=red,,opacity=0.3] at (\j+0.5,\i+0.5) {$ $};
        }
        }
  \foreach \j in {4,...,9} { \foreach \i in{6,...,11} {
        \node at (\j+0.5,\i+0.5) {$\times$};
        }
        }             
       \node at (-1,11.5) {$0$};
       \node at (-1,10.5) {$1$};
       \node at (-1,9.5) {$\vdots$};
        \node at (-1,0.5) {$m-1$};
        \node at (0.5,12.5){$0$};
        \node at (1.5,12.5){$1$};
        \node at (2.5,12.5){$\cdots$};
        \node at (9.5,12.5){$n-1$};
       \node at (2,9.5) {\Large$\emptyset$};
       \node at (7,3) {\Large$\emptyset$};
       \node at (2.5,0.5) {$\color{blue}\times$};
       \node at (0.5,2.5) {$\color{red}\times$};
       \node at (3.5,2.5) {$\times$};
       \node at (1.5,3.5) {$\times$};
       \node at (2.5,4.5) {$\times$};
       \node at (0.5,5.5) {$\color{red}\times$};
       \node at (3.5,5.5) {$\times$};
             \node at (4.5,7.5) {$\times$};
       \node at (4.5,11.5) {$\times$};
        \node at (12.5,6) {\Large$\longrightarrow$};
        \node at (12.5,7) {\Large$\phi$};
                \draw[black, thin] (0,0) rectangle  (10,12); 
              \draw[black, thin] (0,0) rectangle  (4,12);
              \draw[black, thin] (0,0) rectangle  (10,6);
                \draw[black,dotted] (1,0) -- (1,12);
              \draw[black,dotted] (0,1) -- (10,1);
              \end{tikzpicture}
              \begin{tikzpicture}[scale=0.4]
          ;
  =
               \foreach \j in {1,...,9} {
        \node[scale=1.5,fill=red,,opacity=0.3] at (\j+0.5,0.5) {$ $};
        }
 \foreach \i in{1,...,11} {
        \node[scale=1.5,fill=red,,opacity=0.3] at (0.5,\i+0.5) {$ $};
        }
       \node at (0.5,6.5) {\Large$\emptyset$};
       \node at (3.5,2.5) {$\times$};
       \node at (1.5,3.5) {$\times$};
       \node at (2.5,4.5) {$\times$};
       \node at (3.5,5.5) {$\times$};
                \draw[black, thin] (0,0) rectangle  (10,12); 
              \draw[black, thin] (0,0) rectangle  (1,12);
              \draw[black, thin] (0,0) rectangle  (10,1);
              \draw[black,dotted] (4,0) -- (4,12);
              \draw[black,dotted] (0,6) -- (10,6);
                \foreach \j in{6,...,11}{ \node[color=blue] at (2.5,\j+0.5) {$\times$};
                \draw[color=blue,opacity=0.3] (0,\j+0.5) -- (10,\j+0.5);}
                \foreach \i in{4,...,9} { \node[color=red] at (\i+0.5,2.5) {$\times$};
                 \node[color=red] at (\i+0.5,5.5) {$\times$};
                 \draw[color=red,opacity=0.3] (\i+0.5,0) -- (\i+.5,12);}
                 \node[color=red] at (4.5,5.5) {$\times$};
                 \node[color=green] at (0.5,0.5) {\Large$\times$};
                 \draw[color=blue,opacity=0.3] (0,11.5) -- (10,11.5);
                 \draw[color=blue,opacity=0.3] (0,4.5) -- (10,4.5);
                 \draw[color=red,opacity=0.3] (4.5,0) -- (4.5,12);
                 \draw[color=red,opacity=0.3] (3.5,0) -- (3.5,12);
                 \node at (4.5,12.7) {$q$};
                 \node at (5,0.5) {$\emptyset$};
              \end{tikzpicture}
             }
             The green cross corresponds to the added cross $(m-1,0)$. The blue lines symbolize the operation $\MergeR_{0,m-1}$. The red lines symbolize the operation $\MergeC_{0,q}$.
          \caption{Computation of $\phi$ in the case \ref{phim2a}ii \label{fig-mapphiex2}.}
          \end{figure}
         \end{enumerate}
         \item\label{phim2b} If $(m-1,0)\in T$ then we have to consider two cases
         \begin{enumerate}
             \item  \label{phim2bi} If there exists $(i,j)\in \left(\IntEnt{m-p-1}\times\{q,\dots,n-1\}\right)\cap T$ then we consider such a pair $(i,j)$ minimal for the lexicographic order.  From Lemma \ref{invarsat}, the tableau \begin{equation}\label{eq-phi2bi}\widetilde T=\SuppR_{m-1}\left(\MergeR_{i,m-1}\left(\SuppC_0\left(\MergeC_{0,j}(T)\right)\right)\right)\end{equation} belongs to $\LST_{m,n}(1,1)$.
              Hence, we set $$\phi(T)=\widetilde T\cup\{(m-1,0)\}\in \LST_{m,n}(1,1).$$ 
              See an example in Figure \ref{fig-mapphiex3}.

  The set of  images by $\phi$ of the tableaux $T$ that satisfy the condition of this case are those that  have a cross  at $(m-1,0)$, have at least a cross in the final zone of $T$ and  at least a cross  in the $\IntEnt{m-p-1}\times\{q,\dots,n-1\}$ zone.
               \begin{figure}[H]
      \centerline{
  \begin{tikzpicture}[scale=0.4]
          ;
  =
               \foreach \j in {4,...,9} { \foreach \i in{0,...,5} {
        \node[scale=1.5,fill=red,,opacity=0.3] at (\j+0.5,\i+0.5) {$ $};
        }
        }
\foreach \j in {0,...,3} { \foreach \i in{6,...,11} {
        \node[scale=1.5,fill=red,,opacity=0.3] at (\j+0.5,\i+0.5) {$ $};
        }
        }        
       \node at (-1,11.5) {$0$};
       \node at (-1,10.5) {$1$};
       \node at (-1,9.5) {$\vdots$};
        \node at (-1,0.5) {$m-1$};
        \node at (0.5,12.5){$0$};
        \node at (1.5,12.5){$1$};
        \node at (2.5,12.5){$\cdots$};
        \node at (9.5,12.5){$n-1$};
       \node at (2,9.5) {\Large$\emptyset$};
       \node at (7,3) {\Large$\emptyset$};
       \node at (2.5,0.5) {$\color{blue}\times$};
       \node at (0.5,2.5) {$\color{red}\times$};
      \node at (0.5,5.5) {$\color{red}\times$};
       \node at (2.5,2.5) {$\times$};
       \node at (1.5,3.5) {$\times$};
     \node at (0.5,0.5) {$\times$};
       \node at (2.5,5.5) {$\times$};
       \node at (8.5,6.5) {$\times$};
        \node at (8.5,9.5) {$\times$};
       \node at (9.5,7.5) {$\times$};
       \node at (6.5,7.5) {$\times$};
       \node at (5.5,8.5) {$\times$};
       \node at (7.5,8.5) {$\times$};
       \node[color=green] at (5.5,10.5) {\Large$\otimes$};
       \node at (5.5,10.5) {$\times$};
       \node at (7.5,10.5) {$\times$};
             \node at (4.5,7.5) {$\times$};
        \node at (12.5,6) {\Large$\longrightarrow$};
        \node at (12.5,7) {\Large$\phi$};
                \draw[black, thin] (0,0) rectangle  (10,12); 
              \draw[black, thin] (0,0) rectangle  (4,12);
              \draw[black, thin] (0,0) rectangle  (10,6);
                \draw[black,dotted] (1,0) -- (1,12);
              \draw[black,dotted] (0,1) -- (10,1);
              \end{tikzpicture}
              \begin{tikzpicture}[scale=0.4]
               \foreach \j in {1,...,9} {
        \node[scale=1.5,fill=red,,opacity=0.3] at (\j+0.5,0.5) {$ $};
        }
 \foreach \i in{1,...,11} {
        \node[scale=1.5,fill=red,,opacity=0.3] at (0.5,\i+0.5) {$ $};
        }
       \node at (0.5,0.5) {$\times$};
           \node[color=green] at (5.5,10.5) {\Large$\otimes$};
       \node at (5.5,10.5) {$\times$};
         \draw[color=blue] (0,10.5)--(10,10.5);
         \draw[color=blue,opacity=0.3] (0,8.5)--(10,8.5);
         \draw[color=blue,opacity=0.3] (0,5.5)--(10,5.5);
         \draw[color=blue,opacity=0.3] (0,2.5)--(10,2.5);
         \draw[color=red] (5.5,0)--(5.5,12);
         \draw[color=red,opacity=0.3] (7.5,0)--(7.5,12);
           \draw[color=red,opacity=0.3] (2.5,0)--(2.5,12);
           \node[color=blue] at (2.5,8.5) {$\times$};
           \node[color=blue] at (2.5,10.5) {$\times$};
           \node[color=red] at (5.5,5.5) {$\times$};
           \node[color=red] at (5.5,2.5) {$\times$};
            \node[color=red] at (7.5,5.5) {$\times$};
           \node[color=red] at (7.5,2.5) {$\times$};
         \node[scale=1] at (5.5,12.5) {$j$};
         \node[scale=1] at (-0.5,10.5) {$i$};
       \node at (0.5,6.5) {\Large$\emptyset$};
       \node at (5.5,0.5) {\Large$\emptyset$};
       \node at (2.5,2.5) {$\times$};
       \node at (1.5,3.5) {$\times$};
       \node at (2.5,5.5) {$\times$};
       \node at (8.5,6.5) {$\times$};
        \node at (8.5,9.5) {$\times$};
       \node at (9.5,7.5) {$\times$};
       \node at (6.5,7.5) {$\times$};
       \node at (5.5,8.5) {$\times$};
       \node at (7.5,8.5) {$\times$};
       \node at (5.5,10.5) {$\times$};
       \node at (7.5,10.5) {$\times$};
       \node at (4.5,7.5) {$\times$};
%
                \draw[black, thin] (0,0) rectangle  (10,12); 
              \draw[black, thin] (0,0) rectangle  (1,12);
              \draw[black, thin] (0,0) rectangle  (10,1);
              \draw[black,dotted] (4,0) -- (4,12);
              \draw[black,dotted] (0,6) -- (10,6);
              \end{tikzpicture}
             }
             The green circle corresponds to the point $(i,j)$. The blue lines symbolize the operation $\MergeR_{i,m-1}$. The red lines symbolize the operation $\MergeC_{0,j}$.
          \caption{Computation of $\phi$ in the case \ref{phim2b}i \label{fig-mapphiex3}.}
          \end{figure}
             \item \label{phim2bii} If $\left(\IntEnt{m-p-1}\times\{q,\cdots,n-1\}\right)\cap T=\emptyset$ then the tableau $\widetilde T=T\cup \IntEnt{m-p-1}\times\{q,\dots,n-1\}$ is locally saturated. We set
             \begin{equation}\label{eq-phi2bii}\phi(T)=\SuppR_{m-1}\left(\MergeR_{m-p,m-1}\left(\SuppC_0\left(\MergeC_{0,n-1}\left(\widetilde T \}\right)\right)\right)\right).\end{equation}
             From Lemma \ref{invarsat}, the tableau $\phi(T)$ belongs to $\LST_{m,n}(1,1)$ (see Fig. \ref{fig-mapphiex4} for an example).\\
             
             The set of  images by $\phi$ of the tableaux $T$ that satisfy the condition of this case are those that  have no cross  in $(m-1,0)$, have at least a cross in the final zone of $T$ and  contains the $\IntEnt{m-p-1}\times\{q,\dots,n-1\}$ zone as a subset.
                           \begin{figure}[H]
      \centerline{
  \begin{tikzpicture}[scale=0.4]
          ;
               \foreach \j in {4,...,9} { \foreach \i in{0,...,5} {
        \node[scale=1.5,fill=red,,opacity=0.3] at (\j+0.5,\i+0.5) {$ $};
        }
        }
\foreach \j in {0,...,3} { \foreach \i in{6,...,11} {
        \node[scale=1.5,fill=red,,opacity=0.3] at (\j+0.5,\i+0.5) {$ $};
        }
        }        
       \node at (-1,11.5) {$0$};
       \node at (-1,10.5) {$1$};
       \node at (-1,9.5) {$\vdots$};
        \node at (-1,0.5) {$m-1$};
        \node at (0.5,12.5){$0$};
        \node at (1.5,12.5){$1$};
        \node at (2.5,12.5){$\cdots$};
        \node at (9.5,12.5){$n-1$};
       \node at (2,9.5) {\Large$\emptyset$};
       \node at (7,3) {\Large$\emptyset$};
       \node at (2.5,0.5) {$\color{blue}\times$};
       \node at (0.5,2.5) {$\color{red}\times$};
      \node at (0.5,5.5) {$\color{red}\times$};
       \node at (2.5,2.5) {$\times$};
       \node at (1.5,3.5) {$\times$};
     \node at (0.5,0.5) {$\times$};
       \node at (2.5,5.5) {$\times$};
        \node at (12.5,6) {\Large$\longrightarrow$};
        \node at (12.5,7) {\Large$\phi$};
                \draw[black, thin] (0,0) rectangle  (10,12); 
              \draw[black, thin] (0,0) rectangle  (4,12);
              \draw[black, thin] (0,0) rectangle  (10,6);
                \draw[black,dotted] (1,0) -- (1,12);
              \draw[black,dotted] (0,1) -- (10,1);
              \end{tikzpicture}
              \begin{tikzpicture}[scale=0.4]
               \foreach \j in {1,...,9} {
        \node[scale=1.5,fill=red,,opacity=0.3] at (\j+0.5,0.5) {$ $};
        }
 \foreach \i in{1,...,11} {
        \node[scale=1.5,fill=red,,opacity=0.3] at (0.5,\i+0.5) {$ $};
        }
        \foreach \j in {4,...,9} {
            \foreach \i in {6,...,11} {
            \node at (0.5+\j,0.5+\i) {$\times$};
            }
        }
        \foreach \j in {4,...,9} {
        \node[color=red] at (\j+0.5,2.5) {$\times$};
        \node[color=red] at (0.5+\j,5.5) {$\times$};
        }
        \foreach \i in {6,...,11} {
        \node[color=blue] at (2.5,0.5+\i) {$\times$};
        }
           \node[color=blue] at (2.5,8.5) {$\times$};
           \node[color=blue] at (2.5,10.5) {$\times$};
           \node[color=red] at (5.5,5.5) {$\times$};
           \node[color=red] at (5.5,2.5) {$\times$};
            \node[color=red] at (7.5,5.5) {$\times$};
           \node[color=red] at (7.5,2.5) {$\times$};
       \node at (0.5,6.5) {\Large$\emptyset$};
       \node at (5.5,0.5) {\Large$\emptyset$};
       \node at (2.5,2.5) {$\times$};
       \node at (1.5,3.5) {$\times$};
       \node at (2.5,5.5) {$\times$};
       \node at (4.5,7.5) {$\times$};   
                \draw[black, thin] (0,0) rectangle  (10,12); 
              \draw[black, thin] (0,0) rectangle  (1,12);
              \draw[black, thin] (0,0) rectangle  (10,1);
              \draw[black,dotted] (4,0) -- (4,12);
              \draw[black,dotted] (0,6) -- (10,6);
              \end{tikzpicture}
             }
          \caption{Computation of $\phi$ in the case \ref{phim2b}ii \label{fig-mapphiex4}.}
          \end{figure}
         \end{enumerate}
     \end{enumerate}
     \end{enumerate}
     Figure \ref{table-zoneABC} shows that  sets of images by $\phi$ for each of the previous cases are pairwise disjoint. 
     \begin{figure}[H]
     \begin{minipage}{0.6\linewidth}
     $
     \begin{array}{|c|c|c|c|}\hline
     &E\cap\phi(T)&F\cap\phi(T)&G\cap\phi(T)\\\hline
     \mbox{case \ref{phim1}}&?&\emptyset&?\\\hline
     \mbox{case \ref{phim2a}i}&\emptyset&\neq\emptyset&\neq G\\\hline
     \mbox{case \ref{phim2a}ii}&E&\neq\emptyset&\emptyset\\\hline
     \mbox{case \ref{phim2b}i}&E&\neq\emptyset&\neq\emptyset\\\hline
     \mbox{case \ref{phim2b}ii}&\emptyset&\neq\emptyset& G\\\hline
     \end{array}
     $\\ \\
     \end{minipage}
    \begin{minipage}{0.4\linewidth}
            \begin{tikzpicture}[scale=0.4]
       
               \foreach \j in {1,...,9} {
        \node[scale=1.5,fill=red,,opacity=0.3] at (\j+0.5,0.5) {$ $};
        }
 \foreach \i in{1,...,11} {
        \node[scale=1.5,fill=red,,opacity=0.3] at (0.5,\i+0.5) {$ $};
        }
   \node at (7,9){\Huge $G$};
   \node at (2.5,9){\huge $F$};
   \node at (7,3.5){\huge $F$};
    \node at (0.5,0.5) {$E$};
       \node at (0.5,6.5) {\Large$\emptyset$};
       \node at (5.5,0.5) {\Large$\emptyset$};
                \draw[black, thin] (0,0) rectangle  (10,12); 
              \draw[black, thin] (0,0) rectangle  (1,12);
              \draw[black, thin] (0,0) rectangle  (10,1);
              \draw[black,dotted] (4,0) -- (4,12);
              \draw[black,dotted] (0,6) -- (10,6);
             
              \end{tikzpicture}

    \end{minipage}
      $E=\{(m-1,0)\}$,\\ $F=\left(\IntEnt{m-p}\times\IntEnt q\right)\cup\left(\{m-p,\dots,m-1\}\times\{q,\dots,n-1\}\right)$,\\ 
      $G=\IntEnt{m-p}\times \{q,\dots,n-1\}.$
      \caption{Configurations of three pairwise disjoint zones $E$, $F$, and $G$  in $\phi(T)$ with respect to cases \ref{phim1}, \ref{phim2a}i, \ref{phim2a}ii, 
     \ref{phim2b}i, and \ref{phim2b}ii. \label{table-zoneABC}}
     \end{figure}
     
     So it remains to check that there is a $\psi$ function that allows us to go back from $\phi(T)$ to $T$ in each case. Let $T'$ be any tableau in the image of $\phi$. Recall that $F$ is the final zone of $T$.
    
     The definition of $\psi$ is as follows:
     \begin{enumerate}
         \item We set $\psi(T')=T'$. 
         \item \begin{enumerate}
             \item \begin{enumerate}
                 \item Let $(i,j)$ be the minimal element for the lexicographic order in $\left(\IntEnt{m-p}\times \{q,\cdots,n-1\}\right)\setminus T'$.  We set \begin{equation}\label{eq-psi2ai}\psi(T')=\MergeC_{0,j}(\MergeR_{i,m-1}(T'))\setminus F.\end{equation}
                 \item We set
                 \begin{equation}\label{eq-psi2aii} \psi(T')=\left(\MergeC_{0,q}(\MergeR_{0,m-1}(T'))\cup \left(\IntEnt{m-p}\times\{q,\dots,n-1\}\right)\right)\setminus(F\cup \{(m-1,0)\}).\end{equation}
             \end{enumerate}
             \item \begin{enumerate}
                 \item Let $(i,j)$ be the minimal element for the lexicographic order in $\left(\IntEnt{m-p}\times \{q,\cdots,n-1\}\right)\cap T'$. 
                 We set \begin{equation}\label{eq-psi2bi}\psi(T')=\MergeC_{0,j}(\MergeR_{i,m-1}(T'))\setminus F.\end{equation}
                 \item We set
                 \begin{equation}\label{eq-psi2bii}\psi(T')=\left(\{(0,m-1)\}\cup (\MergeC_{0,q}\left(\MergeR_{0,m-1}(T')\right)\right)\setminus \left(F\cup \left(\IntEnt{m-p}\times\{q,\dots,n-1\}\right) \right)\end{equation}
             \end{enumerate}
         \end{enumerate}
     \end{enumerate}
     Consider also the zone 
     $D=\{(m-1,0)\}\cup\left((\IntEnt {m-1}\times(\IntEnt n\setminus \{0\})\setminus F\right)$ (see Figure \ref{fig-zoneD}).
     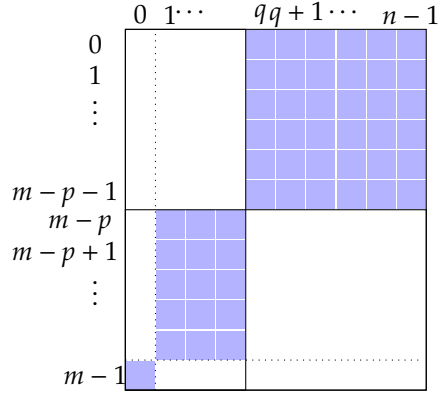
\begin{figure}[H]
      \centerline{
  \begin{tikzpicture}[scale=0.4]
               \foreach \j in {1,...,3} { \foreach \i in{1,...,5} {
        \node[scale=1.5,fill=blue,,opacity=0.3] at (\j+0.5,\i+0.5) {$ $};
        }
        }
\foreach \j in {4,...,9} { \foreach \i in{6,...,11} {
        \node[scale=1.5,fill=blue,,opacity=0.3] at (\j+0.5,\i+0.5) {$ $};
        }
        }
        \node[scale=1.5,fill=blue,opacity=0.3] at (0.5,0.5) {$ $};
       \node at (-1,11.5) {$0$};
       \node at (-1,10.5) {$1$};
       \node at (-1,9.5) {$\vdots$};
        \node at (-1,0.5) {$m-1$};
        \node at (-2,6.5) {$m-p-1$};
        \node at (-1.5,5.5) {$m-p$};
        \node at (-2,4.5) {$m-p+1$};
        \node at (-1,3.5) {$\vdots$};
        \node at (0.5,12.5){$0$};
        \node at (1.5,12.5){$1$};
        \node at (4.5,12.5){$q$};
        \node at (5.7,12.5){$q+1$};
        \node at (7.3,12.5) {$\cdots$};
        \node at (2.2,12.5){$\cdots$};
        \node at (9.5,12.5){$n-1$};
                \draw[black, thin] (0,0) rectangle  (10,12); 
              \draw[black, thin] (0,0) rectangle  (4,12);
              \draw[black, thin] (0,0) rectangle  (10,6);
                \draw[black,dotted] (1,0) -- (1,12);
              \draw[black,dotted] (0,1) -- (10,1);
              \end{tikzpicture}
              }
     \caption{Illustration of the zone $D$ (in blue in the picture).\label{fig-zoneD}}
     \end{figure}
     \begin{lemma}\label{lem-LSTD}
      For any $T\in\LST_{m,n}(p,q)$ we have
      \begin{equation}\label{eq-D}
      T\cap D=\psi(\phi(T))\cap D.
      \end{equation}
     \end{lemma}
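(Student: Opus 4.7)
My plan is to prove Lemma \ref{lem-LSTD} by a case-by-case analysis that mirrors the five-branch definition of $\phi$ (cases \ref{phim1}, 2a-i, 2a-ii, 2b-i, 2b-ii) with the corresponding branches of $\psi$. In each case I fix a cell $(i',j')\in D$ and verify that $(i',j')\in T$ if and only if $(i',j')\in\psi(\phi(T))$. Case \ref{phim1} is immediate since $\phi(T)=T$ and $\psi(T)=T$.

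For the other four cases, the argument rests on three observations. First, since $D$ excludes row $m-1$, column $0$ (apart from $(m-1,0)$), and the final zone $F$, the operations $\SuppR_{m-1}$ and $\SuppC_0$ appearing in $\phi$ do not affect values on $D\setminus\{(m-1,0)\}$, while the subtractions of $F$ and (where applicable) $\{(m-1,0)\}$ or $\IntEnt{m-p}\times\{q,\ldots,n-1\}$ in $\psi$ correctly eliminate the crosses that the merges introduce outside the original tableau. Second, the explicit inclusion or exclusion of $(m-1,0)$ in the branches of $\phi$ and $\psi$ matches whether $(m-1,0)\in T$ in the corresponding branch, so this single corner cell is handled directly. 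Third, for $(i',j')\in D$ with $i'\neq m-1$ and $j'\neq 0$, the composition $\psi\circ\phi$ preserves the value at $(i',j')$: after $\SuppR_{m-1}\circ\MergeR_{i,m-1}\circ\SuppC_0\circ\MergeC_{0,j}$ is applied, row $m-1$ of $\phi(T)$ is empty, so $\MergeR_{i,m-1}$ in $\psi$ restores row $i$ exactly to $\row_i(\SuppC_0(\MergeC_{0,j}(T)))$; then $\MergeC_{0,j}$ in $\psi$ restores column $j$'s content to $\col_0(T)\cup\col_j(T)$; finally $\setminus F$ isolates the original value. The local saturation of $T$, via Proposition \ref{prop-caractsat} and Lemma \ref{invarsat}, is essential here because it guarantees that all rows with a given content coincide, so the merges in $\phi$ propagate consistently through the equivalence classes.

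The main obstacle lies in cases 2a-ii and 2b-ii, where an entire rectangular zone $\IntEnt{m-p-1}\times\{q,\ldots,n-1\}$ is first removed from or added to $T$ before the merges are applied. In case 2b-ii, I must verify that $\MergeC_{0,n-1}\circ\MergeR_{m-p,m-1}$ applied to $\widetilde T=T\cup(\IntEnt{m-p-1}\times\{q,\ldots,n-1\})$ interacts correctly with the $\psi$-step: the explicit subtraction of $F\cup(\IntEnt{m-p}\times\{q,\ldots,n-1\})$ in \eqref{eq-psi2bii} must precisely remove the added zone while preserving the rest of $D$. The dual argument handles case 2a-ii via the explicit union in \eqref{eq-psi2aii}. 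In both sub-cases the verification reduces, thanks to the locality of the supp/merge operations and to the restriction to $D$, to checking a finite list of elementary row/column patterns, each of which follows from Proposition \ref{prop-caractsat}. Cases 2a-i and 2b-i are simpler variants where the pair $(i,j)$ is chosen by a lexicographic minimality condition; here one checks that the $\psi$-side recovers the same $(i,j)$ from $\phi(T)$ by the symmetric minimality condition, and then the general row/column argument of the preceding paragraph applies.
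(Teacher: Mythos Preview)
Your case-by-case structure matches the paper's, and Case~\ref{phim1} together with the explicit handling of the corner $(m-1,0)$ are fine. But observation~3 has a real gap. You write that ``$\MergeR_{i,m-1}$ in $\psi$ restores row $i$ exactly to $\row_i(\SuppC_0(\MergeC_{0,j}(T)))$'', treating the merge as if it touched only rows $i$ and $m-1$. By Definition~\ref{def-Merge}, however, $\MergeR_{i_1,i_2}$ replaces \emph{every} row whose content equals $\row_{i_1}$ or $\row_{i_2}$; since $\row_{m-1}(\phi(T))=\emptyset$, every empty row of $\phi(T)$ gets overwritten with $\row_i(\phi(T))$. The same phenomenon already occurs inside $\phi$: when $\col_0(T)=\emptyset$ (which is allowed in case~2(a)), the step $\MergeC_{0,j}$ fills every empty column of $T$ with $\col_j(T)$, and the crosses this creates in empty upper-right columns lie in $D$, not in $F$, so neither $\SuppC_0$, $\SuppR_{m-1}$, nor the final $\setminus F$ removes them. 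Your sketch never confronts this, and the appeal to Proposition~\ref{prop-caractsat} does not help, since local saturation says nothing about which rows or columns happen to be empty.

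The paper does not try to argue that the $\psi$-merges undo the $\phi$-merges. In cases 2(a)i and 2(b)i it argues instead that $\phi$ and $\psi$ each leave the $D$-zone invariant \emph{separately}: merging a lower-left column with an upper-right column (respectively row $m-1$ with an upper row) produces new crosses only in the two off-diagonal blocks, i.e.\ in $F$. In cases 2(a)ii and 2(b)ii the paper is much shorter than your outline: the case hypothesis pins down $T\cap D$ outright, and formulas (\ref{eq-psi2aii}) and (\ref{eq-psi2bii}) pin down $\psi(\phi(T))\cap D$ outright, so no merge-tracing is needed. If you want your route to go through, you must either rule out the empty-row/empty-column situations in each sub-case or argue directly, as the paper does, that every cross added by a merge lands in $F$ rather than in $D$.
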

     \begin{proof}
      In what follows, we check that Equality (\ref{eq-D}) holds for all  cases of the definition of $\phi$:
      \begin{enumerate}
          \item In that case, both $\phi$ and $\psi$ equal the identity. Hence, the result is trivial.
          \item \begin{enumerate}
              \item \begin{enumerate}
                  \item Since $T\cap F=\emptyset$, Formula (\ref{eq-phi2ai}) and  definitions of $\MergeC$ and $\MergeR$ imply that $T\cap D=\phi(T)\cap D$. Since $\phi(T)\cap ((\IntEnt{m}\times\{0\})\cup(\{m-1\}\times \IntEnt{n}))=\emptyset$, Equality \ref{eq-psi2ai}) and  definition of $\MergeC$ and $\MergeR$ imply that $\phi(T)\cap D=\psi(\phi(T))\cap D$. Hence, $T\cap D=\psi(\phi(T))\cap D$.
                  \item For any tableau $T$ satisfying  condition \ref{phim2aii}, we have $T\cap D=\IntEnt{m-p}\times \{q,\dots,n-1\}$. But from Formula (\ref{eq-psi2aii}), we have also 
                  $\phi(\psi(T))\cap D=\IntEnt{m-p}\times \{q,\dots,n-1\}$. Hence, $T\cap D=\psi(\phi(T))\cap D$.
              \end{enumerate}
              \item \begin{enumerate}
                  \item Since $T\cap F=\emptyset$, Formula (\ref{eq-phi2bi}) and  definition of $\MergeC$ and $\MergeR$ imply that $T\cap D=\phi(T)\cap D$. Since $\phi(T)\cap ((\IntEnt{m}\times\{0\})\cup(\{m-1\}\times \IntEnt{n}))=\emptyset$, Equality \ref{eq-psi2bi}) and  definition of $\MergeC$ and $\MergeR$ imply that $\phi(T)\cap D=\psi(\phi(T))\cap D$. Hence, $T\cap D=\psi(\phi(T))\cap D$.
                  \item For any tableau $T$ satisfying  condition \ref{phim2bii}, we have $T\cap D=\emptyset$. But from Formula (\ref{eq-psi2bii}), we have also,
                  $\phi(\psi(T))\cap D=\emptyset$. Hence, $T\cap D=\psi(\phi(T))\cap D$.
              \end{enumerate}
          \end{enumerate}
      \end{enumerate}
     \end{proof}
     We shall now prove that $\psi$ is the reciprocal function of $\phi$.
     \begin{proposition}\label{prop-phiinj}
     For any $T\in \LST_{m,n}(p,q)$, we have $\psi(\phi(T))=T$.
     \end{proposition}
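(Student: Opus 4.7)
The plan is to verify $\psi(\phi(T))=T$ case by case, matching each of the five cases defining $\phi$. First, I would use the classification of $\phi(\LST_{m,n}(p,q))$ captured by Figure \ref{table-zoneABC}: the five cases produce images whose intersections with the disjoint zones $E$, $F$, $G$ are pairwise distinguishable. This means that from $T'=\phi(T)$ alone, one can unambiguously tell which case of $\phi$ produced $T'$ and hence which branch of $\psi$ must be applied. In particular, the definition of $\psi$ is internally consistent with the definition of $\phi$, which is the crucial compatibility condition.

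Having fixed a case, I would decompose the claim into two parts: agreement of $T$ and $\psi(\phi(T))$ on the zone $D$, and agreement on the complement of $D$ in the non-final region of $T$ (which is essentially a piece of row $m-1$ together with a piece of column $0$, excluding the corner $(m-1,0)$). The first part is exactly Lemma \ref{lem-LSTD}. For the second part, observe that in cases \ref{phim2a}i--\ref{phim2b}ii the tableau $\phi(T)$ has been stripped of row $m-1$ and column $0$ (except possibly at the corner) via $\SuppR_{m-1}$ and $\SuppC_0$, and $\psi$ reintroduces crosses there through $\MergeR$ and $\MergeC$. Because $T$ is locally saturated, Proposition \ref{prop-caractsat} ensures that its row and column classes are either equal or disjoint, so the merges performed by $\phi$ and the merges performed by $\psi$ on $\phi(T)$ really do identify the same sets $\row_i(T)$ and $\col_j(T)$ after one removes the content of $F$ via the final $\setminus F$ in the formulas for $\psi$.

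The main obstacle is cases \ref{phim2a}i and \ref{phim2b}i, where both $\phi$ and $\psi$ select a lexicographically minimal pair $(i,j)$ in $\IntEnt{m-p}\times\{q,\dots,n-1\}$. I need the minimal pair chosen by $\psi$ on $\phi(T)$ to coincide with the pair used by $\phi$ on $T$. This requires tracking how $\MergeR_{i,m-1}$ and $\MergeC_{0,j}$ modify the restriction of the tableau to $\IntEnt{m-p}\times\{q,\dots,n-1\}$: the only cells changed in this rectangle are those in row $i$ or column $j$, and the changes swap presence/absence of crosses in a controlled way, preserving lex-minimality of the distinguished cell (a non-cross in case \ref{phim2a}i, a cross in case \ref{phim2b}i). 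Once this is established, the remaining verifications reduce to reading off the formulas~(\ref{eq-phi2ai})--(\ref{eq-phi2bii}) and~(\ref{eq-psi2ai})--(\ref{eq-psi2bii}) and noting that $\MergeR_{i,m-1}$ and $\SuppR_{m-1}$ are mutually inverse on the row-$m-1$ content produced by the construction, and similarly for $\MergeC_{0,j}$ and $\SuppC_0$, provided that the correct merge index $(i,j)$ is recovered.

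Assembling everything: in case \ref{phim1}, both maps act as the identity; in cases \ref{phim2a}ii and \ref{phim2b}ii there is no choice of $(i,j)$ and the rectangle $\IntEnt{m-p}\times\{q,\dots,n-1\}$ is either added or removed in a way that $\psi$ simply undoes; in cases \ref{phim2a}i and \ref{phim2b}i the argument above shows $(i,j)$ is recovered, after which the merge-suppress pair inverts correctly. Combined with Lemma \ref{lem-LSTD} to handle the zone $D$ and the fact that $T$ has no cross in $F$, this yields $\psi(\phi(T))=T$ on all of $\IntEnt m\times\IntEnt n$.
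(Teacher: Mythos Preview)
Your approach is essentially the same as the paper's: both reduce the claim to agreement on the zone $D$ via Lemma~\ref{lem-LSTD} and then check the remaining region $R=R_1\cup R_2$ (your ``piece of row $m-1$ together with a piece of column $0$'') case by case through the $\MergeC$/$\MergeR$ operations.

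One remark on your handling of the $(i,j)$-recovery in cases~\ref{phim2a}i and~\ref{phim2b}i. You argue by tracking which cells of the rectangle $G=\IntEnt{m-p}\times\{q,\dots,n-1\}$ are modified and claiming the changes ``swap presence/absence of crosses in a controlled way''. This is more delicate than you suggest, because $\MergeC_{0,j}$ and $\MergeR_{i,m-1}$ act on \emph{all} columns (resp.\ rows) equal to column~$0$ or~$j$ (resp.\ row~$i$ or~$m-1$), not just those two. The cleaner route---which the paper takes implicitly---is to observe that $G\subset D$, and that the proof of Lemma~\ref{lem-LSTD} actually establishes the stronger intermediate fact $T\cap D=\phi(T)\cap D$ in these two cases. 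Hence $T\cap G=\phi(T)\cap G$, so the lexicographically minimal element of $G\setminus T$ (resp.\ $G\cap T$) coincides with that of $G\setminus\phi(T)$ (resp.\ $G\cap\phi(T)$), and $\psi$ recovers the correct $(i,j)$ for free. With this observation your argument goes through without the cell-tracking detour.
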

     \begin{proof}
      From the definition of $\psi$, it is easy to check that $\psi(\phi(T))\cap F=\emptyset=T\cap F$. So using Lemma \ref{lem-LSTD}, it suffices to prove that 
      \[T\cap R=\psi(\phi(T))\cap R,\]
      where
      \[
      R=\IntEnt m\times\IntEnt n\setminus(D\cup F)=R_1\cup R_2.
      \]
      with $R_1=\{m-p,\dots,m-2\}\times \{0\}$ and $R_2=\{m-1\}\times\{1,\dots,p-1\}$. Hence, it suffices to prove that \begin{equation}\label{eq-R1}\psi(\phi(T))\cap R_1 =T\cap R_1,\end{equation} the other equality ($\psi(\phi(T))\cap R_2 =T\cap R_2$) being obtained symmetrically.
      In what follows, we check Equality (\ref{eq-R1}) for every cases occurring in the definition of $\phi$:
    
      \begin{enumerate}
      \item If $T$ matches with  case \ref{phim1} of  definition of $\phi$ then both $\phi$ and $\psi$ equal the identity. So the result is obvious.
      \item 
      \begin{enumerate}
      \item \begin{enumerate}
          \item Let $T\in \LST_{m,n}(p,q)$ be any tableau satisfying the condition of case \ref{phim2a}i of the definition of $\phi$. Let $(i,j)$ be the smallest pair of $\left(\IntEnt{m-p}\times \{q,\cdots,n-1\}\right)\setminus T$ for the lexicographic order.  From Equality (\ref{eq-phi2ai}) and the definition of $\MergeC$, we have $(i',0)\in T\cap R_1$ if and only if $(i',j)\in \phi(T)$ for any $i'\in \{m-p,\ldots, m-2\}$. Furthermore from Equality (\ref{eq-psi2ai}) and the definition of $\MergeC$, we have also for any $i'\in \{m-p,\ldots, m-2\}$, $(i',j)\in\phi(T)$ if and only if  $(i',0)\in\psi(\phi(T))\cap R_1$ because $\phi(T)\cap (\IntEnt{m-p}\times\{0\})=\emptyset$.  Hence, $(i',0)\in T\cap R_1$ if and only if $(i',0)\in\psi(\phi(T))\cap R_1$. This proves Equality (\ref{eq-R1}).
                 \item Let $T\in\LST_{m,n}(p,q)$ be any tableau satisfying the condition of case \ref{phim2a}ii of the definition of $\phi$.  From Equality (\ref{eq-phi2aii}) and the definition of $\MergeC$, for any $m-p-1<i<m-1$,  we have $(i,0)\in T\cap R_1$ if and only if $(i,p)\in \phi(T)$. Furthermore, since $\phi(T)\cap (\IntEnt{m-p}\times\{0\})=\emptyset$, equality (\ref{eq-psi2aii}) and the definition of $\MergeC$ imply that for any  $i\in \{m-p,\ldots m-2\}$,  we have $(i,p)\in \phi(T)$ if and only if $(i,0)\in \psi(\phi(T))\cap R_1$.  Hence,  we have $(i,0)\in T\cap R_1$ if and only if $(i,0)\in\psi(\phi(T))\cap R_1$. This proves Equality (\ref{eq-R1}).
            \end{enumerate}
        \item \begin{enumerate}
            \item Let $T\in\LST_{m,n}(p,q)$ be any tableau satisfying the condition of case \ref{phim2b}i of the definition of $\phi$. Let $(i,j)$ be the smallest pair of $\left(\IntEnt{m-p}\times \{q,\cdots,n-1\}\right)\cap T$ for the lexicographic order. From Equality (\ref{eq-phi2bi}) and the definition of  $\MergeC$, for any $i'\in\{m-p,\ldots, m-2\}$,  we have $(i',0)\in T\cap R_1$ if and only if $(i',j)\in \phi(T)$. 
            Furthermore, since $\phi(T)\cap (\IntEnt{m-p}\times\{0\})=\{(m-1,0\})$, Equality (\ref{eq-psi2bi}) and the definition of $\MergeC$ imply that for any  $i'\in \{m-p,\ldots, m-2\}$,  we have $(i',j)\in \phi(T)$ if and only if $(i',0)\in \psi(\phi(T))\cap R_1$.  Hence, $(i',0)\in T\cap R_1$ if and only if $(i',0)\in\psi(\phi(T))\cap R_1$. This proves Equality (\ref{eq-R1}).
            \item Let $T\in\LST_{m,n}(p,q)$ be any tableau satisfying the condition of case \ref{phim2b}ii of the definition of $\phi$.  From Equality (\ref{eq-phi2bii}) and the definition of  $\MergeC$, for any $i\in\{m-p,\ldots, m-2\}$,  we have $(i,0)\in T\cap R_1$ if and only if $(i,q)\in \phi(T)$. 
            Furthermore, since $\phi(T)\cap (\IntEnt{m-p}\times\{0\})=\emptyset$, Equality (\ref{eq-psi2bii}) and the definition of $\MergeC$ imply that for any  $i\in\{m-p,\ldots, m-2\}$,  we have $(i,q)\in \phi(T)$ if and only if $(i,0)\in \psi(\phi(T))\cap R_1$.  Hence,  we have $(i,0)\in T\cap R_1$ if and only if $(i,0)\in\psi(\phi(T))\cap R_1$. This proves Equality (\ref{eq-R1}).
        \end{enumerate}
        \end{enumerate}
    \end{enumerate}
    Hence, Equality (\ref{eq-R1}) holds in any case and this implies the proposition.
      \end{proof}
     Proposition \ref{prop-phiinj} implies that $\phi$ is an injection and, since $\alpha_{1,1}=2$, we obtain
     \[
     \max\{\alpha_{p,q}\alpha_{m-p,n-q}\mid p\in\{1,\dots,m-1\}, q\in\{1,\dots,n-1\}\}=2\alpha_{m-1,n-1}.
     \]
     We deduce the following theorem:
     \begin{theorem}\label{th-sctypeX}
       When $m,n>1$ we have
       \[
       \sc_{\ostar}(m,n)=\alpha'_{m,n}+2\alpha_{m-1,n-1}-1
       \]
       and  the $2$-monsters $\mon_{m,n}^{\{f_1\},\{0\}}$ and $\mon_{m,n}^{\{0\},\{f_2\}}$, for $f_1\in\IntEnt m\setminus\{0\}$ and $f_2\in\IntEnt n\setminus \{0\}$ are witnesses.
     \end{theorem}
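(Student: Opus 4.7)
The plan is to combine all previously established machinery and then just verify that the optimal configuration gives the announced formula. By Theorem~\ref{th-SV2sc}, we must compute
\[
\sc_{\ostar}(m,n)=\max\bigl\{\#\SV_{\oplus}^{F_1,F_2}\mid F_1\subsetneq\IntEnt m,\ F_2\subsetneq\IntEnt n,\ F_1,F_2\neq\emptyset,\ F\not\subset(\{0\}\times\IntEnt n)\cup(\IntEnt m\times\{0\})\bigr\}.
\]
Comparing formulas (\ref{SVxor(00)}) and (\ref{SVxornot(00)}) one observes, as already noted in Section~\ref{sec-witnTypeX}, that for fixed $\#F_1$ and $\#F_2$ the maximum of $\#\SV_{\oplus}^{F_1,F_2}$ is attained when $(0,0)\in F$. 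In that situation (\ref{SVxor(00)}) yields
\[
\#\SV_{\oplus}^{F_1,F_2}=\alpha'_{m,n}+\alpha_{\#F_1,\#F_2}\,\alpha_{m-\#F_1,n-\#F_2}-1.
\]

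The next step is to bound $\alpha_{p,q}\alpha_{m-p,n-q}$ over all admissible $p\in\{1,\ldots,m-1\}$ and $q\in\{1,\ldots,n-1\}$. This is exactly what the injection $\phi:\LST_{m,n}(p,q)\longrightarrow\LST_{m,n}(1,1)$ constructed above provides: Proposition~\ref{prop-phiinj} shows $\psi\circ\phi=\Id$, hence $\phi$ is injective, so
\[
\alpha_{p,q}\,\alpha_{m-p,n-q}=\#\LST_{m,n}(p,q)\leq\#\LST_{m,n}(1,1)=\alpha_{1,1}\,\alpha_{m-1,n-1}=2\,\alpha_{m-1,n-1},
\]
using $\alpha_{1,1}=2$ (the two locally saturated $1\times 1$ tableaux are the empty one and the one with a cross). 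Consequently,
\[
\sc_{\ostar}(m,n)\leq\alpha'_{m,n}+2\alpha_{m-1,n-1}-1.
\]

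It remains to realize this bound by a monster. Take $F_1=\{f_1\}$ with $f_1\in\IntEnt m\setminus\{0\}$ and $F_2=\{0\}$. Then $\#F_1=\#F_2=1$, the final zone is $F=(\{f_1\}\times\IntEnt n)\oplus(\IntEnt m\times\{0\})$, which is not contained in $(\{0\}\times\IntEnt n)\cup(\IntEnt m\times\{0\})$, and $(0,0)\in F$ because $[0\in F_1]\oplus[0\in F_2]=0\oplus 1=1$. Moreover the final zone contains $(f_1,j)$ for any $j\neq 0$, so it contains a cell $(i,j)$ with $i\neq 0$ and $j\neq 0$, meaning that Theorem~\ref{th-SV} applies and $\#\SA_{\oplus}^{F_1,F_2}=\#\SV_{\oplus}^{F_1,F_2}$. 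Plugging $\#F_1=\#F_2=1$ into~(\ref{SVxor(00)}) gives
\[
\#\SV_{\oplus}^{\{f_1\},\{0\}}=\alpha'_{m,n}+\alpha_{1,1}\,\alpha_{m-1,n-1}-1=\alpha'_{m,n}+2\alpha_{m-1,n-1}-1,
\]
which matches the upper bound. The symmetric choice $F_1=\{0\}$, $F_2=\{f_2\}$ works identically. This establishes both the stated state complexity and the announced witnesses.
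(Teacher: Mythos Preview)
Your proof is correct and follows essentially the same route as the paper: reduce to maximizing $\#\SV_\oplus^{F_1,F_2}$ via Theorem~\ref{th-SV2sc}, observe from (\ref{SVxor(00)}) and (\ref{SVxornot(00)}) that the case $(0,0)\in F$ dominates, then invoke the injection $\phi$ (Proposition~\ref{prop-phiinj}) to bound $\alpha_{p,q}\alpha_{m-p,n-q}\leq 2\alpha_{m-1,n-1}$, and finally check that $\mon_{m,n}^{\{f_1\},\{0\}}$ meets all the hypotheses. Your verification of the witness conditions (that $(0,0)\in F$ and that $F$ escapes the first row and column) is in fact more explicit than the paper's, which simply states the theorem after deriving the maximum.
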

     \begin{example}
  We use formulas of \cite{CLMP15} to compute the first values that are listed in Table \ref{table-scXvalues}.
  The valid saturated tableaux illustrating the case $m=n=2$ are pictured in Figure \ref{fig-exX}.
     \begin{table}[H]
     \[\begin{array}{|c|ccccccc|}
     \hline
     m\setminus n& 2&3&4&5&6&7&8\\\hline
     2&8&20&50&128&338&920&2570\\
     3&20& 66& 212& 690& 2300& 7866& 27572\\
     4&50& 212& 848& 3368& 13520& 55232& 230168\\
     5&128& 690& 3368& 15930& 75008& 355890& 1711208\\
     6&338& 2300& 13520& 75008& 407528& 2206880& 12020360\\
     7&920& 7866& 55232& 355890& 2206880& 13482546& 82181312\\
     8&2570& 27572& 230168& 1711208& 12020360& 82181312& 555813728\\\hline
     \end{array}
     \]
     \caption{First values of $\sc_{\ostar}(m,n)$ for the type X.\label{table-scXvalues}}
     \end{table}
          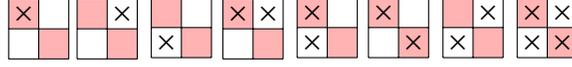
\begin{figure}[h] \centerline{
     \begin{tikzpicture}[scale=0.4]
\node[fill=red,opacity=0.3,scale=1.5] at (1.5,0.5) {$ $}; 
\node[fill=red,opacity=0.3,scale=1.5] at (0.5,1.5) {$ $};
\node at (0.5,1.5) {$\times$};
    \draw[step=1.0,black, thin] (0,0) grid (2,2); 
     \end{tikzpicture}
      \begin{tikzpicture}[scale=0.4]
\node[fill=red,opacity=0.3,scale=1.5] at (1.5,0.5) {$ $}; 
\node[fill=red,opacity=0.3,scale=1.5] at (0.5,1.5) {$ $};
\node at (1.5,1.5) {$\times$};
    \draw[step=1.0,black, thin] (0,0) grid (2,2); 
     \end{tikzpicture}
          \begin{tikzpicture}[scale=0.4]
\node[fill=red,opacity=0.3,scale=1.5] at (1.5,0.5) {$ $}; 
\node[fill=red,opacity=0.3,scale=1.5] at (0.5,1.5) {$ $};
\node at (0.5,0.5) {$\times$};
    \draw[step=1.0,black, thin] (0,0) grid (2,2); 
     \end{tikzpicture}
          \begin{tikzpicture}[scale=0.4]
\node[fill=red,opacity=0.3,scale=1.5] at (1.5,0.5) {$ $}; 
\node[fill=red,opacity=0.3,scale=1.5] at (0.5,1.5) {$ $};
\node at (1.5,1.5) {$\times$};
\node at (0.5,1.5) {$\times$};
    \draw[step=1.0,black, thin] (0,0) grid (2,2); 
     \end{tikzpicture}
       \begin{tikzpicture}[scale=0.4]
\node[fill=red,opacity=0.3,scale=1.5] at (1.5,0.5) {$ $}; 
\node[fill=red,opacity=0.3,scale=1.5] at (0.5,1.5) {$ $};
\node at (0.5,0.5) {$\times$};
\node at (0.5,1.5) {$\times$};
    \draw[step=1.0,black, thin] (0,0) grid (2,2); 
     \end{tikzpicture}
       \begin{tikzpicture}[scale=0.4]
\node[fill=red,opacity=0.3,scale=1.5] at (1.5,0.5) {$ $}; 
\node[fill=red,opacity=0.3,scale=1.5] at (0.5,1.5) {$ $};
\node at (1.5,0.5) {$\times$};
\node at (0.5,1.5) {$\times$};
    \draw[step=1.0,black, thin] (0,0) grid (2,2); 
     \end{tikzpicture}
       \begin{tikzpicture}[scale=0.4]
\node[fill=red,opacity=0.3,scale=1.5] at (1.5,0.5) {$ $}; 
\node[fill=red,opacity=0.3,scale=1.5] at (0.5,1.5) {$ $};
\node at (1.5,1.5) {$\times$};
\node at (0.5,0.5) {$\times$};
    \draw[step=1.0,black, thin] (0,0) grid (2,2); 
     \end{tikzpicture}
       \begin{tikzpicture}[scale=0.4]
\node[fill=red,opacity=0.3,scale=1.5] at (1.5,0.5) {$ $}; 
\node[fill=red,opacity=0.3,scale=1.5] at (0.5,1.5) {$ $};
\node at (1.5,1.5) {$\times$};
\node at (0.5,1.5) {$\times$};
\node at (0.5,0.5) {$\times$};
\node at (1.5,0.5) {$\times$};
    \draw[step=1.0,black, thin] (0,0) grid (2,2); 
     \end{tikzpicture}
     }
     \caption{The $8$ saturated valid $2\times 2$-tableaux for the type X, $F_1=\{1\}$, $F_2=\{0\}$.\label{fig-exX}}
     \end{figure}
     \end{example}
     \subsection{Small automata ($m=1$ or $n=1$)}
     Let us consider only the case where $m=1$, the other case being obtained symmetrically. Since the notion of saturation is related to $2\times 2$-patterns, it becomes irrelevant when we assume $m=1$. If $n=1$ then the state complexity is obviously $1$. We consider now $n>1$. We just have to investigate the valid tableaux with respect to the final zone of the resulting automaton. A fast computation shows that the number of valid tableaux is maximal when the final zone is $F=\{\mathrm f\}$ with $\mathrm f>0$.  In that case, a tableau is valid if and only if it contains $(0,0)$ or it  contains neither $(0,0)$ nor $(0,\mathrm f)$. Hence, there is $2^{n-1}+2^{n-2}=\frac342^n$ valid tableaux. We first check that all valid tableaux are accessible. We proceed by induction and consider the order $<$ of Definition \ref{def-order}. The tableaux $\{\}$ and $\{(0,0)\}$ are obviously accessible. Let $T\neq\emptyset$ and $T\neq\{(0,0)\}$. We consider three cases
     \begin{enumerate}
         \item If neither $(0,0)$ nor $(0,\mathrm f)$ are in $T$ then we choose $(0,j)\in T$ and we set $T'=T\cup \{(0,0)\}\setminus\{(0,j)\}$. Since $T'<T$, by induction hypothesis, the tableau $T'$ is accessible and so $T=\delta^{(\Id,(0,j))}T'$ is accessible too.
         \item  If $(0,0)\in T$ and $(0,\mathrm  f)\not\in T$ then we choose $(0,j)\in T$ and we set $T'=T\cup\{(0,\mathrm f)\}\setminus \{(0,j)\}$. The tableau $T'$ is valid and since $T'<T$, the induction hypothesis show that $T'$ is accessible. Hence, the tableau $T=\delta^{(\Id,(j,\mathrm f))}$ is accessible too.
         \item If both $(0,0)$ and $(0,\mathrm  f)$ are in $T$ then we set $T'=T\setminus\{(0,\mathrm  f)\}$. Since $T'<T$, by induction hypothesis, the tableau $T'$ is accessible and so $T=\delta^{(\Id,(0,\mathrm f))}T'$ is accessible too.
     \end{enumerate}
     This proves that all valid tableaux are accessible. Now we show that they are pairwise non-equivalent. Let $T\neq T'$ be two valid tableaux. Without restriction, we assume that there exists $j$ such that $(0,j)\in T$ and $(0,j)\not\in T'$. Let $g\in\IntEnt n^{\IntEnt n}$ defined by $g(j')=0$ if $j'\neq j$ and $g(j)=\mathrm f$. The tableau $\delta^{(\Id,g)}T$ is final while the tableau $\delta^{(\Id,g)}T'$ is non final. This shows that valid tableaux are pairwise non-equivalent.\\ 
     It remains to find $F_1$ and $F_2$ such that $F_1\bullet F_2=\{\mathrm f\}$ whatever the Boolean operation $\bullet$ is.  For $\bullet=\oplus$ or $\bullet=\cup$, we set $F_1=\emptyset$ and $F_2=\{\mathrm  f\}$ while for $\bullet=\cap$, we set $F_1=\{0\}$ and $F_2=\{\mathrm f\}$.\\
     It follows that the state complexity is $\frac342^n$ and a witness is $\mon_{1,n}^{F_1,F_2}$ with $F_1$ and $F_2$ as above.
    \section{Conclusion}
    The interest of this discussion does not only lie in the fact that a new state complexity is calculated but especially because it illustrates methods and strategies that can be implemented to compute state complexities. The first lesson  is that it is often easier to compute state complexities of families of operators than of isolated operations. Indeed,   identifying common points between different operations leads us to develop theoretical tools that are relevant in the context. We have already illustrated this approach in previous papers. The most telling example is the family of $1$-uniform operations that led us to develop the theory of modifiers and monsters \cite{CHLP20} (also called OLPA in \cite{Dav18}). This rather general theory does not directly give formulas but offers a  mathematical framework to calculate them. Among other examples that we have studied, let us mention the family of multiple catenation operators \cite{CLP19}, the family of friendly operators  \cite{CHL20b},
and the quasi-Boolean operations \cite{CHL20a}.

We are specifically interested in classes of operators, called $1$-uniforms, that can be encoded using modifiers. For this range of problems, a three-step strategy  emerged and this is what the paper illustrates. For many 1-uniform operations, we know algorithms that encode them on automata and behind many of these algorithms are in fact modifiers. Modifiers explain how  states and transitions of the input automata are transformed  and make combinatorial objects naturally appear as  states of the output automaton (in our case the combinatorial objects are tableaux). Transitions of the output automaton are also described combinatorially through an action of the transformation monoid on the combinatorial objects. This is the first step in the strategy and is illustrated in the preliminary paragraph of Section \ref{sec-sc}. It allows to (re)encode the problem initially stated in terms of language theory into a combinatorial language through  algebraic tools (modifiers). The second step is to study the combinatorics of the  objects and to deduce properties of the output automaton. In the present paper, the combinatorial notion of validity is linked to the notion of accessibility (see Section \ref{sec-valid}) and the notion of local saturation is linked to the Nerode equivalence (see Sections \ref{section-saturation}, \ref{2x2case}, and \ref{sec-localsat}). Finally, the third step is to find witnesses.  We choose witnesses among the monsters by using combinatorics as a guide. Section \ref{sec-monster}  illustrates this step.

One of the main interest of this method is that even when it does not allow to obtain a closed formula for the state complexity, it allows to express it as the number of elements of a set having a combinatorial description. Let's take the example of the shuffle product: although the value of its state complexity is not known (only conjectured \cite{BJLRS16}), it can be expressed  in terms of tableaux of set partitions \cite{CLP20}. Regarding the complexities studied in this paper, the one of  star of  union  ( more generally type O operations, see Theorem \ref{th-sctypeO}) and  the one of star of  intersection  (more generally type A operations, see Theorem \ref{th-scA})  admit a closed form. We can also consider formula of Theorem \ref{th-sctypeX} as a quasi-closed form for the complexity of star of  symmetric difference and, more generally, of type X operations. Indeed, although the asymptotic properties of the $\alpha_{n,k}$ and $\alpha'_{n,k}$ numbers are not well known, we can compute these numbers far enough and relate them to other combinatorial numbers (see \cite{CLMP15}).

Throughout this paper, we can see that the main difficulty comes from the combinatorics of  tableaux that constitutes the bulk of the calculations. More generally, a deep understanding of the theory of modifiers  requires to study the actions of the transformation monoid of a finite set on combinatorial objects as  vectors, tableaux, sets, functions \textit{etc.} 

At this stage, we have not developed tools to determine \emph{alphabetic simplicity} that is the minimum size of the alphabet for the state complexity to be reached. Notice that, for an alphabet of a given size, the higher the alphabetic simplicity, the smaller the state complexity. The alphabetic simplicity is not necessarily a constant but depends on state complexities of the input languages. Although in his thesis, Edwin Hamel-de-le Court \cite{Ham20} has shown that, alphabetic simplicity for star of symmetrical difference is bounded by $17$, we  prefer to defer this study to a future work in which we will develop different tools and concepts related to the notion of alphabetic simplicity. 

We have investigated the computation of state complexity with respect to complete deterministic automata. Changing, even slightly,  characteristics of the automata encoding  input and/or output languages could have significant consequences on  state complexities. For instance, the state complexity of the shuffle product with respect to (not necessarily complete) deterministic automata is well known \cite{CKY02} whereas we have only a conjecture for complete deterministic automata \cite{BJLRS16,CLP20}. 
We can also define the state complexity with respect to other types of automata (non-deterministic, 2-ways etc). We think that there could exist, for each case, tools similar to  modifiers. In order to define them in a very general framework, it will probably be necessary to use notions from category theory. This is an ambitious scientific project which is only at the beginning and which will require the accumulation of many examples before leading to a consistent theory.

        

     \bibliography{biblio.bib}
\appendix
\section{Notations}
\small
\begin{longtable}{lll}
$\Acc(A)$&:& Restriction of the automaton $A$ to its accessible. See Section \ref{sec-LAAT}.\\ 
$\Acc_{\bullet}^{F_1,F_2}$&:& Set of accessible states of $\mathrm{M}_{\bullet}^{F_1,F_2}$.\\
$\SA^{F_1,F_2}_\bullet$&:&\begin{minipage}[t]{0.78\linewidth} Set of accessible saturated tableaux with respect to the final sets $F_1$ and $F_2$ and the boolean operation $\bullet$.\end{minipage}\\ \\
$\bullet$&:& A Boolean operation or a Boolean function. See Section \ref{sec-LAAT} and Table \ref{op-bool}.\\ \\
$\col_j(T)$&:&\begin{minipage}[t]{0.78\linewidth} Indices of the rows in which there are crosses in the column $i$ of the tableau $T$. See Definition \ref{def-rowcol}.\end{minipage}\\
$\mathrm{M}^{F_1,F_2}_\bullet$&:&  The DFA $\StBool_\bullet(\mon^{F_1,F_2}_{m,n})$.\\ 
$\widehat{\mathrm{M}}^{F_1,F_2}_\bullet$&:&\begin{minipage}[t]{0.78\linewidth} The DFA with the same alphabet, the same states, the same initial state and the same final states as $\mathrm M^{F_1,F_2}_\bullet$ but with the transition function $d$ defined by $d^{(f,g)}(T)=T\cdot (f,g)$. See Section \ref{section-saturation}.\end{minipage}\\ \\
$E_{i,j}=\{(i,j)\}$&:&\begin{minipage}[t]{0.78\linewidth} Tableau with only one cross at position $(i,j)$. See Section \ref{sec-valid}.\end{minipage}\\ \\
$\ostar$&:& Kleene star of a Boolean operation. See formula (\ref{eq-ostar}).\\ \\
$T\rightarrow T'$&:& One step of the local saturation. See Definition \ref{def-arrow}.\\
$\displaystyle\mathop\rightarrow^*$&:& Transitive closure of $\rightarrow$. See Definition \ref{def-arrow}.\\
$\LSV^{F_1,F_2}_\bullet$&:&\begin{minipage}[t]{0.78\linewidth} Set of  local saturated tableaux with respect to the final sates $F_1$ and $F_2$ and the boolean operation $\bullet$. See Definition \ref{def-arrow}.\end{minipage}\\
$\LST_{m,n}(p,q)$&:&\begin{minipage}[t]{0.78\linewidth} Set of the locally saturated $m\times n$ tableaux $T$, in the type X, satisfying $T\cap F=\emptyset$ for $F=(\IntEnt{m-p}\times\IntEnt q)\cup \left(\{m-p,m-p+1,\dots,m-1\}\times\{q,q+1,\dots,n-1\}\right)$. See Section \ref{sec-witnTypeX}.\end{minipage}
\\ \\
$\MergeC_{j_1,j_2}(T)$&:&\begin{minipage}[t]{0.78\linewidth} Tableau $T$ in which  columns $j_1$ and $j_2$ have been merged with respect to the operation of saturation. See Definition\ref{def-Merge}.\end{minipage}\\
$\MergeR_{i_1,i_2}(T)$&:&\begin{minipage}[t]{0.78\linewidth} Tableau $T$ in which rows $i_1$ and $i_2$ have been merged with respect to the operation of saturation. See Definition\ref{def-Merge}.\end{minipage}\\ \\
$\SuppC_j(T)$&:&\begin{minipage}[t]{0.78\linewidth} The tableau $T$ in which all crosses have been removed from column $j$. See Definition \ref{def-Supp}.\end{minipage}\\
$\SuppR_i(T)$&:&\begin{minipage}[t]{0.78\linewidth} The tableau $T$ in which all crosses have been removed from row $i$. See Definition \ref{def-Supp}.\end{minipage}
\\ \\ 
$\mathfrak{Bool}_\bullet$&:& Modifier of the Boolean operation $\bullet$. See Exemple \ref{ex-xor}.\\
$\mathfrak{St}$&:& Modifier of the Kleene star operation. See Example \ref{ex-star}.\\
$\StBool_\bullet$&:& Modifier of the Kleene star of the boolean operation $\bullet$. See Formula (\ref{eq-StBoolMod}).\\
$\mon^{F_1,\dots,F_k}_{n_1,\dots,n_k}$&:&\begin{minipage}[t]{0.78\linewidth} $k$-tuple of automata over a big alphabet encoding all the possible transitions. See Definition \ref{def-mon}.\end{minipage}\\ \\
 $T<T'$&:& Partial order on tableaux. See Definition \ref{def-order}.\\ \\
 $F_I$&:& Restricted final zone associated to $I$. See Definition \ref{def-restriction}.\\
 $\Ind_I(T)$&:& Inducted of $T$. See Definition \ref{def-restriction}.\\
$\Red(T)$&:& Reduced of $T$. See Definition \ref{def-restriction}.\\
$T|_I$&:& Restriction of the tableau $T$ to $I$. See Definition \ref{def-restriction}.\\
$\row_i(T)$&:&\begin{minipage}[t]{0.78\linewidth} Indices of the columns in which there are crosses in the row $i$ of the tableau $T$. See Definition \ref{def-rowcol}.\end{minipage}\\ \\
 $\Sat(T)$&:&\begin{minipage}[t]{0.78\linewidth} Unique saturated tableau in the Nerode class of $T$. See Definition \ref{def-sat}.\end{minipage}\\
 $\Sat(A)$&:& Automaton labelled by saturated tableaux. See Definition \ref{def-sat}.\\
 $\SV_\bullet^{F_1,F_2}$&:&\begin{minipage}[t]{0.78\linewidth} Set of saturated valid tableaux with respect to the final sets $F_1$ and $F_2$ and the boolean operation $\bullet$. See Formule (\ref{eq-SV}).\end{minipage}\\ \\
$\sc_\otimes$&:& State complexity of the operation $\otimes$. See Formula (\ref{eq-statecomp}).\\
$\Val^{F_1,F_2}_\bullet$&:& Set of valid tableaux labelling states of $\mathrm{M}^{F_1,F_2}_\bullet$.
See Section \ref{sec-valid}.\\
$\Val(\mathrm{M}^{F_1,F_2}_\bullet)$&:& Restriction of the DFA $\mathrm{M}_{\bullet}^{F_1,F_2}$ to its valid states. See Section \ref{sec-valid}.\\
\end{longtable}
\end{document}